\RequirePackage{fix-cm}
\documentclass[smallcondensed]{svjour3}       % onecolumn (second format)
\smartqed  % flush right qed marks, e.g. at end of proof
% To restate theorems
\usepackage{thmtools}
\usepackage{thm-restate}

\usepackage{amsthm}
\usepackage{graphicx}
\usepackage{amsmath,amssymb}
\usepackage{epsfig}
\usepackage{url}
\usepackage{hyperref}
\usepackage{tikz}
\usetikzlibrary{shadows}
\usepackage{xcolor}
\usepackage[framemethod=TikZ]{mdframed}
\usepackage{tkz-berge,float}
\usepackage{enumerate}
\usepackage{bbm} % indicator function

\usepackage[linesnumbered,noend,boxruled,nofillcomment]{algorithm2e}

\interfootnotelinepenalty=10000
% to save space around floating bodies ( = algorithms, figures, etc.)

\setlength\floatsep{.15\baselineskip plus 3pt minus 2pt}
\setlength\textfloatsep{.15\baselineskip plus 3pt minus 2pt}

\newcommand{\hide}[1]{}

% Paragraphs
\newcommand{\spara}[1]{\smallskip\noindent{\bf #1}}

%Mathematical environments
%\newtheorem{mydefinition}{Definition}
%\newtheorem{proposition}{Proposition}
%\newtheorem{property}{Property}
%\newtheorem{mytheorem}{Theorem}[section]
%\refstepcounter{mytheorem}
%\newtheorem{mytheoremnoindent}{\!\!\!Theorem}
%\newtheorem{corollary}{Corollary}[section]
%\newtheorem{claim}{Claim}
\newtheorem{myexample}{Example}

\DeclareMathOperator*{\argmin}{arg\,min}
\DeclareMathOperator*{\argmax}{arg\,max}

%% tight spacing in item lists
\newcommand{\squishlist}{
 \begin{list}{$\bullet$}
  {  \setlength{\itemsep}{0pt}
     \setlength{\parsep}{3pt}
     \setlength{\topsep}{3pt}
     \setlength{\partopsep}{0pt}
     \setlength{\leftmargin}{2em}
     \setlength{\labelwidth}{1.5em}
     \setlength{\labelsep}{0.5em}
} }
\newcommand{\squishlisttight}{
 \begin{list}{$\bullet$}
  { \setlength{\itemsep}{0pt}
    \setlength{\parsep}{0pt}
    \setlength{\topsep}{0pt}
    \setlength{\partopsep}{0pt}
    \setlength{\leftmargin}{2em}
    \setlength{\labelwidth}{1.5em}
    \setlength{\labelsep}{0.5em}
} }

\newcommand{\squishdesc}{
 \begin{list}{}
  {  \setlength{\itemsep}{0pt}
     \setlength{\parsep}{3pt}
     \setlength{\topsep}{3pt}
     \setlength{\partopsep}{0pt}
     \setlength{\leftmargin}{1em}
     \setlength{\labelwidth}{1.5em}
     \setlength{\labelsep}{0.5em}
} }

\newcommand{\squishend}{
  \end{list}
}

  % general macros
%\newcommand{\sol}[1]{{\mathcal V}({#1})}
%\newcommand{\solt}[1]{{\mathcal V}({#1})}
%\newcommand{\users}[1]{{\mathcal U}({#1})}

\newif\ifhideproofs
%\hideproofstrue %uncomment to hide proofs

%\ifhideproofs
%  \usepackage{environ}
%  \NewEnviron{hidden}{}
%  \let\proof\hidden
%  \let\endproof\endhidden
%\fi

\newcommand{\fullversion}[1]{\ifhideproofs \else {#1} \fi}

\makeatletter
\newcommand{\eqnum}{\refstepcounter{equation}\textup{\tagform@{\theequation}}}
\makeatother

\newcommand{\cov}[2]{{#1}[{#2}]}
\newcommand{\scov}[1]{{#1\uparrow\,}}
\newcommand{\rscov}[1]{{#1\downarrow\,}}
\newcommand{\ineq}[1]{{#1}_{\neq}}

% Standard math notation

\newcommand{\naturals}{{\mathbb N}}

\providecommand{\poly}{{\operatorname{poly}}}

% Set with or without spaces after curly braces

\newcommand{\setnosp}[1]{\{{#1}\}}
\makeatletter
\newcommand{\set}{\@ifstar
                     \setnosp%
                     \setnosp%
}
\makeatother

\newcommand{\mycomment}[1]{}

% Restrictions, substitutions
%\newcommand{\rst}[1]{{\ensuremath{{\mathbin\upharpoonright}\raise-.9ex\hbox{${\scriptstyle{#1}}$}}}}
\newcommand{\rst}[1]{{\ensuremath{{\mathbin|}\raise-.9ex\hbox{${\scriptstyle{#1}}$}}}}
\newcommand{\subst}[1]{{\ensuremath{\raise-.9ex\hbox{${\scriptstyle{#1}}$}}}}

% Probability
\DeclareMathOperator*{\expect}{\mathbb{E}}

%
% Just like for probabilities, a conditional version of the expectation is
% provided.

%
% Also a math operator is defined for the variance.
% This simple version has just an optional argument that goes as a
% subscript to ``Var''.
%    \begin{macrocode}
\DeclareRobustCommand{\Var}[1][]{\ensuremath{\mathrm{Var[#1]}}}
%
% Now we take care of delimiters for what is inside the variance.

% Property testing
\renewcommand{\epsilon}{\varepsilon}

% Standard function names

% Cal abbreviations
\DeclareRobustCommand{\calA}[0]{{\mathcal A}}

\DeclareRobustCommand{\calD}[0]{{\mathcal D}}

\DeclareRobustCommand{\calI}[0]{{\mathcal I}}

\DeclareRobustCommand{\calM}[0]{{\mathcal M}}

\DeclareRobustCommand{\calS}[0]{{\mathcal S}}

\DeclareRobustCommand{\calU}[0]{{\mathcal U}}

 % ad-hoc definitions for this paper
\journalname{}%Data Mining and Knowledge Discovery}
\begin{document}
\title{Fair-by-design matching}
\author{David Garc\'ia-Soriano\and Francesco Bonchi}
\institute{ISI Foundation, Italy}
\date{}%Received: date / Accepted: date}
% The correct dates will be entered by the editor

\maketitle

\begin{abstract}
Matching algorithms are used routinely to match donors to recipients for solid organs transplantation, for the assignment of medical residents to hospitals, record linkage in
databases, scheduling jobs on machines, network switching, online advertising, and image recognition, among others. Although many optimal solutions may exist to a given matching
problem, when the elements that shall or not be included in a solution correspond to individuals, it becomes of paramount importance that the solution be selected fairly.

In this paper we study individual fairness in matching problems.
Given that many maximum matchings may exist, each one satisfying a
different set of individuals, the only way to guarantee fairness is through randomization. Hence we introduce the \emph{distributional maxmin fairness} framework which provides,
          for any given input instance, the strongest guarantee possible simultaneously for all individuals in terms of satisfaction probability (the probability of being matched
                  in the solution). Specifically, a probability distribution over feasible solutions is \emph{maxmin-fair} if it is not possible to improve the satisfaction probability of any individual without decreasing it for some other individual which is no better off.
In the special case of matchings in bipartite graphs, our framework is equivalent to the \emph{egalitarian mechanism} of Bogomolnaia and Mouline.

                  Our main contribution is a polynomial-time algorithm for fair matching building on techniques from minimum cuts, and edge-coloring algorithms for regular
                  bipartite graphs, and transversal theory. For bipartite graphs, our algorithm runs in $O((|V|^2 + |E| |V|^{2/3})\cdot (\log |V|)^2)$ expected time
and scales                  
to graphs with tens of millions of vertices and hundreds of millions of edges.
To the best of our knowledge, this provides the first large-scale implementation of the egalitarian mechanism.
                  %An experimental evaluation of our fair-matching algorithm shows its ability to scale to graphs with tens of millions of vertices and hundreds of millions of edges, taking only a few minutes on a simple architecture. 
                  Our experimental analysis confirms that our method provides stronger satisfaction probability guarantees than non-trivial baselines.

\keywords{Algorithmic bias \and fairness \and matching \and combinatorial optimization}
\end{abstract}

\section{Introduction}
\label{sec:intro}
Decision-making tools relying on data and quantitative measures have become pervasive
in application domains such as education and employment, finance, search and
recommendation, policy making, and criminal justice.
Awareness and concern about the risks of unfair automated decisions is quickly rising,
as it has been argued
 that decisions informed by data analysis could have inadvertent discriminatory effects due to potential bias existing in the data or encoded in automated decisions.
 Several reports~\cite{Obama1,Obama2}
 call for algorithms that are \emph{``fair by design''} and identify \emph{``poorly designed matching systems''} as one of the main flaws of algorithmic decision-making.
The way
to tackle the ensuing ethical and societal issues has garnered the attention
of the research community~\cite{DworkKeynote}.
However,  despite
the fact that
matching mechanisms
lie at the basis of many automated decision systems, the bulk of the research in the area of
algorithmic bias and fairness
has mainly focused on
avoiding discrimination against a sensitive
attribute (i.e., a protected social group)  in supervised machine
learning~\cite{fair_ml}.

\mycomment{
Our work departs from this literature in three main directions. We focus on (1) \emph{individual fairness}, as opposed to group-level fairness;
(2) \emph{bias stemming from the algorithm design itself}, rather than the bias existing
in the input data; % and potentially leading to discriminatory decision-making models;
(3) %a wide class of
\emph{combinatorial search problems} (where the solution may not be unique and individuals
correspond to elements
to be included in the solution),
instead of supervised learning.
}

Our work departs from this literature in three main directions: (1) we focus on \emph{individual fairness} (as opposed to group-level fairness);
(2) we focus on \emph{bias stemming from the algorithm design itself}, rather than the bias existing
in the input data; % and potentially leading to discriminatory decision-making models;
(3) instead of supervised learning we focus on %a wide class of
%\emph{combinatorial search and optimization problems},
\emph{matching problems}, where the solution may not be unique and individuals
correspond to elements
to be included in the solution.

\mycomment
{
\begin{figure*}[t!]
%\vspace{-3mm}
\begin{mdframed}[backgroundcolor=gray!10,roundcorner=10pt]
\begin{center}
\textbf{Gale-Shapley algorithm: biased by design}
\end{center}

\smallskip
\begin{small}
In 2012 Lloyd Shapley and Alvin Roth were awarded the Nobel Prize in Economics for their
contributions to the theory and practice of stable matchings\footnote{\url{https://www.nobelprize.org/nobel_prizes/economic-sciences/laureates/2012/popular-economicsciences2012.pdf}}.
Shapley developed the notion of \emph{stability}, the central concept
in the \emph{cooperative game theory}, during the 50s. The Gale-Shapley algorithm,
presented in 1962~\cite{GaleShapley}, solves the \emph{stable matching} problem: given $n$ men and
   $n$ women, where each individual has ranked all members of the opposite sex in
   order of preference, match each man to a woman so that there are no two people
   of the opposite sex    who would both rather be matched to each other than to their current
   partners. The practical relevance of  the Gale-Shapley algorithm was recognized
   in the early 1980s, when Alvin Roth  studied the job market for recently graduated medical
   students in the U.S. who are employed as residents (interns) at hospitals.
 Hospitals and residents rank each other, and a centralized system, called the National Resident Matching
Program (NRMP), produces a matching.
Roth discovered that the algorithm in use by the NRMP since the early 1950s was closely related to the Gale-Shapley
algorithm, and hypothesized that the fundamental reason for its success was
its producing stable matchings. In the early 1990s, Roth went on to study similar medical markets in the
UK, where different regions had adopted different methods:
those resulting in stable matchings were found to be more successful than those who did not.

\smallskip

The Gale-Shapley algorithm is a propose-and-reject algorithm: in every round each unengaged man
proposes to the first woman on his list not yet crossed off, and each woman ``provisionally
engages'' with her preferred proposer while definitively rejecting all the others. Rejected men
cross off the woman from their list. The provisional nature of engagements permits an engaged woman
to change partners in case a better proposal arrives. Once jilted, a man becomes unengaged again, crosses off the
woman from his list, and proposes to the top woman on his list not yet crossed off. The algorithm always terminates with a stable matching which,
among all possible stable matchings for a specific problem instance, is simultaneously the \emph{best for every man
and the worst for every woman} (of course the property can be reversed, by changing the side that proposes).

    \mycomment{
            In 2012 Lloyd Shapley and Alvin Roth were awarded the Nobel Prize for Economics for their contributions to the theory of Stable Matching and its practical instantiations\footnote{\url{https://www.nobelprize.org/nobel_prizes/economic-sciences/laureates/2012/popular-economicsciences2012.pdf}}.
            Shapley developed during the 1950s the notion of \emph{stability}, the central concept in the \emph{cooperative game theory}. The Gale-Shapley algorithm for stable pairwise matching was presented in 1962~\cite{GaleShapley}: given $n$ men and $n$ women, where each individual has ranked all members of the opposite sex in order of preference, match each man to a woman so that there are no two people of opposite sex who would both rather have each other than their current partners, i.e., the matching is stable. The practical relevance of Gale-Shapley algorithm was recognized in the early 1980s, when Alvin Roth started studying the job market for students who graduate from medical schools in the U.S. and are typically employed as residents (interns) at teaching hospitals.
            %Medical residency programs were first
            %introduced about a century ago. Since interns offered a source of cheap labor for hospitals, soon the number
            %of residency slots exceeded the number of medical graduates, resulting in fierce competition. Hospitals tried
            %to outdo each other by making their residency offers earlier and earlier, sometimes several years before the student graduated.
            %A  market  ridden  with  such  problems  does  not  produce
            %stable matches, because not enough offers can be made in time to ensure mutually beneficial trades.
            %In the early 1950s, this unsustainable situation led to the creation of
            Since the early 50s this market was handled by centralized system called the National Resident Matching
            Program (NRMP) in which the hospitals ranked the residents and the residents ranked
            the hospitals, and the system was producing a matching. In a paper from 1984, Alvin Roth studied the
            algorithm used by the NRMP and discovered that it was closely related to the Gale-Shapley
            algorithm. He then hypothesized that the fundamental reason for the success of the NRMP was that
            it produced stable matches. In the early 1990s, Roth went on to study similar medical markets in the
            U.K. There, he found that different regions had adopted different algorithms, some of which produced
            stable matches and others not. Those which resulted in stable matches had turned out to be successful,
            whereas the other algorithms had broken down in various ways.

            \medskip

            The Gale-Shapley algorithm is a propose-and-reject algorithm: in each round each unengaged man proposes to the first woman on his list not yet crossed off, and each woman ``provisionally engages'' with the proposer she prefers while definitively rejecting all the others. Rejected men cross off the woman from their list. The provisional nature of engagements preserves the right of an already-engaged woman to change partner in the case a proposal arrives from a man that she prefers over her current provisional partner. Once jilted, a man becomes unengaged again, crosses off the woman from his list, and restarts proposing from the top-1 woman in his list not yet crossed off. Gale and Shapley prove that the algorithm always terminates with a stable matching (which also implies that a stable matching always exists). There are other interesting properties: among all possible stable matchings existing for a specific problem instance, the Gale-Shapley
            algorithm always produces a solution which is simultaneously the \emph{best for every man}
            and the \emph{worst for any woman} (of course the property can be flipped, by simply changing the side that proposes).
                }
\end{small}
\end{mdframed}
%\vspace{-2mm}
\end{figure*}
}

In this setting, the satisfaction (utility) function of each individual is based on whether the
individual has been selected or not for inclusion. At the very least, two individuals satisfying all
    relevant criteria equally well
(e.g., having the same skill set) should have, in principle,
    the same expected utility; moreover,  individuals having a wider or a more unique skill set (covering relevant criteria that others can't cover),
should reasonably be rewarded with higher expected utility.
%    the same probability of being selected.
This is often not the case as algorithms may be \emph{``biased by design''}:
bias may stem from something as petty as the order in which the algorithm chooses to process the
list of candidates in its main loop (e.g., by irrelevant attributes such as alphabetical order or
application date), or details about the internal workings of the algorithm.
The prototypical example of a ``biased by design'' algorithm (in a rather extreme way) arises in the  context of \emph{stable
matching} (a different problem from the one considered in this paper): the Gale-Shapley algorithm~\cite{GaleShapley} produces a solution which is  always the best for every man and the worst for every woman, among all feasible solutions, despite the existence of another solution which lies provably ``in the middle'' for every man and woman~\cite{geometry_matchings}.

\spara{Algorithmic bias and randomization.} Consider a job-search setting where we have a
certain number of positions and applicants. % $m$ positions to fill and $n$ applicants.
Assume that each applicant has a binary fitting for each of the positions (either she is fit for the job or not) and a binary satisfaction function (either she is selected or not).
This can be modeled as a matching problem in  a bipartite graph.
Unless %there are as many open positions as the number of qualified candidates (i.e.,  $m \geq n$), some of the applicants will have to be left out.  Moreover, even when $m \geq n$
a matching covering simultaneously
all applicants exists, some of them will have to be left out. % from the solution.
An unselected applicant could notice that there are other matchings (even maximum-size matchings) satisfying
her. However, any deterministic algorithm is programmed to pick a specific one which may not include her:
she might rightfully deem this unfair.

Unlike the Gale-Shapley algorithm,
    whose bias can be simply characterized by  a theorem, %\textcolor[rgb]{0.00,0.00,1.00}
{for the problems we consider in this paper}
it may be hard to tell in advance which particular individuals a given algorithm
favours. However, the fact that the bias is not easy to pinpoint
does not mean it does not exist, just that we do not know what it is.

Since no single candidate solution satisfying all individuals at the same time can exist in general,
\emph{we turn our attention to randomized algorithms}, which make random choices to pick
from among several valid solutions.

%\spara{Example: bipartite matching.}
In our job-search example, imagine there is a single open position  and $n$ applicants fit for it.
Intuitively, all applicants are ``equally qualified'' in this case and the fairest solution would choose one of them uniformly at random,
giving each  applicant
a guaranteed satisfaction (matching) probability of $1/n$.
However, as the graph between applicants and jobs grows more complex, it becomes unclear how to proceed, or what properties one should demand of a fair distribution of solutions.
Our next example illustrates why requiring exactly the same satisfaction probability for all individuals would not make for a good definition.
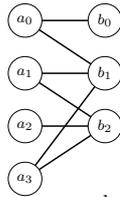
\begin{figure}[t!]
%\vspace{-0.4cm}
%\begin{minipage}[c]{0.5\textwidth}
  \begin{center}
\scalebox{0.7}{
\begin{tikzpicture}[scale=0.5]
\begin{scope}[rotate=-90]
    \grEmptyPath[Math,prefix=a,RA=2,RS=0]{4}
    \grEmptyPath[Math,prefix=b,RA=2,RS=3]{3}
\end{scope}
\Edges(b0,a0,b1,a1,b2,a3,b1)
\Edges(a2,b2)
\end{tikzpicture}
}
\vspace{-3mm}
  \caption{An example bipartite graph between people (on the left) and jobs (on the right).
  }\label{fig:bigraph}
\end{center}
\vspace{-1mm}
\end{figure}

\begin{myexample}[Satisfaction probability]\label{example1}%\enlargethispage{3\baselineskip}
%Consider the matching problem in Figure \ref{fig:bigraph} with people (left) with jobs   (right).
Consider the problem of finding a matching on the bipartite graph of Figure \ref{fig:bigraph} between people (on the left) and jobs (on the right).
Let $\calU = \{ a_0, a_1, a_2, a_3 \}$ and let $\calS$ denote the set of all possible matchings. An individual $u \in \calU$ is satisfied by a solution $S \in \calS$ iff it is matched in $S$ (i.e., she is selected for the job).
Consider the distribution $D$
assigning probability $\frac{1}{3}$ to each of the  following
solutions: $M_1 = \{(a_0,b_0),(a_1,b_1),(a_2,b_2)\}$, $M_2 =
\{(a_0,b_0),(a_1,b_1),(a_3,b_2)\}$, $M_3= \{(a_2,b_2),(a_3,b_1)\}$ and zero probability to all the other matchings.

The satisfaction probability of each individual under distribution $D$ is
exactly the same, namely $\frac23$.%: $D[a_0]=D[a_1]=D[a_2]=D[a_3]=\frac{2}{3}$.% Moreover, it is impossible to guarantee coverage probability better than $2/3$ for $a_1, a_2$ and $a_3$ at the same time, because there is a set of three users ($\{a_1, a_2, a_3\}$) with only two neighbours.
While $D$ might naively look ``fair'',
 notice that the job $b_0$ is left unassigned in $M_3$, despite the existence of a
 fitting candidate  occasionally left unemployed ($a_0$).
 This
artificially restricts the satisfaction probability
of~$a_0$. Observe that, %, who can always be satisfied without impacting anyone else's chances:
for any matching covering a subset $T \subseteq \{a_1, a_2, a_3\}$,
there is another matching covering $T \cup \{a_0\}$. %On the other hand, $a_2$ is much less employable as she is fit only for position $b_2$,  for which other two applicants compete.
So $a_0$ can always be satisfied without impacting anyone else's chances,
hence any reasonable solution should match $a_0$ with probability 1. Other applicants
 will have lower satisfaction probability though (as no matching can satisfy  all of $a_1, a_2$ and $a_3$ at the same time).
\end{myexample}

%Based on the insight from Example~\ref{example1}, we provide the key definition of our work.
The insight from Example~\ref{example1} leads us to the key definition of our work.
Our aim is to provide, on any given input instance, the strongest guarantee possible for all individuals, in terms of satisfaction  probability.
We thus introduce the \emph{distributional maxmin fairness} framework. Informally, a distribution over matchings is \emph{maxmin-fair} if it is impossible to improve the satisfaction
probability of any individual without decreasing it for some other
individual which is no better off (see Section~\ref{sec:maxmin} for a formal definition).

\begin{myexample}[Maxmin-fair distribution]\label{example2}
Consider Example~\ref{example1} again. % and Figure \ref{fig:bigraph}.
%Consider again the job matching problem of Example~\ref{example1}. % and Figure \ref{fig:bigraph}.
A distribution assigning non-zero probability to a solution not covering $a_0$ (such as $M_3$) cannot be maxmin-fair, as otherwise one can increase the satisfaction probability of $a_0$ without detriment to anyone else.
%So for $D$ to be maxmin-fair we must have $D[a_0]=1$.
%Every maxmin-fair distribution for Example~\ref{example1}, matches $a_0$ with probability 1, as otherwise one can increase the satisfaction of $a_0$ without detriment to anyone else.
On the other hand, notice that $\{a_1,a_2,a_3\}$ have only two neighbors $\{b_1,b_2\}$, making
it impossible to guarantee satisfaction probability $\ge \frac{2}{3}$ for $a_1, a_2$ and $a_3$  at the same time.
This graph has four maximum matchings:
%The graph in this example has four maximum matchings:
%There are four maximum matchings in this example:
$M_1$ and $M_2$ from  Example~\ref{example1}, $M_4 = \{ (a_0, b_0), (a_1, b_2), (a_3, b_1) \}$, and
$M_5 = \{ (a_0, b_0), (a_2, b_2),$ $(a_3, b_1) \}$. %A maxmin-fair distribution is the one % turns out to be the one
The distribution $F_1$ choosing from among  $M_1, M_2, M_4$ and $M_5$ with probability
$\frac{1}{3},\frac{1}{6},\frac{1}{6},\frac{1}{3}$, respectively, is maxmin-fair.
The satisfaction probabilities of $a_0, a_1, a_2$ and $a_3$ are then
$1, \frac23, \frac23, \frac23$. Any attempt to match, say, $a_1$ with probability % larger than
$> \frac 23$ will necessarily result in satisfaction probability $< \frac 23$ for $a_2$ or
$a_3$.
Another maxmin-fair distribution is, e.g., the distribution $F_2$ choosing uniformly at random
from among $M_1, M_2$ and $M_5$.

%Finally, observe that it is never optimal to pick a non-maximal solution. For matchings and matroid
%problems, a much
%stronger claim holds: a maxmin-fair distribution never picks a non-maximum solution.
\end{myexample}

\spara{Overview of our contributions.}
The contributions of this paper can be summarized as follows:
\begin{itemize}
\item We introduce and characterize the \emph{distributional maxmin-fairness} framework providing, on any given problem instance, the strongest guarantee possible for all
individuals, in terms of satisfaction  probability (Section \ref{sec:maxmin}). While in this paper, for sake of simplicity of presentation,  we focus on matching problems, our
definition applies to a wider variety of problems (such as those listed in Example~\ref{ex:matroids}).

\item In Section \ref{sec:properties}, 
we show  that when the structure of valid solutions forms a matroid (which is the case for matchings),
maxmin-fairness minimizes the
largest inequality gap in satisfaction probabilities between all pairs of individuals, among all Pareto-efficient distributions (Theorem~\ref{thm:maxminsocial}).
We also observe that for such problems
     the ``price of fairness''  is zero: maxmin fairness is
attainable at no cost in solution size.

\item We give a characterization of the ``degree of fairness'' attainable in any bipartite matching instance (Theorem~\ref{thm:matroids}) and any matroid problem instance
(Theorem~\ref{thm:matroids}), generalizing
the classical marriage theorem due to Hall~\cite{hall}.

%\item We give a characterization (Theorem~\ref{thm:matroids}) of the ``degree of fairness'' attainable in any given matroid problem instance, generalizing Hall's marriage theorem~\cite{hall}.

\item We apply our framework to matching problems in bipartite graphs  (Sections~\ref{sec:problem} and \ref{sec:proofs_improved_algo}), leading to our main contribution:
an exact algorithm for \emph{maxmin-fair bipartite matching} with running
time %$O(|V| |E| \log |V|)$
$O((|V|^2 + |E| |V|^{2/3})\cdot (\log |V|)^2)$ (Theorem~\ref{thm:fair_bpm}). %time_bound
%by combining a method to find a ``fair decomposition'' in a logarithmic number of maximum flow computations with algorithms for edge-coloring regular bipartite graphs.
%a logarithmic number of maximum flow computations.
We also obtain a polynomial time maxmin-fair algorithm for matching in general graphs by a reduction
to the aforementioned bipartite case (Theorem~\ref{thm:simplify}).

\item We discuss how to achieve full transparency for real-world deployment of our framework (Section \ref{sec:transparency}). The discussion leads to the problem of producing a
maxmin-fair distribution with small support, for which we offer an approach making small modifications to our algorithm.

\item Our experiments (Section~\ref{sec:experiments}) show that our algorithm performs faster in practice than
its theoretical running time
and scales to graphs with tens of millions of vertices and hundreds of millions of edges, taking only a few minutes on a simple architecture.
Our analysis confirms that our method provides stronger satisfaction probability guarantees than non-trivial baselines.

\end{itemize}

%, as shown in our experiments.
%In the case of bipartite matching, the maxmin-fair mechanism is envy-free and strategy proof.
%is solvable time $O(\min(m^{3/2}, m n^{2/3})\cdot (\log n)^2)$. Our techniques combine flow-based methods to find a ``fair decomposition'' with algorithms for edge-coloring regular
%bipartite graphs.
%, which together with Theorem~\ref{thm:fair_bpm}) yields a polynomial-time maxmin-fair algorithm for matching in general unweighted graphs.
%extend these ideas to matchings in general graphs by using results from transver.

%Edmond's matroid partitioning theorem.
%
\mycomment{
    Our third contribution (Theorem~\ref{thm:general_algo})
identifies a widely applicable condition
for
    the existence of polynomial-time maxmin-fair algorithms.
    Namely, such an algorithm exists provided that,
        %a maxmin-fair distribution of solutions to any search problem may be found efficiently under the sole assumption that,
        given an arbitrary assignment of
non-negative
weights to individuals, a maximum-weight solution may be found in polynomial time.
%This is a large subset of the class of search problems for which exact algorithms are known.
    This result opens the door to further research on maxmin-fair algorithms:
although existence is shown,
the method implicit in
the proof may lead to suboptimal running times, calling for improved methods for specific problems.

%In order to limit the core material  of this paper to the first 10 pages,
%For reasons of space economy, some of the proofs of the material appearing in the first 10 pages are deferred to the appendices.

}
%\smallskip

%\textcolor[rgb]{0.00,0.00,1.00}
\ifhideproofs
%{For reasons of space economy, proofs are omitted. They can be found in the extended version of this work \cite{ourarxiv}.}
\fi

%We also study how to relax  the fairness requirement  if an unrelated quantity is to be optimized, or if a small-support distribution is sought.

%We conclude with an experimental evaluation of our fair matching algorithm, showing its ability to scale to
%million-node graphs, and analyzing the fair decompositions thus obtained.

%\enlargethispage{\baselineskip}
%Summarizing, the main contributions of this paper are:
%\squishlist
%	\item the definition of the max-min fairness framework;
%	\item the equivalence, in the context of matroid problems, between maxmin-fairness and minimizing social inequality;
%	\item a characterization of the fairness parameters, i.e., the minimum and maximum satisfaction
%    probabilities in the  maxmin-fair solution;
%	\item an efficient algorithm for fair matching general graphs.
%	\item an efficient algorithm for fair search in the class of efficiently optimizable problems.
%\squishend

%\input{techniques.tex}

\section{Problem definition}
\label{sec:maxmin}
%\subsection{Distributional maxmin fairness.}

In this section we provide the key definition of our \emph{distributional maxmin-fairness} framework,
considering a very general \emph{search problem} instance $\calI = (\calU,\calS)$
defined over a finite set of  individuals $\calU$ and where
$\calS \neq \emptyset$ denotes the set of feasible solutions for the problem instance $\calI$.
(For example, instance $\calI$ could represent a bipartite graph between jobs and a set $\calU$ of applicants, and $\calS$ the set of all matchings.)
%collection of all sets of individuals covered by some matching.)
We assume that for every solution $S \in \calS$, each individual $u \in \calU$
    is either fully satisfied  or fully dissatisfied, and this is the only property of the solution
    we are concerned with. Thus, for the sake of simplicity, we will
    identify each solution in $\calS$ with the subset of users satisfied by it, so $\calS \subseteq
    2^{\calU}$.
Note that $\calS$ is defined implicitly by the structure of the problem, and not explicitly encoded in the input. %See Section~\ref{sec:problem_matching} for an example.
%For example, instance $\calI$ could represent a bipartite graph $G$ between jobs and a set $\calU$ of applicants, where $\calS$ is the collection of all sets of applicants simultaneously covered by some matching in $G$. The input for this problem would encode the graph $G = (V, E)$ and the user set $\calU\subseteq V$.

Given $\calI$, our problem is to return an element of $\calS$ while providing a  fairness guarantee
to all individuals in $\calU$. Since in general no single candidate solution satisfying all $u \in \calU$ at the same time
exists ($\calU \notin S$), we
seek %turn our attention to
a \emph{randomized algorithm} $\calA$ that, for any given problem instance
$\calI$,  always halts and selects one solution $\calA(\calI)$ from $\calS$.
%according to a random process.
Thus $\calA$ induces a probability distribution $D$ over $\calS$: $\Pr_D[S] = \Pr[\calA(\calI) = S]$
for each $S \in \calS$.
The \emph{satisfaction probability} of each individual $u \in \calU$  under  $D$ is defined by
$D[u] = \Pr_{S \sim D}[  u \in S ].$

%that is for each $S \in \calS$ a probability that $S$ is the solution selected by  $\calA$ on the specific problem instance $\calI$, i.e., $$\pi^S = \Pr_{S \sim D}(S) = \Pr(\calA(\calI) = S)  \;\; \mbox{with}  \; \sum_{S \in \calS} \pi^S = 1.$$
%\fullversion{In the following we will blur the distinction the randomized algorithm $\calA$ and the distribution $D$ induced by it.}

%\begin{definition}[Satisfaction probability]
%Given a search problem instance $\calI = (\calU,\calS)$ and a probability distribution $D$ defined
%over $\calS$, the satisfaction probability of each individual $u \in \calU$  under  $D$ is %for $D$ is
%$D[u] = \Pr_{S \sim D}[  u \in S ].$
%\end{definition}

%   Simple techniques (such as randomizing the input order before running the standard deterministic assignment algorithm) only serve to shift the bias somewhere else.
% Instead, we need to define which properties
% our randomized algorithm must exhibit in order to be fair.

%It may be tempting to call a distribution fair if all $u \in \calU$ have exactly the same satisfaction probability. The next example illustrates why this would not make for a good definition.

Based on the insight from Example~\ref{example1}, we next provide the key definition of our work.  Informally, a distribution over solutions is \emph{maxmin-fair} if it is impossible to improve the satisfaction probability of any individual without decreasing it for some other individual which is no better off.

 \begin{mdframed}[innerbottommargin=3pt,innertopmargin=3pt,innerleftmargin=5pt,innerrightmargin=5pt,backgroundcolor=gray!10,roundcorner=10pt]
\begin{definition}[Maxmin-fairness]\label{def:maxmin}
%A distribution $D\in\Delta_{\solt{X}}$ is \emph{maxmin-fair} if for all $D'\in\Delta_{\solt{X}}$ and all $u \in \users{X}$, it holds that
A distribution $F$ over $\calS$ is \emph{maxmin-fair} for $\calU$ if for all distributions $D$ over $\calS$ and all $u \in \calU$,
\begin{align}\label{eq:maxmin}
\cov{D}{u} &> \cov{F}{u} \implies \exists v \in \calU \mid
    \cov{D}{v} < \cov{F}{v} \le \cov{F}{u}.
    \end{align}
Similarly, a randomized algorithm is maxmin-fair if it induces a maxmin-fair distribution.
\end{definition}
\end{mdframed}

Finding a maxmin-fair distribution involves solving a continuous optimization problem over (infinitely many) distributions over the set $\calS$ of valid solutions (which is
        commonly exponential in size). The challenge we face is thus how to design an efficient randomized algorithm {inducing} a maxmin-fair distribution.

\begin{problem}\label{prob:main}
For a given search problem,
design a randomized algorithm $\mathcal A$ which always terminates and such that, for each instance  $\calI = (\calU,\calS)$,
 the distribution of $\calA(\calI)$ is maxmin-fair for $\calU$ over $\calS$.
\end{problem}

While our definition applies to a wider variety of search problems,
in this
paper, for sake of simplicity of presentation, we solve Problem~\ref{prob:main} in the case where the search problem is a matching problem in a graph. Let us specify what the sets $\calU$ of users and $\calS$ of
solutions are in this case (Table~\ref{tab:notation} below summarizes the notation used throughout the paper).

% For tables use
\begin{table}[h!]
\begin{center}
% table caption is above the table
\caption{Summary of notation.}
\label{tab:notation}       % Give a unique label
% For LaTeX tables use
\begin{tabular}{ll}
\hline\noalign{\smallskip}
symbol & meaning \\
\noalign{\smallskip}\hline\noalign{\smallskip}
$G = (V, E)$ & undirected, unweighted graph with vertex set $V$ and edge set $E$ \\
$\calU$ & set of users; $\calU \subseteq V$ for matchings \\
$\calS \subseteq 2^{\calU}$ & collection of feasible solutions (possible subsets of satisfied users)\\
$\Gamma_G(A)$  & set of neighbours of  $A \subseteq V$ in $G$ \\
$V = L \dot{\cup} R$ & bipartition of the vertex set $V$ of a bipartite graph \\
$\rho(A)$ & (for graphs) size of the largest matchable subset of $A \subseteq V$ \\
$\rho: 2^L \to \naturals$ & (for matroids) rank function of a matroid with ground set $L$ \\
$D$ & distribution of subsets of $\calS$    \\
$D[v]$ & satisfaction probability of user $u$ under distribution $D$ \\
$D^\uparrow$ & vector of satisfaction probabilities of $D$ in increasing order\\
$D^\downarrow$ & vector of satisfaction probabilities of $D$ in decreasing order\\
$\succ$ & lexicographical order of vectors\\
$\pi(G)$ & minimum satisfaction probability of a maxmin-fair distribution for $G$\\
$\Pi(G)$ & maximum satisfaction probability of a maxmin-fair distribution for $G$\\
%$(\lambda_1, \ldots, \lambda_n)$ &  $F^\uparrow$, where $F$ denotes a maxmin-fair distribution\\
$x_{uv}$ & probability of $u$ being matched to $v$ in a fixed maxmin-fair distribution \\
$B_1, \ldots, B_k$ & fair decomposition of $L$ into blocks \\
$ (F(X))_{X \sim D}$ & distribution of random variable $F(X)$ when $X$ is drawn from $D$\\
$G\rst{A}$  & subgraph of $G$ induced by $A \cup \Gamma(A)$ \\
$M\rst{A}$  & restriction of matroid $M$ to the set $A$  \\
$G/A$ &  subgraph of $G$ induced by $(L \cup R) \setminus (A \cup \Gamma(A))$ \\
$M\rst{A}$  & contraction of matroid $M$ to the set $A$  \\
\noalign{\smallskip}\hline
%(Sec.~\ref{sec:proofs_properties})
\end{tabular}
\end{center}
\end{table}

%\subsection{Preliminaries.}
Let $G = (V, E)$ be an unweighted simple graph.
A \emph{matching} in $G$ is a set of vertex-disjoint edges of $G$.
A \emph{maximum} matching is a matching of largest size.
%A \emph{maximal} matching is a matching that is maximal with respect to set inclusion; any maximum matching is maximal, but not viceversa.
The matching $M$
\emph{covers} a vertex $v \in V$
if $v$ is incident to some edge in $M$. %, i.e., $v \in \bigcup_{e \in M} e$.
A set $S \subseteq V$ is \emph{matchable} if there is a matching of $G$ covering all of $S$.
For $S \subseteq V$, define $\rho_G(S)$ as the size of the largest matchable subset of $S$; then
$\rho_G(V)$ is the size of the maximum matching of $G$.
Denote by $\Gamma_G(S)$ the set of neighbours of $S$ in $G$. We will drop the $G$
subscript when no confusion may arise.

In the \emph{fair matching} problem, the input is a graph $G = (V, E)$ and
a set $\calU \subseteq V$ of users. % which we wish to provide fairness guarantees for.
Following our assumption of binary satisfaction, user $u \in \calU$ is \emph{satisfied}  by a
matching $M$ if $u$ is covered by $M$. %(Hence in this context we may replace the expression ``satisfaction probability'' by the possibly more natural ``coverage probability''.)
The set $\calS$ of valid solutions is  the set  of
matchable subsets of~$\calU$. The set $\calS$ is not part of the input given to the algorithm, but implicitly defined by $G$ and $\calU$.
%We measure the running time of a fair matching algorithm as a function of the size of $G$ and $\calU$, but not $\calS$.

%Problem~\ref{prob:main} asks for a maxmin-fair distribution of matchings.

While the results provided in Section \ref{sec:properties} hold for fair-matching on a general graph, the algorithms developed in Sections~\ref{sec:problem} and \ref{sec:proofs_improved_algo} are for the interesting special case of \emph{one-sided fair bipartite matching} problem, i.e., where
    $G$ is bipartite (with bipartition $V = L\ \dot{\cup}\ R$) and the set of users is given by $\calU=L$.
By solving the one-sided fair bipartite matching problem, we also obtain a polynomial time maxmin-fair algorithm for matching in general graphs by means of a reduction
to the bipartite case (see Theorem~\ref{thm:simplify}).

\section{Fairness and social inequality}
\label{sec:properties}
In this section we present several %\textcolor[rgb]{0.00,0.00,1.00}{characterizing}
properties of maxmin-fair distributions. These are of independent interest as they provide alternative definitions of maxmin-fairness (Theorems~\ref{thm:lexi}
        and~\ref{thm:maxminsocial}) which are arguably just as natural as Definition~\ref{def:maxmin}; moreover, the latter offers insights into the inequality
distribution properties of maxmin-fairness.
Some results are only stated here; their proofs may be found in Appendix~\ref{sec:proofs_properties}.

\subsection{Basic properties of maxmin-fair distributions}
%The first observation is that,
    An important preliminary observation is that maxmin-fair distributions are unique as far as satisfaction probabilities go, even though several ways may exist to achieve the
    optimal satisfaction probabilities.% (as in Example~\ref{example2}).

    %, while several maxmin-fair distributions may exist, their corresponding satisfaction probabilities for each user are unique:
\begin{restatable}{lemma}{lemunique}
\label{lem:unique}
Let $F$ and $D$ be two maxmin-fair distributions. Then $\cov{F}{u} = \cov{D}{u}$ for all $u \in \calU$.
\end{restatable}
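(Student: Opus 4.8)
The plan is to prove the statement by contradiction, working \emph{directly} from Definition~\ref{def:maxmin} without relying on any later (leximin or social-welfare) characterization of maxmin-fairness. The only structural facts I need are that $\calU$ is finite and that each satisfaction probability $\cov{D}{u}=\Pr_{S\sim D}[u\in S]$ is a genuine real number, so that the set of individuals on which $F$ and $D$ disagree admits an extremal element. Concretely, suppose toward a contradiction that $\cov{F}{u}\neq\cov{D}{u}$ for some $u$, and split the disagreeing individuals into $A=\{u\in\calU:\cov{D}{u}>\cov{F}{u}\}$ and $B=\{u\in\calU:\cov{F}{u}>\cov{D}{u}\}$; by assumption $A\cup B\neq\emptyset$.

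The key idea is to consider the potential $\phi(u)=\min(\cov{F}{u},\cov{D}{u})$ and to choose $u_0\in A\cup B$ minimizing $\phi$ over the finite, nonempty set $A\cup B$. Assume first $u_0\in A$, so $\cov{D}{u_0}>\cov{F}{u_0}$ and hence $\phi(u_0)=\cov{F}{u_0}$. Applying the maxmin-fairness of $F$ with the alternative distribution taken to be $D$ and the individual $u_0$ (whose hypothesis $\cov{D}{u_0}>\cov{F}{u_0}$ holds), condition~\eqref{eq:maxmin} yields a witness $v\in\calU$ with $\cov{D}{v}<\cov{F}{v}\le\cov{F}{u_0}$. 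This $v$ lies in $B$ (since $\cov{F}{v}>\cov{D}{v}$), so $\phi(v)=\cov{D}{v}<\cov{F}{v}\le\cov{F}{u_0}=\phi(u_0)$, contradicting the minimality of $u_0$.

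The case $u_0\in B$ is symmetric: exchange the roles of $F$ and $D$ and invoke the maxmin-fairness of $D$ instead. Either way we reach a contradiction, so $A\cup B=\emptyset$ and $\cov{F}{u}=\cov{D}{u}$ for all $u\in\calU$. I expect the only genuine obstacle to be identifying the correct extremal quantity: the naive attempt of selecting the individual with the largest discrepancy $\abs{\cov{F}{u}-\cov{D}{u}}$ does not interact cleanly with the ``$\le\cov{F}{u}$'' clause of~\eqref{eq:maxmin}, whereas minimizing the \emph{smaller} of the two probabilities $\min(\cov{F}{u},\cov{D}{u})$ matches exactly the comparison appearing in the definition and forces the witness $v$ to be strictly cheaper in $\phi$. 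Ties among the minimizers of $\phi$ are harmless, since the witness $v$ is \emph{strictly} smaller in $\phi$, which already contradicts minimality regardless of how ties are broken.
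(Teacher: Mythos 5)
Your proof is correct and follows essentially the same route as the paper's: both arguments pick the minimizer of $\min(\cov{F}{u},\cov{D}{u})$ over the set of disagreeing individuals and then play this extremal choice against condition~\eqref{eq:maxmin}, the only difference being presentational (the paper shows no valid witness can exist, while you let the definition produce a witness and contradict minimality). No gaps.
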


 \begin{myexample}
In Example~\ref{example2} we gave two maxmin-fair distributions, $F_1$ and $F_2$, which are obtained by combining maximum matchings in different ways, but both satisfy
$\cov{F_1}{a_0} = \cov{F_2}{a_0} = 1$ and $\cov{F_1}{x} = \cov{F_2}{x} = \frac{2}{3}$ for $x \in \{a_1, a_2, a_3\}$.
\end{myexample}

Given a distribution $D$ over $\calS$, write $\scov{D} = (\lambda_1, \ldots, \lambda_n)$ for the
vector of satisfaction probabilities $(D[u])_{u \in \calU}$ sorted in increasing order.
Let $\succ$ denote the lexicographical order of vectors:
$(v_1, \ldots, v_n) \succ (w_1, \ldots, w_n)$
iff there is
 some index
 $i\in[n]$ such that
 $v_j = w_j$ for all $j < i$ and $v_i > w_i$ (the relations $\succeq, \prec$ and $\preceq$ are defined similarly).
%
%   Given any distribution $D$ over $\solt{X}$, define the (non-decreasing) \emph{sorted
%       probability vector} of
%$D$, denoted $\scov{D}$,
% as the unique
%vector
%%$(v_1, \ldots, v_n)$ where $v_1 \le v_2 \le \ldots \le v_n$ and the equality $ \{v_1, \ldots, v_n \} = \{\esatis{G}{\mathcal A}(u) \mid u \in L \}$
%$(\lambda_1, \ldots, \lambda_n)$ where $\lambda_1 \le \lambda_2 \le \ldots \le \lambda_n$ and the equality $
%\{\lambda_1, \ldots, \lambda_n \} = \{\cov{D}{u} \mid u \in L \}$
%holds between multisets.
%(Notice that $\scov{D}$ denotes a vector and not a distribution.)}
The following holds.%characterization holds.
\begin{restatable}{theorem}{thmlexi}
\label{thm:lexi}
A distribution $F$ is maxmin-fair if and only if $\scov{F} \succeq \scov{D}$ for all distributions $D$ over $\calS$.
%A distribution $F$ is maxmin-fair if and only if $\scov{D} \preceq \scov{F}$ for all distributions $D$ over $\calS$.
\end{restatable}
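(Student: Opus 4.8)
The plan is to prove both implications through the convex geometry of the \emph{achievable region} $P = \{ (\cov{D}{u})_{u \in \calU} : D \text{ a distribution over } \calS\}$. Since any mixture of distributions over $\calS$ is again a distribution over $\calS$, $P$ is the convex hull of the finitely many vectors $\{\indic_S : S \in \calS\}$, hence a compact convex polytope. Convexity is the feature that makes the two characterizations coincide, and it is what I would exploit throughout. As a first step I would record that a lexicographically $\succeq$-maximal sorted vector is attained on $P$: maximize the smallest coordinate $\min_u x_u$ over the compact set $P$, then the second-smallest coordinate over the (compact) set of optimizers, and so on; after $n$ continuous optimizations over nested compact sets one obtains a distribution $F^\ast$ with $\scov{F^\ast} \succeq \scov{D}$ for every $D$.

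The core of the argument is the implication ``$\scov{F} \succeq \scov{D}$ for all $D$ $\implies$ $F$ maxmin-fair'', which I would argue by contraposition. If $F$ violates Definition~\ref{def:maxmin}, fix the witnessing $D$ and $u$, set $t = \cov{F}{u}$, and split $\calU$ into $A = \{v : \cov{F}{v} \le t\}$ and $B = \{v : \cov{F}{v} > t\}$; the failure of the implication says exactly that $\cov{D}{v} \ge \cov{F}{v}$ for every $v \in A$, while $\cov{D}{u} > t$. Consider the mixture $F_\epsilon = (1-\epsilon)F + \epsilon D$, still a distribution over $\calS$, with $\cov{F_\epsilon}{v} = \cov{F}{v} + \epsilon(\cov{D}{v} - \cov{F}{v})$. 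For $\epsilon > 0$ small enough, every $v \in B$ stays strictly above $t$, every $v \in A$ weakly increases, and $u$ strictly increases past $t$. I would then show this forces $\scov{F_\epsilon} \succ \scov{F}$, contradicting the $\succeq$-maximality of $F$.

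The main obstacle is precisely this sorted-vector comparison, because the mixture can \emph{decrease} the probabilities of individuals in $B$, so there is no global coordinatewise domination and one cannot simply invoke monotonicity of sorting. The point I would make is that all such decreases remain above the threshold $t$, hence affect only coordinates of $\scov{F_\epsilon}$ lying beyond the first point of disagreement. Writing $a = |A|$ and $a' = |\{v : \cov{F_\epsilon}{v} \le t\}|$, I would observe $a' \le a-1$, since $u$ has left the block of values $\le t$ and no element of $B$ has entered it. Restricting to the coordinates that remain $\le t$ — on which $F_\epsilon$ dominates $F$ coordinatewise while forming a sub-multiset of the $A$-values — the order-statistic comparison gives $\mu_j \ge \lambda_j$ for all $j \le a'$, where $\lambda = \scov{F}$ and $\mu = \scov{F_\epsilon}$; and at position $a'+1 \le a$ one has $\mu_{a'+1} > t \ge \lambda_{a'+1}$. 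Thus the first index at which $\mu$ and $\lambda$ differ carries a strict increase, i.e.\ $\scov{F_\epsilon} \succ \scov{F}$, as required. (The degenerate case $B = \emptyset$ is immediate, as then $F_\epsilon$ dominates $F$ coordinatewise with strictly larger sum.)

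For the reverse implication ``$F$ maxmin-fair $\implies \scov{F}\succeq\scov{D}$ for all $D$'', I would invoke the uniqueness Lemma~\ref{lem:unique}. The core paragraph shows that the $\succeq$-maximal distribution $F^\ast$ is itself maxmin-fair; by Lemma~\ref{lem:unique} every maxmin-fair distribution $F$ has the same satisfaction probabilities as $F^\ast$, whence $\scov{F} = \scov{F^\ast} \succeq \scov{D}$ for all $D$. As a byproduct this also yields that the $\succeq$-maximal sorted vector is unique, closing the equivalence.
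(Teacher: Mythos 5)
Your proof is correct, and it splits into one half that mirrors the paper's argument and one half that genuinely departs from it. The backward direction (lexicographic maximality implies maxmin-fairness) is essentially the paper's: both proceed by contraposition, take the witnessing pair $(D,u)$ from a violation of Definition~\ref{def:maxmin}, and pass to the mixture $(1-\epsilon)F+\epsilon D$ for small $\epsilon>0$; your threshold bookkeeping with $A=\{v:\cov{F}{v}\le t\}$, $B=\{v:\cov{F}{v}>t\}$, the count $a'\le a-1$, and the sub-multiset order-statistic comparison is in fact \emph{more} detailed than the paper, which chooses $\epsilon$ so that relative orderings around $u$ are preserved and then dispatches the final sorted-vector comparison with little justification. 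The forward direction is where you differ: the paper argues it directly (let $A$ be the disagreement set, let $u$ minimize $\cov{F}{\cdot}$ over $A$, and use maxmin-fairness to produce $v$ with $\cov{D}{v}<\cov{F}{u}$, which forces $\scov{F}\succ\scov{D}$), whereas you derive it indirectly from three ingredients: existence of a lexicographically maximal $F^{*}$ via nested compact maximization of order statistics over the polytope of achievable satisfaction vectors, maxmin-fairness of $F^{*}$ by your backward direction, and then $\scov{F}=\scov{F^{*}}$ for every maxmin-fair $F$ by the uniqueness Lemma~\ref{lem:unique}. This is legitimate---Lemma~\ref{lem:unique} is proved in the paper independently of Theorem~\ref{thm:lexi}, so there is no circularity---and it buys you a single delicate lexicographic argument instead of two, plus uniqueness of the lex-maximal sorted vector as a byproduct. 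What it costs is the extra existence step (which the paper only records \emph{after} the theorem, as a corollary via Weierstrass), so the paper's version is more self-contained, while yours reuses more of the surrounding machinery.
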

In other words, a maxmin-fair distribution maximizes the smallest satisfaction probability;
subject to that, it maximizes the second-smallest satisfaction probability, and so on.
\begin{myexample} In Examples~\ref{example1} and ~\ref{example2} we have
$\scov{F_1} = \scov{F_2} = (1, \frac{2}{3}, \frac{2}{3}, \frac{2}{3}) \succ
\scov{D} = (\frac{2}{3}, \frac{2}{3}, \frac{2}{3}, \frac{2}{3}
        )$.
As $\scov{D}$ is not lexicographically maximal, it cannot be maxmin-fair; whereas $\scov{F_1}$ can be shown to be lexicographically maximal, and hence $F_1$ is maxmin-fair.
\end{myexample}

An important observation is that a maxmin-fair distribution always exists for any search problem instance with a feasible solution:
\begin{corollary}Given a search problem instance $\calI = (\calU,\calS)$, a maxmin-fair distribution always exists.
\end{corollary}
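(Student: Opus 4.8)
The plan is to derive the corollary from the lexicographic characterization of maxmin-fairness in Theorem~\ref{thm:lexi}: by that theorem it suffices to exhibit a distribution $F$ over $\calS$ whose sorted satisfaction vector $\scov{F}$ is lexicographically maximal among all $\scov{D}$, as $D$ ranges over distributions on $\calS$. Since $\calU$ is finite and $\calS \subseteq 2^{\calU}$, the set $\calS$ is finite, so the distributions over $\calS$ form a compact simplex $\Delta \subseteq \reals^{\abs{\calS}}$ (it is closed and bounded). I would then observe that the map sending $D \in \Delta$ to $\scov{D} \in \reals^n$, where $n = \abs{\calU}$, is continuous: the satisfaction probabilities $\cov{D}{u} = \sum_{S \ni u} \Pr_D[S]$ depend linearly, hence continuously, on $D$, and the operation sorting a vector's entries in increasing order is itself a continuous map $\reals^n \to \reals^n$. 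Consequently the image $K = \{\scov{D} : D \in \Delta\}$ is a compact subset of $\reals^n$.

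It then remains to show that a nonempty compact set $K \subseteq \reals^n$ has a lexicographically maximal element. This I would establish by the standard nested-set argument: set $K_0 = K$, and for $i = 1, \ldots, n$ let $c_i = \max\{x_i : x \in K_{i-1}\}$, which is attained because $K_{i-1}$ is compact and the projection onto the $i$-th coordinate is continuous, and put $K_i = \{x \in K_{i-1} : x_i = c_i\}$, a closed and hence compact nonempty subset of $K_{i-1}$. The set $K_n$ is then nonempty, and any $v^\ast \in K_n$ satisfies $v^\ast \succeq x$ for every $x \in K$. Choosing any $F$ with $\scov{F} = v^\ast$ and invoking Theorem~\ref{thm:lexi} yields the desired maxmin-fair distribution.

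The only delicate point is that the lexicographic order is not continuous, so the extreme value theorem cannot be applied directly to the map $D \mapsto \scov{D}$; the nested-compact-sets construction above is exactly what circumvents this, relying only on the continuity of each individual coordinate projection rather than of the order itself. Everything else (finiteness of $\calS$, compactness of $\Delta$ and of $K$, and continuity of the sorting map) is routine. Finally, since $\calS \neq \emptyset$ by hypothesis, $\Delta$ is nonempty and the construction goes through, giving the claimed existence.
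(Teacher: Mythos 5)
Your proof is correct and follows essentially the same route as the paper's: compactness of the (finite-dimensional) distribution simplex, continuity of the map sending a distribution to its sorted satisfaction vector, and the lexicographic characterization of Theorem~\ref{thm:lexi}. The paper compresses the final step into a one-line appeal to the Weierstrass theorem; your nested-compact-sets argument supplies exactly the detail that justifies that appeal, since, as you note, the lexicographic order is not induced by a continuous objective and so the extreme value theorem cannot be invoked in a single stroke.
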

\begin{proof}
The probability vectors defining distributions over $\calS$ form a non-empty compact set,
    and the mapping from such vectors to their corresponding sorted satisfaction vectors is
    continuous, so the claim follows from Weirstrass theorem.
\end{proof}

% \subsection{Matroids and social inequality}

\subsection{Matroid problems}
Theorem \ref{thm:lexi} above provides a definition of maxmin-fairness alternative to Definition~\ref{def:maxmin}.
At the end of this section (Theorem~\ref{thm:maxminsocial})  we provide a second alternative definition characterizing the inequality properties of the distribution of satisfaction probabilities, i.e., the differences between the
satisfaction of the least and most satisfied individuals. It turns out that for a large class of problems, this difference is minimized in a maxmin-fair distribution, making the maxmin-fair distribution the most equitable. However this
does not hold in general for all search problems.

 To be able to state the
class of problems for which Theorem~\ref{thm:maxminsocial} holds, we need to review the concept of \emph{matroids}.
    Many search and optimization problems  can be formulated in terms of
   matroids; they also provide a convenient framework to state and simplify the proofs of some of our results.
\begin{definition}[Matroid problem]\label{def:matroid}
Let $L$ be a finite set. A matroid with ground set $L$ is a non-empty collection $M$
of subsets of $L$ satisfying
the following two properties: (1) % $M$ is downwards closed, i.e.,
    if $A\in M$ and $B \subseteq A$, then $B \in M$;
  (2) for any $X \subseteq L$, all maximal subsets of $X$ (with respect to set inclusion) belonging to $M$ have the same size.%
\mycomment
{
\footnote{%
Equivalently, the second condition may be replaced with:
(2')
    if $A, B \in M$ and $|A| < |B|$, then there is $y \in B\setminus A$ such that $A \cup
    \{y\} \in M$.
}
}

A search problem is a \emph{matroid problem}
if for any instance $\calI = (\calU,\calS)$, the set $\calS$ is a matroid.
The elements of a matroid $M$ are called \emph{independent sets}.
The maximal elements of $M$ %with respect to set inclusion
are called \emph{bases}. All bases have
the same size. % $\dim M$, called the \emph{dimension} of $M$.
%All bases have the same size, called the \emph{dimension} of $M$.
The \emph{rank function} of $M$ is % defined by
$\rho_{M}(S) = \max\{|X| \mid
 X \subseteq S, X \in M
\}$. % hence $\dim M = \rho(L)$.
%    \item If $A, B \in \mathcal C$ and $|A| < |B|$, then there is $y \in B\setminus A$ such that $A \cup \{y\} \in \mathcal C$.
%\end{enumerate}
%\end{enumerate}
%\enlargethispage{\baselineskip}

%Condition (1) is natural after identifying solutions with user subsets.  {Condition (2) is met, e.g., by the sets of matchable vertices in a graph (Sec.~\ref{sec:problem}) and many other problems (see Sec.~\ref{sec:generalizations} for examples.)}

\end{definition}

\begin{myexample}\label{ex:matroids}
The following are matroids (see~\cite{lawler_book}):
\begin{itemize}
    \item The collection of sets of matchable vertices in a graph~\cite{matching_book}. This well-known result follows
    from a theorem of Berge~\cite{berge_matchings} that
        we may extend any matchable set of \emph{vertices} to a matchable set of maximum size.
        {By contrast, the collections of sets of \emph{edges} forming a matching is \emph{not} a matroid.}
     %and the discussion that follows in Section~\ref{sec:properties});
    \item The collection of sets of vertices in a graph for which edge-disjoint paths from another single specified vertex exist.
    \item The collection of linearly independent sets of vectors over a finite vector space.
    \item The collection of forests (acyclic sets of edges) in a graph.
\end{itemize}
The search problems corresponding to finding any of the above are matroid problems.
\end{myexample}

%\textcolor[rgb]{1.00,0.00,0.00}{THE EXAMPLE ABOVE IS OK, BUT WOULD NOT SATISFY THE REVIEWERS. WE NEED TO FOLLOW IT WITH A DISCUSSION OF POTENTIAL REAL-WORLD APPLICATIONS.}

\mycomment{
        This is a direct consequence of the following key property of matchings which asserts
        we may extend any matchable set of \emph{vertices} to a matchable set of maximum size.
        \begin{theorem}[Berge~\cite{berge_matchings}, Edmonds\cite{edmonds_algo}]\label{thm:berge}
        For any matchable set $S\subseteq V$, there is a matchable set $T \supseteq S$ with $|T| = \rho(V)$.
        Moreover, given $S$, a matching covering $T$ may be found in polynomial time.
        \end{theorem}
        A direct consequence of Theorem~\ref{thm:berge}
         is the well-known fact (see~\cite{matching_book}) that the sets of \emph{matchable vertices} in a graph form a
        matroid,
           and therefore the
        matching problem in any graph is a matroid problem.
        \footnote{By contrast, the collections of sets of \emph{edges} forming a matching is not a matroid.}

}
%(Section~\ref{sec:properties}).

%In particular, any matching in the support of a Pareto-efficient distribution is maximum in size.

 Notice that any set $X$ appearing with non-zero probability in a maxmin-fair distribution must be maximum in size; otherwise, by property (2) in Definition~\ref{def:matroid}, $X$ is not maximal so we could
replace $X$ with some strict superset $Y \supsetneq X$, which can only increase the satisfaction probability of every $u \in L$. It is in this sense that the ``price of fairness''
is zero for matroid problems: the support of a maxmin-fair distribution consists only of solutions of maximum size, so it is never necessary to trade fairness for solution size.
In particular this holds for matching problems as well.

%\spara{The fairness parameter of a matroid.}

{ %MYSTERY: WHERE DOES THIS BRACE CLOSE??

\subsection{Minmax-fairness}
By definition, maxmin-fair distributions give the highest possible satisfaction probabilities to the worst-off individuals.
    To investigate the inequality properties of these, we introduce a dual notion of minmax-fair distributions, which by contrast give
    the lowest possible satisfaction probabilities to the best-off individuals.  It turns out that for matroid problems both notions coincide, provided that
    we exclude Pareto-inefficient distributions.
%It turns out that, for matroid problems, %For matroid problems we can characterize maxmin-fair distributions in terms of minimizing social inequality. \ifhideproofs\else Before giving the precise statement, we need some preliminary definitions.

\begin{definition}[Pareto efficiency]
A distribution $E$ is (ex-ante) \emph{Pareto-efficient} if there is no distribution $D$ such that
$\cov{D}{u} \ge \cov{E}{u}$ for all $u\in \calU$ and
$\cov{D}{u} > \cov{E}{u}$ for at least one
$u \in\calU$.
\end{definition}
%It follows from the definition that any
The notion of  Pareto-efficiency expresses the impossibility of improving the
satisfaction probability of some user without detriment to anyone else.
Clearly any
maxmin-fair distribution is Pareto-efficient,
hence any solution in its support
is \emph{maximal} (with regard to set inclusion).%, i.e., not strictly contained in another solution. %\footnote{There is also a weaker notion of ex-post Pareto efficiency, which is implied by ex-ante Pareto efficiency.}
%However, Pareto-efficiency by itself is a very weak requirement: for example, a deterministic algorithm outputting a single maximal solution would be Pareto-efficient, but in general not maxmin-fair.

%We next introduce minmax-Pareto efficiency, a kind of  dual to maxmin-fairness: it says that
The notion of minmax-fairness outlined above requires that
no user satisfaction can be decreased without increasing that of another user which is no worse off,
   or losing Pareto-efficiency.

%expressing the impossibility of decreasing the satisfaction of a user without increasing that of another user which is no worse off.
\begin{definition}%[Minmax-Pareto efficiency]
A Pareto-efficient distribution $F$ over $\calS$ is \emph{minmax-Pareto} (or minmax fair) for $\calU$ if for all
Pareto-efficient distributions $D$ over $\calS$ and all $u \in \calU$, it holds that
%    $(\cov{D}{u} < \cov{F}{u} \implies \exists v \in \calU \mid \cov{D}{v} > \cov{F}{v} \ge \cov{F}{u})$.
{
    \begin{align}%\label{eq:minmax}
    \nonumber
    \cov{D}{u} &< \cov{F}{u} \implies
    %        \nonumber&
    \exists v \in \calU \mid
        \cov{D}{v} > \cov{F}{v} \ge \cov{F}{u}.
        \end{align}
}
\end{definition}
Requiring Pareto-efficiency is redundant for maxmin-fairness, but crucial for
minmax-Pareto efficiency; without it, the definition would be met by a distribution of solutions satisfying %\textcolor[rgb]{0.00,0.00,1.00}
{nobody} (for example, a solution which always returns the empty matching).

In Appendix~\ref{sec:proofs_properties} we present analogues to Lemma~\ref{lem:unique} and Theorem~\ref{thm:lexi} (Lemma~\ref{lem:unique} and Theorem~\ref{thm:lexi2}) for minmax-fairness.

\mycomment{
    Similarly to Lemma~\ref{lem:unique}, the satisfaction probabilities of a minmax-Pareto distribution are uniquely determined (Lemma~\ref{lem:unique2} in Appendix~\ref{sec:proofs_properties}).
    We also have an analogue of Theorem~\ref{thm:lexi}:
    let $\rscov{F}$ denote the satisfaction vector of distribution $F$ sorted in \emph{decreasing} %\emph{non-increasing}
    order.
    Then,  for matroid problems,
    a distribution $F$ is minmax-Pareto if and only if $F$ is Pareto-efficient and $\rscov{F} \preceq \rscov{D}$ for all Pareto-efficient distributions~$D$ (Theorem~\ref{thm:lexi2} in
            Appendix~\ref{sec:proofs_properties}).
}

\subsection{Inequality properties}
The main result of this section, Theorem~\ref{thm:maxminsocial} is that, for matroid problems, the notions of
minmax fairness and maxmin fairness coincide; intuitively,
any excess satisfaction probability for the best-off user can be taken away from him and
redistributed to others.
This also implies that
the maxmin-fair solution
minimizes the largest gap in satisfaction probabilities; among
those, it minimizes the second-largest gap, etc.%

%The inequality vector of $D$ is describes the largest gaps in satisfaction probabilities between pairs of users.
\begin{definition}%[Sorted inequality vector]
\label{def:social}
%The \emph{inequality vector} of a distribution $D$ over $\calS$, written $\ineq{D}$, is the vector of all pairwise differences in the satisfaction probabilities of $\calU$ under $D$.  The \emph{sorted inequality vector} of $D$, written $\ineq{D}^\downarrow$, is the inequality vector of $D$ sorted in decreasing order.
The \emph{sorted inequality vector} of a distribution $D$ over $\calS$, written
$\ineq{D}^\downarrow$, is the vector of all pairwise differences in the satisfaction probabilities
of the elements of $\calU$ under $D$, sorted in decreasing order.
\end{definition}
\vspace{-0.2cm}
\begin{restatable}{theorem}{thmmaxminsocial}
\label{thm:maxminsocial}
For matroid problems, the following are equivalent:
(1) $D$ is maxmin-fair;
(2) $D$ is minmax-Pareto;
(3) $D$ is Pareto-efficient and %$\ineq{D}^\downarrow$ is lexicographically smallest among all Pareto-efficient distributions.
$\ineq{D}^\downarrow \preceq \ineq{E}^\downarrow$ for all Pareto-efficient distributions $E$ over $\calS$.
\end{restatable}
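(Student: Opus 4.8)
The plan is to collapse all three conditions to a single extremal satisfaction vector and then exploit the matroid structure. First I would recall the leximin characterization of maxmin-fairness (Theorem~\ref{thm:lexi}): condition (1) says $\scov{D} \succeq \scov{E}$ for every distribution $E$. Condition (2) has the exactly analogous leximax characterization proved in the appendix as Theorem~\ref{thm:lexi2}: $D$ is minmax-Pareto iff it is Pareto-efficient and $\rscov{D} \preceq \rscov{E}$ for every Pareto-efficient $E$. The matroid hypothesis enters through one decisive fact: by the augmentation axiom in Definition~\ref{def:matroid}, any solution with positive probability in a Pareto-efficient distribution must be a basis (otherwise we could augment it and Pareto-improve), so every Pareto-efficient distribution has the same coordinate sum $\sum_{u\in\calU}\cov{D}{u} = r$, where $r := \rho_{\calS}(\calU)$ is the size of a basis. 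Consequently the attainable satisfaction vectors of Pareto-efficient distributions form a convex compact set $B$ lying in the hyperplane $\{\,\vec x : \sum_u x_u = r\,\}$, namely the base polytope of the matroid. Since coordinatewise domination can only raise each order statistic, and hence $\scov{\cdot}$ lexicographically, the leximin maximum over \emph{all} distributions is already attained inside $B$; by Lemma~\ref{lem:unique} and its minmax analogue all three conditions depend only on the satisfaction vector, so it suffices to identify one optimal vector in $B$.

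The core step is a majorization argument: I would show that $B$ contains a unique point $\vec p^\ast$ that is majorized by every other point of $B$ (the ``most equal'' feasible vector). The route is a Robin-Hood transfer: given any Pareto-efficient $E$ with two users obeying $\cov{E}{u} > \cov{E}{v}$, move a small amount of satisfaction probability from $u$ down to $v$ while preserving feasibility, which strictly reduces the sorted gap vector. Realizing such a transfer at the level of distributions over bases uses the matroid exchange property — one swaps $u$ for $v$ inside bases of the support — and this is precisely where the matroid structure is indispensable: convexity together with a fixed coordinate sum does \emph{not} by itself force the increasing-sort and decreasing-sort optima to coincide (a segment not parallel to any $e_i-e_j$ already places the two optima at different endpoints). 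Iterating these transfers, with the only obstruction to further equalization being a tight rank constraint $\sum_{i\in S}x_i=\rho_{\calS}(S)$ separating two groups of users, yields the block structure of $\vec p^\ast$ and its majorization-minimality; alternatively one may cite the classical existence and uniqueness of the lexicographically optimal (minimum-norm) base of a polymatroid.

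Finally I would assemble the equivalences through $\vec p^\ast$. Majorization-minimality immediately gives that $\vec p^\ast$ both leximin-maximizes $\scov{\cdot}$ and leximax-minimizes $\rscov{\cdot}$ over $B$, since the top-$k$ partial sums of the decreasing sort are simultaneously minimized; by Theorem~\ref{thm:lexi} this is condition (1) and by Theorem~\ref{thm:lexi2} it is condition (2), so (1) $\Leftrightarrow$ (2). For (3) I would invoke the elementary fact that among vectors of equal sum the sorted pairwise-difference vector $\ineq{\cdot}^\downarrow$ is lexicographically minimized exactly at the majorization-minimum: its largest entry is the range (largest minus smallest coordinate), which majorization controls, and the same monotonicity propagates down the sorted list. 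Hence a Pareto-efficient distribution minimizes $\ineq{\cdot}^\downarrow$ iff its satisfaction vector is $\vec p^\ast$, giving (1)/(2) $\Leftrightarrow$ (3). Throughout, restricting to Pareto-efficient distributions is what pins the coordinate sum to $r$ and is indispensable for (2) and (3); without it the always-empty solution would vacuously minimize every gap.

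The main obstacle I anticipate is the transfer/exchange lemma: proving that a feasibility-preserving Robin-Hood transfer exists whenever a reducible gap remains, and that the process terminates at the block-structured majorization-minimum. This is the step that genuinely requires the matroid axioms rather than mere convexity; the remainder is bookkeeping about sorted vectors and majorization.
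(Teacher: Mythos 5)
Your strategy is sound and it reaches the theorem by a genuinely different route than the paper. The paper never mentions majorization: it constructs two explicit block decompositions of the ground set --- a bottom-up one whose blocks maximally minimize $\bigl(\rho(X\cup S_{i-1})-\rho(S_{i-1})\bigr)/|X|$ and pin down the maxmin-fair probabilities (Lemma~\ref{lem:matroids_ext}), and a top-down one whose blocks maximally maximize $\bigl(\rho(S'_i)-\rho(S'_i\setminus X)\bigr)/|X|$ and pin down the minmax-Pareto probabilities (Lemma~\ref{lem:matroids_ext2}, which rests on matroid duality through Theorem~\ref{thm:lexi2} and Theorem~\ref{coro:matroids2}) --- and then proves $(1)\Leftrightarrow(2)$ by a submodularity computation showing the two decompositions are the same sequence in reverse, so the two satisfaction vectors coincide; the equivalence with (3) is then dismissed as ``an easy inductive argument (omitted)''. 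You instead identify the Pareto-efficient satisfaction vectors with the matroid base polytope and reduce all three conditions to statements about a single majorization-minimum point $p^\ast$ (Fujishige's lexicographically optimal / minimum-norm base). Note that you still inherit the paper's duality implicitly, since you invoke Theorem~\ref{thm:lexi2}, whose proof in the appendix goes through the dual matroid; this is legitimate and non-circular. What each approach buys: the paper's decomposition is constructive and is reused as the algorithmic backbone of the whole paper (it is the matroid analogue of Theorem~\ref{thm:decomp}), whereas your argument is more conceptual, connects to classical polymatroid theory, generalizes beyond matroids to polymatroids, and gives an honest treatment of condition (3), which the paper leaves unproved.

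Two of your steps need to be pinned down. First, the Robin-Hood/exchange lemma, which you rightly flag as the crux: the clean discharge is Fujishige's theorem that the minimum-norm base minimizes $\sum_{u}\phi(\cov{D}{u})$ simultaneously for every convex $\phi$, which is exactly majorization-minimality of $p^\ast$ over the base polytope. Second, your claim that $\ineq{D}^\downarrow$ is lexicographically minimized \emph{exactly} at $p^\ast$ is asserted rather than proved, and both directions need care. For the forward direction, ``propagates down the sorted list'' should be replaced by: for every $k$, the sum of the $k$ largest pairwise differences is symmetric and convex (a maximum of sums of convex functions), hence Schur-convex; so all partial sums of $\ineq{p^\ast}^\downarrow$ are dominated by those of $\ineq{x}^\downarrow$, which yields lexicographic domination --- note that \emph{entrywise} domination of the sorted difference vectors is false in general, so the partial-sum route is needed. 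For the converse (that condition (3) forces the vector to equal $p^\ast$): if $x$ in the base polytope had the same sorted difference vector and the same coordinate sum as $p^\ast$, then $\sum_u x_u^2=\sum_u (p^\ast_u)^2$; since $p^\ast\prec x$, strict Schur-convexity of the squared norm forces $x$ to be a permutation of $p^\ast$, and an averaging argument (the midpoint of $p^\ast$ and a distinct permutation of it lies in the polytope and has strictly smaller squared norm, contradicting $p^\ast\prec$ everything) rules out nontrivial permutations. With those two patches your proof is complete.
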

The proof may be found in Appendix~\ref{sec:proofs_properties}.

%Note that this result does not hold in general for non-matroid problems; one may verify that the problem instance where the set of individuals is $\calU = \{0,1,2,3\}$ and the set of feasible solutions is $\calS= \{ \{0,1\}, \{1,3\}, \{0,2,3\} \}$ provides a counterexample.
Note that this result does not hold in general for non-matroid problems; the following shows a counterexample.
\begin{myexample}
Consider the problem instance where the set of individuals is $\calU = \{0,1,2,3\}$ and the set of feasible solutions is $\calS= \{ \{0,1\}, \{1,3\}, \{0,2,3\} \}$.
Here elements $1$ and $2$ never appear together in a solution, so the minimum satisfaction probability cannot exceed $\frac 12$. In order to achieve $\frac 12$ we need to choose
$\{0,2,3\}$ with probability exactly $\frac 12$; this fixes the satisfaction probabilities of $2$ and $1$ to $\frac 12$, and to maximize the second-smallest probability we need to
pick $\{0,1\}$ and $\{1,3\}$  with probability $\frac 14$ each. 
This is the maxmin-fair distribution $D_1$ and its maximum inequality is $\frac14$.
However, a similar argument shows that
the minmax-fair distribution $D_2$ is different: it uses 
each element of $\calS$ with probability $\frac 13$ and has maximum inequality $\frac 13$.

Note that in this case one may verify that 
$D_1$ still minimizes maximum inequality, but 
by considering the complements of each
element of $\calS$, one can give a similar example where the
maxmin-fair distribution does not minimize inequality.
%$D_1$ still minimizes maximum inequality because for any distribution $D$ we have that $D[0] - D[2] + D[3] = 1$, so for the inequality to be at most $\frac14$ we must have $D[0] - D[2] \le \frac14$ and hence $D[3] \ge \frac 34$, implying $D[x] \ge D[3]-\frac12=\frac12$ for all $x$, but we just argued that $D_1$ is the only distribution with minimum satisfaction $\frac 12$. 
\end{myexample}

\mycomment{
%as shown by the next counterexample.

\begin{myexample}
Consider the problem instance where the set of individuals is $\calU = \{0,1,2,3\}$ and the set of feasible solutions is $\calS= \{ \{0,1\}, \{1,3\}, \{0,2,3\} \}$.
\textcolor[rgb]{1.00,0.00,0.00}{SAY WHY IT IS A COUNTEREXAMPLE. I can't remember, need to do some calculations. I just wrote the example down so I wouldn't forget.  Let's stick to
    the most interesting things.}
\end{myexample}
}

\section{A polynomial-time algorithm for maxmin-fair matching}
%\section{Fair Matching}
\label{sec:problem}
In this section we present our main contribution: a polynomial-time algorithm for maxmin-fair matching. We present our algorithm for the \emph{one-sided fair bipartite matching}
problem. This is the special case of fair matching where:
\begin{itemize}
\item $G$ is bipartite (with bipartition $V = L\ \dot{\cup}\ R$),
\item the set of users is $\calU=L$, 
\item there is a maximum matching covers all the right-side vertices but not all the left-side vertices (i.e., $\rho(L) = |R| < |L|$),
\item there are no degree-0 vertices (which can always be removed).
\end{itemize}
%$G$ is bipartite (with bipartition $V = L\ \dot{\cup}\ R$) the set of users is $\calU=L$. We will also assume that a maximum matching covers all the right-side vertices but not all the left-side vertices. 
This setting corresponds to the job-search setting that we use in our examples throughout the paper.
We will see later (Section~\ref{sec:transversal}) that the general fair matching problem in non-bipartite graphs, with arbitrary user sets $\calU \subseteq V$ and with no further restrictions, can be reduced to this special case in polynomial time.  
Before presenting the building blocks of our algorithm in full detail, we provide an overview of our techniques.
 Some results are only stated here; their proofs may be found in Appendix~\ref{sec:proofs_problem}.

\subsection{Overview of our techniques}
We next provide an overview of how we obtain our main result: an efficient algorithm for maxmin-fair bipartite matching.

\begin{enumerate}[(1)]
\item The first ingredient (Section \ref{subsec:fairnessparameter}) is a characterization of the \emph{fairness parameter}, i.e., the maximum
satisfaction probability which can be guaranteed for every user. By using Hall's theorem we prove (Corollary~\ref{thm:lambda1}) that the fairness parameter is determined by a ``blocking''
set of  vertices with the smallest neighborhood-to-size ratio. Unfortunately, the proof does not lead to an efficient algorithm to find this set.

\item Thus we proceed to write down a linear program for a fractional variant of the problem (Section \ref{subsec:lp}). Inspired by a technique developed by
Charikar~\cite{charikar} for the \emph{densest-subgraph problem}, we show (Lemma~\ref{lem:lp_bipartite}) that any fractional solution can be leveraged to find a blocking set of
vertices. The neighbors of the blocking set cannot be matched to any vertex outside the blocking set in any maxmin-fair distribution. We use this fact to argue inductively
(Theorem~\ref{thm:decomp}) the existence of a ``fair decomposition'' of the set of left vertices with the following property: vertices on higher levels can be allowed larger satisfaction probabilities, regardless of which edges are used to match the vertices on lower levels.
%This may be seen as a generalization of (a closely related variant of) the density-friendly graph decomposition of Tatti and Gionis~\cite{density_friendly}.

\item Having computed
the assignment probabilities $x_{uv}$ (the probability of each pair of vertices being matched)
    of some  maxmin-fair distribution within each block in the decomposition,
we can turn each of them into an actual distribution of matchings by finding the Birkhoff-von Neumann decomposition of a 
doubly-stochastic matrix. Then we combine them into a single distribution.

\item To obtain our faster algorithm (which also returns the exact optimal solution), we avoid the use of linear programming and instead present a technique to find several blocks in parallel with a single min-cut computation (Section~\ref{sec:proofs_improved_algo}). We show that a logarithmic number of minimum cut computations
suffice to obtain the fair decomposition in full. Then we argue that given the decomposition and satisfaction probabilities, the required distribution of matchings can be found by coloring
the edges of an appropriately constructed regular bipartite graph, for which task we leverage the fast algorithm of Goel et al.~\cite{perfect_nlogn}.

\end{enumerate}

\mycomment{
For matroid problems (Theorem~\ref{thm:algo_matroids})  we follow a similar route, but this time we make use of linear programming duality. Again the first step is a characterization of the fairness
parameter (Theorem~\ref{thm:matroids}), this time in terms of a ``blocking set'' with small rank-to-size ratio.  The problem of finding the maximum satisfaction probability that can be guaranteed for all users
can be written as a linear program with as many variables as there are feasible solutions.  Its dual (LP~\eqref{lp:zv} in Section~\ref{sec:properties}) corresponds to finding an assignment of weights to elements of the ground set
so as to minimize the maximum weight of a base. Using the greedy algorithm for matroids, we argue that, given advice on the permutation sorting the variables in the optimal
solution, we can turn this mixed min-max problem into a pure minimization problem; from the latter it is easy
to read off the optimal solution,
and it corresponds naturally to a (non-fractional) set of elements.
Armed with this characterization, the proof of a ``fair decomposition'' (Theorem~\ref{thm:fair_matroids}) mimics closely that for the bipartite matching case, using the submodularity and subadditivity of the rank function
of the matroid instead of the neighborhood size function. This fair decomposition theorem (along with matroid duality) can also be used to show the equivalence between maxmin-fairness and minimizing social
inequality (Theorem~\ref{thm:maxminsocial}).  In order to find such a blocking set (and thus inductively the whole decomposition) we use Schrijver's polynomial-time algorithm
for submodular function minimization~\cite{submodular_minimization} (taking the place of minimum cuts). Once the decomposition and satisfaction probabilities are known, a
maxmin-fair distribution of bases can be obtained by using the matroid partitioning algorithm of Edmonds~\cite{matroid_partition} (instead of edge-coloring bipartite graphs).

Regarding our generalization to efficiently optimizable problems (Theorem~\ref{thm:general_algo}), the equivalence with minimizing social inequality and the existence of
non-fractional blocking sets are lost. This time we try to solve directly the dual program, LP~\eqref{lp:zv2}. The drawback is
that it has an exponential number of constraints, but the efficient optimizability condition implies the existence of a separation oracle which can report a violated constraint in
polynomial time, therefore it can be solved efficiently via the ellipsoid algorithm~\cite{separation_oracle}.
By solving a series of linear programs in this way (which additional constraints to reflect choices that need to be  made to implement the lexicographic rule implict in the
        maxmin-fairness condition) we can compute the satisfaction probabilities of a maxmin-fair
distribution (Theorem~\ref{thm:fairness_P}). It remains to be seen how to compute a distribution of solutions realizing these probabilities.
We argue that the contraints added by the oracle during the execution of the ellipsoid algorithm are enough to determine the set of maxmin-fair satisfaction probabilities, therefore we can work with
only polynomially-many constraints in the dual, which translates into polynomially-many non-zero variables in the primal, which
can then be solved directly to obtain the desired distribution.
}

\subsection{Fairness parameter} \label{subsec:fairnessparameter}
%\textcolor[rgb]{0.00,0.00,1.00}
{We next ask the following important question: what is the minimum satisfaction probability $\pi(G)$ of a maxmin-fair distribution for $G$?}
%Consider the unique sorted coverage vector $\lambda_1 \le \lambda_2 \le \ldots \le \lambda_n$ of the
%maxmin-fair solution, where $n = |L|$.
%We give a characterization of the value of $\lambda_1$, which we call the \emph{fairness parameter}. The first natural question  is when is it possible to
%have $\lambda_1 = 1$? Since the number of solutions is finite, this amounts to asking when there is
%a matching covering $L$.
%  Since the special case $\lambda_1 = 1$ is only possible if there is a perfect matching,
%Hall's marriage theorem~\cite{hall} gives a necessary and sufficient condition for the existence of a matching covering the whole of $L$, which is equivalent to having $\lambda_1 = 1$: in a bipartite graph with bipartition $(L, R)$, the set $L$ is matchable if and only if $|\Gamma(S)| \ge |S|$ for all $S \subseteq L$.
    {
Hall's marriage theorem gives a necessary and sufficient condition
    for the existence of a matching covering the whole of $L$, which is equivalent to having $\pi(G) = 1$.

\begin{theorem}[Hall~\cite{hall}]\label{thm:hall}
%A bipartite graph $G$ with bipartition $(L, R)$ has a matching covering all of $L$ if and only if $|\Gamma(S)| \ge |S|$ for all $S \subseteq L$.
In a bipartite graph with bipartition $(L, R)$, the set $L$ is matchable if and only if $|\Gamma(S)|
\ge |S|$ for all $S \subseteq L$.
\end{theorem}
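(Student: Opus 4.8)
The plan is to prove the two directions separately, with the forward (necessity) direction being routine and the converse (sufficiency) carrying all the weight; I would establish the latter by induction on $|L|$. For necessity, suppose $L$ is matchable, so some matching $M$ covers every vertex of $L$, sending each $u \in L$ to a distinct partner in $R$. For any $S \subseteq L$ these $|S|$ partners are distinct and all lie in $\Gamma(S)$, whence $|\Gamma(S)| \ge |S|$. This disposes of one implication immediately.

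For sufficiency I would assume Hall's condition $|\Gamma(S)| \ge |S|$ holds for all $S \subseteq L$ and induct on $|L|$, the case $|L| = 1$ being trivial since the lone vertex has a neighbor. For the inductive step I distinguish two cases according to whether equality is ever attained on a proper nonempty subset. In the first case, suppose every nonempty $S \subsetneq L$ satisfies $|\Gamma(S)| \ge |S| + 1$. Then I pick an arbitrary $u \in L$ and a neighbor $v \in \Gamma(u)$, commit the edge $(u,v)$, and delete both endpoints. In the remaining graph on $L \setminus \{u\}$, removing $v$ from $R$ lowers $|\Gamma(S)|$ by at most one for each relevant $S$, so Hall's condition survives; induction matches $L \setminus \{u\}$, and adjoining $(u,v)$ covers all of $L$.

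In the second case, suppose some nonempty $S \subsetneq L$ is tight, i.e. $|\Gamma(S)| = |S|$. The subgraph induced on $S \cup \Gamma(S)$ inherits Hall's condition from $G$, so by induction $S$ matches into $\Gamma(S)$. It then remains to match $L \setminus S$ into $R \setminus \Gamma(S)$, for which I verify Hall's condition in the residual graph: for any $T \subseteq L \setminus S$, combining $|\Gamma(S \cup T)| \ge |S| + |T|$ with the tightness $|\Gamma(S)| = |S|$ yields $|\Gamma(T) \setminus \Gamma(S)| \ge |T|$. Induction then matches $L \setminus S$ into $R \setminus \Gamma(S)$, and since the two matchings use disjoint right-vertices they glue into a single matching covering all of $L$.

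The main obstacle is the bookkeeping of Case 2: one must confirm that both the induced subgraph on $S$ and the complementary subgraph separately inherit Hall's condition, and that their matchings occupy disjoint subsets of $R$ so they combine cleanly. The crucial inequality $|\Gamma(T) \setminus \Gamma(S)| \ge |T|$ rests on the identity $\Gamma(S \cup T) = \Gamma(S) \cup (\Gamma(T) \setminus \Gamma(S))$ together with the tightness of $S$, and everything else is bounded arithmetic. As a self-contained alternative I could instead derive sufficiency from an augmenting-path argument—if a maximum matching left some $u \in L$ uncovered, the set of $L$-vertices reachable from $u$ along alternating paths, together with its neighborhood, would violate Hall's condition—but the inductive proof above avoids invoking any matching theory beyond the statement itself.
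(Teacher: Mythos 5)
Your proof is correct. Note, however, that the paper does not prove this statement at all: Hall's marriage theorem is quoted as a classical result with a citation to Hall, and it is then used as a black box in the proof of the paper's actual contribution, the weighted generalization (Theorem~\ref{thm:alphas}), where the fractional condition $|\Gamma(S)| \ge \sum_{v\in S}\alpha_v$ is handled by replicating vertices and invoking Hall's theorem repeatedly on the replicated graph. So there is no proof in the paper to compare against; what you have written is the standard Halmos--Vaughan induction, and it is sound. Necessity is immediate, as you say. In the sufficiency induction, Case 1 (every nonempty proper subset has slack at least one) correctly survives the deletion of a matched pair, since for nonempty $S \subseteq L\setminus\{u\}$ one has $|\Gamma_{G'}(S)| \ge |\Gamma_G(S)| - 1 \ge |S|$; and in Case 2 the key inequality $|\Gamma(T)\setminus\Gamma(S)| \ge |T|$ for $T \subseteq L\setminus S$ does follow from $|\Gamma(S\cup T)| \ge |S| + |T|$ together with tightness $|\Gamma(S)| = |S|$, which is exactly what makes the residual graph on $(L\setminus S)\cup(R\setminus\Gamma(S))$ inherit Hall's condition; the two matchings occupy disjoint parts of $R$, so they glue. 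Both inductive calls are on strictly smaller left sides, so the induction is well founded. What your write-up buys is self-containedness; what the paper's citation buys is brevity, keeping the exposition focused on the new results (the fractional generalization and the maxmin-fair matching algorithm) rather than on reproving a textbook theorem.
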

    }
\fullversion{
%\begin{corollary} The rank function of a bipartite matching problem is given by: $$ \rho(T) = \min_{U \subseteq T} |\Gamma(U)| + |T| - |U|. $$ \end{corollary}
%\begin{corollary} $ \rho(T) = \min \{ |\Gamma(U)| + |T| - |U| \mid {U \subseteq T}\}. $ \end{corollary}

\mycomment{
\begin{proof}
We show that $$ \min_{T \subseteq S} \rho(T) / |T| = \min_{T \subseteq S} |\Gamma(T)| / |T|. $$
If $\rho(T) \ge \lambda |T|$ for all $T \subseteq S$, then
$$ |\Gamma(T)| \ge \rho(T) \ge \lambda |T| $$ for all $T \subseteq S$.

On the other hand, if $|\Gamma(T)| \ge \lambda |T|$ for all $T \subseteq S$, then
for all
$U \subseteq T$,
    $$ \Gamma(U) + |T| - |U| \ge |T| + \lambda |U| - |U| = |T| - (1 - \lambda) |U| \ge |T| - (1 - \lambda) |T| = \lambda |T|, $$
so
$$ \rho(T) = \min_{U \subseteq T} |\Gamma(U)| + |T| - |U| \ge \lambda |T|. $$
\end{proof}
}
}

We show a generalization of Hall's theorem which will prove useful to characterize the
fairness parameter in bipartite matching.%, and is simpler than the one given by Theorem~\ref{thm:matroids}.

\begin{restatable}{theorem}{thmalphas}
\label{thm:alphas}
Let $\{ \alpha_v \mid v\in L\}$ be reals in $[0, 1]$.
A necessary and sufficient condition for the existence of  a distribution $D$ of matchings of $G$ such that $\cov{D}{v} \ge \alpha_v$ for all $v \in L$ is

%There exists a distribution $D$ of matchings of $G$ such that $\cov{D}{v} \ge \alpha_v$ for all $v \in L$ if and only if
\begin{equation}\label{eq:hall}
 \text{for all $S \subseteq L$,}\quad |\Gamma(S)| \ge \sum_{v \in S} \alpha_v.
 \end{equation}
\end{restatable}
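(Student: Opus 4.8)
The plan is to prove both implications, with necessity a one-line averaging argument and sufficiency a reduction to a max-flow/min-cut computation. For necessity, suppose a distribution $D$ with $\cov{D}{v} \ge \alpha_v$ for all $v \in L$ exists, and fix $S \subseteq L$. In any single matching $M$ the covered vertices of $S$ are matched to \emph{distinct} neighbours in $\Gamma(S)$, so at most $|\Gamma(S)|$ vertices of $S$ are covered by $M$. Taking expectations over $M \sim D$ gives $\sum_{v \in S} \cov{D}{v} \le |\Gamma(S)|$, and since $\cov{D}{v} \ge \alpha_v$ this is exactly \eqref{eq:hall}.

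For sufficiency, the first step is to pass from distributions of matchings to \emph{fractional matchings}. Given a distribution $D$, set $x_e = \Pr_{M \sim D}[e \in M]$; then $x$ lies in the bipartite matching polytope $\{x \ge 0 : \sum_{e \ni v} x_e \le 1 \text{ for all } v\}$ and $\cov{D}{v} = \sum_{e \ni v} x_e$, since at most one edge incident to $v$ belongs to any matching. Conversely, by integrality of the bipartite matching polytope (equivalently Birkhoff--von Neumann), every point of this polytope is a convex combination of matchings and so arises from some distribution $D$ with the same vertex marginals. Hence it suffices to produce a fractional matching $x$ with $\sum_{e \ni v} x_e \ge \alpha_v$ for every $v \in L$.

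The second step casts this as a flow problem. I would build a network with source $s$, the vertices $L$ and $R$, and sink $t$, with an arc $s \to v$ of capacity $\alpha_v$ for each $v \in L$, an arc $v \to w$ of infinite capacity for each edge $vw \in E$, and an arc $w \to t$ of capacity $1$ for each $w \in R$. A flow of value $\sum_{v \in L} \alpha_v$ saturates every $s \to v$ arc, and reading off the flow on the bipartite arcs yields a fractional matching with left-degrees $\alpha_v$ and right-degrees at most $1$; conversely any fractional matching meeting the lower bounds scales down to such a flow. The desired $x$ therefore exists iff the maximum $s$--$t$ flow equals $\sum_{v \in L} \alpha_v$. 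Applying max-flow/min-cut, a finite-capacity cut cannot sever an infinite arc, so if $A \subseteq L$ denotes the left vertices on the source side then $\Gamma(A)$ lies on the source side too, and the cheapest such cut has capacity $\sum_{v \in L \setminus A} \alpha_v + |\Gamma(A)|$. Thus the minimum cut is $\min_{A \subseteq L}\bigl(|\Gamma(A)| + \sum_{v \notin A}\alpha_v\bigr)$, which is at least $\sum_{v \in L}\alpha_v$ precisely when $|\Gamma(A)| \ge \sum_{v \in A}\alpha_v$ for all $A$, i.e.\ exactly \eqref{eq:hall}.

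I expect the main obstacle to be the cut characterization: verifying that the minimum cuts are exactly those indexed by subsets $A \subseteq L$ with $\Gamma(A)$ kept on the source side, so that Hall's inequality falls out cleanly. A secondary point deserving care is that the $\alpha_v$ are arbitrary reals rather than rationals, so I would invoke max-flow/min-cut (or LP duality) in its real-valued form rather than rely on an integral augmenting-path argument.
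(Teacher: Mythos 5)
Your proof is correct, but it takes a genuinely different route from the paper's. For sufficiency the paper first reduces to rational $\alpha_v$, takes a common denominator $M$, and builds a replica (blow-up) graph with $n_u = \alpha_u M$ copies of each $u \in L$ and $M$ copies of each $v \in R$; it then applies Hall's theorem iteratively to the nested sets $A_k = \{u : n_u \ge k\}$ to construct $M$ explicit matchings of $G$ whose uniform distribution attains the marginals $\alpha_u$ exactly. Your argument instead passes to fractional matchings — using integrality of the bipartite matching polytope (Birkhoff--von Neumann) to identify achievable vertex marginals of matching distributions with fractional matchings — and then produces the required fractional matching by max-flow/min-cut on the network with source capacities $\alpha_v$ and sink capacities $1$, where the finite cuts are exactly of the form $\sum_{v \notin A}\alpha_v + |\Gamma(A)|$ and Hall's condition \eqref{eq:hall} falls out as the min-cut bound. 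Each approach buys something: the paper's proof is elementary and self-contained (nothing beyond Hall's theorem) and yields an explicit uniform distribution over a multiset of matchings, a fact the paper exploits when remarking on support sizes; your proof avoids the paper's somewhat delicate rationality reduction entirely (real-valued LP duality handles arbitrary $\alpha_v$ directly) and is shorter modulo standard polyhedral facts. It is also worth noting that your construction essentially anticipates the machinery the paper deploys later for its algorithms: your flow network is the graph $G(\lambda)$ of Lemma~\ref{lem:cut} with per-vertex capacities, and your polytope decomposition step is the paper's Lemma~\ref{lem:birkhoff}, so your route integrates naturally with the algorithmic sections even though it is not the proof the paper gives for this theorem.
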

\begin{proof}
Necessity is clear because no matching can cover more than $|\Gamma(S)|$ elements
of any set $S$, but the expected number of elements of $S$ covered by $D$ is $\sum_{v\in S} \cov{D}{v} = \sum_{v\in S} \alpha_v$ by linearity of expectation.

For sufficiency, we may assume that all the $\alpha_v$ are rational
because~\eqref{eq:hall} is a finite set of inequalities with integral coefficients, so the maximizer
of $\sum_v \beta_v$ subject to $|\Gamma(S)| \ge \sum_{v \in S} \beta_v$ and $\beta_v \ge
\alpha_v$ will have $\beta_v \in \mathbb{Q}$.
Let $M$ be a suitable common denominator, so that $\alpha_u = \beta_u = n_u /
M$ where $M\ge n_u \in \naturals$. Construct a graph $G'$ with
\begin{itemize}
    \item $n_u$ replicas $u^{(1)}, \ldots, u^{(n_u)}$ of each $u \in L$;
    \item $M$ replicas $v^{(1)}, \ldots, v^{(m)}$ of each $v \in R$;
    \item $V(G') = L' \cup R'$, where $L' = \{ u^{(i)} \mid u \in L, i \le n_u \}$
and
$R' = \{ v^{(i)} \mid v \in R, i \le M \}$.
    \item $E(G') = \{ (u^{(i)}, v^{(j)}) \mid (u, v) \in E(G), i \le n_u, j \le M\} $.
   \end{itemize}
This graph is bipartite with bipartition $(L', R')$. Notice that vertices with $\alpha_v = 0$ have
        no replica in $G'$.
%Any set $S \subseteq L'$ can be obtained from some set $S \subseteq L$ by replicating some of the elements of~$S$.  For such $S$ we obtain, using~\eqref{eq:hall}, $$ |\Gamma_{G'}(S')| = M \cdot |\Gamma_G(S)| \ge \sum_{u \in S} n_u \ge |S'|.$$
%so the conditions~\eqref{eq:hall} now become $\Gamma_{G'}(S) \ge |S|$ for any $S \subseteq L'$.

Consider (in $G$) the sets $A_k = \{ u \in L \mid n_u \ge k \}$ for $k = 1, 2, \ldots, M$.
Given $k$ and a set $S \subseteq A_k$ let
$S^{(k)} = \{ u^{(k)} \mid u \in S \}$. %and $S' = \bigcup_k S^{(k)}$.
If $A_1 = \emptyset$ the theorem is trivial. Otherwise, let $H_1$ denote the subgraph of $G'$ induced by $A_1^{(1)} \cup R'$.
Any subset of $A_1'$ in $H_1$ is of the form $S^{(1)}$,
for some $S \subseteq A_1$. Using~\eqref{eq:hall} we obtain
    $$ |\Gamma_{H_1}(S^{(1)})| = M \cdot |\Gamma_G(S)| \ge \sum_{u \in S} n_u \ge |S^{(1)}|,$$
        because $n_u \ge 1$ for $u \in A_1\supseteq S$.
By Hall's Theorem, there is a matching $X_1$ in $H_1$ covering $A_1^{(1)}$.

If $A_2 \neq \emptyset$, let $H_2$ denote the subgraph of $G'\setminus V(X_1)$ induced by $A_2^{(2)} \cup R'$.
As we removed the edges of the matching $X_1$,
        the number of neighbours in $G'$ of any set $S\subseteq A_2$ has decreased by at most $|S|$,
so for any $S \subseteq A_2$ he have
    $$ |\Gamma_{H_2}(S^{(2)})| \ge M \cdot |\Gamma_G(S)| - |S| \ge \sum_{u \in S} (n_u - 1) \ge
    |S^{(1)}|,$$
        because $n_u \ge 2$ for $u \in A_2\supseteq S$.
Hence there is a matching $X_2$ in $H_2$ covering $A_2^{(2)}$.
Proceeding similarly, 
we obtain a set of vertex-disjoint
matchings in $G'$ such that their union is
   a matching $X'$ in $G'$ covering $L'$.
   By restricting $X'$ to each replica of $R$ in $R'$, we can decompose $X'$ into
   $M$
   matchings $X_1, \ldots, X_M$, each of them inducing a matching in $G$.
   Furthermore, each $u \in L$ is covered
in
exactly $n_u$ of these, since $X'$ covers $L'$. Thus the uniform distribution over $X_1, \ldots, X_M$
yields coverage probability $n_u / M = \alpha_u$ for each $u \in L$.
\end{proof}
%This proof gives a maxmin-fair uniform distribution over a multiset of $M$ matchings, but $M$ could be $2^{\Omega(\sqrt n)}$. If we don't insist on uniformity (Section~\ref{sec:algorithms}), $n-1$ matchings suffice.

The proof gives a maxmin-fair distribution which is uniform over a multiset of $M$ matchings, but
$M$ may be fairly large, as large as $2^{\Omega(\sqrt {|\calU|})}$ in some instances.

%. In fact one may construct examples where the support size of any uniform maxmin-fair distribution is as high as $2^{\Omega(\sqrt {|\calU|})}$, but $|\calU|-1$ matchings suffice if we do not insist on uniformity.% (Section~\ref{sec:algorithms}).

\mycomment{
We  state a result dual to Theorem~\ref{thm:alphas} about guaranteed satisfaction probabilities bounded from above. %We omit the proof.
\begin{theorem}\label{thm:alphas2}
Suppose $\rho(L) = |\Gamma(L)|$. Let $\{ \alpha_v \mid v\in L\}$ be reals in $[0, 1]$. A necessary and sufficient condition for the
existence of  a Pareto-efficient
distribution $D$ of matchings of $G$ such that $\cov{D}{v} \le \alpha_v$ for all $v \in L$ is
\begin{equation}\label{eq:hall2}
 \text{for all $S \subseteq L$,}\quad |\Gamma(S)| + \sum_{v \notin S} \alpha_v \ge |\Gamma(L)|.
 \end{equation}
\end{theorem}
}

%\begin{corollary} A distribution is Pareto-optimal if and only if $\Gamma(S) = \sum_{v \in L} cov_D(v)$ holds for some non-empty $S\subseteq V$.  \end{corollary}

\begin{corollary}\label{thm:lambda1}
   The minimum satisfaction probability in a maxmin-fair distribution for the one-sided bipartite matching problem is
$$\pi(G) = \min \left\{ \frac{|\Gamma(S)|}{|S|} \mid \emptyset \neq S \subseteq L \right\}.$$
%\begin{equation}\label{eq:lambda1} \lambda_1 = \min\left(1, \min \left\{ \frac{|\Gamma(S)|}{|S|}                         \mid \emptyset \neq S\subseteq L \right\} \right) \ifhideproofs.\else.\fi \end{equation}
\mycomment{
                and
                the maximum coverage probability is
                \begin{equation}\label{eq:lambda2}
                %$\lambda_n = \max_{\substack{S\subsetneq L}} \frac{|\Gamma(L) - \Gamma(S)|}{|L\setminus S|} $.
                \lambda_n = \min\left(1, \max \left\{ \frac{|\Gamma(L)| - |\Gamma(S)|}{|L\setminus S|}  \mid S\subsetneq L
                        \right\} \right).
                \end{equation}
}
\end{corollary}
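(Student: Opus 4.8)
The plan is to reduce the corollary to the generalized Hall condition of Theorem~\ref{thm:alphas}, after first turning the lexicographic notion of maxmin-fairness into a plain scalar optimization. The starting observation is that by Theorem~\ref{thm:lexi} any maxmin-fair distribution $F$ satisfies $\scov{F} \succeq \scov{D}$ for every distribution $D$ over $\calS$. Comparing the \emph{first} (smallest) coordinates of these sorted satisfaction vectors, the definition of the lexicographic order forces the smallest satisfaction probability of $F$ to be at least that of any other $D$. Since $\pi(G)$ is by definition this smallest probability (well defined by Lemma~\ref{lem:unique}), I obtain the variational characterization
$$ \pi(G) = \max_{D} \, \min_{v \in L} \cov{D}{v}, $$
the maximum ranging over all distributions $D$ of matchings of $G$, and attained because a maxmin-fair distribution exists.

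Next I would evaluate this optimum using Theorem~\ref{thm:alphas} with a uniform target. Fix a threshold $\lambda \in [0,1]$ and set $\alpha_v = \lambda$ for every $v \in L$. Theorem~\ref{thm:alphas} then states that a distribution $D$ with $\cov{D}{v} \ge \lambda$ for all $v$ exists if and only if $|\Gamma(S)| \ge \lambda\,|S|$ for every $S \subseteq L$, i.e.\ if and only if $\lambda \le |\Gamma(S)|/|S|$ for every nonempty $S \subseteq L$. Hence the set of feasible thresholds is exactly the interval of $\lambda$ bounded above by
$$ \lambda^{*} = \min\left\{ \frac{|\Gamma(S)|}{|S|} \ \Big|\ \emptyset \neq S \subseteq L \right\}, $$
and this minimum is attained since $L$ has only finitely many subsets (and $\Gamma(S)\neq\emptyset$ on singletons, as there are no degree-$0$ vertices).

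Finally I would assemble the two facts. The threshold $\lambda^{*}$ is itself feasible (the inequalities $|\Gamma(S)| \ge \lambda^{*}|S|$ all hold, with equality on a minimizing set), so there is a distribution with every satisfaction probability at least $\lambda^{*}$, giving $\pi(G) \ge \lambda^{*}$; and no strictly larger threshold is feasible, so $\pi(G) \le \lambda^{*}$. Therefore $\pi(G) = \lambda^{*}$, which is the claimed formula. The only step requiring real care is the reduction in the first paragraph: that maxmin-fairness, despite being a full lexicographic condition, collapses for the purpose of computing $\pi(G)$ to maximizing the single smallest coordinate. This is immediate once Theorem~\ref{thm:lexi} is available, so the corollary follows essentially by instantiating Theorem~\ref{thm:alphas} at a constant vector of thresholds and optimizing over the constant.
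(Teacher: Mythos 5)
Your proof is correct and follows essentially the same route as the paper: the paper's (very terse) proof likewise fixes a threshold $\lambda$ and invokes Theorem~\ref{thm:alphas} with the constant vector $\alpha_v = \lambda$ to conclude that a uniform satisfaction guarantee $\lambda$ is achievable if and only if $|\Gamma(S)| \ge \lambda|S|$ for all $S \subseteq L$. The only difference is that you spell out the implicit reduction, via Theorem~\ref{thm:lexi}, of the lexicographic maxmin-fairness condition to the scalar problem $\max_D \min_{v\in L} \cov{D}{v}$, which the paper leaves unstated.
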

\begin{proof}
Fix a parameter $\lambda\in[0,1]$. By Theorem~\ref{thm:alphas}, a distribution with satisfaction probability at least $\lambda$ for all $L$ exists if and only if $\Gamma(S) \ge \lambda  |S|$ for all $S\subseteq V$.
%By setting $\alpha_v = \pi(G)$ for all $v \in L$, Theorem~\ref{thm:lambda1} gives a distribution with minimum satisfaction $\pi(G)$. 
%By setting $\alpha_v$ to some larger parameter $\lambda > \pi(G)$ for all $v \in L$, Theorem~\ref{thm:lambda1} shows that  no distribution with minimum satisfaction $\lambda$. 
\end{proof}

In Appendix~\ref{sec:proofs_problem} we prove a dual result for the maximum satisfaction probability:
\begin{corollary}\label{thm:lambda2}
   The maximum satisfaction probability in a maxmin-fair distribution for the one-sided bipartite matching problem is
                \begin{equation}\label{eq:lambda2}\nonumber
                % we remove the min(1, .) because if it's larger than one it means that rho(L) < R
                %\lambda_n = \min\left(1, \max \left\{ \frac{|\Gamma(L)| - |\Gamma(S)|}{|L\setminus S|}  \mid S\subsetneq L \right\} \right).
                \Pi(G) = \max \left\{ \frac{|\Gamma(L)| - |\Gamma(S)|}{|L\setminus S|}  \mid S\subsetneq L \right\}.
                \end{equation}
\end{corollary}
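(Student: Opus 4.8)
The plan is to express $\Pi(G)$ as a \emph{minmax} quantity and then evaluate it through a Hall-type feasibility criterion dual to Theorem~\ref{thm:alphas}. Since matching is a matroid problem (Example~\ref{ex:matroids}), Theorem~\ref{thm:maxminsocial} shows that a maxmin-fair distribution $F$ is also minmax-Pareto. Unwinding the definition of minmax-Pareto shows that $F$ minimizes the largest satisfaction probability among all Pareto-efficient distributions: if some Pareto-efficient $D$ had $\max_v \cov{D}{v} < \max_v \cov{F}{v}$, then applying the minmax condition to the maximizer $u^\ast$ of $\cov{F}{\cdot}$ would yield a $v$ with $\cov{D}{v} > \cov{F}{v} \ge \cov{F}{u^\ast} = \max_w \cov{F}{w}$, contradicting $\cov{D}{v} \le \max_w \cov{D}{w} < \max_w \cov{F}{w}$. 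Hence $\Pi(G) = \max_v \cov{F}{v} = \min\{\max_v \cov{D}{v} \mid D \text{ Pareto-efficient}\}$. Because all bases of the matching matroid have the same size $\rho(L) = |\Gamma(L)|$, every distribution supported on maximum matchings is automatically Pareto-efficient (a Pareto improvement would push the expected solution size above $\rho(L)$), so it suffices to optimize over distributions of maximum matchings.

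For the lower bound $\Pi(G) \ge \alpha^\ast := \max_{S\subsetneq L} \frac{|\Gamma(L)|-|\Gamma(S)|}{|L\setminus S|}$, fix $S \subsetneq L$. In any maximum matching, at most $|\Gamma(S)|$ of the $|\Gamma(L)|$ covered right-vertices are matched into $S$, so at least $|\Gamma(L)|-|\Gamma(S)|$ distinct vertices of $L\setminus S$ are covered. Taking expectations over $F$ gives $\sum_{v\in L\setminus S}\cov{F}{v} \ge |\Gamma(L)|-|\Gamma(S)|$, whence $\max_v \cov{F}{v} \ge (|\Gamma(L)|-|\Gamma(S)|)/|L\setminus S|$; maximizing over $S$ yields $\Pi(G) \ge \alpha^\ast$.

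For the upper bound I would construct a Pareto-efficient distribution $D$ with $\cov{D}{v} \le \alpha^\ast$ for all $v$, and first reduce this to a fractional problem: it suffices to find $x_{vr}\ge 0$ on the edges with $\sum_v x_{vr}=1$ for every $r\in R$ and $\sum_r x_{vr}\le \alpha^\ast$ for every $v\in L$. Such an $x$ lies in the (integral) bipartite matching polytope and is tight at every right-vertex, so writing it as a convex combination of integral matchings forces each matching in the combination to cover all of $R$; the resulting distribution is then supported on maximum matchings with marginals $\cov{D}{v}=\sum_r x_{vr}\le\alpha^\ast$. Feasibility of this transportation problem I would settle by max-flow/min-cut on the network with arcs $s\to r$ (capacity $1$), $r\to v$ for each edge (capacity $\infty$), and $v\to t$ (capacity $\alpha^\ast$): a finite cut is determined by the set $A\subseteq R$ left on the source side, has value $|\Gamma(L)|-|A|+\sum_{v\in\Gamma(A)}\alpha^\ast$, and the target flow $|\Gamma(L)|$ is achievable iff $\sum_{v\in\Gamma(A)}\alpha^\ast \ge |A|$ for all $A\subseteq R$. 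A short computation (taking $A=R\setminus\Gamma(S)$ in one direction and $S=L\setminus\Gamma(A)$ in the other) shows this $R$-indexed condition is equivalent to the Hall-type condition $|\Gamma(S)|+\alpha^\ast|L\setminus S|\ge|\Gamma(L)|$ for all $S\subseteq L$, which holds precisely because $\alpha^\ast\ge(|\Gamma(L)|-|\Gamma(S)|)/|L\setminus S|$ by the definition of $\alpha^\ast$.

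The main obstacle is the sufficiency half, namely producing the distribution that attains $\alpha^\ast$: the necessity/lower bound is a one-line expectation argument, whereas the upper bound requires the fractional-to-integral passage, the min-cut characterization of feasibility, and the translation between the cut condition indexed by $R$ and the Hall condition indexed by $L$. Combining the two bounds gives $\Pi(G)=\alpha^\ast$, as claimed.
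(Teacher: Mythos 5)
Your proof is correct, but it takes a genuinely different route from the paper's after the first step. Both arguments begin the same way: like you, the paper uses the maxmin/minmax equivalence for matroid problems (Theorem~\ref{thm:maxminsocial}) to turn $\Pi(G)$ into a minmax quantity over Pareto-efficient distributions. From there, however, the paper stays inside matroid theory: it invokes the dual-matroid formula $\Pi(M)=\max_{X\subsetneq L}\,(\rho(L)-\rho(X))/|L\setminus X|$ (Theorem~\ref{coro:matroids2}, obtained by applying Theorem~\ref{thm:matroids} to the dual matroid), and then reduces the corollary to the purely combinatorial identity $\max_{S}(\rho(L)-\rho(S))/|L\setminus S|=\max_{S}(|\Gamma(L)|-|\Gamma(S)|)/|L\setminus S|$, proved via the K\H{o}nig--Ore deficiency form of Hall's theorem, $\rho(S)=\min_{T\subseteq S}|\Gamma(T)|+|S|-|T|$. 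You instead evaluate the minmax value directly: an expectation/counting lower bound over distributions of maximum matchings, and an upper bound by explicitly constructing a feasible fractional assignment via max-flow/min-cut and converting it into a distribution of maximum matchings through the integrality of the bipartite matching polytope (the paper's Lemma~\ref{lem:birkhoff}). Your construction is in effect a proof of the dual of Theorem~\ref{thm:alphas} specialized to the constant target $\alpha_v=\alpha^\ast$, and it closely anticipates the min-cut technique the paper only deploys later, in Lemma~\ref{lem:cut}. What your route buys is a self-contained, constructive argument that exhibits a witness distribution for the bipartite case; what the paper's route buys is brevity (given the matroid machinery already developed) and a statement that generalizes verbatim to arbitrary matroid problems. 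One point you should make explicit: the claim that ``such an $x$ lies in the bipartite matching polytope'' needs $\alpha^\ast\le 1$, which is not part of the definition of $\alpha^\ast$. It does hold under the standing assumption $\rho(L)=|R|$ --- a maximum matching matches $R\setminus\Gamma(S)$ injectively into $L\setminus S$, so $|\Gamma(L)|-|\Gamma(S)|\le|L\setminus S|$ --- and in fact it already follows from your own lower-bound computation combined with $\cov{F}{v}\le 1$; the paper's proof needs and uses the analogous bound ($\alpha\le\beta\le 1$) as well.
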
                

\subsection{A compact LP formulation for the fairness parameter}\label{subsec:lp} %\textcolor[rgb]{0.00,0.00,1.00}
{Below we write a linear program for computing $\pi(G)$}. %The variables $\{y_v\}_{v \in V}$ represent a distribution ofh
\mycomment{
    {
        \small
    \begin{equation}\label{lp:mu}
    \begin{array}{rrclcl}
    \displaystyle \min & \displaystyle \sum_{v \in R} y_v \\
    %        \\
    \textrm{s.t.}
    %& y_u  &\le& y_v & \forall (u, v) \in E \subseteq L \times R \\
    & y_v - y_u &\ge& 0 & \forall (u, v) \in E \subseteq L \times R \\
    & \displaystyle \sum_{u \in L}\, y_u  & = & 1 & \forall u\in L \\
    & y_u, y_v &\ge& 0 & \forall u \in L, v \in R \\
    \end{array}
    \end{equation}
    }
}
    {
    \begin{equation}\label{lp:mu}
    \begin{array}{rrclll}
    \textrm{minimize} & \sum_{v \in R} y_v \\
    %        \\
    \textrm{s.t.}
    %& y_u  &\le& y_v & \forall (u, v) \in E \subseteq L \times R \\
    & y_v - y_u &\ge& 0 & \; \; \; \forall (u, v) \in E \subseteq L \times R \\
    & \sum_{u \in L} y_u  & = & 1 & \; \; \;  \forall u\in L\\
    & y_u, y_v &\ge& 0 & \; \; \; \forall u \in L, v \in R \\
    \end{array}
    \end{equation}
    }
%\vspace{-0.2cm}

%(The densest subgraph problem is a special case: let $L = edges$, $R$ = vertices.)

Any set $S \subseteq L$ can be represented by a feasible solution to this LP by setting
$y_x = \frac{1}{|S|}$ for all $x \in S \cup \Gamma(S)$.
\begin{restatable}{lemma}{lemconstructsol}
\label{lem:construct_sol}
For any non-empty set $S \subseteq L$, there is a feasible solution to LP~\eqref{lp:mu} with value
$\frac{ |\Gamma(S)| }{|S|}$.
\end{restatable}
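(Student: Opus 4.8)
The plan is to take the explicit candidate solution suggested immediately before the lemma statement and verify directly that it is feasible and attains the claimed objective value. Concretely, given a non-empty $S \subseteq L$, I set $y_x = \frac{1}{|S|}$ for every $x \in S \cup \Gamma(S)$ and $y_x = 0$ for all remaining vertices. Since $\Gamma(S) \subseteq R$, the only left-vertices receiving a positive value are those in $S$, and the only right-vertices receiving a positive value are those in $\Gamma(S)$.

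First I would dispatch the sum constraint and non-negativity, which are immediate: every coordinate lies in $\{0, \frac{1}{|S|}\}$, so $y \ge 0$; and $\sum_{u \in L} y_u = |S| \cdot \frac{1}{|S|} = 1$, because exactly the $|S|$ vertices of $S$ contribute and all other left-vertices are assigned $0$.

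The only constraint requiring a (short) argument is the edge constraint $y_v \ge y_u$ for each $(u,v) \in E$ with $u \in L$ and $v \in R$. Here I would split on whether $u \in S$. If $u \in S$, then $v$ is a neighbour of an element of $S$, hence $v \in \Gamma(S)$, so $y_u = y_v = \frac{1}{|S|}$ and the constraint holds with equality. If $u \notin S$, then $y_u = 0 \le y_v$ trivially. This is the one place where the definition of $\Gamma(S)$ as the full neighbourhood of $S$ is used, and it is precisely what prevents the inequalities from being violated.

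Finally I would compute the objective: $\sum_{v \in R} y_v = \sum_{v \in \Gamma(S)} \frac{1}{|S|} = \frac{|\Gamma(S)|}{|S|}$, as claimed. I do not anticipate any genuine obstacle; the lemma is a direct verification. The only subtle point worth stating explicitly is that closing $S$ under taking neighbours (i.e.\ assigning the positive value to \emph{all} of $\Gamma(S)$, not merely to $S$) is exactly what makes the edge inequalities tight rather than violated, and that $\Gamma(S)$ being disjoint from $L$ is what keeps the sum over $L$ equal to $1$.
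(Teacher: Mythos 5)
Your proposal is correct and is essentially identical to the paper's proof: the same assignment $y_x = \frac{1}{|S|}$ on $S \cup \Gamma(S)$ and $0$ elsewhere, the same case split on whether $u \in S$ to verify the edge constraints, and the same computation of the objective value. No gaps.
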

\begin{proof}
Define $y_x = \frac{1}{|S|}$ for all $x \in S \cup \Gamma(S)$ and $y_x = 0$ elsewhere. %for all other $x \in L \cup R$.
%$y_u = \frac{1}{|S|}$ for all $u \in S$, and $y_v = \frac{1}{|S|}$ for all $v \in \Gamma(S)$.
Then
 $\sum_{u \in L} y_u = \sum_{u \in S} \frac{1}{|S|} = 1$ and for every edge $(u, v) \in L \times
 R$ we  have either
$y_u = 0$ (in which case $y_v \ge 0 = y_u)$ or $y_u = 1/|S|$; the latter implies $u \in S$ and $v \in \Gamma(S)$, so $y_v = 1 / |S| = y_u$.
This proves feasibility. Finally, $\sum_{v \in R} y_v = \sum_{v \in \Gamma(S)} \frac{1}{|S|} = \frac{| \Gamma(S) |}{|S|}$.
\end{proof}

\fullversion{The following shows how to round an optimal solution LP~\eqref{lp:mu}  to obtain a set $S$ of vertices such that $|\Gamma(S)|/|S|$ equals the optimal value.
    A similar
    technique has been used by Charikar~\cite{charikar} for the densest subgraph LP.}
\begin{lemma}\label{lem:lp_bipartite}
Let $\{y_w\}_{w \in L \cup R}$ be an optimal solution to~\eqref{lp:mu}.
%Then there is a non-empty set $S\subseteq L$ with $\frac{ |\Gamma(S)| }{|S|} = \sum_{v \in r} y_v$.
Then the set $S = \{ v \in L \mid y_v > 0 \}\neq \emptyset$ satisfies
$\frac{ |\Gamma(S)| }{|S|} = \sum_{v \in r} y_v$.
\end{lemma}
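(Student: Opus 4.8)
The plan is to establish the claimed equality by proving the two inequalities $\sum_{v\in R} y_v \le \frac{|\Gamma(S)|}{|S|}$ and $\sum_{v\in R} y_v \ge \frac{|\Gamma(S)|}{|S|}$ separately, writing $\mu^\ast = \sum_{v\in R} y_v$ for the optimal value. First note $S\neq\emptyset$, since the constraint $\sum_{u\in L} y_u = 1$ forces some $y_u>0$. The easy inequality $\mu^\ast \le \frac{|\Gamma(S)|}{|S|}$ is immediate from Lemma~\ref{lem:construct_sol} applied to the set $S$ itself: it exhibits a feasible point of value exactly $\frac{|\Gamma(S)|}{|S|}$, and $\mu^\ast$ is the minimum.

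For the reverse direction I would use a level-set (layer-cake) decomposition in the spirit of Charikar's densest-subgraph analysis. For $t>0$ put $S_t = \{u\in L : y_u \ge t\}$ and $R_t = \{v\in R : y_v\ge t\}$. The edge constraints $y_v\ge y_u$ guarantee that every neighbour of a vertex in $S_t$ also lies in $R_t$, i.e.\ $\Gamma(S_t)\subseteq R_t$, so $|R_t|\ge|\Gamma(S_t)|$. Writing $\lambda = \min\{|\Gamma(T)|/|T| : \emptyset\neq T\subseteq L\}$ (the quantity of Corollary~\ref{thm:lambda1}), we have $|\Gamma(S_t)|\ge\lambda|S_t|$ for every $t$. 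The coarea identities $\int_0^\infty |S_t|\,dt = \sum_u y_u = 1$ and $\int_0^\infty |R_t|\,dt = \sum_v y_v = \mu^\ast$ then chain into
$$ \mu^\ast = \int_0^\infty |R_t|\,dt \ \ge\ \int_0^\infty |\Gamma(S_t)|\,dt \ \ge\ \lambda\int_0^\infty |S_t|\,dt \ =\ \lambda. $$
Since Lemma~\ref{lem:construct_sol} applied to a minimizing set also gives $\mu^\ast\le\lambda$, I conclude $\mu^\ast=\lambda$, so the entire chain collapses to equalities.

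The main obstacle, and the step requiring care, is to convert the resulting equality $\int_0^\infty\bigl(|\Gamma(S_t)|-\lambda|S_t|\bigr)\,dt = 0$ into the exact statement about the \emph{full} support $S$, rather than merely an almost-everywhere equality of ratios. The key observation is that $t\mapsto |\Gamma(S_t)|-\lambda|S_t|$ is a nonnegative step function, constant on each interval between consecutive values of $\{y_u\}$, and that $S = S_t$ for every $t$ in the lowest interval $(0,t_0)$, where $t_0$ is the smallest positive value among the $y_u$. Restricting the nonnegative integrand to this interval gives
$$ t_0\bigl(|\Gamma(S)|-\lambda|S|\bigr) \ =\ \int_0^{t_0}\bigl(|\Gamma(S_t)|-\lambda|S_t|\bigr)\,dt \ \le\ \int_0^\infty\bigl(|\Gamma(S_t)|-\lambda|S_t|\bigr)\,dt \ =\ 0, $$
and since $t_0>0$ this forces $|\Gamma(S)|=\lambda|S|$. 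Combined with $\mu^\ast=\lambda$ this yields $\frac{|\Gamma(S)|}{|S|} = \mu^\ast = \sum_{v\in R} y_v$, as required. I would finally double-check the harmless boundary conventions (vertices with $y_v=0$, and $S_t=\emptyset$ for $t$ larger than every $y_u$), but I expect no genuine difficulty there.
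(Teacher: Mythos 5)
Your proof is correct, and its skeleton is the same Charikar-style level-set argument the paper uses: the paper draws a uniform random threshold $r \in (0,1)$ and compares $\expect_r[|T(r)|]$ with $\lambda\,\expect_r[|S(r)|]$, which is exactly your coarea computation in probabilistic dress, and both proofs obtain the matching upper bound from Lemma~\ref{lem:construct_sol}. You deviate in two technical points, both of which are simplifications. First, the paper establishes the exact identity $T(r)=\Gamma(S(r))$, which requires a structural property of \emph{optimal} solutions (namely $y_v=\max_{u\in\Gamma^{-1}(v)}y_u$ for all $v\in R$, since otherwise one could decrease $y_v$); you only use the containment $\Gamma(S_t)\subseteq R_t$, which follows from feasibility alone, so beyond its value you never invoke optimality of $\{y_w\}$ at all. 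Second, the paper converts ``the nonnegative quantity $|T(r)|-\lambda|S(r)|$ has zero expectation'' into ``it vanishes for \emph{every} $r\in(0,1)$'' via an almost-surely-plus-piecewise-constancy argument before specializing to the bottom threshold; you bypass that step by restricting the integral of the nonnegative integrand to the bottom interval $(0,t_0)$, on which $S_t\equiv S$, which disposes of the measure-theoretic issue in one line. What the paper's stronger route buys is tightness of every level set ($|\Gamma(S(r))|=\lambda|S(r)|$ for all $r$), not just of the support $S$; but the lemma only asserts the latter, so nothing is lost. Your intermediate combinatorial quantity $\lambda=\min\{|\Gamma(T)|/|T| : \emptyset\neq T\subseteq L\}$ and the identity $\mu^\ast=\lambda$ are implicit in the paper (Lemma~\ref{lem:construct_sol} together with Corollary~\ref{thm:lambda1}) and serve as a clean way to organize the sandwich of inequalities.
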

{
\begin{proof}
Write $\lambda = \sum_{v \in R} y_v$.
For any $r \in (0, 1)$, define $S(r) = \{u \in L \mid y_u \ge r \}$ and $T(r) = \{ v \in R \mid y_v \ge r \}$.
We show that $T(r) = |\Gamma(S(r))|$ and $|T(r)| / |S(r)| = \lambda$ for \emph{every} $r \in (0, 1)$.
To see this, observe that for any $v \in R$, $y_v \ge \max_{u \in \Gamma^{-1}(v)} y_u$. In fact in any optimal solution equality must hold:
$y_v = \max_{u \in \Gamma^{-1}(v)} y_u$
 for all $ v \in R$; otherwise
we may decrease some $y_v$ and hence the objective function without sacrificing feasibility. Consequently,
\begin{align*}
v \in T(r) &\Leftrightarrow y_v \ge r \Leftrightarrow \max_{u \in \Gamma^{-1}(v)} y_u \ge r
\Leftrightarrow\\& \Leftrightarrow \exists u \in \Gamma^{-1}(v) \text{ such that } y_u \ge r \Leftrightarrow
v \in \Gamma(S(r)).
\end{align*}
Recall from Lemma~\ref{lem:construct_sol} that we can construct a solution to LP~\eqref{lp:mu} from any non-empty set.
Since $\lambda$ is the optimal value of LP~\eqref{lp:mu}, for any $r$ for which $S(r) \neq \emptyset$ we have $|T(r)| / |S(r)| \ge \lambda$, i.e., $0 \le |T(r)| - \lambda |S(r)|$.
The latter also holds if $S(r) = \emptyset$. On the other hand, if we pick $r$ uniformly at random from $(0, 1)$, we have
$$ \expect_r [ |S(r)| ] = \sum_u \Pr_r [ u \in S(r) ] = \sum_{u} \Pr_r [ r \le y_u ] = \sum_u y_u = 1, $$
$$ %\text{ and }\;
\expect_r [ |T(r)| ]= \sum_v \Pr_r [ v \in T(r) ] = \sum_{v} \Pr_r [ r \le y_v] = \sum_v y_v = \lambda, $$
so
$ 0 \le \expect_r [ | T(r) | - \lambda | S(r) | ] = \expect_r [ |T(r)| ] - \lambda \cdot \expect_r [ |S(r)| ] = \lambda - \lambda \cdot 1 = 0,$
which implies that $T(r) - \lambda \cdot S(r) = 0$ almost surely when $r$ is uniform in $(0, 1)$.
Observe that $T(r) / S(r)$ is piecewise-constant in its domain (all distinct possibilities are given by taking $t = y_w$ for some $w \in L \cup R$).
Moreover, for any $r \in (0, 1)$ there is some interval $I$ of non-zero length such that for all $r' \in I$, then $S(r) = S(r')$ and $T(r) = T(r')$.
Thus, any event that is a measurable function of $S(r)$ and $T(r)$ and holds with probability 1 when $r \sim U(0, 1)$ must actually hold for every $r \in (0, 1)$ as well.

Thus, $|T (r) | = \lambda |S (r)|$ for all $r \in (0, 1)$.
In particular if we pick $r_0 = \min_{u \in L} y_v$, then $S(r_0) = \{ v \in L \mid y_v > 0 \}$
satisfies $\sum_{v \in S(r_0)} y_v = 1$, hence is non-empty, and by the above
we have
%it also has the desired property
$ |\Gamma(S(r_0))| - \lambda \cdot |S(r_0)| = 0 ,$
as desired.
\end{proof}
}

    In combination with  Corollary~\ref{thm:lambda1}, these two lemmas yield an effective method of
    computing $\pi(G)$:% by linear programming:
\begin{corollary}
%For any bipartite graph with $\rho(L) = |R|$, the fairness parameter $\pi(G)$ for $\calU = L$ is equal to the optimum value of
In the one-sided fair bipartite matching problem,
the fairness parameter $\pi(G)$  is equal to the optimum value of
the LP {in}~\eqref{lp:mu}.
%\textcolor[rgb]{0.00,0.00,1.00}{the} LP \textcolor[rgb]{0.00,0.00,1.00}{in}~\eqref{lp:mu}.
\end{corollary}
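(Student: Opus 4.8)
The plan is to prove the equality by a straightforward two-sided sandwich, since all the technical work has already been done in the preceding results: the combinatorial characterization of $\pi(G)$ in Corollary~\ref{thm:lambda1}, the construction of a feasible LP solution from any vertex set in Lemma~\ref{lem:construct_sol}, and the rounding of an optimal LP solution back to a vertex set in Lemma~\ref{lem:lp_bipartite}. Writing $\mathrm{OPT}$ for the optimum value of LP~\eqref{lp:mu} and recalling that this is a \emph{minimization} problem, I would establish $\mathrm{OPT} \le \pi(G)$ and $\pi(G) \le \mathrm{OPT}$ separately.

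For the inequality $\mathrm{OPT} \le \pi(G)$, I would invoke Corollary~\ref{thm:lambda1} to obtain a non-empty set $S_0 \subseteq L$ attaining $\pi(G) = |\Gamma(S_0)|/|S_0|$. Lemma~\ref{lem:construct_sol} then exhibits a feasible solution to LP~\eqref{lp:mu} whose objective value is exactly $|\Gamma(S_0)|/|S_0|$; since $\mathrm{OPT}$ is the minimum over all feasible solutions, this forces $\mathrm{OPT} \le |\Gamma(S_0)|/|S_0| = \pi(G)$.

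For the reverse inequality $\pi(G) \le \mathrm{OPT}$, I would take any optimal solution $\{y_w\}_{w \in L \cup R}$, whose value is $\mathrm{OPT} = \sum_{v \in R} y_v$, and apply Lemma~\ref{lem:lp_bipartite}. This guarantees that the set $S^* = \{v \in L \mid y_v > 0\}$ is non-empty and satisfies $|\Gamma(S^*)|/|S^*| = \sum_{v \in R} y_v = \mathrm{OPT}$. Because $\pi(G)$ is by Corollary~\ref{thm:lambda1} the minimum of $|\Gamma(S)|/|S|$ over all non-empty $S \subseteq L$, and $S^*$ is one admissible choice, I conclude $\pi(G) \le |\Gamma(S^*)|/|S^*| = \mathrm{OPT}$. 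Combining the two inequalities gives $\mathrm{OPT} = \pi(G)$.

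There is no genuine obstacle in this argument, as the substance is entirely carried by the two lemmas and the corollary; the only point requiring care is orienting the inequalities correctly against the fact that LP~\eqref{lp:mu} minimizes rather than maximizes, so that Lemma~\ref{lem:construct_sol} (constructing a feasible point from the optimal set) supplies the upper bound on $\mathrm{OPT}$, while Lemma~\ref{lem:lp_bipartite} (rounding the optimal point to a set) supplies the matching lower bound.
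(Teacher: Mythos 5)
Your proof is correct and follows exactly the route the paper intends: the corollary is stated there as an immediate combination of Corollary~\ref{thm:lambda1}, Lemma~\ref{lem:construct_sol} (giving $\mathrm{OPT} \le \pi(G)$), and Lemma~\ref{lem:lp_bipartite} (giving $\pi(G) \le \mathrm{OPT}$), which is precisely your two-sided sandwich. Nothing is missing, and your care with the direction of the inequalities for a minimization LP is exactly the right point to flag.
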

%Since LPs are solvable in polynomial-time~\cite{ellipsoid_book},
%      Solving~\eqref{lp:mu} provides an efficient way of computing $\lambda_1(G)$.
%However, note that this approach does not yield the maxmin-fair distribution of solutions.

\subsection{Fair decompositions}\label{sec:fair_decomp} %\textcolor[rgb]{0.00,0.00,1.00}
%{ The next ingredient towards an efficient algorithm is a decomposition of $L$ into a nested collection of ``fairly isolated'' sets based on the maximum attainable satisfaction probability.}
The next ingredient towards an efficient algorithm is to find a decomposition of $L$ according to different levels of satisfaction probability in the maxmin-fair distribution.
In  Figure~\ref{fig:hugegraph}, the set of left vertices with smallest neighbor-to-size ratio is the set $B_1 = \{a_5, a_4\}$, with $\Gamma(B_1) = \{b_3\}$.
By Corollary~\ref{thm:lambda1}, the fairness parameter of the graph in the picture is $\frac 12$. But in order to actually match $a_5$ and $a_4$ with probability $\frac 12$, $b_3$ must be matched to one of the
two every single time. Hence the edge $(a_3, b_3)$ can never be used to in a maxmin-maxmin-fair solution. After removing $B_1$ and $\Gamma(B_1)$ from the graph,
 the next set of left vertices with smallest neighbor-to-size ratio is the set $B_2 = \{a_1, a_2, a_3\}$ and again we find that edge $(a_0, b_1)$ cannot be used. The last set we
 find in this way is $B_3 = \{a_0\}$.

 We refer to $B_1, B_2, B_3$ as the \emph{blocks} of the fair decomposition;  and to the increasing sequence of sets
$S_1 = B_1, S_2 = B_1 \cup B_2$ and $S_3 = B_1 \cup B_2 \cup B_3$ as the \emph{fairly isolated sets}. This motivates the following definitions.

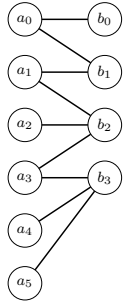
\begin{figure}[h]
%\vspace{-0.4cm}
%\begin{minipage}[c]{0.5\textwidth}
  \begin{center}
\scalebox{0.7}{
\begin{tikzpicture}[scale=0.5]
\begin{scope}[rotate=-90]
    \grEmptyPath[Math,prefix=a,RA=2,RS=0]{6}
    \grEmptyPath[Math,prefix=b,RA=2,RS=3]{4}
\end{scope}
\Edges(b0,a0,b1,a1,b2,a3,b3,a5)
\Edges(a2,b2)
\Edges(a4,b3)
\end{tikzpicture}
}
%\vspace{-3mm}
  \caption{A bipartite graph with blocks $B_1 = \{a_5, a_4\}$, $B_2=\{a_3, a_2, a_1\}$ and $B_3=\{a_0\}$ and fairly isolated sets $S_1 = B_1, S_2 = B_1 \cup B_2$ and $S_3 = B_1
      \cup B_2 \cup B_3$.}\label{fig:hugegraph}
\end{center}
\vspace{-1mm}
\end{figure}

%such that elements appearing in successive sets are allowed increasingly large coverage probabilities.}
%The intuition is that
%users in smaller sets are
%``more restrictive'' and
%cannot get high satisfaction probabilities.
%If there is a fairly isolated containing $u$ but not $v$, then $\cov{D}{u} < \cov{D}{v}$ will hold for the maxmin-fair distribution $D$, and viceversa.
    For $A\subseteq L$, denote by $G\rst{A}$ the subgraph of $G$ induced by $A \cup \Gamma(A)$,
    and by $G/A$ the subgraph of $G$ induced by $(L \cup R) \setminus (A \cup \Gamma(A))$. %$(L\setminus A) \cup (R\setminus \Gamma(A))$.
    Intuitively, $G\rst{A}$ represents the subproblem where only the
    elements of $A$ are important, and $G/A$ represents the subproblem of $G\rst{\overline A}$ where %in which the
the use of neighbours of $A$ is disallowed.
%matchings are not allowed to use elements of $R$ which  are also neighbours of $A$.
    For any subgraph $H$ of $G$, let $\pi(H)$ (resp., $\Pi(H)$) be the minimum (resp., maximum)
    satisfaction probability of an element of $V(H) \cap L$ in a maxmin-fair distribution.
%    \textcolor[rgb]{0.00,0.00,1.00}
The nonempty set $X\subseteq L$ is \emph{fairly isolated} if $\Pi(G\rst{X}) < \pi(G/X)$ or $X = L$.
%    or $\Gamma(X) = \emptyset$ or $X = L$.
%\enlargethispage{2\baselineskip}
%\begin{definition}
%The set $X$ is \emph{fairly isolated} if $\Pi(G\rst{X}) < \pi(G/X)$.
%Equivalently, $X$ is fairly isolated if, all $u \in X$ have maxmin satisfaction probability less than what can be  guaranteed for all $u \notin X$ even after removing all  possibly conflicting edges from $X$ to $\Gamma(X)$.
%Equivalently, $X$ is fairly isolated if all $u \in X$ have maxmin satisfaction probability less than what can be  guaranteed for all $u \notin X$ even after removing all  possibly conflicting edges from $X$ to $\Gamma(X)$.
%Equivalently,  $X$ is fairly isolated if for some $\lambda$ it holds that (a) all   $u\in X$ have satisfaction probability $< \lambda$ in any maxmin-fair solution; and (b) even after removing all  possibly conflicting edges from $X$ to $\Gamma(X)$, satisfaction $\ge \lambda$ can be guaranteed for all $u \notin X$.
%\begin{enumerate} \item all elements of $X$ have satisfaction probability $< \lambda$ in any maxmin-fair solution; \item even after removing all  possibly conflicting edges from $X$ to $\Gamma(X)$, satisfaction $\ge \lambda$ can be guaranteed for all $u \notin X$.  \end{enumerate}
%\end{definition}
%So $X$ is fairly isolated if
This means that
every $u\notin X$ has guaranteed satisfaction larger than the largest maxmin-fair
satisfaction inside $X$,
even
if we remove
%after removing
all  possibly conflicting edges from $X$ to~$\Gamma(X)$.

Finding fairly isolated sets enables a ``divide and conquer'' strategy to find maxmin-fair
distributions, since it turns out that %(intuitively) the
matchings used inside $X$ have no bearing on the satisfactions needed for users
in $L\setminus X$.
For example, if we can determine that the set $B_1 \cup B_2$ is fairly isolated, then we can work independently on
$B_1 \cup B_2$ and $B_3$ and combine the distributions found.
\mycomment{
    \begin{restatable}{lemma}{lemcombine}
    \label{lem:combine}
    Let $X$ be fairly isolated. Then
    %Then $D$ is maxmin-fair for $G$ if and only if there are distributions $ , $D_1$ be a maxmin-fair distribution for $G\rst{X}$ and $D_2$ a maxmin-fair distribution for $G/X$. Then
    \begin{enumerate}[(a)]
        \item If $D_1$ is maxmin-fair for $G\rst{X}$ and $D_2$ is maxmin-fair for $G/X$, then $(S_1 \cup
            S_2)_{S_1\sim D_1, S_2 \sim D_2}$ is maxmin-fair for $G$.
        \item If $D$ is maxmin-fair for $G$, then the distributions of $(S \cap X)_{S \sim D}$ and $(S
                \cap \overline{X})_{S \sim D}$ are maxmin-fair for $G\rst{X}$ and $G/X$, respectively.
    \end{enumerate}
    \end{restatable}
}

With this in mind, we are ready
to state our fair decomposition theorem, proved in Appendix~\ref{sec:proofs_problem}:
\begin{theorem}\label{thm:decomp}
The fairly isolated sets form a chain $S_1 \subseteq S_2 \ldots
\subseteq S_{k-1} \subseteq S_k = L$. 
Define $S_0=\emptyset$ for convenience and 
let $B_i = S_i \setminus S_{i-1}$ for $i > 0$.
The following hold for all $i = 1,\ldots, k$:
\begin{enumerate}[(a)]
%    \item $S_{i+1}$ is the unique maximal set $Y\supsetneq S_i$ minimizing $ \frac{ |\Gamma(Y) \setminus \Gamma(S_i)| }{|Y \setminus S_i|} $.
    \item $B_i \text{ is the maximal set } X \subseteq L\setminus S_i \text{ minimizing } \frac{ |\Gamma(X \cup S_i)| - |\Gamma(S_{i})| }{|X|}.$
%    \item $S_k = L$ and $S_{i-1}$ is the unique maximal set $Y \subsetneq S_i$ maximizing $ \frac{ |\Gamma(S_i) \setminus \Gamma(Y)| }{|S_i \setminus Y|}$.
    \item If $i < k$, $B_i \text{ is the maximal set } X \subseteq S_{i+1} \text{ maximizing } \frac{ |\Gamma(S_{i+1})| - |\Gamma(S_{i+1}\setminus X)| }{|X|}$.
    \item The satisfaction probability of every $v \in B_i$ in any maxmin-fair distribution is
    $\lambda_i = \frac{|\Gamma(B_i) \setminus \Gamma(S_{i-1})|}{|B_i|}$,
    and any $w \in \Gamma(B_i)\setminus \Gamma(S_{i-1})$ is matched to some $u \in B_i$ with probability 1. %that of any $w \in \Gamma(B_i) \setminus \Gamma(S_{i-1})$ is 1.
\end{enumerate}
We call $B_1, \ldots, B_k$ the \emph{blocks} in the fair decomposition of~$G$.
\end{theorem}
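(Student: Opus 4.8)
The plan is to reduce the problem to a leximin optimization and solve it by a submodular ``density'' decomposition of the neighborhood function $g(A)=|\Gamma(A)|$, which is non-decreasing and submodular. By Theorem~\ref{thm:alphas} a vector $(\alpha_v)_{v\in L}\in[0,1]^L$ is dominated by the satisfaction vector of some distribution iff $\sum_{v\in S}\alpha_v\le g(S)$ for all $S\subseteq L$, and by Theorem~\ref{thm:lexi} together with Lemma~\ref{lem:unique} the maxmin-fair satisfaction probabilities are the unique coordinatewise values of the leximin-maximal such vector. I would build a candidate greedily: set $S_0=\emptyset$ and, given $S_{i-1}$, let $\lambda_i=\min\{(g(S_{i-1}\cup X)-g(S_{i-1}))/|X|:\emptyset\neq X\subseteq L\setminus S_{i-1}\}$, let $B_i$ be the \emph{maximal} minimizer, put $S_i=S_{i-1}\cup B_i$, and stop when $S_k=L$. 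Submodularity makes $B_i$ well defined: if $X,Y$ both attain $\lambda_i$, then $g(S_{i-1}\cup X)+g(S_{i-1}\cup Y)\ge g(S_{i-1}\cup X\cup Y)+g(S_{i-1}\cup(X\cap Y))$ forces $X\cup Y$ to attain $\lambda_i$ as well, so the union of all minimizers is the maximal one. The telescoping identity $g(S_{i+1})-g(S_{i-1})=\lambda_i|B_i|+\lambda_{i+1}|B_{i+1}|$, combined with the maximality of $B_i$ (which forbids the strictly larger $B_i\cup B_{i+1}$ from matching or beating the ratio $\lambda_i$), yields strict monotonicity $\lambda_1<\lambda_2<\dots<\lambda_k$.

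Next I would verify that the vector $\alpha^*$ with $\alpha^*_v=\lambda_i$ for $v\in B_i$ is exactly the maxmin-fair satisfaction vector. Achievability, $\sum_{v\in S}\alpha^*_v\le g(S)$, follows by writing $g(S)=\sum_i(g(S\cap S_i)-g(S\cap S_{i-1}))$ and bounding each increment below by $g(S_{i-1}\cup(S\cap B_i))-g(S_{i-1})\ge\lambda_i|S\cap B_i|$, using submodularity (adding $S\cap B_i$ to the smaller base $S_{i-1}$) and the minimality of $\lambda_i$. Leximin-optimality is the crux: if some achievable $\beta$ had $\scov{\beta}\succ\scov{\alpha^*}$, take the first coordinate $p$ with $(\scov{\beta})_p>(\scov{\alpha^*})_p=\lambda_i$ and set $q=|S_i|$; since $p$ lies in block $i$ we have $p\le q$, $(\scov{\beta})_j=(\scov{\alpha^*})_j$ for $j<p$, and $(\scov{\beta})_j\ge(\scov{\beta})_p>\lambda_i=(\scov{\alpha^*})_j$ for $p\le j\le q$, so the sum of the $q$ smallest entries of $\beta$ strictly exceeds $\sum_{j\le q}(\scov{\alpha^*})_j=g(S_i)$; but that sum is at most $\sum_{v\in S_i}\beta_v\le g(S_i)$, a contradiction. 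Hence $\alpha^*$ is maxmin-fair, and since $g(S_i)-g(S_{i-1})=|\Gamma(B_i)\setminus\Gamma(S_{i-1})|$ we obtain the formula $\lambda_i=|\Gamma(B_i)\setminus\Gamma(S_{i-1})|/|B_i|$ of~(c). For the ``probability~$1$'' claim I would use tightness: in any maxmin-fair $D$ the expected number of matched vertices of $S_i$ is $\sum_{v\in S_i}\cov{D}{v}=g(S_i)=|\Gamma(S_i)|$, the maximum possible, so every $w\in\Gamma(S_i)$ is matched into $S_i$ in every matching of the support; a vertex $w\in\Gamma(B_i)\setminus\Gamma(S_{i-1})$ has no neighbor in $S_{i-1}$ and is therefore matched into $B_i$ with probability~$1$.

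It then remains to identify the greedy chain $S_1\subsetneq\dots\subsetneq S_k=L$ with the fairly isolated sets. Reading off neighborhoods in the reduced graphs gives $\pi(G/S_i)=\min_X(g(S_i\cup X)-g(S_i))/|X|=\lambda_{i+1}$ for $i<k$, and, applying the leximin characterization above to $G\rst{S_i}$ (whose decomposition is exactly $B_1,\dots,B_i$, since neighborhoods of subsets of $S_i$ agree in $G$ and in $G\rst{S_i}$), $\Pi(G\rst{S_i})=\lambda_i$; the strict gap $\lambda_i<\lambda_{i+1}$ shows each $S_i$ is fairly isolated. Conversely, for an arbitrary fairly isolated $X\neq L$ I would set $\nu=\Pi(G\rst X)<\pi(G/X)=\mu$ and run the same greedy on $G$, splitting each candidate $Y=Y_X\cup Y_O$ with $Y_X\subseteq X$ and $Y_O\subseteq L\setminus X$. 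Submodularity bounds the marginal of $Y$ below by the inside marginal of $Y_X$ plus the boundary-crossing marginal of $Y_O$, where the crossing term is at least $\mu|Y_O|$ (a smaller base inside $X$ only increases the marginal) while the inside term is at least $\nu'|Y_X|$ for the current inside density $\nu'\le\nu$; since $\mu>\nu\ge\nu'$, any candidate touching $L\setminus X$ is strictly suboptimal, so greedy stays inside $X$ until $X$ is exhausted, forcing $X=S_i$ for some $i$. This pins down the chain.

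Finally, characterization~(a) is precisely the greedy step defining $B_i$ (the maximal minimizer of the marginal neighborhood-to-size ratio relative to $S_{i-1}$), and~(b) is its top-down dual: applying Corollary~\ref{thm:lambda2} to $G\rst{S_i}$, or equivalently running the greedy in reverse, identifies the top block $B_i$ as the maximal maximizer of $(|\Gamma(S_i)|-|\Gamma(S_i\setminus X)|)/|X|$, with the same submodular union argument guaranteeing its maximality. I expect the main obstacle to be the third step: connecting the analytic definition of ``fairly isolated'' (phrased through $\pi$ and $\Pi$) to the combinatorial greedy decomposition, and in particular the boundary-crossing inequality, which is exactly where submodularity of $\Gamma$ is indispensable. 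The strict monotonicity of the $\lambda_i$ must be established first, since it is what makes the defining inequality $\Pi(G\rst{S_i})<\pi(G/S_i)$ strict.
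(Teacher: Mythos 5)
Your proposal is correct, and while it builds the same greedy density decomposition as the paper, the two proofs of the key claims run along genuinely different lines. The paper proves part (c) inductively (Lemma~\ref{lem:graphs_ext}): it applies Corollary~\ref{thm:lambda1} to the reduced graph $H_i$ obtained by deleting $S_{i-1}\cup\Gamma(S_{i-1})$, pins down the probabilities on $B_i$ by a counting argument, and uses the resulting saturation of $\Gamma_{H_i}(B_i)$ to argue that a maxmin-fair distribution for $G$ restricts to a maxmin-fair one for $H_{i+1}$ --- a somewhat delicate inductive step that your argument avoids entirely: you exhibit the explicit vector $\alpha^*$ and verify globally, via Theorem~\ref{thm:alphas}, submodular telescoping, and the prefix-sum contradiction against $g(S_i)$, that it is the leximin-maximal achievable vector, then read off saturation from tightness. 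For the identification with fairly isolated sets, the paper first proves a standalone nesting lemma (Lemma~\ref{lem:isol_chain}, via the quantity $d(A\mid B)$) and then excludes extra fairly isolated sets by sandwiching $S_i\subsetneq X\subsetneq S_{i+1}$ and contradicting maximality of $B_{i+1}$; you instead show the greedy on $G$ never crosses the boundary of a fairly isolated $X$ (the inequality $\mu>\nu\ge\nu'$ with the boundary-crossing marginal bounded below by $\pi(G/X)\,|Y_O|$), which delivers the chain property and completeness in one stroke and makes explicit the strict monotonicity $\lambda_1<\cdots<\lambda_k$ that the paper leaves implicit. Both treatments dispatch (b) through duality (the paper through Theorem~\ref{thm:maxminsocial} and Lemma~\ref{lem:matroids_ext2}; you through Corollary~\ref{thm:lambda2} on $G\rst{S_i}$), and here your reading is actually the correct one: as literally stated, part (b) has an index slip --- the maximal maximizer of $\bigl(|\Gamma(S_{i+1})|-|\Gamma(S_{i+1}\setminus X)|\bigr)/|X|$ over $X\subseteq S_{i+1}$ is the top block $B_{i+1}$ of $G\rst{S_{i+1}}$, not $B_i$ (already visible in a two-block example such as $u_1,u_2\sim v_1$, $u_3\sim v_2$, where the maximizer is $\{u_3\}$) --- so the statement you prove, that $B_i$ is the maximal maximizer within $S_i$, is the intended one. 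What each route buys: yours is more self-contained and cleanly separates the combinatorics (the density decomposition of the submodular function $|\Gamma(\cdot)|$) from the probabilistic content; the paper's block-peeling induction mirrors exactly the divide-and-conquer structure exploited by Algorithms~\ref{alg:alg1} and~\ref{alg:alg2}, which is presumably why it is organized that way.
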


%\todo[Francesco]{Algorithms to find the fairness parameter and the distribution of matchings.

%\textcolor[rgb]{1.00,0.00,0.00}{\textbf{David}} you can move to this after you finish with Section~3.}

%\vspace{-2mm}
%\enlargethispage{2\baselineskip}
%\subsection{A polynomial-time algorithm for one-sided fair bipartite matching}
\subsection{Description of the basic algorithm}\label{sec:algorithms}
\label{sec:basic}
Theorem~\ref{thm:decomp} and Lemma~\ref{lem:lp_bipartite}
suggest a line of attack to solve the one-sided fair  bipartite matching problem, outlined in Algorithm~\ref{alg:outline} below.
\mycomment{
\begin{enumerate}[1.]
\item Find the blocks $B_0, \ldots, B_k$ in a fair decomposition.
\item For each %\textcolor[rgb]{0.00,0.00,1.00}
{$i=0,1,2,\ldots, k$}, obtain a distribution $D_i$ of matchings
in $H_i = G/(\cup_{j\le i} B_j)$
such that $\cov{D_i}{u} = |\Gamma_{H_i}(B_i)|/|B_i|\; \forall u \in B_i$.% \note{$0,2,\ldots, k$ the 1 is missing ...}
%covering each element of $B_i$ with the probability of the maxmin-fair solution, i.e., $|\Gamma(B_i)|/|B_i|$.
\item
Combine the distributions
$D_0, \ldots, D_k$ into a single maxmin-fair distribution $D$, and draw a matching from it.
\end{enumerate}
}
First, find the blocks $B_1, \ldots, B_k$ in a fair decomposition. Second, find a maxmin-fair distribution%
\footnote{Each of these distributions can be represented by a list of pairs $(probability, matching)$, with the
        probabilities being non-negative and summing up to 1. For our second algorithm a simpler representation is possible: the distribution  of matchings $D_i$ within each block (but
                not for the whole graph) is uniform over some small multiset of matchings.}
$D_i$
for each block $B_i$, using only edges that do not ``cross to  neighbors of lower blocks'' (i.e., no edge is
        allowed from $u  \in B_i$ to $v \in \Gamma( B_j )$ where $j < i$). 
        Finally,
combine the distributions
into a single maxmin-fair distribution $D$, and draw a matching from it.
Both our algorithms follow this general outline; they differ in how to perform steps 1 and 2. 
(We will discuss later (Section~\ref{sec:transparency}) an alternative implementation of step 3 which leads to distributions over a smaller number of matchings.)

Next we give the details of our first algorithm (Algorithm~\ref{alg:alg1}).

\begin{algorithm}
\caption{Outline of our polynomial-time algorithms for maxmin-fair matching}
\label{alg:outline}

\DontPrintSemicolon
\SetKwInput{Input}{Input}
\SetKwInput{Output}{Output}
\SetKwProg{Fn}{Function}{}{}

  \SetKwFunction{FMain}{MaxminFairMatching}
  \SetKwFunction{FD}{Test}
  \SetKwFunction{FBlock}{FairDecomposition}
  \SetKwFunction{FProb}{SingleBlockDistribution}
  \SetKwFunction{FCombine}{PickMatching}

%  \Fn{\FPickMatching{$G$, $L$, $R$}}{ }

 \Input{Bipartite graph $G=(V,E)$ with bipartition $V = L\ \dot{\cup}\ R$ and with $|\rho(L)| = R$}
 \Output{A maximum matching in $G$ drawn from a maxmin-fair distribution}\BlankLine

  \Fn{\FMain{$G$, $L$, $R$}}{
      \tcc{Step 1: find a fair decomposition}
      $ B_1,\ldots, B_k$ = \FBlock{$G$, $L$, $R$}
      \BlankLine

      \tcc{Step 2: obtain a fair decomposition for each block}
        \For{$i=1,\ldots, k$}{
            $R_i = \Gamma(B_i) \setminus \bigcup_{j < i} \Gamma(B_j) $\;
            $G_i = $ subgraph of $G$ induced by $B_i$ and $R_i$\;
            ${\calD}_i = $ \FProb{$G_i$, $L_i$, $R_i$}\;
        }
        \BlankLine

     \tcc{Step 3: combine the distributions and pick a matching}
     \For{$i=1,\ldots, k$}{
         $M_i = $ draw a matching from ${\calD}_i$
     }
        \KwRet{$\bigcup_{i=1}^k M_i$}\;
     
  }
\BlankLine

\end{algorithm}

\spara{\emph{Step 1: Finding a fair decomposition.}}
We will find the blocks  in a bottom-up manner.
To find the first block, observe the following:

\begin{lemma}\label{lem:gamma_union}
The maximal set minimizing $|\Gamma(X)|/|X|$ is 
the union of all non-empty sets $X$ minimizing $|\Gamma(X)|/|X|$.
\end{lemma}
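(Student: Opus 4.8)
The plan is to exploit the submodularity of the neighbourhood-size function $X \mapsto |\Gamma(X)|$ on subsets of $L$. Write $\lambda = \min\{|\Gamma(X)|/|X| \mid \emptyset \neq X \subseteq L\}$ for the optimal ratio, so that $|\Gamma(X)| \ge \lambda |X|$ holds for every $X \subseteq L$ (including $X = \emptyset$, under the convention $\Gamma(\emptyset) = \emptyset$), with equality precisely when $X$ is a minimizer. The goal is to show that the collection of minimizers is closed under union; once this is established, the union of all minimizers is itself a minimizer, and since it contains every minimizer by construction, it is the unique maximal one, which is exactly what the lemma asserts.

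First I would establish the submodular inequality
\[
|\Gamma(A \cup B)| + |\Gamma(A \cap B)| \le |\Gamma(A)| + |\Gamma(B)| \quad \text{for all } A, B \subseteq L.
\]
This follows from the two set-theoretic facts $\Gamma(A \cup B) = \Gamma(A) \cup \Gamma(B)$ and $\Gamma(A \cap B) \subseteq \Gamma(A) \cap \Gamma(B)$, combined with the modular identity $|P \cup Q| + |P \cap Q| = |P| + |Q|$ applied to $P = \Gamma(A)$ and $Q = \Gamma(B)$.

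Next, let $A$ and $B$ be any two minimizers, so $|\Gamma(A)| = \lambda |A|$ and $|\Gamma(B)| = \lambda |B|$. Combining the submodular inequality with the global lower bound $|\Gamma(A \cap B)| \ge \lambda|A \cap B|$ yields
\[
|\Gamma(A \cup B)| \le |\Gamma(A)| + |\Gamma(B)| - |\Gamma(A \cap B)| \le \lambda|A| + \lambda|B| - \lambda|A \cap B| = \lambda|A \cup B|,
\]
where the final equality is inclusion–exclusion for cardinalities. Since the reverse inequality $|\Gamma(A \cup B)| \ge \lambda |A \cup B|$ holds by definition of $\lambda$, equality holds throughout and $A \cup B$ is again a minimizer. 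I would emphasize that this argument needs no case split on whether $A \cap B$ is empty, precisely because the bound $|\Gamma(X)| \ge \lambda|X|$ remains valid for $X = \emptyset$.

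Finally, since $L$ is finite there are only finitely many minimizers, and iterating the closure-under-union property just proved shows that their union $X^\ast$ is a minimizer. Every minimizer is contained in $X^\ast$, so $X^\ast$ is the maximal set attaining the minimum ratio, completing the proof. I do not expect a serious obstacle: the only genuine content is the submodularity inequality, which is elementary for the neighbourhood function, and the remainder is a routine uncrossing argument of the kind standard in matching and network-flow theory.
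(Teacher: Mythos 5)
Your proof is correct and follows essentially the same route as the paper's: both establish that the family of minimizers is closed under union via submodularity of $X \mapsto |\Gamma(X)|$ together with the global bound $|\Gamma(X \cap Y)| \ge \lambda|X \cap Y|$, the only difference being that the paper phrases the final step as a proof by contradiction while you chain the inequalities directly (and you additionally spell out the proof of submodularity and the empty-intersection and finiteness details, which the paper leaves implicit).
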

\begin{proof}
It suffices to show that
if $X, Y$ are non-empty sets
minimizing $|\Gamma(X)|/|X|$, then $X \cup Y$ also
minimizes $|\Gamma(X)|/|X|$.
Indeed,
suppose $\frac{ |\Gamma(Y)| }{ |Y| } = \frac{ |\Gamma(X)| }{ |X| } \triangleq \lambda$.
By the submodularity of the cardinality of the neighborhood function of a graph,
   $$ |\Gamma(X \cup  Y)| + |\Gamma(X \cap Y)| \le |\Gamma(X)| + |\Gamma(Y)| = \lambda (|X| + |Y|).$$
   Notice that $|\Gamma(X \cup  Y)| \ge \lambda |X \cup Y|$ and $ |\Gamma(X \cap Y)| \ge \lambda |X \cap Y| $ by definition. If any of these two inequalities were strict we would have
   the contradiction
$$ |\Gamma(X \cup  Y)| + |\Gamma(X \cap Y)|  > \lambda (|X \cup Y| + |X \cap Y|) = \lambda (|X| + |Y|).$$
Hence the inequalities are not strict, and $|\Gamma(X \cup Y)| = \lambda |X \cup  Y|$.
\end{proof}

Along with Theorem~\ref{thm:decomp}, this observation suggests the following method, described in the \textsf{FindBlocks} method of Algorithm~\ref{alg:alg1}.
By solving %\textcolor[rgb]{0.00,0.00,1.00}
{the}
LP %\textcolor[rgb]{0.00,0.00,1.00}
{in}~\eqref{lp:mu} and using
%the rounding method of
Lemma~\ref{lem:lp_bipartite},
   we obtain
a set $X$ minimizing $|\Gamma(S)| / |S|$.
Remove $X$ from the graph $G$ and repeat (if $G$ is non-empty); let $Y$ be the new
set obtained.
If $\Gamma(Y) / |Y| = \Gamma(X) / |X|$, then replace $X$ with $X' = X \cup Y$ and  repeat the process of finding a minimizer $Y$ via
LP~\eqref{lp:mu}; this strictly increases the size of $X$.
Eventually we will  obtain a $Y$ satisfying
$|\Gamma(Y)| / |Y| > \Gamma(X) /|X|$, at which point we know that $X$ is the maximal set minimizing $\Gamma(S) /
    |S|$, i.e., the first non-trivial block $B_1$ is $X$. Now remove $B_1$ and $\Gamma(B_1)$ from $G$ and repeat (if applicable) to obtain $B_2, \ldots, B_k$.

\begin{algorithm}
\caption{First polynomial-time algorithm for maxmin-fair matching}
\label{alg:alg1}

\DontPrintSemicolon
\SetKwInput{Input}{Input}
\SetKwInput{Output}{Output}
\SetKwProg{Fn}{Function}{}{}

 \Input{Bipartite graph $G=(V,E)$ with bipartition $V = L\ \dot{\cup}\ R$}\BlankLine

  \SetKwFunction{FMain}{Main}
  \SetKwFunction{FD}{Test}
  \SetKwFunction{FS}{SmallestRatioSet}
  \SetKwFunction{FBlock}{FairDecomposition}
  \SetKwFunction{FProb}{SingleBlockDistribution}

  \Fn{\FS{$G$, $L$, $R$}}{
        Solve LP~\eqref{lp:mu} for the subgraph of $G$ induced by $L$ and $R$  \;
        $S = \{ v \in L \mid y_v > 0 \}$ \;
        $\lambda = \sum_{v \in R} y_v$ \;
        \KwRet{$S$, $\lambda$}\;
  } \BlankLine

  \Fn{\FBlock{$G$, $L$, $R$}}{
        $k = 0$ \;
        $L', R' = L, R$ \;

        \While{ $L' \neq \emptyset$ }{
            $X, \lambda' =$ \FS{$G$, $L'$, $R'$} \;
            $L' = L' \setminus X$ \;

            \If{ $k = 0$ {\normalfont or} $\lambda' \neq \lambda$ }
            {
\tcc{Create new block, possibly incomplete}
                $k = k + 1$ \;
                $B_k, \lambda = X, \lambda'$\;
                $R' = R' \setminus \Gamma_G(B_k)$ \tcc{Remove the neighbors of the previous block}
            } \Else{
\tcc{Merge with an existing block}
                $B_k = B_k \cup X$\;
            }
        }
        \KwRet{$B_1, \ldots, B_k$}\;
  } \BlankLine

  \Fn{\FProb{$G, L, R$}}{
      $F = $ a set of $|L| - |R|$ new right vertices\;%  \tcc{fictitious vertices}
      $N = \{ (i, j) \mid i \in L, j \in F \}$ \;
      Add the new vertices $F$ and new edges $N$ to $G$ to form $G'$

      \tcc{ LP to find assignment probabilities }
      Find non-negative values $x_{uv}$ such that $\sum_{j \in \Gamma_{G'}(i)} x_{ij} = 1$  for all $j \in L \cup R$. \;
      \tcc{ Birkhoff-von Neumann decomposition }% of the doubly stochastic matrix with entries $x_{uv}$ } \;
      Find a distribution $D$ of matchings using edge $(u, v)$ with probability $x_{uv}$. \;
      Remove from each matching in $D$ the incident to $F$ \;
      \KwRet{$D$}\;
  }\BlankLine
\end{algorithm}

\mycomment{
\begin{algorithm}[t]
\caption{\label{spectralheuristic} SDP-Based Spectral Algorithm for Problem~\ref{prob:original}}
%\caption{\label{spectralheuristic} \tt{Spectral-Heuristic}$(G,H,\gamma)$}
\end{algorithm}

\begin{algorithm}[t]
\SetKwInput{Input}{Input}
\SetKwInput{Output}{Output}
 \Input{\ $G=(V,E_G,w_G)$, $H=(V,E_H,w_H)$, and $\gamma>0$ } 
 %\KwData{$G,H,\gamma>0$} 
     \Output{\ Either {\sc NO SOLUTION FOUND} or $S \subseteq V$
such that $\phi_H(\bar{S}\textsl{•})> \gamma$
     %such that $d_G(S)$ is maximized among all $\bar{S}\subseteq V$ such that $\phi_H(\bar{S}\textsl{•})> \gamma$ 
 }
 %\KwResult{Either {\sc infeasibility} or $S \subseteq V$ such that $\rho_G(S)$ is maximized among all subsets of nodes $\bar{S}$ such that $\phi_H(\bar{S}\textsl{•})\geq \gamma$  }
 $Y^* \leftarrow $ An optimal solution to SDP~(\ref{eq:optsSDP}) for $G,H,\gamma$\;
 $u = (u_1,\ldots,u_n) \leftarrow $ top-eigenvector of $Y^* \succeq 0$\;  
\tcc{{\sc perm}  $\in S_n$ is a permutation}
{\sc perm} $\leftarrow $ Sorted entries of $u$ in non-decreasing order\; 
$S \leftarrow \emptyset$\; 
$\phi_\text{vals}\leftarrow []$, $\lambda2_\text{vals} \leftarrow [] $\; 
\For{$i\gets 1$ \KwTo $n$}{
$S \leftarrow S.\text{append}(\textsc{perm}[i])$\;
$d_\text{vals}.\text{append}( d_G(S))$\;
$\lambda2_\text{vals}.\text{append}( \lambda_2(H[S]))$\; 
}%endfor
$d^*\leftarrow -1$, $S^*\leftarrow \emptyset $\;
\For{$i\gets 1$ \KwTo $n$}{
\If{ $\lambda2_\mathrm{vals}[i] > \gamma$ and $d_\mathrm{vals}[i]>d^*$ }{
$d^* \leftarrow d_\text{vals}[i]$, $S^*\leftarrow S[1:i]$\;
}%endif
}%endfor
\If{$d^*=-1$}{
\Return{\sc NO SOLUTION FOUND} 
}
\Return{$d^*$, $S^*$}
\caption{\label{spectralheuristic} SDP-Based Spectral Algorithm for Problem~\ref{prob:original}}
%\caption{\label{spectralheuristic} \tt{Spectral-Heuristic}$(G,H,\gamma)$}
\end{algorithm}
}

\spara{\emph{Step 2: Obtaining a fair distribution for each block.}}
The idea of this step is first to calculate the assignment probabilities $x_{uv}$ for all $u \in L, v \in R$, i.e., the probability that $u$ is matched to $v$ in some
fixed maxmin-fair distribution $F$. As of yet these probabilities are unknown (and, unlike satisfaction probabilities, they need not be the same
        for all maxmin-fair distributions). However, we do know some conditions that they must satisfy because we know 
%(from Corollary~\ref{thm:lambda1}) 
(from Theorem~\ref{thm:decomp}) 
    the satisfaction probabilities of the left vertices  in $F$, and all the
right vertices need to  be  matched with probability 1 under our assumption that $\rho(L) = |R|$.
These conditions may be expressed as linear constraints in $x_{uv}$, so we will find suitable values for $x_{uv}$ via a linear program. Finally we can turn these values into an actual distribution
of matchings via the Birkhoff-von Neumann decomposition. Details follow.

%   The empty matching is the only solution for $B_0$.  If $i > 0$,
   Consider the graph
   $H_i = G/\bigcup_{j\le i} B_j$ obtained by removing all lower blocks and their neighbors.
   %$H_i = G/\big(\bigcup_{j\le i} (B_j \cup \Gamma(B_j) \big)$.
To simplify notation, rename $L \cap V(H_i)$ and $R \cap V(H_i)$ to $L$ and $R$.
%(so that the fair decomposition of $H_i$ has a single block).
We have $|R| \le |L|$ and $\lambda = |R| / |L| \le 1$.
First we calculate the (as of yet unknown) probabilities $x_{ij}$
    ($i \in L, j \in R$)
that each edge $(i, j)$ is {saturated} (i.e., $i$ is matched to $j$)
    in some fixed maxmin-fair distribution.
Clearly $\sum_j x_{ij} = \lambda$ for each $i$ and $\sum_i x_{ij} = 1$ for each $j$. Let us
add a set $Z$ of $|L| - |R|$ fictitious vertices to $R$ and extend the domain of definition of $x_{ij}$
so as
to satisfy
$x_{ij} = 1/|L|$ for each $i \in L, j \in Z$. We obtain a bipartite graph $G'$ with $|L|$ vertices on
each side; let $\Gamma'$ denote its neighborhood function.
Then
$         \sum_{v \in \Gamma'(u)} x_{uv} = 1 \; \forall u \in L,\;
         \sum_{u \in \Gamma'(v)} x_{uv} = 1  \;        \forall v \in R\cup Z,\; \text{and }
         x_{uv} \ge 0\;  \forall u\in L, v \in R\cup Z.$
         \mycomment{
                        The following linear program is then feasible:
                        {\small{
                                \begin{equation}\label{lp:mu_dual}
                                \begin{array}{rrclcl}
                                %\displaystyle \min & \displaystyle \sum_{u \in R} y_v \\
                                %        \\
                                %\textrm{s.t.}
                                & \displaystyle \sum_{v \in \Gamma'(u)} x_{uv} &=& 1 & \forall u \in L \\
                                & \displaystyle \sum_{u \in \Gamma(v)} x_{uv} &=& 1          & \forall v \in R\cup Z \\
                        %        & \displaystyle \sum_{u \in L} x_{uv} &=& 1          & \forall v \in Z\\
                                & x_{uv} &\ge& 0 & \forall u\in L, v \in R\cup Z \\
                                \end{array}
                                \end{equation}
                                }}
                        By solving~\eqref{lp:mu_dual} we can find a feasible solution $\{x_{uv}\}$.
         }
We can find a solution $x_{uv}$ to these inequalities by solving a linear program.

By the following consequence of Birkhoff-von Neumann theorem on doubly stochastic matrices~\cite{birkhoff}
% (Lemma~\ref{lem:birkhoff})
the quantities $x_{uv}$ thus obtained represent
    the edge saturation probabilities of an actual distribution of matchings in~$G'$:%
    \begin{restatable}{lemma}{lembirkhoff}
                \label{lem:birkhoff}
                Let $\{x_{uv}\}_{(u, v) \in E}$ be non-negative numbers s.t. $\sum_{v \in R}
                x_{uv} \le 1
\;                \forall u \in L$ and $\sum_{u \in L} x_{uv} \le
                1\; \forall v \in R$. Then a distribution over $|E| + 1$ 
                matchings such that $\Pr_{M \in \calM} [ (u, v) \in M ]
                = x_{uv}$  exists and may be found in polynomial time.
                \end{restatable}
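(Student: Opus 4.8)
The plan is to recognize the vector $x = \{x_{uv}\}_{(u,v)\in E}$ as a point of the \emph{fractional matching polytope} of the bipartite graph $G$,
\[ \calP = \Bigl\{ z \in \reals_{\ge 0}^{E} \;\Bigm|\; \textstyle\sum_{v\in R} z_{uv} \le 1 \ \forall u \in L,\ \sum_{u\in L} z_{uv}\le 1 \ \forall v \in R \Bigr\}, \]
and to express it as a convex combination of indicator vectors of matchings. The hypotheses on $x$ say precisely that $x \in \calP$. Because $G$ is bipartite, the constraint matrix defining $\calP$ is (essentially) the incidence matrix of a bipartite graph, hence totally unimodular; therefore $\calP$ is an integral polytope, and its vertices are exactly the $0/1$ vectors $\mathbbm{1}_{M}$ where $M$ ranges over the matchings of $G$ (the empty matching giving the origin). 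Consequently $x$ is a convex combination $x = \sum_{k} \lambda_k\, \mathbbm{1}_{M_k}$ with $\lambda_k \ge 0$ and $\sum_k \lambda_k = 1$. Drawing $M = M_k$ with probability $\lambda_k$ then yields a distribution $\calM$ with $\Pr_{M\sim\calM}[(u,v)\in M] = \sum_k \lambda_k (\mathbbm{1}_{M_k})_{uv} = x_{uv}$, as required.

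For the bound on the number of matchings, I would observe that $\calP \subseteq \reals^{E}$ has dimension at most $|E|$. By Carath\'eodory's theorem, any point of $\calP$ --- in particular $x$ --- is a convex combination of at most $|E|+1$ vertices, i.e., of at most $|E|+1$ matchings.

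It remains to make the decomposition constructive in polynomial time. The cleanest route is to first reduce to a genuinely doubly stochastic matrix: pad $G$ with dummy vertices and edges to absorb the slacks $1-\sum_v x_{uv}$ and $1-\sum_u x_{uv}$, obtaining a square nonnegative matrix $\tilde X$ all of whose row and column sums equal $1$. One then peels off matchings iteratively in the manner of Birkhoff--von Neumann~\cite{birkhoff}: at each step the support of the current doubly stochastic matrix satisfies Hall's condition, since for any set $S$ of rows $|S| = \sum_{i \in S}\sum_j \tilde X_{ij} = \sum_{j\in\Gamma(S)}\sum_{i\in S}\tilde X_{ij} \le |\Gamma(S)|$, so by Theorem~\ref{thm:hall} its support contains a perfect matching $M$, found by a single bipartite-matching computation. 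Subtracting $c\cdot\mathbbm{1}_M$ with $c = \min_{(u,v)\in M}\tilde X_{uv}$ keeps all row and column sums equal and zeroes out at least one entry, so the process terminates after at most a linear number of iterations, and the extracted weights sum to $1$ because each $M$ is a perfect matching. Finally, I would delete the dummy edges from every extracted matching to recover matchings of $G$ with the same weights, and, should more than $|E|+1$ of them remain, apply the algorithmic form of Carath\'eodory's theorem (solving a small auxiliary linear system to zero out one weight at a time) to prune the count down to $|E|+1$; every step here is polynomial.

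The main obstacle I anticipate is that $x$ is only \emph{sub}-doubly-stochastic (row and column sums at most $1$ rather than exactly $1$), so the classical Birkhoff argument does not apply verbatim, and a naive peeling need not produce weights summing to $1$. The padding step above is exactly what turns the sub-stochastic instance into an honest doubly stochastic one, after which the standard perfect-matching peeling goes through cleanly and the ``missing'' probability is automatically carried by those matchings that become partial once the dummy edges are stripped.
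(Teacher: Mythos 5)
Your proof is correct and follows essentially the same route the paper relies on: the paper gives no standalone proof of this lemma, treating it as a consequence of the Birkhoff--von Neumann theorem~\cite{birkhoff} applied after exactly the kind of padding to a doubly stochastic matrix that you describe (the main text likewise adds fictitious vertices to square up the instance before decomposing). The only ingredient you make explicit that the paper leaves implicit is the algorithmic Carath\'eodory pruning needed to bring the support down to $|E|+1$ matchings, and your treatment of that step, as well as of the sub-stochastic-to-doubly-stochastic reduction and the Hall-based peeling, is sound.
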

We thus obtain
%Using Lemma~\ref{lem:birkhoff} we obtain
a distribution $D$ of matchings in $G'$ % the augmented graph $G'$
in which each edge $(u, v)$ is used with probability~$x_{uv}$.
If we pick each matching with its probability in $D$
and remove from it
the edges incident
to the ''fictitious`` elements in~$Z$, we obtain a distribution of matchings where each element $i$ of
$L$ is matched with probability $1-\sum_{j \in Z} x_{ij} = 1-(|Z|/|L|) = 1-(|L|-|R|)/|L| = \lambda$, as desired.

\spara{\emph{Step 3: Combining the distributions.}}
%\note{We need a bit of a intro to the 3rd step. Something like ``The last set requires to find bla bla bla...''}
The last step requires combining the distributions $D_1, \ldots, D_k$, each defined for a block
$B_i$,
%, where each $D_i$ is maxmin-fair for block $B_i$,
    into a single maxmin-fair
distribution for $G$. The simplest way is to
 draw $(M_1, \ldots, M_k)$ from the product distribution $D_1 \times D_2 \ldots \times D_k$ and return
$M_1 \cup M_2 \ldots \cup M_k$. (This is an easily samplable maxmin-fair
        distribution with
        potentially large support.)

Putting all together, we obtain the following.
\begin{theorem} Algorithm~\ref{alg:alg1} is a polynomial-time algorithm for the one-sided maxmin-fair matching problem.  \end{theorem}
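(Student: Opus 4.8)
The plan is to prove two things: that Algorithm~\ref{alg:alg1} always terminates in polynomial time, and that the distribution it outputs is maxmin-fair. I would organize the argument around the three steps of the outline in Algorithm~\ref{alg:outline}, establishing correctness of each in turn and then bounding the overall running time. Throughout, the invariants proved for the fair decomposition in Theorem~\ref{thm:decomp} provide the target specification that each step must meet.

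For Step~1, I would argue by induction on $i$ that the sets $B_1,\ldots,B_k$ returned by \textsf{FairDecomposition} coincide with the blocks of Theorem~\ref{thm:decomp}. The inductive invariant is that, once the neighbours of $S_{i-1}=B_1\cup\cdots\cup B_{i-1}$ have been deleted, solving LP~\eqref{lp:mu} and applying Lemma~\ref{lem:lp_bipartite} returns a set of left vertices of minimum ratio in the residual graph, i.e.\ a minimizer of $\frac{|\Gamma(X\cup S_{i-1})|-|\Gamma(S_{i-1})|}{|X|}$. The essential subtlety is that a single LP solution need not be \emph{maximal}; here I would invoke Lemma~\ref{lem:gamma_union} to show that repeatedly merging successive equal-ratio minimizers yields exactly the maximal such set, which by Theorem~\ref{thm:decomp}(a) is $B_i$. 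A strictly larger ratio then certifies that the current block is complete and that a new block begins.

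For Step~2, I would verify that \textsf{SingleBlockDistribution} produces, for each block, a distribution of matchings in which every $u\in B_i$ is covered with probability exactly $\lambda_i=|R_i|/|B_i|$, where $R_i=\Gamma(B_i)\setminus\Gamma(S_{i-1})$. Feasibility of the assignment follows from Theorem~\ref{thm:alphas} (equivalently Hall's condition) applied to the augmented graph: block minimality gives $|\Gamma_{G_i}(S)|\ge \lambda_i|S|$ for every $S\subseteq B_i$, and after inserting the $(1-\lambda_i)|B_i|$ fictitious right vertices the augmented graph satisfies Hall's condition, so a fractional perfect matching exists. Lemma~\ref{lem:birkhoff} then converts the values $x_{uv}$ into an explicit distribution over at most $|E|+1$ matchings, and deleting the fictitious edges leaves each $u\in B_i$ matched with probability $1-|Z|/|B_i|=\lambda_i$, matching Theorem~\ref{thm:decomp}(c). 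For Step~3, since the sets $R_i$ partition $\Gamma(L)$ and the $B_i$ are disjoint, the union of the independently drawn matchings is a single matching of $G$, and every $u\in B_i$ is covered with probability $\lambda_i$; these are exactly the (unique, by Lemma~\ref{lem:unique}) maxmin-fair satisfaction probabilities, so the sorted satisfaction vector attains the lexicographic maximum and the distribution is maxmin-fair by Theorem~\ref{thm:lexi}. The polynomial running time is then immediate: each iteration of \textsf{FairDecomposition} removes at least one vertex from $L$, bounding the number of LP solves by $O(|L|)$; each of the $k\le|L|$ blocks costs one LP solve plus one Birkhoff decomposition; and sampling and taking unions are cheap, so every subroutine, and hence the whole algorithm, runs in polynomial time.

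I expect Step~1 to be the main obstacle. The delicate part is proving that the iterative LP-and-merge procedure recovers the \emph{maximal} minimizers in the correct residual graphs at every level, so that the ratios computed by the LP genuinely correspond to the quantities $\frac{|\Gamma(X\cup S_{i-1})|-|\Gamma(S_{i-1})|}{|X|}$ of Theorem~\ref{thm:decomp}(a); this requires keeping careful track of which right vertices have been deleted before each minimization, and confirming via Lemma~\ref{lem:gamma_union} that equal-ratio merges never overshoot or stop short of a true block boundary.
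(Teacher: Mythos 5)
Your overall plan coincides with the paper's own argument: the paper proves this theorem simply by assembling the correctness of the three steps (Lemma~\ref{lem:lp_bipartite} and Lemma~\ref{lem:gamma_union} for the decomposition, the assignment LP plus Lemma~\ref{lem:birkhoff} for each block, and the product construction together with Theorem~\ref{thm:decomp}(c), Lemma~\ref{lem:unique} and Theorem~\ref{thm:lexi} for fairness), and your runtime accounting is fine. However, your Step~2 contains a step that fails as written. From ``the augmented graph satisfies Hall's condition, so a fractional perfect matching exists'' you conclude that, after the Birkhoff decomposition and deletion of the fictitious edges, each $u \in B_i$ is matched with probability $1 - |Z|/|B_i| = \lambda_i$. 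That conclusion requires every user to send total mass exactly $|Z|/|B_i| = 1-\lambda_i$ to the fictitious vertices, which an arbitrary fractional perfect matching of the augmented graph need not do: it can match some users entirely inside $R_i$ and others mostly to $Z$, destroying the equal satisfaction probabilities. The paper avoids this by pinning the fictitious values, $x_{uz} = 1/|B_i|$ for all $u \in B_i$ and $z \in Z$, and only then solving for the remaining $x_{uv}$; feasibility of this constrained system is what must be argued, and it follows from Theorem~\ref{thm:alphas} (equivalently Corollary~\ref{thm:lambda1}) applied to the block graph with $\alpha_v = \lambda_i$ --- take the edge-saturation probabilities of the resulting distribution and split the leftover mass $1-\lambda_i$ equally over $Z$ --- not from Hall's condition on the augmented graph.

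A second, smaller soft spot is Step~1, which you yourself flag but do not resolve: Lemma~\ref{lem:gamma_union} alone does not certify that the merge loop stops exactly at the maximal minimizer, because after a block $X$ is (possibly prematurely) closed, the subsequent LPs are solved in a residual graph $G/X$, and one needs the extra observation that for any minimizer $W \supsetneq X$ in $G$ one has $|\Gamma_{G/X}(W\setminus X)| = |\Gamma(W)| - |\Gamma(X)| = \lambda |W \setminus X|$, so that a strictly larger ratio in the residual graph genuinely witnesses maximality (and conversely any ratio-$\lambda$ set found there can be merged back into a minimizer of $G$). Alternatively, note that exact recovery of the maximal blocks is not even necessary: any refinement of the true blocks in which each piece carries the correct ratio yields the correct per-user probabilities, and matching the (unique, by Lemma~\ref{lem:unique}) maxmin-fair probabilities is all that Theorem~\ref{thm:lexi} requires for the final distribution to be maxmin-fair.
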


%\note{In the beginning of the Section it was  $D_0, \ldots, D_k$, now it is $D_1, \ldots, D_k$}

%the probability of each edge $(u, v)$ being picked in an actual distribution of matchings.
%Birkhoff-von Neumann theorem~\cite{birkhoff}:
%following consequence of
% on doubly stochastic matrices:

%\note{This sentence is blue is a bit troublesome, not very clear. Maybe you should attempt a rewriting?  Also, I understand that \emph{``edge satisfaction probabilities''} means the probability of an edge to be picked in a matching, given a distribution. However, why call them ``satisfaction''? We defined that for vertices, never for edges. Maybe we should just call it the ``probability of belonging to a matching'', the ``probability of being picked'', or something like that...  Finally: I saw you commented out the citation, but I brought it back. We need the citation! Don't we?  }

%Birkhoff-von Neumann theorem~\cite{birkhoff} on doubly stochastic matrices:
%We can find a solution $x_{uv}$ to these constraints by solving the associated linear program, but are $\{x_{uv}\}$ the edge satisfaction probabilities of an actual distribution of matchings? An affirmative answer follows from the following consequence of the Birkhoff-von Neumann theorem~\cite{birkhoff}: % on doubly stochastic matrices:

\section{A more efficient algorithm}
%\section{Efficient algorithm for one-sided fair bipartite matching}
\label{sec:proofs_improved_algo}
    The algorithm from Section~\ref{sec:problem}
        requires
    solving   polynomially many LP
    subproblems. It was presented to showcase the main steps
    required, to introduce the fair decompositions, and to establish the existence of a polynomial-time algorithm.
In this section
we analyze a more efficient algorithm. It 
also follows each of the three steps outlined in Algorithm~\ref{alg:outline}, but differs from Algorithm~\ref{alg:alg1}
in two key respects:
\begin{itemize}
    \item For step 1, it finds fairly separated sets in arbitrary order, rather than bottom-up.
          These sets can be found by maximum flow computations in a certain graph, and a single flow computation can be used to find many new
          blocks in the decomposition simultaneously. %It turns out that a logarithmic number of flow computations are required to find the full decomposition.
   \item
   For step 2, it uses a fast
   edge-coloring algorithm on a carefully constructed regular bipartite graph, allowing us to bypass the (comparatively slow)
Birkhoff-von Neumann decomposition (for which the best known algorithm from~\cite{perfect_nlogn} runs in $\omega(|V| |E|)$ time).% to turn assignment probabilities into distributions of matchings.
\end{itemize}

We present pseudocode for the improved algorithm (Algorithm~\ref{alg:alg2}) at the end of this section.
We establish the following:
%, establishing the following:
\begin{restatable}{theorem}{thmfairbpm}
\label{thm:fair_bpm}
%The maxmin-fair one-sided bipartite matching problem can be solved in $O((|V|^2 + |E| |V|^{2/3})\cdot (\log |V|)^2)$ expected time.
Algorithm~\ref{alg:alg2} solves the maxmin-fair one-sided bipartite matching problem in $O((|V|^2 + |E| |V|^{2/3})\cdot (\log |V|)^2)$ expected time.
%and $O(|V||E| \log |V|)$ worst-case time. %time_bound
\end{restatable}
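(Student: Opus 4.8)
The plan is to prove both the correctness and the running-time bound for Algorithm~\ref{alg:alg2}, following the three-step template of Algorithm~\ref{alg:outline}. The backbone of correctness is Theorem~\ref{thm:decomp}: it suffices to (i) identify the blocks $B_1,\ldots,B_k$ together with their satisfaction values $\lambda_i=\frac{|\Gamma(B_i)\setminus\Gamma(S_{i-1})|}{|B_i|}$; (ii) produce, for each block, a distribution of matchings in $H_i=G/\bigcup_{j\le i}B_j$ under which every $u\in B_i$ is matched with probability $\lambda_i$ and every $w\in\Gamma(B_i)\setminus\Gamma(S_{i-1})$ with probability $1$; and (iii) output the union of independent samples drawn from these per-block distributions. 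Step (iii) yields a maxmin-fair distribution by the fairly-isolated-set structure underlying Theorem~\ref{thm:decomp}, so the only genuinely new content is the efficient realization of (i) and (ii) claimed in the two bullet points preceding the statement.

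For Step~1, I would recast the search for a minimum-ratio set as a minimum-cut computation. By Corollary~\ref{thm:lambda1} and Lemma~\ref{lem:gamma_union}, the first block is the maximal $X$ minimizing $|\Gamma(X)|/|X|$; equivalently, for a fixed threshold $\lambda$, the maximal $X$ minimizing $|\Gamma(X)|-\lambda|X|$. This is a maximum-closure (project-selection) instance: put a source arc $s\to u$ of capacity $\lambda$ for each $u\in L$, a sink arc $v\to t$ of capacity $1$ for each $v\in R$, and an infinite-capacity arc for every edge of $G$; the source side of a minimum cut is precisely the desired maximal minimizer. Theorem~\ref{thm:decomp} tells us the fairly isolated sets form a chain $S_1\subsetneq\cdots\subsetneq S_k=L$ that is monotone in $\lambda$, so I would recover the whole chain by a divide-and-conquer/parametric search over $\lambda\in[0,1]$: at the midpoint threshold a single min-cut splits the current vertex set into those of satisfaction below and above the threshold, and we recurse on each side. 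Since all candidate ratios are quotients of integers at most $|V|$, distinct values are separated by at least $1/|V|^2$, so $O(\log|V|)$ thresholds pin down every block boundary, and the monotonicity of the optimal cuts lets a single flow certify several boundaries at once, keeping the number of max-flow calls at $O(\log|V|)$.

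For Step~2, I would bypass the Birkhoff--von Neumann route of Lemma~\ref{lem:birkhoff} and reduce instead to edge coloring. Within $H_i$, add $|L|-|R|$ fictitious right vertices $Z$ joined to all of $L$ (as in Algorithm~\ref{alg:alg1}) so that both sides have size $|L|$; the target marginals $x_{uv}$ are rational with a common denominator $d$, and replacing each edge $(u,v)$ by $d\,x_{uv}$ parallel copies yields a $d$-regular bipartite multigraph. By König's theorem it decomposes into $d$ perfect matchings, and because parallel copies of $(u,v)$ must receive distinct colors, exactly $d\,x_{uv}$ of the color classes match $u$ to $v$; hence a uniformly chosen class realizes $\Pr[(u,v)]=x_{uv}$, and deleting the $Z$-edges leaves each $u\in B_i$ matched with probability $\lambda_i$ and each real right vertex with probability $1$, as required. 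I would sample such a class with the fast regular-bipartite matching routine of Goel et al.~\cite{perfect_nlogn}, applied through the standard Euler-split recursion so that the work is near-linear in the number of edges; Step~3 is then simply the union of the independent per-block samples.

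For the running time, I would sum the two steps. Each minimum cut is a maximum-flow computation whose capacities become polynomially bounded integers after clearing the denominator of $\lambda$, solvable in $O(|E|\,|V|^{2/3}\log|V|)$ time by the Goldberg--Rao algorithm; with $O(\log|V|)$ such computations, Step~1 costs $O(|E|\,|V|^{2/3}(\log|V|)^2)$. In Step~2 the fictitious complete-bipartite gadgets contribute $\sum_i|B_i|\bigl(|B_i|-|R_i|\bigr)\le\bigl(\sum_i|B_i|\bigr)^2=O(|V|^2)$ edges in total, and the Euler-split coloring via~\cite{perfect_nlogn} adds only logarithmic factors, giving an $O(|V|^2(\log|V|)^2)$ contribution; adding the two yields the claimed bound. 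I expect the main obstacle to be the rigorous part of Step~1: proving that $O(\log|V|)$ min-cut computations genuinely recover the \emph{entire} fair decomposition. This needs the nesting and monotonicity of the optimal minimum-ratio sets as $\lambda$ grows, so that the divide-and-conquer over thresholds is valid and block boundaries coincide with cut breakpoints, and it must be reconciled with the fact that the blocks are now discovered in arbitrary order rather than the bottom-up order of Theorem~\ref{thm:decomp}; a secondary delicate point is bounding the common denominator $d$ in Step~2 so that the regular multigraph uses only polynomially many colors and the coloring stays within budget.
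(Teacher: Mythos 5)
Your proposal is correct in outline and takes essentially the same approach as the paper's proof: the same $G(\lambda)$ min-cut construction (source arcs of capacity $\lambda$, unit-capacity sink arcs) for splitting off fairly isolated sets, the same $O(\log|V|)$-round binary search justified by the $1/|V|^2$ separation of candidate ratios with Goldberg--Rao flow computations, and the same replacement of the Birkhoff--von Neumann decomposition by edge-coloring a scaled regular bipartite multigraph (K\H{o}nig's theorem plus the algorithm of Goel et al.~\cite{perfect_nlogn}), with the per-block distributions combined as a product at the end. The two obstacles you flag are exactly what the paper supplies: Lemma~\ref{lem:cut} proves that the minimal min-cut source side $X$ satisfies $\Pi(G\rst{X})<\lambda\le\pi(G/X)$, i.e.\ is fairly isolated (with Lemma~\ref{lem:isol_chain} giving the chain structure that validates the recursion and the parallel handling of parts), and the integral flow theorem bounds the common denominator of the marginals $x_{uv}$ by $|L|$, since all capacities in the flow network are multiples of $1/l$.
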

%In this Section we show Theorem~\ref{thm:fair_bpm}, which states the existence of  an exact algorithm for \emph{maxmin-fair bipartite matching} with running time %$O(|V| |E| \log |V|)$  $O((|V|^2 + |E| |V|^{2/3})\cdot (\log |V|)^2)$.

\subsection{Improved step 1: Finding a fair decomposition}
Suppose we wish to separate $L$ into vertices with
satisfaction probability $< \lambda$ and vertices with satisfaction probability $\ge \lambda$, for some parameter
$\lambda \in (0, 1)$. To perform this check, construct the graph $G(\lambda)$ by adding to $G$ a source vertex $s$
connected to every $u \in L$ with an edge of capacity $\lambda$, and a sink vertex $t$ connected to
every $v \in R$ with an edge of capacity 1; all other edges have infinite capacity.

\begin{lemma}\label{lem:cut}
Let $\kappa$ be the value of a minimum $s-t$ cut in  $G(\lambda)$.
%Then either $\pi(G) \ge \lambda$ or  there is a fairly-isolated subset $A_L\subsetneq L$ such that $\Pi(G\rst{S}) < \lambda$, according as $\kappa = \lambda |L|$ or $\kappa < \lambda |L|$.
Then exactly one of the following cases holds: (a)    $\kappa = \lambda |L|$ and $\pi(G) \ge
\lambda$; or (b) $\kappa < \lambda |L|$ and there is a fairly-isolated subset $X\subsetneq L$ such
that $\Pi(G\rst{X}) < \lambda$.
%\begin{enumerate}[(a)] \item $\kappa = \lambda |L|$ and $\pi(G) \ge \lambda$; \item $\kappa < \lambda |L|$ and $\exists$ a fairly-isolated subset $A_L\subsetneq L$ s.t.  $\Pi(G\rst{S}) < \lambda$.  \end{enumerate}
\; We can determine which case occurs, and obtain $X$ in case (b), with a min-cut computation on
$G(\lambda)$.
\end{lemma}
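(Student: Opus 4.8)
The plan is to reduce the statement to a clean combinatorial analysis of the finite $s$--$t$ cuts of $G(\lambda)$. First I would observe that, since every edge of $G$ (oriented $L\to R$) has infinite capacity, no finite cut can separate a vertex $u\in L$ from any of its neighbours; hence a finite cut is determined by its source-side left set $X = S\cap L$, which forces $\Gamma(X)\subseteq S\cap R$, and the cheapest such cut places \emph{exactly} $\Gamma(X)$ on the source side. Writing $f(X) = |\Gamma(X)| - \lambda|X|$, such a cut has value $\lambda(|L|-|X|)+|\Gamma(X)| = \lambda|L| + f(X)$, so
\[ \kappa = \lambda|L| + \min_{X\subseteq L} f(X) \le \lambda|L|, \]
the inequality holding because $f(\emptyset)=0$.

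The dichotomy and case (a) are then immediate. As $\min_X f(X)\le 0$ always, either the minimum is $0$, in which case $\kappa=\lambda|L|$ and $|\Gamma(X)|\ge \lambda|X|$ for every $X\subseteq L$, which by Corollary~\ref{thm:lambda1} is exactly $\pi(G)\ge\lambda$ (case (a)); or the minimum is strictly negative, giving $\kappa<\lambda|L|$ (case (b)). These are mutually exclusive and exhaustive, which establishes the ``exactly one'' claim.

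For case (b) the real work is to extract a fairly isolated set. I would take $X$ to be the \emph{minimal} minimiser of $f$ (unique, since minimisers of the submodular function $f$ are closed under intersection), realised algorithmically as the set of $L$-vertices reachable from $s$ in the residual network of a maximum flow; this is exactly the single min-cut computation promised. Being a minimiser with $\min f<0=f(\emptyset)$, $X$ is non-empty. Two properties remain. For $\pi(G/X)\ge\lambda$: if some non-empty $Y\subseteq L\setminus X$ had $|\Gamma(Y)\setminus\Gamma(X)|<\lambda|Y|$, then submodularity of $|\Gamma(\cdot)|$ (as in the proof of Lemma~\ref{lem:gamma_union}) would yield $f(X\cup Y)<f(X)$, contradicting minimality. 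For $\Pi(G\rst{X})<\lambda$: I would first verify the standing assumption $\rho_{G\rst{X}}(X)=|\Gamma(X)|$, for otherwise Hall's deficiency condition gives $W\subseteq\Gamma(X)$ whose $X$-neighbourhood $N$ satisfies $|N|<|W|$, whence $f(X\setminus N)<f(X)$ (again contradicting minimality). With the assumption in force, Corollary~\ref{thm:lambda2} applies to $G\rst{X}$ and gives $\Pi(G\rst{X}) = \max_{S\subsetneq X}\tfrac{|\Gamma(X)|-|\Gamma(S)|}{|X|-|S|}$; each term is $<\lambda$ precisely when $f(X)<f(S)$, which holds for every proper subset $S\subsetneq X$ because $X$ is the minimal minimiser (no proper subset attains the minimum). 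Hence $\Pi(G\rst{X})<\lambda\le\pi(G/X)$, so $X$ is fairly isolated. Finally, $X\subsetneq L$ whenever $\lambda\le\Pi(G)$: by Corollary~\ref{thm:lambda2} this inequality produces a proper $S\subsetneq L$ with $f(S)\le f(L)$, forcing the minimal minimiser strictly inside $L$ (if instead $\lambda>\Pi(G)$, every user has satisfaction $<\lambda$ and the separation is the trivial one).

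The main obstacle, I expect, is case (b): one must choose $X$ to be the minimal (not an arbitrary) min-cut source set, since only minimality delivers the \emph{strict} inequality $\Pi(G\rst{X})<\lambda$; and one must separately justify applying Corollary~\ref{thm:lambda2} to the subgraph $G\rst{X}$ by checking that its right side is fully matchable. Both points are handled by the same device — comparing $f(X)$ against $f$ of a perturbed set and invoking minimality — so the argument is uniform, but it is the step where care is needed.
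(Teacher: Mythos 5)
Your proof is correct and takes essentially the same route as the paper's: the same combinatorial reduction of finite cuts to sets $X\subseteq L$ (your $f(X)=|\Gamma(X)|-\lambda|X|$ is just the cut value shifted by $\lambda|L|$), the same choice of the inclusion-minimal min-cut source set obtained from the residual network, and the same perturb-and-compare arguments giving $\pi(G/X)\ge\lambda$ and, via Corollary~\ref{thm:lambda2}, $\Pi(G\rst{X})<\lambda$. If anything you are more careful than the paper on two points it leaves implicit: checking that $\Gamma(X)$ is matchable in $G\rst{X}$ before invoking Corollary~\ref{thm:lambda2}, and addressing the degenerate case $\Pi(G)<\lambda$, in which the minimal minimiser is all of $L$ and the strict inclusion $X\subsetneq L$ demanded by the statement cannot actually be met.
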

Recall that $\pi(G)$ (resp., $\Pi(G)$) represents the minimum (resp., maximum) satisfaction probabilities in a maxmin-fair distribution for $G$.
In either of the two cases contemplated by Lemma~\ref{lem:cut} we have ``made progress'' by solving a min-cut problem on $G(\lambda)$; either (a) we showed that achieving minimum satisfaction probability
$\lambda$ is possible, or (b) found a fair separation (and a reason why it is not possible).
\fullversion
{
\begin{proof}
Consider a minimum-value $s-t$ cut in $G(\lambda)$. Because the capacities of the edges from $s$ are no
larger than any other capacity, there  is always a cut $C$ with no larger value containing no edges from
$L$ to $R$. $C$ only contains edges from $s$ to some subset $\overline{A}_L \subseteq L$ and from
some subset $A_R \in R$ to
$t$;
 its value is
$\lambda |\overline{A}_L| + |A_R|$.

Let $A_L = L \setminus \overline A_L$ and $\overline{A}_R = R \setminus A_R$.
%associate to the cut $C$ the pair $(\overline{A}_L, A_R)$.
Because $C$ is an $s,t$-cut, there are no edges in $G$ (or in $G(\lambda)$) between $A_L$
and $\overline{A}_R$, so $\Gamma(A_L) \subseteq A_R$.
As $C$ is a \emph{minimum} cut, we must in fact have
$\Gamma(A_L) = A_R$ (or else cutting some edges from $A_R$ to $t$ is unnecessary). The value of $C$ is
$\lambda   |\overline{A}_L | + |A_R|$.
Furthermore, for any $X \subseteq \overline{A}_L$ we must have
$|\Gamma(X) \setminus \Gamma(A_L)| \ge \lambda |X|$, for
otherwise there would be a cut of smaller value
$$\lambda   |\overline{A}_L \setminus X| + |\Gamma(A_L \cup X)|
  = (\lambda |\overline{A}_L| + |A_R|) - \lambda |X| + |\Gamma(X) \setminus A_R|
.$$
So  the fairness parameter $\pi(G/{A_L})$ is at least $\lambda$.
If $A_L=\emptyset$, this is $\pi(G)$ and we are in case (a) of the Lemma.

If $A_L\neq\emptyset$, let $C=C(\lambda)$ be the minimum cut with minimum $|A_L|$.
Then $A_L$ is unique and may be determined in linear time by picking the vertices reachable from $s$ in the
 residual network of a maximum (pre)flow~\cite{all_mincuts}.
%If for some proper subset $Y \subsetneq A_L$ we had $ \Gamma(A_L \setminus Y) \ge \lambda |Y|$,
%Since $|\Gamma|$ is a monotone function, such min-cut also minimizes $|A_L|$, hence it maximizes $\overline{A}_L$.
For any $Y \subseteq A_L$, we must have
$|\Gamma(A_L)| - |\Gamma(A_L \setminus Y)| < \lambda |Y|$, otherwise another cut $C'$ of at most the same same value but with $|A'_L| <
|A_L|$ would exist.
Hence $\Pi(G\rst{A_L}) < \lambda$ by Corollary~\ref{thm:lambda2} which, along with the previously derived inequality
$\pi(G/{A_L}) \ge \lambda$, states that $A_L$ is a fairly separated set, and we are in case (b).
%Case (a) occurs when $A_L=\emptyset$; case (b) when $A_L \neq \emptyset$.
\end{proof}
}

The parametric flow algorithm of Gallo et al~\cite{parametric_flow} can find the cuts $C(\lambda)$ in the proof of Lemma~\ref{lem:cut}
simultaneously for all $\lambda$ (in the sense of giving a cut for all possible $|L|-1$ ``breakpoints'' for
        $\lambda$). Its running time is asymptotically the
same time as that of a single maximum-flow computation via the push-relabel
algorithm of~\cite{push_relabel}.
%Unfortunately push-relabel is not optimal for bipartite graphs, and the technique of~\cite{parametric_flow} cannot be adapted to other flow algorithms.
However, this technique does not extend to all max-flow algorithms, and~\cite{push_relabel} is suboptimal for the graphs $G(\lambda)$.
A better idea is the following (see Algorithm~\ref{alg:alg2}).

Start with $\lambda = 1$ and keep halving
$\lambda$ as long as case (a) holds in Lemma~\ref{lem:cut}. The first time that (b) occurs we have found a fairly
separated  set $X$.
At this point we can  find recursively the blocks in the fair decompositions of $G\rst{X}$ and
$G/X$. The crucial insight is that \emph{we can find both in a single recursive call}:
%the graphs
$G\rst{X}$ and $G/X$ are disjoint, so min-cuts for
$(G\rst{X})(\lambda_1)$ and $(G/X)(\lambda_2)$ are easily obtained from min-cuts for a single
graph $G(\lambda_1, \lambda_2; X, \overline{X})$ containing a disjoint copy of each
(except that we still keep a single source $s$ and a single sink $t$).
%that we identify both where capacities from $s$ to $A_L$ are $\lambda_1$ and capacities from $s$ to
%$\overline{A}_L$ are $\lambda_2$.

An iterative implementation of this idea maintainins the following invariant: 
\begin{enumerate}[(a)]
\item we keep a partition of $L$ into $t \le k$ subsets $T_1, \ldots, T_t$;
\item each $T_i$ is
        the union of consecutive blocks in the decomposition 
        (in other words, it is the difference between two fairly isolated sets);
\item we have computed lower and
        upper bounds $\lambda_i$ and $\mu_i$ for the maxmin-fair probabilities of vertices in
        $T_i$, i.e.,
        $[\pi(T_i), \Pi(T_i)]\subseteq [\lambda_i, \mu_i)$;
\item these bounds satisfy
        $\mu_i -
        \lambda_i = 2^{-j}$ at iteration
        $j \ge 0$.

        \end{enumerate}
         Initially, $t =1$, $T_1 = L$, $\lambda_1 = 0$, $\mu_1 = 1$  (valid by the assumption $\rho(L) < |L|$), and $j = 0$.
         Construct the graph $G(\lambda_1', \ldots, \lambda_t'; T_1, \ldots, T_t)$ where
        the edge capacities from $s$ to each $u\in T_i$ are  $\lambda_i' = (\lambda_i + \mu_i) / 2$, and edges
        from $u \in T_i$ to $v \in \Gamma(T_j)$ where $j \neq i$ are deleted. With a
        min-cut computation in  $G(\lambda_1', \ldots, \lambda_t'; T_1, \ldots, T_t)$ we
        reduce the range of parameter bounds within $T_i$ by half for each $i$, and
        possibly split $T_i$ into two (increasing $t$) if we found a new fairly separated set.
        After the min-cut computation,
        obtaining the new partition of $L$, the new upper bounds, and removing the edges from lower
        blocks to higher ones takes linear time.

        After $O(\log |L|)$ iterations  (each performing a min-cut and a
                linear-time update), we have $\mu_i - \lambda_i <
        1/|L|^2$ for all $i$, at which point we have determined the full decomposition
        (because each maxmin-fair satisfaction probability is of the form
        $a/b$ where $a \le b, 1 \le b \le |L|$).
The running time of the max-flow algorithm of~\cite{flow_unit} for bipartite
                networks with rational capacities with denominators bounded by a polynomial in $|V|$ is
                $O(\min(|E|^{3/2}, |E| |V|^{2/3})\cdot \log |V|)$.
     We obtain:% the following.

%The total number of recursive calls that do not resulting in finding a new separated set is bounded by $O(\log n)$.
%An analysis of the total running time spent on all recursive calls yields the following.
%, which is convex in $m$ and $n$, we obtain the following.
\begin{theorem}\label{thm:decomp_time}
The fair decomposition of a graph $G=(V, E)$ for the one-sided fair bipartite matching problem
can be found in time $O(\min(|E|^{3/2}, |E| |V|^{2/3})\cdot (\log |V|)^2)$.
%in a graph with $n$ vertices and $m$ edges can be found in time $O(\min(m^{3/2}, m n^{2/3})\cdot (\log n)^2)$.
\end{theorem}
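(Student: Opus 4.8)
The plan is to analyze the iterative halving procedure described just above, establishing correctness through the stated loop invariant and then bounding the running time as (number of iterations) $\times$ (cost of one min-cut). Recall the invariant: at the start of iteration $j$ we hold a partition of $L$ into pieces $T_1, \ldots, T_t$, each a union of consecutive blocks of the fair decomposition, together with bounds $[\lambda_i, \mu_i)$ of width $2^{-j}$ with $[\pi(G\rst{T_i}), \Pi(G\rst{T_i})] \subseteq [\lambda_i, \mu_i)$; the loop halts once every width drops below $1/|V|^2$.

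The crux of the correctness argument, and the main obstacle, is to show that a single min-cut on the combined graph $G(\lambda_1', \ldots, \lambda_t'; T_1, \ldots, T_t)$, with $\lambda_i' = (\lambda_i + \mu_i)/2$, carries out the test of Lemma~\ref{lem:cut} independently for all $t$ pieces at once. First I would argue that deleting, for each right vertex, every edge except those joining it to the lowest piece $T_i$ that neighbors it does not change any maxmin-fair distribution: by Theorem~\ref{thm:decomp}(c) each $w \in \Gamma(B_i)\setminus\Gamma(S_{i-1})$ is matched inside $B_i$ with probability $1$, so no maxmin-fair matching ever uses an edge from a higher block to such a $w$. After this deletion the $t$ subnetworks are pairwise vertex-disjoint apart from the shared source $s$ and sink $t$, so a global minimum $s$–$t$ cut is the disjoint union of minimum cuts in the individual subnetworks, and within each subnetwork the cut is exactly the one analyzed in Lemma~\ref{lem:cut} at parameter $\lambda_i'$. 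Case (a) raises $\lambda_i$ to $\lambda_i'$; case (b) returns a fairly-isolated $X\subsetneq T_i$ with $\Pi(G\rst{X}) < \lambda_i'$ and (from the proof of Lemma~\ref{lem:cut}) $\pi(G/X)\ge\lambda_i'$, so we split $T_i$ into $X$ with new range $[\lambda_i, \lambda_i')$ and $T_i\setminus X$ with new range $[\lambda_i', \mu_i)$. In every case the resulting ranges have width $2^{-(j+1)}$, so the invariant is preserved.

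For termination I would invoke the fact that every maxmin-fair satisfaction probability equals $|\Gamma(B_i)\setminus\Gamma(S_{i-1})|/|B_i|$ (Theorem~\ref{thm:decomp}(c)), a rational with denominator at most $|L|$, and that consecutive blocks have strictly increasing satisfaction (this is precisely the fairly-isolated condition $\Pi(G\rst{X}) < \pi(G/X)$). Two distinct such rationals differ by at least $1/|L|^2$, so once every interval $[\lambda_i,\mu_i)$ has width below $1/|V|^2$ it can contain only one admissible value; hence each $T_i$ consists of a single block, $t = k$, and the full decomposition is recovered. Since the widths start at $1$ and halve at each step, this occurs after $O(\log|V|)$ iterations.

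Finally, for the running time I would charge to each iteration one max-flow computation and one $O(|V|+|E|)$ bookkeeping pass (recomputing the partition, lowering the upper bounds, deleting the newly claimed edges, and reading off $X$ as the set of vertices reachable from $s$ in the residual network). The capacities $\lambda_i'$ are dyadic with denominator $2^j = \poly(|V|)$, so after scaling by $2^j$ they become integers polynomially bounded in $|V|$, and the bipartite max-flow algorithm of~\cite{flow_unit} runs in $O(\min(|E|^{3/2}, |E||V|^{2/3})\cdot\log|V|)$ time. Multiplying by the $O(\log|V|)$ iterations, with the linear-time updates dominated, yields the claimed bound $O(\min(|E|^{3/2}, |E||V|^{2/3})\cdot(\log|V|)^2)$.
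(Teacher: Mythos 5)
Your proposal is correct and follows essentially the same route as the paper: the same loop invariant with halving intervals, the same combined-graph min-cut per iteration (with disjointness of the subnetworks justified by Theorem~\ref{thm:decomp}(c), which is exactly the paper's reason for deleting cross-block edges), the same termination argument via rationality of the satisfaction probabilities with denominators at most $|L|$, and the same accounting of $O(\log|V|)$ iterations times one max-flow computation of~\cite{flow_unit}. The only difference is presentational: you make explicit (as invariant preservation, termination, and cost accounting) what the paper states as a running discussion.
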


\begin{algorithm}
\caption{Improved (faster) polynomial-time algorithm for maxmin-fair matching}
\label{alg:alg2}

\DontPrintSemicolon
\SetKwInput{Input}{Input}
\SetKwInput{Output}{Output}
\SetKwProg{Fn}{Function}{}{}

 \Input{Bipartite graph $G=(V,E)$ with bipartition $V = L\ \dot{\cup}\ R$}\BlankLine

  \SetKwFunction{FBlock}{FairDecomposition}
  \SetKwFunction{FProb}{SingleBlockDistribution}
  \SetKwRepeat{Do}{do}{while}

  \Fn{\FBlock{$G$, $L$, $R$}}{
        $t = 1$ \tcc{number of sets} \;
        $T_1 = L$ \;
        $\lambda_1, \mu_1 = 0, 1$ \;
        \While{ $\mu_1 - \lambda_1 \ge \frac{1}{|L|^2}$}
%        \Do{$p \neq t$}
        {
%            \tcc{Construct the graph as in the discussion preceding Theorem~\ref{thm:decomp_time} }
            Construct the graph $G' = G\big( \frac{\lambda_1 + \mu_1}{2}, \ldots, \frac{\lambda_t + \mu_t}{2}; T_1, \ldots, T_t\big)$ 
as in the discussion preceding Theorem~\ref{thm:decomp_time} 
                \;
            Run a max flow algorithm on $G'$ \;
            $X =$ set of vertices reachable from $s$ in the residual flow \;
            $p = t$ \;
            \For{$i = 1, \ldots, p$}{
                $X_i = X \cap T_i$ \;
                \If{ $X_i \neq \emptyset$ }{
                    \tcc{Separation found; split $T_i$ into two}
                    $T_{t+1}, \lambda_{t+1}, \mu_{t+1} = T_i \setminus X_i, \frac{\lambda_i + \mu_i}{2}, \mu_i$ \;
                    $T_i, \lambda_i, \mu_i = X_i, \lambda_i, \frac{\lambda_i + \mu_i}{2}$ \;
                    Remove from $G$ the edges between $T_{t+1}$  and $\Gamma(T_{i})$ \;
                    $t = t + 1$
                } \Else {
                    \tcc{Separation not found; update lower bound on $\pi(T_i)$}
                    $\lambda_i = \frac{\lambda_i + \mu_i}{2}$\;
                }
            }            
        }

        \KwRet{$T_1, \ldots, T_t$}\;
  } \BlankLine

  \Fn{\FProb{$G, L, R$}}{
$g=\gcd(|L|, |R|)$\;
$l = |L|/g$\; $r = |R| /g$\;
\BlankLine
Construct the graph $H = G(\lambda)$ as in Lemma~\ref{lem:cut} \;
    Find a maximum flow in $H$; let $x_{uv}$  denote the flow between $x \in L$ and $v \in R$ \;
\BlankLine
    Construct a multigraph $P$ with $x_{uv} \cdot r$ edges between each pair $(u, v) \in L \times R$ \;
      $F = $ a set of $|L| - |R|$ new right vertices\;
    Add to the right side of $P$ the vertices in $F$ \;
      $N = \{ (i, j) \mid i \in L, j \in F, i \equiv j \pmod r \}$ \;
      Add the edges in $N$ to $P$ \;
\BlankLine
          $C_1, \ldots, C_l = $ color classes in an $l$-coloring of the edges of $P$ \;
          Remove from $C_1, \ldots, C_l$ the edges incident to $F$ \;
          $D = $ the uniform distribution over $C_1, \ldots, C_l$  \;
    \KwRet{$D$}
  }\BlankLine
\end{algorithm}

\mycomment{
How to find a critical subset? Parametric max flow works.

Suppose we have a cut in a bipartite network. Without loss of generality the cut is always
src-something or something-sink. Say src to $\bar{A}_L$ and $A_R$ to sink. Then there are no edges
from $A_L$ to $\bar{A}_R$, so $\Gamma(A_L) \subseteq A_R$, so
$\bar{A}_L, A_R$ form a vertex
cover.
In fact we must have $\Gamma(A_L) = A_R$
because otherwise the vertex cover would not be minimum.
(The value is $\lambda |{\bar A}_L| + |A_R|$. )

Now for any $X \subseteq \bar{A_L}$ we have $\Gamma(X) \setminus A_R \ge \lambda |X|$, otherwise the cut would not be minimum.
    This means that $\lambda_1(\bar{A}_L)$ conditioned on $A_L$ is at least $\lambda$.

If in addition we have that $A_L \cup A_R$ is the unique minimal min-cut, then $A_L$ is minimal
itself, because $A_R = \Gamma(A_L)$ and removing stuff from $A_L$ will remove it from $A_R$ too.
So  $A_L$ is minimal ($\bar{A}_L$ is maximal). Then for
    for any $X \subseteq A_L$ we have $\Gamma(A_L)  \Gamma(A_L \setminus X) < \lambda |X|$,
    otherwise the cut would not be minimum in size. This means that $\lambda_n(A_L)$ is less than
    $\lambda$.

In short, the vertices reachable from $s$ (distances from $s$ less than $n$) belong to a block less than $\lambda$, and the rest have
fairness
$\ge \lambda$. Minimal cut on the $s$ side, and most cut edges on the $s$ side.
}

\subsection{Improved step 2: Obtaining a fair distribution for each block}
Here we describe the procedure 
in Algorithm~\ref{alg:alg2} to find fair distributions once the fair decomposition has been computed.
As before, suppose that $G$ itself has a single block, so $\Gamma(L) = \lambda |R|$.
%and $\Gamma(X) > \lambda |X|$ for all $X\subsetneq L$
Let $g=\gcd(|L|, |R|)$ and
$l = |L|/g$, $r = |R| /g$.
%define $l$ and $r$ by $|L| = g \cdot l$ and $|R| = g \cdot r$, where $\gcd(l, r) = 1$.

%\begin{lemma} We can construct an $l$-regular bipartite multigraph with bipartition $(L, R \cup Z)$ such that
%Let $\mu = \lambda - 1 / (2 |L|^2)$ and
Let $G(\lambda)$ be as in Lemma~\ref{lem:cut}. By
the max-flow/min-cut theorem, there is a flow in $G(\lambda)$ with of value $\lambda |L| = |R|$.
%; it can be found with a max-flow computation.
Since
the incoming edges  to any $u \in L$ from $s$ have capacity $\lambda$, the flow from $s$ to $v$ must be
precisely $\lambda$. Let $x_{uv}$ be flow between $u \in L$ and $v \in R$. Then $\sum_{u \in L}
x_{uv} = \lambda = \frac{l}{r}$ and $\sum_{v \in R} x_{uv} = 1$, so we found  the edge saturation
probabilities  $\{x_{uv}\}$ of a maxmin-fair distribution. % (for this block).
%We know that there is a maxmin-fair distribution with coverage probability exactly $\lambda = \frac{l}{r}$.
%\note{I know this is standard flow-problems jargon, but I'm a bit concerned about using technical terms, such as ``saturation'', without having introduced them. No idea about how to fix this.}

Consider now the subgraph $G'$ of $G$ containing only those edges for which $x_{uv} > 0$.
By Lemma~\ref{lem:birkhoff}, the same edge probabilities $x_{uv}$ warrant the existence of a distribution
of matchings in $G'$ with satisfaction probability $\lambda$.
%, which must be maxmin-fair for $G'$ (removing edges cannot increase satisfaction probabilities).

By the integral flow theorem~\cite{lawler_book}, % \textcolor[rgb]{0.00,0.00,1.00}{\textbf{CITATION?}}
each $x_{uv}$ may be assumed to be a multiple of $1/r$, because all capacities in $G'$ are multiples of
$1/r$; in fact any standard maximum-flow algorithm
returns such a solution.
%The proof of Theorem~\ref{thm:alphas} shows that there exists a uniform distribution over a multiset of $l$ matchings of $G'$ that is maxmin-fair. Each element of $L$ must be covered in exactly $r$ of these matchings.
Now consider the $(r, l)$-biregular multigraph $P$ obtained by putting $n_{uv} = x_{uv} \cdot r$ parallel edges between $u \in L$
and $v \in R$. % We use the same trick as in
As in step 2 of Sec.~\ref{sec:basic}, we add to the right side of $P$
a set $Z$ of $|L \setminus R|$ vertices. %fictitious vertices.
Joining the $i$th vertex of $L$ with the  $j$th vertex of $Z$ whenever $i \equiv j \bmod g$, we
obtain from $P$ %
a graph $P'$ which is bipartite and $l$-regular.
%a bipartite and $l$-regular graph $P'$.

%, each joined with a single edge to each element of $L$. The resulting graph $P'$ is bipartite and $l$-regular.

   % A theorem of Vizing~\cite{vizing}
    Any bipartite graph of maximum degree $l$ is $l$-edge-colorable so that no two adjacent edges share a color by 
   K\H{o}nig's theorem (see~\cite{matching_book}).
   Each color class is a matching, so there are $l$ % (necessarily maximum)
    matchings in $P'$ covering each $u \in L$
    exactly $l$ times in total.
    \fullversion{Cole et al~\cite{edge_coloring} give an algorithm to color regular bipartite graphs
in time $O(m \log r) = O(m \log |\Gamma(L)|)$, where $m$ is the number of edges of $P'$;
in our case $m = O(l \cdot |E(G)|)$.}Goel et al~\cite{perfect_nlogn}
 give a randomized algorithm to color $l$-regular bipartite graphs
in expected time $O(l n^2 \log^2(n))$, where $n$ is the number of vertices of $P'$;
in our case $n = O(|L|)$ and \fullversion{we can use the crude bound} $l \le |L|$, so it runs in $O(|L|^2 \log^2(|L|))$.
If we remove the ``fictitious'' vertices in $Z$ from each of these matchings,
we are left with a multiset of $l$ matchings in $G$ covering each $u \in L$ exactly $r$ times. The
uniform distribution over them is thus maxmin-fair for $G$.

Now consider the case that the decomposition of $G$ has several blocks $B_1, \ldots, B_k$.
The values $x_{uv}^i$ for all blocks $i$ can be computed from a single maximum-flow computation in
$G(\lambda_1, \ldots, \lambda_k; B_1, \ldots, B_k)$ if
we know the blocks and each satisfaction probability $\lambda_i$. Then each corresponding coloring can be found in
time $O(n_i^2 \log^2(n_i))$; summing these running times and noticing that
$\sum_i n_i^2 \le |L|^2$,
       we deduce:
\mycomment{
Cole et al~\cite{edge_coloring} give an algorithm to color regular bipartite graphs
in time $O(m \log r) = O(m \log |\Gamma(L)|)$, where $m$ is the number of edges of $P'$;
in our case $m = O(l \cdot |E(G)|)$.
If we remove the ``fictitious'' vertices in $Z$ from each of these matchings,
we are left with a multiset of $l$ matchings in $G$ covering each $u \in L$ exactly $r$ times. The
uniform distribution over them is maxmin-fair for $G$.

Now consider the case that the decomposition of $G$ has several blocks $B_1, \ldots, B_k$.
The values $x_{uv}^i$ for all blocks $i$ can be computed from a single maximum-flow computation in
$G(\lambda_1, \ldots, \lambda_k; B_1, \ldots, B_k)$ if
we know the blocks and each satisfaction probability $\lambda_i$. Then each corresponding coloring can be found in
time $O(m_i \log |R|)$; summing these running times and noticing that
$\sum_i m_i \le |L| \cdot |E|$,
       we deduce:
}

\begin{theorem}
Given the fair decomposition, a maxmin-fair distribution for all blocks in it can be found in
$O(|V|^2\cdot (\log |V|)^2)$ expected time
\fullversion{
and $O(|V||E| \log |V|)$ deterministic time
}
%time_bound
after a max-flow computation.
\end{theorem}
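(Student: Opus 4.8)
The plan is to execute Steps~2 and~3 of the outline (Algorithm~\ref{alg:outline}) given the decomposition $B_1,\ldots,B_k$ and the satisfaction values $\lambda_i$ supplied by Theorem~\ref{thm:decomp}. The first observation is that all edge-saturation probabilities can be read off from a \emph{single} flow. I would form the graph $G(\lambda_1,\ldots,\lambda_k;B_1,\ldots,B_k)$: attach a common source $s$ to each $u\in B_i$ by an edge of capacity $\lambda_i$, a common sink $t$ to each $v\in R$ by an edge of capacity $1$, delete every edge running from a block $B_i$ to $\Gamma(B_j)$ with $j\neq i$, and leave the remaining edges with infinite capacity. Since the blocks no longer share right-neighbours, this graph is a disjoint union of the per-block gadgets glued only at $s$ and $t$, so a maximum flow restricts to a maximum flow on each block, of total value $\sum_i\lambda_i|B_i|=|R|$. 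Scaling all finite capacities to integers and invoking the integral flow theorem~\cite{lawler_book} yields, for each block, rational flows $x^i_{uv}$ supported on the within-block edges with $\sum_v x^i_{uv}=\lambda_i$ for every $u\in B_i$ and $\sum_u x^i_{uv}=1$ for every $v\in R_i$; by Lemma~\ref{lem:birkhoff} these are the edge-saturation probabilities of an honest distribution of matchings, and the deletion of cross-edges guarantees this distribution never matches a vertex of $B_i$ to a neighbour of a lower block, as Theorem~\ref{thm:decomp} requires.

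Fix one block and (renaming) write $R=\Gamma(L)$ for its right side, $g=\gcd(|L|,|R|)$, $l=|L|/g$, $r=|R|/g$, so that $\lambda=r/l$. Because the flow was integral after scaling, each $x_{uv}$ is a multiple of $1/l$, hence $n_{uv}:=l\,x_{uv}\in\naturals$; the multigraph $P$ with $n_{uv}$ parallel copies of $(u,v)$ is biregular, every $u\in L$ having degree $r$ and every $v\in R$ degree $l$. I would then pad $P$ into an $l$-regular bipartite graph $P'$ by adjoining $|L|-|R|=g(l-r)$ fictitious right vertices and distributing the $l-r$ missing edges at each $u\in L$ among them so that each fictitious vertex also reaches degree $l$ (possible since $(l-r)|L|=l\,(|L|-|R|)$). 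By K\"onig's edge-colouring theorem (see~\cite{matching_book}), $P'$ admits a proper $l$-edge-colouring; each of the $l$ colour classes is a perfect matching of $P'$, and after discarding the fictitious vertices we obtain $l$ matchings of $G$ in which every $u\in L$ is covered in exactly $r$ of them. The uniform distribution $\calD$ over these $l$ matchings therefore gives each user satisfaction probability $r/l=\lambda$.

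To finish Step~3 I would draw the per-block matchings independently and return their union. Correctness is immediate: since the blocks partition $L$, each $u\in B_i$ is matched in the union exactly when it is matched in the drawn matching of its own block, so $\cov{\calD}{u}=\lambda_i$, which by Theorem~\ref{thm:decomp}(c) is the satisfaction probability of $u$ in \emph{every} maxmin-fair distribution. By Lemma~\ref{lem:unique} this maxmin-fair satisfaction vector is unique, so our distribution attains it and hence has a lexicographically maximal sorted satisfaction vector; by Theorem~\ref{thm:lexi} it is maxmin-fair. For the running time, the single max-flow is charged separately (``after a max-flow computation''); what remains per block is one edge-colouring. Using the randomized algorithm of Goel et al.~\cite{perfect_nlogn} on the $l_i$-regular graph $P'_i$ with $n_i=\Theta(|B_i|)$ vertices costs $O(n_i^2(\log n_i)^2)$ expected time; summing and using $\sum_i n_i^2\le(\sum_i n_i)^2=|L|^2$ gives $O(|V|^2(\log|V|)^2)$ expected time overall. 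For a deterministic bound I would instead colour with the algorithm of~\cite{edge_coloring} in time $O(m_i\log|R_i|)$, where $m_i=O(l_i|E_i|)$; since $l_i\le|V|$ and $\sum_i|E_i|\le|E|$, the total is $O(|V||E|\log|V|)$.

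The main obstacle is the bookkeeping that converts the fractional saturation probabilities into an exactly $l$-regular bipartite graph: one must check simultaneously that the scaled flow is integral, that the multigraph $P$ is biregular with the correct degrees $r$ and $l$, and that the fictitious padding completes to $l$-regularity without perturbing any satisfaction probability — this is precisely what lets K\"onig's theorem and the fast colouring routines apply and produce a \emph{uniform} (hence trivially samplable) distribution realizing the prescribed $\lambda_i$. The secondary delicacy is arguing that one flow in the glued graph recovers every block's probabilities at once; this rests on the cross-edge deletion making the blocks flow-independent, which in turn is licensed by the structure of the fair decomposition in Theorem~\ref{thm:decomp}.
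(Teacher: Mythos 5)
Your proof follows essentially the same route as the paper's: a single max-flow computation on the glued graph $G(\lambda_1,\ldots,\lambda_k;B_1,\ldots,B_k)$ yields the per-block edge-saturation probabilities, each block is scaled to an $(r,l)$-biregular multigraph and padded with fictitious right vertices into an $l$-regular bipartite graph, K\H{o}nig edge-coloring (randomized via Goel et al.~\cite{perfect_nlogn}, deterministic via~\cite{edge_coloring}) produces the uniform per-block distributions, and the bound $\sum_i n_i^2 \le |L|^2$ gives the stated running times. The only divergence is cosmetic and in your favor: you scale the flows by $l$ (they are multiples of $1/l$), which is the correct constant where the paper's text writes $r$.
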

%(The max-flow computation is in fact avoidable in our case because it has already been performed during the fair decomposition.)

\mycomment{
	Then the flow on each edge is a multiple of $1/a$,
    hence $G'$ . Extend it to an $a$-regular bipartite graph.

    Performing this for each block we get the needed matchings for each block.
    To reduce it to $n$...
}

%Thus the running time of the fair algorithm is dominated by the time to obtain the fair decomposition.

Putting all together yields Theorem~\ref{thm:fair_bpm}.
%Altogether we obtain the following:
%\thmfairbpm*

\label{sec:transversal}

%\subsection{Reduction of the general case to one-sided matching}
\section{Generalization to non-bipartite graphs}%\label{sec:transversal}
    Recall that so far we have concerned ourselves with the {one-sided fair bipartite matching} problem, i.e., the special case of fair matching where
    $G$ is bipartite (with bipartition $V = L\ \dot{\cup}\ R$) and the set of users is $\calU=L$. 

    Notably, this special case can
    encode any other matching problem, and moreover we can make the simplifying assumption that $L$ is matchable and larger than $R$.
To show this, we make use of the following result from transversal theory.
\begin{theorem}[Ford and Fulkerson~\cite{transversals}]\label{thm:transversal}
%Any matching matroid is transversal. That is, for
For any graph $G$ there exists a bipartite graph $H$ with bipartition $(L_H, R_H)$ such that $L_H = V(G)$ and the collection of
matchable subsets of $V(G)$ in $G$ equals the collection of matchable subsets of $L_H$ in $H$.
\end{theorem}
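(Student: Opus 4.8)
The plan is to exhibit an explicit bipartite graph $H$ and show that the subsets of $L_H$ saturable by a matching into $R_H$ are exactly the matchable subsets of $V(G)$. Fix once and for all a maximum matching $M$ of $G$, and let $V(M)$ be the set of vertices it covers. Put $L_H = V(G)$ and let $R_H$ be a disjoint copy of $V(M)$, writing $\hat{w}$ for the copy of $w \in V(M)$. I define the adjacency of $H$ to capture $M$-alternating reachability: join $v \in L_H$ to $\hat{w} \in R_H$ exactly when $v = w$, or when there is an $M$-alternating path from $v$ to $w$ that begins with a non-matching edge and ends with the matching edge covering $w$ (an even-length alternating path from $v$ to the $M$-matched vertex $w$). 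Note that $|R_H| = |V(M)| = 2|M|$, which already matches the size of a basis of the matching matroid. It then suffices to prove that a set $S \subseteq V(G)$ is matchable in $G$ if and only if it is saturable into $R_H$ in $H$.

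For the forward direction I would argue directly from a witnessing matching. Suppose $N$ is a matching of $G$ covering $S$, and consider the symmetric difference $M \triangle N$, whose components are $M$-alternating paths and even cycles. I build an injection $\phi \colon S \to R_H$ as follows. For $s \in S \cap V(M)$, set $\phi(s) = \hat{s}$, using the always-present identity edge. For $s \in S \setminus V(M)$, the vertex $s$ is $M$-exposed but $N$-covered, hence it is an endpoint of an alternating path of $M \triangle N$; since $M$ is maximum this path cannot be $M$-augmenting, so its other endpoint $w$ is $M$-covered and $N$-exposed, the path has even length, and it is exactly of the alternating type witnessing $s \sim_H \hat{w}$. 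Set $\phi(s) = \hat{w}$. Injectivity is the crucial point: distinct $M$-exposed elements of $S$ lie in distinct components and map to distinct endpoints; and an $M$-exposed $s$ cannot collide with a fixed point $\hat{s'}$, because its image $w$ is $N$-exposed while every element of $S$ (in particular $s'$) is $N$-covered. Thus $\phi$ is a matching of $H$ saturating $S$.

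For the converse, which I expect to be the main obstacle, I would start from an $H$-matching $\phi \colon S \hookrightarrow R_H$ and produce a matching of $G$ covering $S$. Each $s$ with $\phi(s) \neq \hat{s}$ carries an even $M$-alternating path $P_s$ from $s$ to its image, and the natural idea is to flip $M$ along these paths to cover the $M$-exposed elements of $S$. The difficulty is that the paths $P_s$ may share vertices, so the flips cannot be performed independently. I would close this by induction on the number of $M$-exposed vertices of $S$: choose one such $s$, flip $M$ along $P_s$ to obtain a new maximum matching $M' = M \triangle P_s$ that covers $s$ while exposing $\phi(s)$, and then re-establish the hypothesis relative to $M'$. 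Making the induction go through requires two technical ingredients: an uncrossing lemma letting one take the witnessing alternating paths pairwise vertex-disjoint along the chains induced by the injection $\phi$ (so that exposing $\phi(s)$ is always compensated, either because $\phi(s) \notin S$ or because $\phi(s)$ carries its own path), together with the invariance of the covered-vertex structure under the choice of maximum matching, which guarantees that the transversal adjacency relative to $M'$ is again compatible with $\phi$. Combining the two directions yields the equality of the two collections, so $H$ is the bipartite graph claimed by the theorem.
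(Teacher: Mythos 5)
The paper itself does not prove this statement: it is imported as a classical result of transversal theory (with \cite{transversals_short} cited for ``a simple proof''), so the comparison here is against the known literature rather than an internal argument. Your construction of $H$ --- right side a copy of the vertex set $V(M)$ of a fixed maximum matching $M$, with $v \sim \hat{w}$ iff $v = w$ or there is an even $M$-alternating path from $v$ to $w$ ending in the $M$-edge at $w$ --- is indeed the standard presentation of the matching matroid as a transversal matroid, and your forward direction is complete and correct: the analysis of $M \triangle N$, the parity argument excluding $M$-augmenting paths, and the injectivity argument (the endpoint $w$ assigned to an $M$-exposed $s$ is $N$-exposed, hence distinct from every element of $S$) are all sound.

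The genuine gap is the converse, which is the hard half of the theorem, and your proposal does not prove it; it only names the two lemmas it would need. First, the ``uncrossing lemma'' is precisely where the difficulty lives, and nothing in the proposal indicates how to establish it. Note also that, as stated, it cannot hold literally: consecutive paths along a $\phi$-chain $s \to w_1 \to w_2 \to \cdots$ necessarily share the chain vertices ($P_s$ ends at $w_1$ and $P_{w_1}$ starts there), so what is actually needed is internal disjointness together with a concatenation/surgery argument, which must be carried out, not assumed. Second, the ``invariance'' ingredient is worse than unproven --- it is not well-posed. The set of vertices covered by a maximum matching is \emph{not} invariant under the choice of maximum matching (in $K_3$ different maximum matchings cover different pairs), so after flipping to $M' = M \triangle P_s$ the right side $R_{H'}$ is a copy of $V(M') \neq V(M)$ and $\phi$ is no longer a map into it; the claim that ``the transversal adjacency relative to $M'$ is again compatible with $\phi$'' must first be reformulated before it can be proved. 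Moreover, its natural reformulation --- that saturability of $S$ does not depend on which maximum matching defines $H$ --- is essentially equivalent to the theorem itself (both set systems equal the matchable sets), so invoking it inside your induction risks circularity. Until these steps are supplied, your argument establishes only the inclusion ``matchable $\Rightarrow$ saturable,'' not the claimed equality.
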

\fullversion{This is normally stated as ``any matching matroid is transversal''.}
The construction of $H$ in Theorem~\ref{thm:transversal} can be carried out in polynomial time
(see~\cite{transversals_short} for a simple proof).
Hence the case of non-bipartite $G$
         can be reduced to the one-sided bipartite case.
A similar remark applies to general user sets $\calU \subseteq V$, as we can remove from $L_H$ the
elements of $V \setminus \calU$,
        which has no effect on the collection of matchable subsets of $\calU$ in $H$.
%Henceforth we focus on the one-sided bipartite problem.

%For exposition simplicity
We make the additional simplifying assumption that $R$ is matchable.
%$\rho(L) = |R|$,  i.e., all of $R$ can be matched.
If not,
find
an
arbitrary maximum matching of $G$ and remove from $R$ all unmatched vertices. Let $R'$ denote the
remaining vertices. %, so $\rho(L) = |R'|$.
%Now we employ the following result.% of Mendelsohn and Dulmage:
\begin{theorem}[Mendelsohn and Dulmage~\cite{mendelson_dulmage}]If both $A \subseteq L$ and $B \subseteq R$ are
matchable in a bipartite graph $G$ with bipartition $(L, R)$, then
$A \cup B$ is also matchable in $G$.
%there is a matching in $G$ covering both $A$ and $B$ simultaneously.
\end{theorem}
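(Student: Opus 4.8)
The plan is to use the classical alternating-paths technique on the symmetric difference of two witnessing matchings. First I would fix a matching $M_A$ saturating $A$ and a matching $M_B$ saturating $B$, which exist by hypothesis, and consider their symmetric difference $D = M_A \triangle M_B$. Since $M_A$ and $M_B$ are matchings, every vertex is incident to at most one edge of each, hence has degree at most two in $D$; therefore the connected components of $D$ are vertex-disjoint simple paths and even cycles whose edges alternate between $M_A$ and $M_B$, while any vertex saturated by a common edge of $M_A \cap M_B$ is isolated in $D$. The goal is to assemble a single matching $M$ by taking all the edges of $M_A \cap M_B$ together with, for each component of $D$, a carefully chosen set of alternate edges, so that every vertex of $A \cup B$ ends up covered.

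For the cycle components, both the $M_A$-edges and the $M_B$-edges form a perfect matching of the cycle, so either choice saturates all of its vertices and I would simply pick the $M_A$-edges; common edges I keep, covering both their endpoints. The delicate components are the paths. Writing a path as $v_0 v_1 \cdots v_m$, its two end-edges have the same type when $m$ is odd and different types when $m$ is even. In the odd case the edges of the end-edge type form a perfect matching of the path and saturate every vertex, so I choose that type. The internal vertices of any path are incident to exactly one edge of each type, hence are covered whichever type I select; thus the entire difficulty is concentrated in the endpoints of the even-length paths, where selecting the $M_A$-edges leaves the $M_B$-end uncovered and vice versa.

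The observation that resolves this, and which I expect to be the crux of the argument, is that bipartiteness forbids the bad configuration from mattering. Consider an even-length path with end-edges of different types, say $e_1 = v_0 v_1 \in M_A$ and $e_m = v_{m-1} v_m \in M_B$. Since $v_0$ is an endpoint its only incident $D$-edge is $e_1 \in M_A$, so $v_0$ is not saturated by $M_B$ and therefore $v_0 \notin B$; symmetrically $v_m \notin A$. Moreover, because $m$ is even, the endpoints $v_0$ and $v_m$ lie on the same side of the bipartition. If that side is $L$, then $v_m \in L$ forces $v_m \notin B$, so $v_m \notin A \cup B$ is irrelevant and choosing the $M_A$-edges (which cover $v_0$ and all internal vertices) suffices; if that side is $R$, then symmetrically $v_0 \notin A \cup B$ and choosing the $M_B$-edges works. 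Hence on every path I can select alternate edges covering all of its $A \cup B$-vertices. It then remains to check that the union $M$ of all these selections is a matching — the components are vertex-disjoint, within each I picked a matching, and vertices on common edges are isolated in $D$ — and that $M$ saturates $A \cup B$: each $a \in A$ is saturated by $M_A$, so its $M_A$-edge is either common (covered) or lies in $D$, where the selection rule above was designed precisely to cover it, and symmetrically for each $b \in B$.
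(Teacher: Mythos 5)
Your proof is correct and complete. Note that there is nothing in the paper to compare it against: the paper does not prove this statement, but quotes it as a classical result with a citation to Mendelsohn and Dulmage, using it as a black box in the reduction behind Theorem~\ref{thm:simplify}. Your argument is the standard elementary proof of that classical theorem, and every step that needs checking is checked: vertices covered by edges of $M_A \cap M_B$ are isolated in $D = M_A \triangle M_B$, so the common edges never conflict with the per-component selections; cycles and odd-length paths are saturated by a single edge-type; and for an even-length path the one endpoint that the chosen edge-type misses is unsaturated by the opposite matching (hence outside the corresponding set) and, by parity plus bipartiteness, lies on the side of the bipartition that excludes it from the other set as well, so it is outside $A \cup B$ altogether. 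The final assembly into a matching saturating $A \cup B$ is also argued correctly, since the components of $D$ are vertex-disjoint and each selection is a matching of its component.
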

%Again, such a matching may be found in polynomial time given $A$ and $B$.
It follows that for
each distribution of matchings of $G$ there is another distribution with the same coverage
(satisfaction)
probabilities for $L$ and covering only elements of $L \cup R'$.
Note that the
coverage probability  of each $v \in R'$ in this distribution is 1.
The case where $\rho(L) = |R|$ is easily handled separately (any maximum matching
        algorithm is maxmin-fair in this case), yielding the following (see Appendix~\ref{sec:proofs_problem} for details):
\begin{restatable}{theorem}{thmsimplify}
\label{thm:simplify}
The fair matching problem on arbitrary % unweighted
graphs with arbitrary user sets $\calU\subseteq V$ can be reduced
in polynomial time to the one-sided fair bipartite matching problem  on graphs where $\rho(L) = |R| < |L|$.
\end{restatable}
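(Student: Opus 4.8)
The plan is to present the reduction as a composition of three polynomial-time steps, each of which preserves the collection of \emph{matchable subsets of the user set}, followed by a trivial case split. The guiding principle is that fair matching is a matroid problem (Example~\ref{ex:matroids}), so the maxmin-fair satisfaction probabilities depend only on the collection $\calS$ of matchable subsets of $\calU$, and by Lemma~\ref{lem:unique} they are uniquely determined by it. Hence it suffices to transform the instance so that $\calS$ is unchanged, the new graph is bipartite with $\calU$ on the left, and $\rho(L)=|R|<|L|$; and, conversely, to show that any matchable subset of $\calU$ produced in the reduced instance can be realized as an actual matching of the original graph $G$ in polynomial time (by the extension theorem of Berge cited in Example~\ref{ex:matroids}, together with a standard matching algorithm). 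A maxmin-fair distribution over matchable subsets in the reduced instance then lifts, setwise, to a maxmin-fair distribution over matchings of $G$ with identical satisfaction probabilities.

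I would carry out the three steps in order. First, apply the Ford--Fulkerson transversal theorem (Theorem~\ref{thm:transversal}) to obtain a bipartite graph $H$ with $L_H=V(G)$ whose matchable subsets of $L_H$ coincide with those of $V(G)$ in $G$; the construction is polynomial. Second, restrict to the users by deleting every left vertex in $V\setminus\calU$: a matching covering $A\subseteq\calU$ can always be truncated to its edges incident to $A$, so deleting non-user left vertices leaves the matchable subsets of $\calU$ untouched, and any isolated (degree-$0$) left vertex, having satisfaction probability $0$, can be set aside. Third, make the right side matchable: fix any maximum matching $M^\ast$ of the current bipartite graph and delete from $R$ all vertices not covered by $M^\ast$, keeping the set $R'$ with $|R'|=\nu$, where $\nu=\rho(L)$ is the maximum matching size.

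The main obstacle, and the only step requiring a genuine argument, is to show that this third deletion does not alter the matchable subsets of $L$. One inclusion is immediate, since a matching of the smaller graph is a matching of the larger one. For the reverse, suppose $A\subseteq L$ is matchable in the graph before deletion. Since $R'$ is matchable (it is covered by $M^\ast$) and $A$ is matchable, the Mendelsohn--Dulmage theorem yields a matching $N$ covering $A\cup R'$. As $N$ covers $R'$ it has at least $|R'|=\nu$ edges, so it is a maximum matching and covers exactly $\nu$ right vertices; these must be precisely $R'$, whence $N$ uses no vertex of $R\setminus R'$ and witnesses that $A$ is matchable after deletion. Thus the matchable subsets of $L$ are preserved, and moreover, since the support of a maxmin-fair distribution consists of maximum matchings and every maximum matching of the reduced graph covers all $\nu=|R'|$ right vertices, each $v\in R'$ is covered with probability $1$. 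Finally I would dispose of the case split: after these reductions $\rho(L)=|R'|\le|L|$; if $|R'|=|L|$ then $L$ is fully matchable, every maximum matching gives satisfaction $1$ to all users, and any maximum-matching algorithm is trivially maxmin-fair; otherwise $|R'|<|L|$ and, after relabelling $R'$ as $R$, the instance is exactly the one-sided fair bipartite matching problem with $\rho(L)=|R|<|L|$. Since each step runs in polynomial time and preserves satisfaction probabilities, the reduction is complete.
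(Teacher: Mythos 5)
Your proof is correct and follows essentially the same route as the paper's: the Ford--Fulkerson transversal theorem, deletion of non-user left vertices, pruning the right side via a maximum matching and the Mendelsohn--Dulmage theorem, lifting solutions back through Berge's theorem, and the trivial case split when $\rho(L)=|L|$. If anything, your treatment of the Mendelsohn--Dulmage step (showing the combined matching must avoid $R\setminus R'$ because it is maximum) is spelled out more carefully than in the paper, which merely asserts that coverage probabilities for $L$ are preserved.
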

\begin{proof}
Let $\mathcal A$ be a maxmin-fair algorithm for one-sided bipartite matching problem described.
Given a graph $G$ and a user set $\calU$, we
\begin{enumerate}[(1)]
\item construct $H$ as in Theorem~\ref{thm:transversal};
\item remove from $L_H$ the elements of $V(G)\setminus \calU$;
\item find an arbitrary maximum matching $M$ and remove from $R_H$ the elements not covered
by $M$, using any polynomial-time maximum matching algorithm.
\item find a fair one-sided bipartite matching by either (a) using $\mathcal A$ on the resulting
graph if $|M| < |L_H|$ or (b) returning $M$ if $|M| = |L_H|$;
\item given the solution $S$ found at the previous step, return a matching in $G$ covering the same vertices
as $S$ using an explicit algorithm for Berge's theorem~\cite{edmonds_algo}.% Theorem~\ref{thm:berge}.
\end{enumerate}
(Step (5) is technically redundant as we identify solutions with sets of matchable users, but
 is included for clarity.)

It is plain to see that the resulting distribution is maxmin-fair for the general problem if and only if $\mathcal A$
is maxmin-fair for the one-sided problem.
All steps run in polynomial time, possibly excluding the call to $\mathcal A$ itself.
\end{proof}

Combining Theorems~\ref{thm:fair_bpm} and~\ref{thm:simplify}, we obtain the following.
\begin{theorem}
The maxmin-fair matching problem on general unweighted graphs is solvable in polynomial time.
\end{theorem}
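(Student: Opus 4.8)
The plan is simply to chain together the two main results already established. Given an instance of the fair matching problem on a general graph $G = (V, E)$ with an arbitrary user set $\calU \subseteq V$, the first step is to invoke Theorem~\ref{thm:simplify}, which transforms this instance, in polynomial time, into an equivalent instance of the one-sided fair bipartite matching problem on a bipartite graph with $\rho(L) = |R| < |L|$. The crucial point is that this reduction is fairness-preserving in both directions: as the proof of Theorem~\ref{thm:simplify} shows (via the transversal construction of Theorem~\ref{thm:transversal}, the Mendelsohn--Dulmage theorem, and an explicit algorithm for Berge's theorem), a distribution is maxmin-fair for the reduced bipartite instance if and only if the distribution it induces on matchings of $G$ is maxmin-fair for $\calU$.

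The second step is to run Algorithm~\ref{alg:alg2} on the reduced bipartite instance. By Theorem~\ref{thm:fair_bpm}, this algorithm produces a sample from a maxmin-fair distribution for the one-sided bipartite problem in $O((|V|^2 + |E|\,|V|^{2/3})\cdot (\log |V|)^2)$ expected time, which is polynomial. Finally, the returned matchable set is translated back into a matching of $G$ covering the same vertices, using step~(5) of Theorem~\ref{thm:simplify}. Composing the polynomial-time reduction, the polynomial-time call to Algorithm~\ref{alg:alg2}, and the polynomial-time back-translation yields an overall polynomial-time (randomized) algorithm, and the two-directional fairness equivalence guarantees that the resulting distribution of matchings in $G$ is maxmin-fair for $\calU$.

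There is no substantive obstacle here, since both halves of the argument are already proved; the statement is a corollary. The only things worth verifying are that the running times compose additively rather than in a way that breaks the polynomial bound (immediate, since each stage is independently polynomial and invoked once), and that the fairness correspondence is genuinely an ``if and only if'' so that sampling from the downstream maxmin-fair distribution yields a maxmin-fair distribution upstream. Both facts follow directly from the statements of Theorems~\ref{thm:simplify} and~\ref{thm:fair_bpm}, so the proof is essentially a one-line combination of the two.
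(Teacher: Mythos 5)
Your proof is correct and matches the paper's argument exactly: the paper obtains this theorem by combining Theorem~\ref{thm:fair_bpm} with the reduction of Theorem~\ref{thm:simplify}, which is precisely the chain you describe. Your additional verification that the fairness correspondence is bidirectional and that the running times compose polynomially is exactly the (routine) content the paper leaves implicit.
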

%We henceforth focus on one-sided bipartite matching, assuming $\rho(L) = |R| < |L| \triangleq n$.

%\subsection{On transparency, accountability, and approximate fairness}
\section{On transparency and practical deployment}
\label{sec:transparency}
%\subsection{Transparency and algorithmic accountability.}
Even a provably fair algorithm might still be perceived by the
average user as a blackbox outputting an arbitrary solution. For the sake of transparency and
accountability, it can be interesting to publish all the solutions in a maxmin-fair distribution
    (along with their respective probabilities). Once a  complete fair distribution is published,
    convincing any user $u$ of fair treatment amounts to:
    \begin{enumerate}[(1)]
    \item letting $u$ verify independently the fairness
    guarantees of the distribution (for this it is also possible to output a short certificate, based on the {fair decomposition}, of the fact that no higher probability for $u$ is possible in a
    maxmin-fair distribution); and

    \item picking one of the published solutions at random, via any
    fair and transparent lottery mechanism or coin-tossing protocol (this is the only stage where
    randomness plays a role, as the distribution of matchings itself can be found deterministically).
    \end{enumerate}
%In other words, the randomness, while unavoidable, can be pushed to the ``last step'' of the process, because all users can agree on the published list and then they need can use a protocol agree on a random string to choose  a single solution among the published list.

One difficulty is the potentially large support size of the maxmin-fair distribution, which could
prevent publication.
An interesting question is if we can produce a maxmin-fair distribution with small support.
It turns out that for matchings,  $|L| - 1$ solutions always
   suffice; although  the actual number can be substantially smaller in practice (as shown in Section~\ref{sec:experiments}).
%   It can be reduced even further if we settle for \emph{approximately fair} distributions.

\mycomment{
This bound can be substantially reduced in that, if the ``fairness parameter'' is not too small,
then a  distribution
with nearly optimal fairness and small support can be found (Section~\ref{sec:generalizations}).
}
\mycomment{In the specific case of matchings,
   it can be proved that $|\calU| - 1$ solutions or less suffice for the maxmin-distribution.
   %, and this bound is tight for the bipartite graph $K_{1,n-1}$. We will show
   and this bound can be substantially reduced in that, if the ``fairness parameter'' is not too small, then a small-support non-uniform distribution
with nearly optimal fairness can be found (Section~\ref{sec:generalizations}).
}
% for the precise statement and proof.

\mycomment{
\subsection{Approximate fairness} %\label{sec:approximate}
%\subsection{Approximate fairness}%\label{sec:approximate}
%\note{We need to draw a link to the material in Section 2 about "transparency and accountability", right?}
We define approximately maxmin-fair distributions, in which
satisfaction guarantees are nearly as
good as those  of a maxmin-fair distribution.
%They        may not be necessarily easier to construct, but their support size may be much smaller.
\begin{definition}[$\epsilon$-unfairness]
Let $\epsilon \in [0,1]$.
A distribution $D$ over $\calS$ is \emph{$\epsilon$-maxmin-unfair} for $\calU$ if for all $u \in
\calU$,
  $\cov{D}{u} \ge
(1-\epsilon) \cov{F}{u},$ where $F$ is a maxmin-fair distribution.

Equivalently, $D$ is $\epsilon$-maxmin-unfair iff for all distributions $D'$ over $\calS$ and all $u \in \calU$,
$$(1-\epsilon)\cdot \cov{D'}{u} > \cov{D}{u} \implies    (1-\epsilon) \cov{D'}{v} < \cov{D}{v} \le \cov{D}{u}.$$

%$\quad (1-\epsilon)\cdot \cov{D'}{u} > \cov{D}{u} \implies$ \vspace{-0.1cm} \begin{align} \nonumber    (1-\epsilon) \cov{D'}{v} < \cov{D}{v} \le \cov{D}{u}.  \end{align}

\end{definition}
   It is possible to compute
approximately maxmin-fair distributions with small support size,
as long as the fairness parameter $\lambda_1(G)$ is not too small. (The number of solutions is
roughly $1/\lambda_1(G)$, which is easily seen to be necessary.) As noted
    %at the end of Section~\ref{sec:maxmin}
above, this is important from the point of view of transparency.
\begin{theorem}
There is %a deterministic
an algorithm that takes an instance $(\calU, \calS)$ of an efficiently-encoded matroid problem and a number $\epsilon\in(0,1)$ and, in time
$\poly(|\calU|, 1/\epsilon)$,
outputs a multiset $S$ of $\lceil 1 /
(\lambda_1(X) \cdot \epsilon) \rceil$ solutions from $\calS$ such that the uniform distribution over $S$ is
$\epsilon$-maxmin-unfair.
\end{theorem}
    We omit the proof. (The key idea is to compute the satisfaction probabilities of a maxmin-fair distribution and round them down to a multiple of $1/k$, where
$k = \lceil 1 /
(\lambda_1(X) \cdot \epsilon) \rceil$.
Then one can use the matroid partitioning algorithm to find a collection of $k$ bases such that the uniform distribution over them is $\epsilon$-maxmin unfair.)
%\begin{proof} First computer the satisfaction probabilities $\sat{D}{v}$ for a maxmin-fair distribution $D$.  \end{proof}

%\note{Does this result hold only for matching problems?}

%Another motivation for introducing approximate fairness is dealing with
Further motivation for approximate fairness comes from
optimization problems. Suppose that solutions may have different business value (depending
on the users selected), possibly
unrelated to the users' satisfaction probabilities. We can reach a compromise  between value and individual fairness
by fixing
a parameter $\epsilon \in (0, 1)$ and looking for the $\epsilon$-unfair distribution of largest
expected value.%
{(This gives more robust fairness guarantees than the alternative of fixing
        some $\delta \in (0, 1)$ and calling a solution feasible only when its value is $1-\delta$
        fraction of the optimum solution in terms of value, as the solution may be unique for a
        large
        range of $\delta$ values.)}
We leave the study of such problems for future work.
}

%From the point of view of transparency, it is interesting to decrease the space required to publish a maxmin-fair distribution, as explained in Section~\ref{sec:transparency}.
Let us discuss how to modify our algorithm so as to find a maxmin-fair distribution $F$ using at most $|L| + 1 - k$ matchings,
%(which is interesting from the point of view of transparency, as explained in Section~\ref{sec:maxmin}).
where $k$ is the number of non-trivial blocks in Theorem~\ref{thm:decomp}. % the fair decomposition.
(This could replace step 3 of Algorithm~\ref{alg:outline}.)
When $k = 1$, the technique from step 2 of Algorithm~\ref{alg:alg2} gives a multiset of $l \le |L|$
matchings. %It is enough to consider the case where $k=2$; the general result follows easily.

Consider the case $k=2$, which implies our claim for larger $k$ by induction.
Suppose $D$ (resp., $D'$) chooses matching $M_i, i \in [r]$ on $B_1$ (resp., $N_j, j \in [t]$ on $B_2$) with probability $p_i$ (resp., $q_j$).
%    Let $\{ (p_i, M_i) \mid i \in [r] \}$ describe a distribution $D$ of matchings $M_i$ in $B_1$ (chosen with probability $p_i$).  Let $\{ (q_j, N_j) \mid j \in [t] \}$ describe a distribution $D'$ of matchings $M_j$ in $B_2$ (chosen with probability $p_j$).
    (Here $B_1 \cap B_2 = \emptyset$.)
    A simple greedy algorithm can construct a distribution $Z$ of matchings in $B_1 \cup B_2$ such
    that $\cov{D}{u} = \cov{Z}{u}$ for $u \in B_1$ and $\cov{D'}{u} = \cov{Z}{u}$ for $u \in B_2$ with at most $r + t
    - 1$ matchings, as follows.

%    Sort the probabilities so that $p_1 \le p_2 \le \ldots \le p_r$ and $q_1 \le q_2 \le \ldots \le q_t$.
    Keep indices $i \in [r]$, $j \in [t]$ and let $S$ denote a %n initially empty
    set of
    (probability, matching) pairs, which will define the desired distribution at the end. At the
    outset $S=\emptyset$ and $i = j = 1$; at each iteration we
    add to $S$ the new pair $(\delta, M_i \cup N_j)$ where $\delta = \min(p_i, q_j)$.
    We decrement $p_i$ and $q_j$ by $\delta$ and increment $i$ (resp., $j$) if $p_i$ (resp, $q_j$)
    vanishes.
    The process terminates when $i$ and $j$ reach the end of their
    range, at which point $|S| = r + t - 1$ and all probabilities in $S$ sum up to 1.

    We note, however, that this procedure may produce some matchings with very small probabilities, so the precision needed to specify a maxmin-fair distribution exactly will grow.

%\footnote{The precision needed to specify all probabilities will grow, but this causes no
%    trouble either in practice if we use floating-point numbers.
%        , or in theory (if we are content with inducing the fair distribution supported on $|L|+1-k$ matchings).
%}
%\footnote{%A drawback of this scheme is that Note that the precision necessary to describe the probabilities $\{p_i\}$ of will grow, but.  publishing the list of $|L| + 1 - k$ matchings remains simple in practice if we use floating-point numbers for $\{p_i\}$.  If our goal is simply to induce the fair distribution $F$ and not to specify $\{p_i\}$, the theoretical complexity remains the same too.  }

\section{Experimental evaluation}
\label{sec:experiments}
We evaluate the practical performance of our fair matching algorithm by measuring
its running time and its ability to scale to large graphs, and analyzing the distribution
of maxmin-fair satisfaction probabilities and how they compare with those from two baselines. We also describe the features of the fair decompositions obtained.

\noindent\textbf{Reproducibility.}
Our code is available at \url{https://github.com/elhipercubo/maxmin_fair_bipartite_matching.git}.
It implements the improved algorithm from Sec.~\ref{sec:proofs_improved_algo}, with some implementation choices described below.
It was compiled with \texttt{g++} using \texttt{-O3} optimizations and run
on a dual-core  Intel i7-7560U CPU (2.40 GHz) with 16Gb RAM.

\spara{Datasets.}
We used publicly-available bipartite graphs of various types, sizes and domains:
all the graphs are already bipartite at the source repository, so that no preprocessing was needed.
Table~\ref{tab:datasets} reports their main characteristics.%: name and code, number of left ($L$) and right ($R$) nodes, and maximum matching size ($\rho$).

\spara{Methods tested.}
We compare the following four methods to output maximum matchings:
%We compare our maxmin-fair matching algorithm (\textsf{MF}) with two mechanisms from economics~\cite{matching_econ} designed for randomized assignments on  full bipartite graphs where every $u \in L$ has a full ranking of all possible partners $v \in R$:

\begin{description}
    \item[(\textsf{UF})] \emph{Unfair}: A standard maximum matching algorithm using maximum flows, optimized for runtime using the techniques from~\cite{hipr}. We use it for runtime comparisons only,
    because such a deterministic mechanism is inherently unfair as argued in Section~\ref{sec:intro}.% (and each of its satisfaction probabilities is either 0 or 1).
    \item[(\textsf{MF})] \emph{Maxmin fair}: Our mechanism, using the improved algorithm from Sec.~\ref{sec:proofs_improved_algo}.
    \item[(\textsf{PS})] \emph{Probabilistic Serial} (from~\cite{pserial}): The goal here is to find a set of edge flows from $L$ to $R$
    which can be  converted into a matching distribution by using the Birkhoff-von Neumann decomposition. 
    
    \textsf{PS} attempts to find a fair flow via a greedy algorithm,  as follows:
%    satisfying the Birkhoff condition; these are then converted into a matching distribution.
    each user $u \in L$ sends flow at the same fixed rate, sharing this rate equally among her
    neighbours. When the outgoing flow of $u \in L$ (or the ingoing flow of $v \in R$) reaches 1,
    remove $u$ (or $v$). Repeat while there are edges remaining.

    Unlike~\textsf{MF}, this mechanism is not Pareto-efficient (i.e., it does not necessarily return maximum matchings), but like~\textsf{MF}, a single run of the mechanism can be
    used to output all satisfaction probabilities.
%    (This mechanism is not Pareto-efficient.)% but is envy-free.

     \item[(\textsf{RP})] \emph{Random Priority} (see~\cite{pserial}): It finds a matchable set of vertices as follows:
                                                                             %by processing vertices in a random order and discarding each vertex that cannot be used to enlarge the current set of matchable vertices.
     
%     More precisely, it proceeds as follows.
        Let $S = \emptyset$. Process all users in random order, adding user $u$ to
        $S$ if $S \cup \{u\}$ is matchable. Return a matching covering the final set $S$. %This mechanism is Pareto-efficient.% and strategyproof.

     Like~\textsf{MF}, this mechanism is Pareto-efficient, but unlike~\textsf{MF}, a single run of the mechanism only outputs a single matching and hence cannot be
    used to compute all satisfaction probabilities.
\end{description}
The latter two methods arose from work in economics in a different setting:
 randomized assignments on full bipartite graphs with \emph{ordinal preferences}, i.e., where every $u \in L$ has a full ranking of
all possible partners $v \in R$, and the goal is to design mechanisms which are ordinally efficient.
(By contrast, in our setting the graph is not complete but there are no ordinal preferences: each user considers all of its neighbours equally desirable.) However they can be
naturally applied in our context as well.
%We compare our maxmin-fair matching algorithm (\textsf{MF}) with two mechanisms from economics~\cite{matching_econ} designed for randomized assignments on  full bipartite graphs where every $u \in L$ has a full ranking of all possible partners $v \in R$:

\spara{Implementation.} We used the improved algorithm from Sec.~\ref{sec:proofs_improved_algo}.%. to find fair decompositions.%
\mycomment{
, except that (a)
   during step 1 we build the graph $G(\lambda_1', \ldots, \lambda_t'; S_1, \ldots, S_t)$ with
   $\lambda_i' = \lambda_i$ instead of $\lambda_i' = (\lambda_i + \mu_i) / 2$,
which works well despite invalidating
the theoretical runtime bound;
and (b) when $\lambda_i = \mu_i$ is reached for a set
$S_i$ we remove $S_i \cup \Gamma(S_i)$ from $G$ for speedup. Neither change affects
correctness.% of the algorithm. %

}
For max flow computations we chose the highest-label push-relabel algorithm of~\cite{push_relabel},
%with the gap heuristic
which performs best %when coupled
with the gap heuristic
from~\cite{eff_pushrelabel}. We follow Goldberg techniques from~\cite{hipr}: % (important for the highest-label variant we use).
efficient gap detection is done via bucket lists of active nodes at each
level\ifhideproofs~\cite{eff_pushrelabel}\else{~\cite{hipr}}\fi,
and we arrange edges from/to the same vertex consecutively to take advantage of cache locality.
     We avoid floating-point computations by using exact integral multipliers.

     For reasons of simplicity and/or practical efficiency,
our implementation departs from the pseudocode in Algorithm~\ref{alg:alg2} in the points below. None of these changes affect correctness.
\begin{itemize}
    \item In \textsc{FairDecomposition} (line 8), only a pre-flow algorithm (the first phase of~\cite{push_relabel}) is run. This always suffices to find min-cuts (not max-flows) and thus fairly
    separated sets, and can halve the runtime.

    \item In \textsc{FairDecomposition} (line 7), 
     instead of 
   setting $\lambda_i' = (\lambda_i + \mu_i) / 2$
when building the graph $G(\lambda_1', \ldots, \lambda_t'; T_1, \ldots, T_t)$,  
   we set it to $|\Gamma(T_i)|/|T_i|$. That is, our implementation guesses optimistically that $T_i$ is actually a single block in the
   decomposition, with the pre flow computation used to verify that guess or split the block in two. This also allows us to change the terminating condition (line 6) and stop earlier:
   rather than stopping when $\lambda_i$ and $\mu_i$ are very close, which may occur long after the
   full decomposition have been found, we stop when no block is split. This new choice for $\lambda_i'$ may invalidate our
   theoretical bound on the number of flow computations required, but it makes the code much faster in practice.
   \item In \textsc{SingleBlockDistribution} (lines 27--32), we do not always build $P$ because
   $P$ may be quite large for some blocks with a small number of right vertices (where $r \ll l$), due to the fictitious edges added. Rather, we first attempt to find $l$ disjoint matchings  of size $r$
   in arbitrary order without including fictitious vertices/edges.
   This often succeeds and, when it does, gives a correct coloring. When this fails, we build $P$ and proceed to find the coloring as described.% as described above.% as before.
   \item In \textsc{SingleBlockDistribution} (line 32), we do not use Goel et al's edge coloring algorithm on $P$. Rather, we find $l$ matchings of size $l$ one by one. This simplifies the
   implementation, but may impact performance and  the runtime bound.
\end{itemize}
%     Except for the final computation of flows $x_{uv}$, only the first phase of~\cite{push_relabel} is run, since it suffices to find min-cuts (not flows) and thus fairly isolated sets.

%     We used integers to avoid floating-point computations.
     %, and flow values are only required at the end to find the matchings, not the decomposition.
%\spara{Setting.}

\begin{table}[h!]
\begin{center}
\begin{footnotesize}
\centering
\caption{Datasets used: code, number of left and right nodes ($|L|$, $|R|$), number of
    edges ($|E|$), maximum matching size ($\rho$). Available at: \url{http://konect.uni-koblenz.de/networks/}
        \label{tab:datasets}}
%\vspace{-3mm}

\begin{tabular}{c|ccccc}
\multicolumn{1}{c}{dataset (code)} &  \multicolumn{1}{c}{$|L|$} &
\multicolumn{1}{c}{$|R|$}          &  \multicolumn{1}{c}{$|E|$} & \multicolumn{1}{c}{$\rho$}         \\ \hline
\textsf{actor-movie (AM)}     & 127,823    & 383,640    & 1,470,404   & 114,762       \\ %\hline
\textsf{pics-ti (Vui)}         & 82,035     & 495,402     & 2,298,816   & 67,608       \\ %\hline
\textsf{citeulike-ti (Cti)}    & 153,277    & 731,769    & 2,411,819   & 120,125        \\ %\hline
\textsf{bibsonomy-2ti (Bti)}   & 204,673    & 767,477    & 2,555,080   & 152,757    \\ %\hline
\textsf{wiki-en-cat (WC)}     & 1,853,493  & 182,947    & 3,795,796   & 179,546 \\ %\hline
\textsf{movielens (M3)}       & 69,878     & 10,677     & 10,000,054  & 10,544\\ %\hline
\textsf{flickr (FG)}          & 395,979    & 103,631    & 8,545,307   & 96,866 \\ %\hline
\textsf{dblp-author (Pa)}     & 1,425,813  & 4,000,150  & 8,649,016   & 1,425,803     \\ %\hline
%\textsf{discogs-genre (Da)}   & 1,754,823  & 15         & 19,033,891  & 15 \\ \hline
\textsf{discogs-aff (Di)}     & 1,754,823  & 270,771    & 14,414,659  &248,796  \\ %\hline
%\textsf{lastfm-song (Ls)}     & 992        & 1,084,620  & 19,150,868  & 992 \\ \hline
\textsf{edit-dewiki (de)}     & 425,842    & 3,195,148  & 57,323,775  &  355,045\\ %\hline
\textsf{livejournal (LG)}     & 3,201,203  & 7,489,073  & 112,307,385 &2,171,971 \\ %\hline
\textsf{trackers (WT)}        & 27,665,730 & 12,756,244 & 140,613,762 & 4,006,867 \\ %\hline
\textsf{orkut (OG)}           & 2,783,196  & 8,730,857  & 327,037,487 & 1,980,077 \\ \hline
\end{tabular}
\end{footnotesize}
\end{center}
\end{table}

%\vspace{2mm}

\begin{table}[h]
\begin{center}
%\begin{footnotesize}
\centering
\caption{Characteristics of fair decompositions: number of blocks ($k$), edges used ($e_1$), number of matchings in the fair distribution ($M$).
    \label{tab:decomp}}

%\vspace{-3mm}

\begin{tabular}{c|ccc}
\multicolumn{1}{c}{dataset}   & \multicolumn{1}{c}{$k$}  &
\multicolumn{1}{c}{$e_1$}         &
 \multicolumn{1}{c}{$M$}
\\ \hline
\textsf{AM}  & 194  & 143,425      & 13,762
\\ %\hline
\textsf{Vui} & 72   & 84,003       & 1,068
\\ %\hline
\textsf{Cti} & 151  & 157,744      & 2,726
\\ %\hline
\textsf{B}   & 179  & 212,667      & 4,119
\\ %\hline
\textsf{WC}  & 1468 & 1,883,431    & 350,518
\\ %\hline
\textsf{M3}  & 245  & 92,841       & 52,332
\\ %\hline
\textsf{FG}  & 924  & 435,612      & 109,242
\\ %\hline
\textsf{Pa}  & 2    & 1,425,813    & 2
\\ %\hline
\textsf{Di}  & 1117 & 1,784,259    & 305,104
\\ %\hline
\textsf{de}  & 163  & 432,472      & 5,596
\\ %\hline
\textsf{LG}  & 1480 & 3,314,628    & 302,410
\\ %\hline
\textsf{WT}  & 3612 & 27,842,321   & 16,548,387
\\ %\hline
\textsf{OG}  & 2266 & 3,041,112    & 224,738
\\ \hline
\end{tabular}
%\end{footnotesize}
\end{center}
\end{table}

\begin{table}[h]
\begin{center}
%\begin{footnotesize}
\centering
\caption{Running time (in seconds) of \textsf{UF}, \textsf{MF}, \textsf{PS} and \textsf{RP}. 
    The ${}_1$ subscript refers to the time to compute assignment probabilities in the solution without converting them into a distribution of matchings (only meaningful for
            \textsf{MF} and 
            \textsf{PS}). Dashes indicate running times above one hour.%See text for interpretation of the subscripts.
    \label{tab:all_times}}

%\vspace{-3mm}

\begin{tabular}{c|ccccccccc}
\multicolumn{1}{c}{dataset}
& \multicolumn{1}{c}{\textsf{UF}} 
& \multicolumn{1}{c}{$\textsf{MF}_1$}
& \multicolumn{1}{c}{\textsf{MF}}
& \multicolumn{1}{c}{$\textsf{PS}_1$}
& \multicolumn{1}{c}{\textsf{RP}} 
\\ \hline
\textsf{AM}
& 0.34
& 1.812 & 9.309  & 832 & 11.27
\\ %\hline
\textsf{Vui}
& 0.38 
& 1.206 & 1.251 & 717 & 0.39
\\ %\hline
\textsf{Cti}
& 0.42 
& 1.731 & 1.820 & 1443 & 0.59
\\ %\hline
\textsf{B}
& 0.47 
& 2.152 & 2.293 & 1764 & 0.74
\\ %\hline
\textsf{WC}
& 1.68 
& 17.57 & 23.036& - & 68.17
\\ %\hline
\textsf{M3}
& 1.08 
& 11.12 & 319.13 & 485 & 72.18
\\ %\hline
\textsf{FG}
& 1.16 
& 15.17 & 25.80 & - &688.21
\\ %\hline
\textsf{Pa}
& 2.99 
& 5.96  & 6.908 & - & 3.20
\\ %\hline
\textsf{Di}
& 3.10 
& 23.37 & 24.714 & -& 67.83
\\ %\hline
\textsf{de}
& 6.99 
& 21.67 & 22.007 & -& 52.84
\\ %\hline
\textsf{LG}
& 26.05 
& 103.59 & 108.311 & -& -
\\ %\hline
\textsf{WT}
& 51.92
& 444.71 & 2270.81 & -& -
\\ %\hline
\textsf{OG}
& 98.20
& 370.76 & 381.524 & -& -
\\ \hline
\end{tabular}
%\end{footnotesize}
\end{center}
\end{table}

\subsection{Fair decompositions: characteristics}
Table~\ref{tab:decomp} shows  the number of blocks $k$ in the fair decomposition (for informative purposes), the number of
distinct edges $e_1$ used
%at least once
in a maxmin-fair distribution $F$,
 %the number $e_2$ of such edges counted with multiplicity,
    the number of matchings
$M$ in the support of $F$, and the time  to compute the decomposition.
%User and real time nearly coincide except for OG, where storage needs for graph and data structures exceeded the RAM available. User time was in the order of minutes (at most) in all instances.
% $k$ is usually in the order of thousands,
As we anticipated in Section \ref{sec:transparency} the number of matchings needed for a fair distribution ($M$) is in practice much smaller than $|L|$.
Another observation  is that $e_1$
   exceeds $|L|$ only slightly.  This is a measure of the storage needed to publish a summary containing the fair decomposition, the satisfaction probabilities, and the probability of each
   edge being used in the matching, which can be verified independently. (Publishing an explicit list of $M$ matchings of size $\rho$ explicitly
           would take much more space as many of these matchings share many edges.)

%    and $e_2$ is of roughly the same order of magnitude as $|E|$.

\subsection{Running time}
    In Table~\ref{tab:all_times} we present runtimes of all four methods 
    for the datasets considered. Dashes indicate times above one hour.
 We report user times; the real times are within $2\%$ of these in all cases except for~\textsf{OG}, where the memory needs for graph and data structures
 exceeded the RAM available (16Gb), causing excessive disk swapping.

    As $\textsf{MF}$ has two clearly differentiated parts, we  analyze two different runtimes:
\begin{itemize}
    \item Time to compute the satisfaction probabilities of each user, and the probability of using each edge in the maxmin-fair distribution (step 1, finding the fair
            decomposition), reported in column $\textsf{MF}_1$;
    \item Total time, including all of the above plus the time to find an edge coloring for each block and the list of matchings for each block (step 2), 
    reported in column $\textsf{MF}$. (The time to draw a matching given this list, step 3, is negligible.)
\end{itemize}    
No clear pattern emerges as to which of these two phases is faster in practice.
As can be observed, in some instances (\textsf{Vui}, \textsf{Cti}, \textsf{B}, \textsf{Di}, \textsf{de}, \textsf{LG}, \textsf{OG}) the time is dominated by the first phase, wheres in others 
(\textsf{AM}, \textsf{M3} and \textsf{WT}) the total time is much larger than the time for phase 1 only;
in the latter cases
the exact requirement of maxmin-fairness forces the algorithm to need a large number of
matchings for some blocks, increasing the time for step 2. It seems likely that a more relaxed requirement of approximate fairness could lead to vast improvements in the runtime of
step 2.

Similarly, for \textsf{PS} we differentiate between the time to compute the probability of each edge being used (column $\textsf{PS}_1$), and the total time including the former
plus the time to find
a
distribution of matchings which agrees with those probabilities.
However, this last step is much slower in the case of $\textsf{PS}$ because in this case we need to use the full Birkhoff-von Neumann decomposition, instead of exploiting the
degree regularity conditions of the blocks to find edge colorings as we do for $\textsf{MF}$. Because %the time of the first phase of $\textsf{PS}$ is already rather long even in the smaller graphs tested, and because 
existing implementations of the Birkhoff-von Neumann decomposition do not scale even for the smaller graphs tested, we decided to omit the second phase of $\textsf{PS}$ (which is not required to analyze its fairness properties).

As is to be expected, the unfair algorithm~\textsf{UF} is the fastest. As for the others, the only one which can be run to completion within one hour in all datasets is
ours (\textsf{MF}). Its runtime is usually a handful of seconds except for the very large graphs, where it is in the order of minutes (37 minutes at most, attained for the graph
        \textsf{WT}).
    We can see that \textsf{PS} is the most computationally expensive, as many iterations of its
    main loop are required to reach convergence, and each
    iteration
    takes linear time.
    Finally, the runtime of \textsf{RP} is generally comparable to that
    of \textsf{MF} on small and medium-size graphs, outperforming it on many of the smaller graphs, but \textsf{RP} becomes slower for larger graphs; despite the additional complexity of \textsf{MF}, the priority mechanism of \textsf{RP} precludes the use of the
    push-relabel max-flow algorithms, and also limits the number of simultaneous augmenting paths
    which can be found during a single graph search in augmenting-path algorithms.  Notice that these are runtimes for a single run;
    if the
    satisfaction probabilities need to be computed then it becomes necessary to run \textsf{RP} a large number of times,
                 slowing it down considerably. (This is not the case for \textsf{MF} or \textsf{PS}.)

\subsection{Satisfaction probability comparison}
%\enlargethispage{2\baselineskip}
Next we analyze the satisfaction probabilities produced by maxmin-fair matchings (\textsf{MF}) and compare
with the probabilistic serial and random-priority mechanisms. Note that the exact determination of the satisfaction probabilities of \textsf{RP} is computationally infeasible.
        To approximate them, we run \textsf{RP} a total of $T = 1000$ times with
        independent uniformly random permutations. (Note, however, that this does not give a good estimate of probabilities below $1/T$.)

        For this comparison we focus on the smaller graphs,
\emph{due to the limited scalability of
    \textsf{PS} (which needs a large number of iterations in its main loop, each taking linear time)
    and
\textsf{RP} (which needs to be run $T$ times to approximate the satisfaction probabilities)}.

Finally, table~\ref{tab:fairness} reports the distribution of satisfaction probabilities:
minimum value ($\lambda_{\text{min}}$), quantiles, percentage of
users with satisfaction 1 ($per1$), and Nash welfare ($N_0$),
    the geometric mean of utilities (satisfaction probabilities in our setting). %
\mycomment{:
    \begin{equation}\label{eq:nash}
      N_0(D) = \left(\prod_{u \in \calU} \cov{D}{u} \right)^{1/|\calU|}.
    \end{equation}
Nash welfare is a standard measure of fairness when allocating divisible
    resources~\cite{nash_unreasonable}. %, and penalizes small values.
    As in~\cite{gen_nash}, we also study the generalization of Nash welfare using
    power means (for a parameter $p \in \mathbb{R}$):%, taking values in $[0, 1]$ in our case:
    \begin{equation}\label{eq:gen_nash}
      N_p(D) = \left( \frac{\sum_{u \in \calU} \cov{D}{u}^p}{|\calU|} \right)^{1/p}.
     \end{equation}
    }%
%\fullversion
%{The power means are a non-decreasing function of $p$.}
%
Nash welfare is a standard measure of fairness when allocating divisible
    resources~\cite{nash_unreasonable}. %, and penalizes small values.
    As in~\cite{gen_nash}, we also study the generalization of Nash welfare using
    power means (for a parameter $p \in \mathbb{R}$):%. They are defined by:

    \mycomment{
    $$
      N_0(D) = \left(\prod_{u \in \calU} \cov{D}{u} \right)^{1/|\calU|}\, \eqnum\label{eq:nash}; \;
      N_p(D) = \left( \frac{\sum_{u \in \calU} \cov{D}{u}^p}{|\calU|} \right)^{1/p}\,\eqnum\label{eq:gen_nash}
.
    $$
    }

    \begin{equation}\label{eq:nash}
      N_0(D) = \left(\prod_{u \in \calU} \cov{D}{u} \right)^{1/|\calU|},
     \end{equation}
    \begin{equation}\label{eq:gen_nash}
      N_p(D) = \left( \frac{\sum_{u \in \calU} \cov{D}{u}^p}{|\calU|} \right)^{1/p}.
     \end{equation}

    \mycomment{:
    \begin{equation}\label{eq:nash}
      N_0(D) = \left(\prod_{u \in \calU} \cov{D}{u} \right)^{1/|\calU|}.
    \end{equation}
Nash welfare is a standard measure of fairness when allocating divisible
    resources~\cite{nash_unreasonable}. %, and penalizes small values.
    As in~\cite{gen_nash}, we also study the generalization of Nash welfare using
    power means (for a parameter $p \in \mathbb{R}$):%, taking values in $[0, 1]$ in our case:
    \begin{equation}\label{eq:gen_nash}
      N_p(D) = \left( \frac{\sum_{u \in \calU} \cov{D}{u}^p}{|\calU|} \right)^{1/p}.
     \end{equation}
    }
When $p = 1$, $N_p(D)$ is the mean satisfaction probability,
     which
    equals $\rho / |L|$ for any Pareto-efficient mechanism.
Taking the limit in~\eqref{eq:gen_nash} as $p \to 0$ one obtains~\eqref{eq:nash}~\cite{ineq_book}, justifying the notation $N_0$ for (standard) Nash welfare. Taking the limit as $p \to -\infty$ yields $\min_{u \in \calU} \cov{D}{u}$, which by definition is maximized by MF.

\begin{table}
\begin{center}
\begin{small}
\centering
%\caption{\small Maxmin fairness: minimum satisfaction, number of users with satisfaction $\le \mu$ for $\mu \in \{\lambda, 1-\epsilon\}$.\label{tab:fairness}}
\caption{Distribution of maxmin-fair satisfaction probabilities.
%    at the end of each  population quantiles; minimum satisfaction ($\lambda_1$), percentage with satisfaction 1 ($p\%$).
        \label{tab:fairness}}
%\vspace{-3mm}
\begin{tabular}{c|cccccccc}

\multicolumn{1}{c}{dataset}  & \multicolumn{1}{c}{$\lambda_{\text{min}}$} & \multicolumn{1}{c}{$\lambda_{25\%}$} & \multicolumn{1}{c}{$\lambda_{50\%}$} & \multicolumn{1}{c}{$\lambda_{75\%}$} &
%$\lambda_{\text{max}}$  &
   \multicolumn{1}{c}{$per1$}
  & \multicolumn{1}{c}{$N_0$}
\\ \hline
\textsf{AM}   & 0.156& 1 & 1 & 1 %& 1
                  & 76.18
                  & 0.860
                \\ %\hline
\textsf{Vui}    & 0.0208 & 0.5 & 1 & 1 %& 1
               & 69.16
               & 0.743
                \\ %\hline
\textsf{Cti}    & 0.01   & 0.5 & 1 & 1 %& 1
                 & 65.42
                 & 0.670
                \\ %\hline
\textsf{B}    & 0.0128  & 0.5 & 1 & 1 %& 1
                & 58.63
                & 0.630
                \\ %\hline
\textsf{WC}     & 5.52e-4 & 0.0149 & 0.0417    & 0.1 %& 1
               & 2.17
               & 0.0366
                \\ %\hline
\textsf{M3}     & 0.0298 & 0.0833 & 0.137 & 0.1798 %& 1
               & 0.49
               & 0.121
                \\ %\hline
\textsf{FG}     & 0.001012& 0.116 & 0.2 & 0.254 %& 1
                & 6.04
                & 0.161
                \\ %\hline
\textsf{Pa}    & 0.5    & 1 & 1 & 1 %& 1
                  & 99.9999
                  & 0.99999
                \\ %\hline
%\textsf{Da}  & 9e-6 & 9e-6 & 9e-6 & 9e-6 & 0.000009   & 0 \\ \hline
\textsf{Di}    & 0.000025 & 0.0135 & 0.0588 & 0.167 %& 1
                 & 4.28
                 & 0.0377
                \\ %\hline
%\textsf{Ls}  & 1 & 1 & 1 &1 & 1          & 100 \\ \hline
\textsf{de}   & 0.00334 &  0.667 & 1 & 1 %& 1
                 & 74
                 & 0.718
                \\ %\hline
\textsf{LG}  & 1.83e-4  & 0.2 & 1 & 1 %& 1
                 & 59.43
                 & 0.326
                \\ %\hline
\textsf{WT}    & 4.38e-7    & 4e-6 & 1.26e-4 & 0.0294 %& 1
                  & 10.89
                  & 0.000366 %$3.66 \cdot 10^{-4}$
                \\ %\hline
\textsf{OG}     & 0.00453       &  0.333 & 1 & 1 %& 1
                & 56.2
                & 0.583
                \\ \hline
\end{tabular}
\end{small}
\end{center}
\vspace{-2mm}
\end{table}

\begin{table}
\begin{center}
%\begin{small}
\centering
\caption{Generalized Nash welfare of MF, PS and RP. Larger is better.\label{tab:gen_nash}}
%\vspace{-3mm}
\begin{tabular}{|c|ccccc|}
\hline
dataset             & \textsf{AM} & \textsf{Vui} & \textsf{Cti} & \textsf{B} & \textsf{M3} \\ \hline
\hline
$N_{ 1} (MF)$        &0.898&0.824&0.784&0.746&0.151 \\
$N_{ 1} (PS)$        &0.796&0.775&0.739&0.702&0.147 \\
$N_{ 1} (RP) $       &0.898&0.824&0.784&0.746&0.151 \\ \hline
\hline

$N_{ 0} (MF)$        &0.860&0.743&0.670&0.630&0.121 \\
$N_{ 0} (PS)$        &0.695&0.684&0.618&0.580&0.0824 \\
$N_{ 0} (RP)$        &0.855&0.739&0.667&0.622&0.117 \\ \hline
\hline

$N_{-1} (MF)$        &0.797&0.585&0.470&0.445&0.0927 \\
$N_{-1} (PS)$        &0.524&0.522&0.419&0.395&0.0452 \\
$N_{-1} (RP)$        &0.780&0.579&0.460&0.430&0.0868 \\ \hline
\hline

$N_{-2} (MF)$        &0.699&0.339&0.232&0.244&0.0720 \\
$N_{-2} (PS)$        &0.335&0.300&0.210&0.214&0.0285 \\
$N_{-2} (RP)$        &0.659&0.332&0.228&0.230&0.0650 \\ \hline
\hline

$N_{-5} (MF)$        &0.409&0.0822&0.0431&0.0573&0.0486 \\
$N_{-5} (PS)$        &0.100&0.0767&0.0429&0.025&0.0165 \\
$N_{-5} (RP)$        &0.333&0.0793&0.0429&0.0553&0.0397 \\ \hline
%\hline
%$N_{-10} (MF) $      &0.268&0.0424&0.0208&& \\ \hline
%$N_{-10} (PS)$      &&&&& \\ \hline
%$N_{-10} (RP)$      &0.121&0.021&0.0195&& \\ \hline
\end{tabular}
%\end{small}
\end{center}
\end{table}

Table~\ref{tab:gen_nash} shows these metrics
    for the three mechanisms tested, on those graphs where PS terminated in 8 hours. %(whose number of iterations in its main loop is quite large and only bounded by the number of vertices).
Notice that $N_1(\textsf{MF}) = N_1(\textsf{RP}) > N_1(\textsf{PS})$, confirming that \textsf{MF} and \textsf{RP} are Pareto-efficient but \textsf{PS} is not.
The generalized welfares for $p < 1$ are computed exactly for \textsf{MF} and \textsf{PS}, but
    estimated from the empirical probabilities after $T$ samples for \textsf{RP}. %
({For $p=0$ we replace each empirical probability $q$ by $\max(q, 1/T)$ so that the estimate is non-zero.}) 
    MF comes out on top for all (generalized) Nash welfares in all instances, in accordance with a result of~\cite{dichotomous}.
    Interestingly, $N_0(\textsf{RP})$ is typically within 1\% of $N_0(\textsf{MF})$ (as both solutions result in a large
    proportion of users with high satisfaction), but for smaller $p$ the gap
    can widen to as much as $22\%$ for $p = -5$, in accordance with the fact
    that MF was designed to provide better guarantees to low-satisfaction users.

Table~\ref{tab:accum} shows the expected fraction of satisfied users among the bottom
$t\%$ using each method, for $t = 1, 5, 10$, and $20$.
%We observe again that
Again, our method always gives the highest values, the gap with the second best being as high as 50\% in
some instances where $t=1$.%, in accordance with the fact that MF was designed to provide the strongest guarantees to the worst-off individuals.

Table~\ref{tab:variance} reports the variance
%of $\log \cov{D}{v}$ over random users $v \in \calU$%
%(over random users) of the
of the logarithms of satisfaction probabilities over uniformly random users,
%log-satisfaction probabilities
as a measure of inequality. (Taking logs penalizes wildly varying ratios of satisfaction
        probabilities.)
We see that \textsf{MF}, which minimizes social inequality in the sense of Def.~\ref{def:social},
%We see that MF, which minimizes social inequality,% finds the most equitable distribution,
also tends to minimize this quantity in all datasets tested.
%.as the logarithmic variance is always smaller (by up to $13\%$).

%\fullversion
\begin{table}
\begin{center}
\begin{small}
\centering
\caption{Fraction of satisfied users among the bottom $t\%$.\label{tab:accum}}
%\vspace{-3mm}
\begin{tabular}{|c|ccccc|}
\hline
dataset             & \textsf{AM} & \textsf{Vui} & \textsf{Cti} & \textsf{B} & \textsf{M3} \\ \hline
\hline
$MF, t=1$ &0.189&0.0610&0.0481&0.0459&0.0298 \\
$PS, t=1$ &0.0640&0.0528&0.0417&0.039&0.00430 \\
$RP, t=1$ &0.147&0.0518&0.0456&0.0439&0.0198 \\\hline
\hline

\hline
$MF, t=5$ &0.282&0.158&0.100&0.0970&0.0298 \\
$PS, t=5$ &0.118&0.130&0.0870&0.0825&0.00605 \\
$RP, t=5$ &0.256&0.144&0.0045&0.0914&0.0252 \\\hline
\hline

\hline
$MF, t=10$ &0.389&0.231&0.151&0.143&0.0338 \\
$PS, t=10$ &0.167&0.192&0.128&0.123&0.00883 \\
$RP, t=10$ &0.363&0.216&0.139&0.133&0.0290\\\hline
\hline

\hline
$MF, t=20$ &0.513&0.344&0.238&0.221&0.0381 \\
$PS, t=20$ &0.259&0.287&0.204&0.193&0.0157 \\
$RP, t=20$ &0.506&0.326&0.229&0.212&0.0353 \\\hline
\end{tabular}
\end{small}
\end{center}
\end{table}

\begin{table}
\begin{center}
%\begin{footnotesize}
\centering
\caption{Inequality measure: variance  of log-satisfaction probabilities. Smaller is better.\label{tab:variance}}
%\vspace{-3mm}
\begin{tabular}{|c|ccccc|}
\hline
dataset             & \textsf{AM} & \textsf{Vui} & \textsf{Cti} & \textsf{B} & \textsf{M3} \\ \hline
\hline
%$Hoover (MF)$    &0.085&0.148&0.181&&0.218\\
%$Hoover (PS)$    &0.164&0.167&0.197&&\\
%$Hoover (RP)$    &0.0877&0.148&0.182&&0.241\\ \hline

%$Var (MF)$       &0.0426&0.0791&0.101&0.106&0.0118 \\
%$Var (PS)$       &0.0919&0.0855&0.106&0.108&0.0277\\
%$Var (RP)$       &0.0447&0.0797&0.102&0.108&0.0139 \\ \hline
%\hline
$\Var[\log (MF)]$    &0.112&0.296&0.454&0.475&0.491 \\
$\Var[\log (PS)]$    &0.385&0.349&0.518&0.534&1.391 \\
$\Var[\log (RP)]$    &0.133&0.304&0.475&0.496&0.556 \\ \hline
%$Gini (MF)$      &0.0937&0.159&0.197&0.224&0.358 \\
%$Gini (PS)$      &0.185&0.194&0.230&0.254&0.527 \\
%$Gini (RP)$      &0.0946&0.160&0.198&0.226&0.359 \\\hline
\end{tabular}
%\end{footnotesize}
\end{center}
\end{table}

%%%%%%%%%%%%%%%%%%%%%%%%%%%%%%%%%%%%%%%%
%%%%%%%%%%%%%%%%%%%%%%%%%%%%%%%%%%%%%%%%
%%%%%%%%%%%%%%%%%%%%%%%%%%%%%%%%%%%%%%%%
%%%%%%%%%%%%%%%%%%%%%%%%%%%%%%%%%%%%%%%%
%%%%%%%%%%%%%%%%%%%%%%%%%%%%%%%%%%%%%%%%
%%%%%%%%%%%%%%%%%%%%%%%%%%%%%%%%%%%%%%%%
%%%%%%%%%%%%%%%%%%%%%%%%%%%%%%%%%%%%%%%%

\mycomment{
    \spara{Approximate decompositions.}
    Table~\ref{tab:approximate} gives the number of matchings in an $\epsilon$-unfair distribution for
    various values of $\epsilon$. This quantity
     grows approximately linearly with $1/\epsilon$, except when $1/\epsilon$
    approaches the number of matchings in the fair distribution.

    \begin{table}
    \begin{footnotesize}
    \centering
    \caption{\small Number of matchings in $\epsilon$-unfair decompositions for various values of $\epsilon$.
        (These values are not necessarily optimal.)
            \label{tab:approximate}}
    \begin{tabular}{|c|c|c|c|c|c|}
    \hline
    code & $\epsilon = 0.2$ & $\epsilon = 0.1$ & $\epsilon = 0.05$ & $\epsilon = 0.01$ & $\epsilon = 0.001$
    \\ \hline
    \textsf{AM}  & 24 & 48 & 94 & 393 & 3,746
    \\ \hline
    \textsf{Vui} & 97 & 193 & 353 & 1,068 & 1,068
    \\ \hline
    \textsf{Cti} & 200 & 200 & 500 & 2,017 & 2,726
    \\ \hline
    \textsf{B}   & 156 & 337 & 640 & 2,177 & 4,119
    \\ \hline
    \textsf{WC} & 5,892 & 9,570 & 20,780 & 87,357 & 350,518
    \\ \hline
    \textsf{M3} & 108 & 215 & 406 & 1,922 & 18,250
    \\ \hline
    \textsf{FG} & 1,981 & 3,265 & 7,165 & 26,782 & 109,242
    \\ \hline
    \textsf{Pa} & 2 & 2 & 2 &2 &2
    \\ \hline
    %\textsf{Da}
    \textsf{Di}  & 40,000 & 40,000 & 80,000 & 305,104 & 305,104
    \\ \hline
    %\textsf{Ls}  & 1 & 1 & 1 &1 & 1          & 100 \\ \hline
    \textsf{de}  & 660 & 1,278 & 1,521 & 5,596 & 5,596
    \\ \hline
    \textsf{LG} & 5,598 & 11,195 & 32,968 & 132,463 & 302,410
    \\ \hline
    \textsf{WT} & 1,000,000 & 1,000,000 & 1,000,000 & 1,000,000 & 1,000,001
    \\ \hline
    \textsf{OG} & 688 & 1,120 & 2,250 & 7,568 & 82,067
    \\ \hline
    \end{tabular}
    \end{footnotesize}
    \end{table}
}

\section{Related work}
\label{sec:related}

%\enlargethispage{3\baselineskip}
%To the best of our knowledge, we are the first to study maxmin fairness for randomized matching algorithms.
To the best of our knowledge, we are the first to study computationally efficient randomized maxmin-fair matching algorithms, and to offer a general definition of fairness for general search
problems.

The work of Bogomolnaia and Mouline~\cite{dichotomous} on random matching under dichotomous preferences is closely related to ours:
they define an \emph{egalitarian solution} %by means of a certain algorithm
and show that it is envy-free, strategy-proof and group-strategy-proof with respect the set of right or left vertices.
As the authors note, they do not provide an axiomatic characterization of their solution; rather, their definition of egalitarian is expressed in terms of
a specific algorithm and is thus not easily generalizable to other search problems. By contrast, our definition of distributional maxmin-fairness
applies to any search problem with non-unique solutions and, in the special case of bipartite matchings, is equivalent to the egalitarian solution.
In~\cite{dichotomous} two simple algorithms are proposed
to find egalitarian matchings, both of them running in exponential time; our work yields a practical polynomial-time algorithm for the problem. We found no efficient algorithms or practical implementations of the egalitarian mechanism prior to our work.

Building on~\cite{dichotomous}, Roth et al~\cite{kidney} propose an egalitarian mechanism
for the exchange of donor kidneys for transplant.
McElfresh and
Dickerson~\cite{balancing_kidney} propose a
tradeoff between fairness and a utilitarian objective function in kidney exchange programs.
Kamada and Kojima~\cite{matching_econ} study randomized matching mechanisms for
the design of matching markets under distributional constraints; their setup contains full bipartite graphs equipped with complete and strict preference relationships.
%(that is, each vertex from one bipartite class ranks all vertices on the other class in strict order of preference).
Teo and Sethuraman~\cite{geometry_matchings} prove the existence of a ``median'' deterministic solution to the stable matching problem which is fair to everyone, but finding a
polynomial-time algorithm
remains an open problem. Cheng~\cite{generalized_median}
presents a technique to approximate the median stable matching.

In the area of resource allocation problems, several works  investigate the equitable distribution of
divisible resources in networks. The work of
Ichimori et al~\cite{optimal_sharing} considers a minmax-style optimization function,
whereas Katoh et al~\cite{equipollent}  considers allocation problems so that the maximum of profit differences is minimized;
none of these consider distributions of several solutions.
Bansal et al~\cite{santa_claus} give approximation algorithms for the Santa Claus problem, where a number of indivisible presents are to be distributed among kids who have different values for
different presents, and the
goal is to maximize the minimum happiness of a kid.
Bertsimas et al~\cite{price_fairness} introduce the price of fairness in resource allocation problems. A substantial amount of work has also been devoted to cake-cutting algorithms and their strategic and incentive properties: see
~\cite{fair_division_survey,better_cake,cake_soda} and the references therein.

Several authors have studied lexicographically optimal flows in networks (which could be used in place of Step 1 of our algorithms):
Meggido~\cite{megiddo} designed an algorithm with running time $O(n^5)$, whereas Brown~\cite{sharing} proposed a polynomial-time algorithm requiring $n$ max flow computations.
On
the other hand, the parametric flow algorithm of Gallo et al~\cite{parametric_flow} can be used to find fair decompositions with a single max flow, but is not compatible with the
max flow algorithm of Goldberg~\cite{flow_unit}.
None of these methods can be used to match the runtime of our algorithm to find fair decompositions.

%As discussed in Section 1, the
The
bulk of the research in the area of
algorithmic bias and fairness %in machine learning
has mainly focused on avoiding discrimination against a sensitive
attribute (i.e., a protected social group)  in supervised machine
learning~\cite{fairness_awareness,FeldmanFMSV15,Corbett-DaviesP17}.
Most of this literature focuses on \emph{statistical parity}, or group-level fairness, i.e., the
difference in having a positive outcome for a random individual drawn from two different
subpopulations (e.g., men and women). Feldman et al.~\cite{FeldmanFMSV15} propose to repair
attributes %in such a way
so
as to maintain per-attribute within-group ordering while enforcing statistical parity, so that a single decision threshold applied to the
transformed attributes would result in equal success rate among the two different groups. Corbett-Davies et al.~\cite{Corbett-DaviesP17}  reformulate
algorithmic fairness as constrained optimization in the context of criminal justice:
%and pretrial release decision-making:
the objective is to
maximize public safety while satisfying formal fairness constraints designed to reduce %racial
disparities.
Dwork et al.~\cite{fairness_awareness} provide %a series of
examples showing that statistical parity alone is  not sufficient for fairness,
%in which statistical parity is maintained, but from the point of view of an individual, the outcome
%is clearly unfair. Then they
and study a randomized solution for classifiers to guarantee  that ``similar individuals are treated similarly''
in an expected sense.
The idea that more qualified individuals should be chosen preferentially is present in the work of
Joseph et al.~\cite{fairness_bandits}, who study fairness in multi-armed bandit problems.
Pedreschi et al.~\cite{PedreschiRT08} introduced the related data mining problem of discovering
discrimination practices % (i.e., bias against a protected social group)
    in a given dataset containing
past decisions; if such a dataset is used as training set for a machine learning model, the bias
detected can be fixed before the learning phase~\cite{KamiranC11,ZliobaiteKC11}. % preventing its further propagation in the algorithmic decision-making process.
Heidari et al~\cite{moral} show that many existing definitions of algorithmic fairness, such as
 predictive value parity and equality
of odds can be viewed as instantiations of economic models of equality of opportunity.
Heidari et al~\cite{ignorance} study a welfare-based measure of fairness for risk-averse individuals, and derive an efficient
mechanism for bounding individual-level inequality. %motivated by  Rawlsian conception of fairness.

%that all social and economic inequalities are maximized in favour of the least advantaged members in the society
%where it is advocated that a max-min strategy be used which would maximise the prospects of the least well-off.
%    "they are to be of the greatest benefit to the least-advantaged members of society"

Finally, maxmin-fairness (in a non-distributional sense) as an %optimization
objective is used for flow control in
%telecommunication
networks~\cite{maxmin_allocation,data_networks}.
In the context of non-discrimination, the concept dates back at least to Rawls's theory of justice~\cite{theory_justice},
       where a ``difference principle'' is advocated whereby
           social and financial inequalities are required to be
    to the advantage of the worst-off. %According to Rawl's philosophy,
    In Rawls's distributive justice,
    social %and economic
    measures should be designed  so as to bring the greatest benefit to the least-advantaged members
        of society, in order to maximize their prospects.

\section{Conclusions}
\label{sec:conclusions}

In this paper we study the problem of algorithmic fairness towards the elements that may or not be included in a solution of a matching problem.
This is particularly (but not exclusively) important when these elements are humans.
Towards this goal, we propose the \emph{distributional maxmin fairness} for randomized algorithms.
A series of theoretical results characterize maxmin-fair distributions %(e.g., in terms of social inequality)
and pave the road to our practical contribution: an exact
% $O(\min(m^{3/2}, m n^{2/3})\cdot (\log n)^2)$
% $O(\min(|E|^{3/2}, |E| |V|^{2/3})\cdot (\log |V|)^2)$
polynomial-time
algorithm for maxmin-fair bipartite matching, which scales to graphs with %tens of
    millions of vertices and hundreds of millions of edges.
    We also discussed methods for the transparent and accountable real-world deployment of %the \emph{maxmin fairness}
    our framework.
    \mycomment{
    Finally,  we identified a widely applicable condition (efficient optimizability) which
    guarantees the computational
    efficiency of discovering maxmin-fair distributions of solutions for general search problems.
    This result opens the door to further research on maxmin-fair algorithms:
%any efficient fair algorithm exists for any optimizable search problem, but
although existence is shown,
the method implicit in
the proof may lead to suboptimal running times, calling for improved methods for specific problems.
    }

%While only search problems are considered in this paper,
%the extended version~\cite{ourarxiv} proposes a notion
Regarding future work, it would be interesting to consider notions
 of approximate fairness intended to deal with optimization problems, where solutions may have different business
value,
possibly
unrelated to satisfaction probabilities.
The goal could be to
reach a compromise between fairness and expected business value. % of the distribution.
%this also helps reduce the support size of the distribution
It would be desirable to be able to find approximately maxmin-fair distributions more quickly than
exact maxmin-fair distributions; we leave this as an open problem.
Another interesting question is whether our methods can be extended to handle online matching/streaming settings and/or graphs which do not fit into main memory.
Finally, future work may consider other notions of fairness for randomized algorithms for
people search, people ranking and other learning problems.
%\textcolor[rgb]{0.00,0.00,1.00}{Bla bla bla bla bla bla bla bla bla bla bla bla bla bla bla bla bla bla bla bla bla bla bla bla bla bla bla bla bla bla bla bla bla bla bla bla bla bla bla bla bla bla bla bla bla bla bla bla bla bla bla bla bla bla bla bla bla bla bla bla bla bla bla bla bla bla bla bla bla bla bla bla bla bla bla bla bla bla bla bla bla bla bla bla bla bla bla bla bla bla bla bla bla bla bla bla bla bla bla bla bla bla bla bla bla bla bla bla bla bla bla bla bla bla bla bla bla bla bla bla bla bla bla bla bla bla bla bla bla bla bla bla bla! }

% BibTeX users please use one of
%\bibliographystyle{spbasic}      % basic style, author-year citations
\bibliographystyle{spmpsci}      % mathematics and physical sciences
\bibliography{references}   % name your BibTeX data base

\pagebreak
\appendix

\section{Proofs for Section~\ref{sec:properties}: fairness and social inequality }
\label{sec:proofs_properties}
%\lemunique* \begin{proof}
\spara{Preliminaries.} In order to prove Theorem~\ref{thm:maxminsocial}, we need to recall some additonal facts about matroids (refer to~\cite{lawler_book} for details).
The rank function $\rho \colon 2^L \to \mathbb{N}$ of a matroid is
monotone submodular, meaning that for all $S, T\subseteq L$, it holds that
$ 0 \le \rho(S \cup T) - \rho(S) \le \rho(T) - \rho(S \cap T).$
%Conversely, any monotone submodular function $f$ defined on subsets of $L$ and such that $f(X) \le |X|$ for all $X$ is the rank function of some matroid.
The \emph{dual matroid} of $M$ is the matroid with ground set $L$ given by
$M^* = \{ L \setminus S \mid S \in M \}$; clearly the dual of $M^*$ is $M$ itself.
The rank function of $M^*$ is given by
$\rho^*(S) = |S| - (\rho(L) - \rho(L\setminus S))$.
      The \emph{contraction} of $M$ to the set $L
      \setminus S$ is the matroid $M/S$ with ground set $L \setminus S$ and rank function
      $\rho_{M/S}(X) = \rho(S \cup  X) - \rho(S).$
The \emph{restriction} of $M$ to the set $S$ is the matroid
      $M\rst{S}$ with ground set $S$ and independent sets $M\rst{S} = \{ I \in M \mid I \subseteq S\}$.
%The \emph{span} of $X\subseteq L$ is the unique maximal superset of $X$ with rank $\rho(X)$.
%A \emph{parallel} element of $M$ is an element $u\in L$ such that $\rho(\{u\}) = 0$, that is, $\{u\} \notin M$.
%See~\cite{lawler_book} for proofs of these statements.

\begin{proof}[\textbf{Proof of Lemma~\ref{lem:unique}}]
Assume that the set $$A = \{u \in \calU \mid \cov{F}{u} \neq \cov{D}{u}\}$$ is non-empty.
Let $$u = \argmin \{\, \min(\cov{F}{u}, \cov{D}{u}) \mid u\in A\, \},$$
where ties are broken arbitrarily.
Then $\cov{D}{u} \neq \cov{F}{u}$;
suppose that $\cov{D}{u} > \cov{F}{u}$.
Then for any $v \in A$, our choice of $u$ implies that $\cov{D}{v} \ge \min(\cov{D}{v}, \cov{F}{v}) \ge
\min(\cov{D}{u}, \cov{F}{u})  =     \cov{F}{u}$;
and for any  $v\notin A$, we have  $\cov{D}{v} = \cov{F}{v}$ by definition.
In either case one of the inequalities required by condition~\eqref{eq:maxmin} fails,
so $F$ is not
maxmin-fair.
%This contradiction proves that $A$ must be empty.
Put differently, we have shown the following implication:
$$ F \text{ is maxmin-fair } \implies \cov{D}{u} < \cov{F}{u}. $$
Similarly,
$$ D \text{ is maxmin-fair } \implies \cov{F}{u} < \cov{D}{u}. $$
    But then $F$ and $D$ cannot both be maxmin-fair.
    The only way out of this contradiction is to conclude that $A$ is empty.
\end{proof}

\begin{proof}[\textbf{Proof of Theorem~\ref{thm:lexi}}]
$\implies$ Let $F$ be maxmin-fair and consider any other distribution $D$. We need to show that
$\scov{F} \succeq \scov{D}$ (that is, $\scov{F}$ is lexicographically largest). Define
    $$A = \{u \in \calU\mid \cov{F}{u} \neq \cov{D}{u}\}.$$ If $A$ is empty, the claim is
    trivial; otherwise
    let
    $$u = \argmin \{ \cov{F}{u} \mid u\in A \} \; \text{ and } \; B = \{v \in \calU \mid \cov{F}{v} < \cov{F}{u}\}.$$
%    $$u = \argmin \{ \cov{F}{u} \mid u\in A \}$$ and $$B = \{v \in \calU \mid \cov{F}{v} < \cov{F}{u}\}.$$

    Note that
    $u \in A\subseteq \overline{B}$
   % $B\subseteq \overline{A}$
    by our choice of $u$.
%    Note that $\cov{F}{v} = \cov{D}{v}$ for all $v \in B$ and $u\notin B$.
    If $\cov{D}{u} > \cov{F}{u}$,
    from the maxmin-fairness of $F$ we infer the existence of $v\in A\subseteq \overline{B}$ such that
        $\cov{D}{v} < \cov{F}{u}$. This also holds if $\cov{D}{u} < \cov{F}{u}$ (then we can take $v
                = u$).
        In any case we have
    $$\min \{ \cov{D}{v} \mid v \notin B \} < \cov{F}{u} = \min \{ \cov{F}{v} \mid v \notin B \} $$
    $$ \text{ and } \qquad
     \cov{D}{v} = \cov{F}{v} < \cov{F}{u} \; \forall v \in B. $$
\mycomment{
    $$u = \argmin \{ \min(\cov{F}{u}, \cov{D}{u}) \mid u\in A \}.$$ As in the proof of
    Lemma~\ref{lem:unique}, from the maxmin-fairness of $F$ we infer
    $ \cov{F}{u} > \cov{D}{u}.$
    For any $v$ such that $\min( \cov{F}{v}, \cov{D}{v} ) < \cov{D}{u}$ we must have $v \notin A$, i.e., $\cov{F}{v} =
    \cov{D}{v}$. % < \cov{D}{u}$.
    Also, $\cov{D}{v} = \cov{D}{u}$ implies $\cov{F}{v} \ge \cov{D}{v} =
    \cov{D}{u}$ (where the first inequality holds with equality in the case $v \notin A$).
    Then for any $v\in A$ it cannot be the case that both $\cov{F}{v} < \cov{D}{u}$ and $\cov{F}{v} \le
    \cov{D}{v}$ hold, as this would imply $\min(\cov{F}{v}, \cov{D}{v}) = \cov{F}{v}   < \cov{D}{u} =
    \min(\cov{F}{u}, \cov{D}{u})$.
    In other words:
    $$ \cov{F}{u} > \cov{D}{u} \text{ and } (\cov{F}{v} \le \cov{D}{u} \implies \cov{F}{v} \ge \cov{D}{v}).$$
    %whenever
    %$\cov{F}{v} < \cov{D}{v}$ holds, we must have
    %$\cov{F}{v} = \min(\cov{F}{v}, \cov{D}{v}) \ge \min(\cov{F}{u}, \cov{D}{u}) = \cov{D}{u}$,
    %    hence
    %$$ \cov{F}{v} < \cov{D}{u} \implies \cov{F}{v} \ge \cov{D}{u}.$$
    %$$ \cov{F}{v} < \cov{D}{v} \implies \cov{F}{v} \ge \cov{D}{u}.$$
    %This means that in $\scov{F}$, the set $\{v \mid \cov{F}{v} \le \cov{D}{u}\} \cup \{u\}$ is a p
    %Also $\cov{F}{v} = \cov{D}{u} \implies \cov{D}{v} \ge \cov{F}{v}$.
}
It is
readily verified that this implies $\scov{F} \succ \scov{D}$.

$\impliedby$
Let $F$ be a distribution which is not maxmin-fair. We show that $F$ is not lexicographically largest either.
Since~\eqref{eq:maxmin} does
        not hold for $F$,
there exists another distribution $D$ and a user
$u \in \calU$ such that
\begin{equation}\label{eq:unfair}
%\cov{D}{u} > \cov{F}{u} \text{ and } (\cov{F}{v} \le \cov{F}{u} \implies \cov{D}{v} \ge \cov{F}{v} \; \forall v \in \calU).
\cov{D}{u} > \cov{F}{u} \text{ and } (\cov{D}{v} < \cov{F}{v} \implies \cov{F}{v} > \cov{F}{u}) \;
\forall v.% \in \calU.
\end{equation}
For any $\epsilon \in (0, 1)$, let $X_\epsilon$ denote the distribution picking $F$ with probability $1-\epsilon$ and $D$
    with probability $\epsilon$, so that $$\cov{X_\epsilon}{v} = \cov{F}{v} + \epsilon
    (\cov{D}{v} - \cov{F}{v})\quad \forall v.$$
    Choose $\epsilon > 0$ small enough so as to guarantee that
%        $$ (\cov{F}{v} - \cov{F}{u}) (\cov{X_{\epsilon}}{v} - \cov{X_{\epsilon}}{u}) \ge 0 \quad \forall v. $$
\begin{equation}\label{eq:less_u}
        (\cov{F}{v} < \cov{F}{u}  \implies \cov{X_{\epsilon}}{v} < \cov{X_{\epsilon}}{u}) \quad \forall v
\end{equation}
        and
\begin{equation}\label{eq:more_u}
        (\cov{F}{v} > \cov{F}{u}  \implies \cov{X_{\epsilon}}{v} > \cov{X_{\epsilon}}{u}) \quad
            \forall v.
\end{equation}
        For instance, any $$\epsilon <%\frac{1}{2}
        \min \left\{ \frac{|\cov{F}{u} -
            \cov{F}{v}|}{|\cov{D}{v} - \cov{F}{v}| + |\cov{D}{u} - \cov{F}{u}|}
            \Big\vert \cov{F}{v} \neq\cov{F}{u} \right\}$$
            will do.
            We have,    by~\eqref{eq:unfair},
        \begin{equation}\label{eq:increase}
(\cov{F}{v} \le \cov{F}{u}  \implies \cov{D}{v} \ge \cov{F}{v} \implies
                \cov{X_{\epsilon}}{v} \ge                 \cov{F}{v}) \quad \forall v
    \end{equation}
        \begin{equation}\label{eq:increase2}
        \text{and} \qquad
     \cov{D}{u} > \cov{F}{u}.
\end{equation}
    But~\eqref{eq:less_u},~\eqref{eq:more_u},~\eqref{eq:increase} and~\eqref{eq:increase2} say that
        $\scov{X_\epsilon}$ is
        strictly larger than $\scov{F}$ in lexicographical order, as we wished to show.
        % Note:~\eqref{eq:less_u} is not actually necessary.
\end{proof}

The following two analogues of Lemma~\ref{lem:unique} and Theorem~\ref{thm:lexi} are also needed for the proof of Theorem~\ref{thm:maxminsocial}:
\begin{lemma}\label{lem:unique2} If $F$ and $D$ are both minmax-Pareto, then $\cov{F}{u} = \cov{D}{u}$ for all $u \in \calU$.  \end{lemma}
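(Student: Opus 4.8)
The plan is to mirror the proof of Lemma~\ref{lem:unique}, replacing the ``$\argmin$ of the minimum'' by an ``$\argmax$ of the maximum'' and invoking the minmax-Pareto condition in place of the maxmin condition. Suppose toward a contradiction that the set $A = \{u \in \calU \mid \cov{F}{u} \neq \cov{D}{u}\}$ is non-empty, and pick
$$u = \argmax \{\, \max(\cov{F}{u}, \cov{D}{u}) \mid u \in A\, \},$$
breaking ties arbitrarily. Since $u \in A$ we have $\cov{F}{u} \neq \cov{D}{u}$, so exactly one of $\cov{F}{u} > \cov{D}{u}$ or $\cov{D}{u} > \cov{F}{u}$ holds.

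Consider first the case $\cov{F}{u} > \cov{D}{u}$; I would show that $F$ then fails to be minmax-Pareto, using $D$ (which is Pareto-efficient by hypothesis) as the witness distribution at index $u$. The premise $\cov{D}{u} < \cov{F}{u}$ of the minmax-Pareto implication holds, so it suffices to rule out the existence of some $v$ with $\cov{D}{v} > \cov{F}{v} \ge \cov{F}{u}$. Any such $v$ must lie in $A$ (otherwise $\cov{D}{v} = \cov{F}{v}$ and the strict inequality $\cov{D}{v} > \cov{F}{v}$ fails); but for $v \in A$ the defining maximality of $u$ gives $\cov{D}{v} \le \max(\cov{F}{v}, \cov{D}{v}) \le \cov{F}{u}$, whereas $\cov{D}{v} > \cov{F}{v} \ge \cov{F}{u}$ would force $\cov{D}{v} > \cov{F}{u}$, a contradiction. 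Hence no such $v$ exists and $F$ violates the minmax-Pareto condition.

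By the symmetric argument, interchanging the roles of $F$ and $D$, the second case $\cov{D}{u} > \cov{F}{u}$ shows that $D$ fails to be minmax-Pareto. Since both $F$ and $D$ are assumed minmax-Pareto (and in particular Pareto-efficient, so each is admissible as the witness distribution in the other's minmax condition), both cases are impossible. This contradicts $u \in A$, so $A$ must be empty and $\cov{F}{u} = \cov{D}{u}$ for every $u \in \calU$.

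I expect the only delicate point to be the bookkeeping around Pareto-efficiency: the minmax-Pareto condition quantifies only over Pareto-efficient distributions, so one must verify that the witness used to break each case is itself Pareto-efficient. This holds automatically because both $F$ and $D$ are assumed minmax-Pareto and therefore Pareto-efficient, so no extra argument is needed. The inequality chasing is robust to ties in the $\argmax$ for exactly the same reason as in Lemma~\ref{lem:unique}, since the contradiction uses only the weak bound $\cov{D}{v} \le \cov{F}{u}$ against the strict consequence $\cov{D}{v} > \cov{F}{u}$.
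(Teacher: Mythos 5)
Your proposal is correct and follows essentially the same argument as the paper's proof: the same set $A$, the same choice of $u$ as the $\argmax$ of $\max(\cov{F}{u},\cov{D}{u})$ over $A$, and the same two symmetric cases showing that each strict inequality at $u$ would violate the minmax-Pareto condition of one of the two distributions (with the other serving as the Pareto-efficient witness). The only cosmetic difference is that you make explicit the check that the witness distribution is Pareto-efficient, which the paper leaves implicit.
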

{
\begin{proof}
Assume that the set $$A = \{u \in \calU \mid \cov{F}{u} \neq \cov{D}{u}\}$$ is non-empty.
Let $$u = \argmax \{\, \max(\cov{F}{u}, \cov{D}{u}) \mid u\in A\, \},$$
where ties are broken arbitrarily.
Then $\cov{D}{u} \neq \cov{F}{u}$;
suppose that $\cov{D}{u} < \cov{F}{u}$.
Then for any $v \in A$, our choice of $u$ implies that $\cov{D}{v} \le \max(\cov{D}{v}, \cov{F}{v})
\le
\max(\cov{D}{u}, \cov{F}{u})  =     \cov{F}{u}$;
and for any  $v\notin A$, we have  $\cov{D}{v} = \cov{F}{v}$ by definition.
In either case one of the inequalities required by the definition of minmax-Pareto efficiency fails.
Put differently, we have shown the following implication:
$$ F \text{ is minmax-Pareto } \implies \cov{D}{u} > \cov{F}{u}. $$
Similarly,
$$ D \text{ is minmax-Pareto } \implies \cov{F}{u} > \cov{D}{u}. $$
    But then $F$ and $D$ cannot both be minmax-Pareto.
    The only way out of this contradiction is to conclude that $A$ is empty.
\end{proof}
}

%This characterization is of independent interest.
%\textcolor[rgb]{0.00,0.00,1.00}{\textbf{SAY HOW!!!!! Or refer the point of Section 4 where this becomes evident.}}
%Its characterization of the ``fairness parameter'' (i.e., the minimum satisfaction probability in a maxmin-fair solution) is of independent interest.
%giving a certain  obviously necessary condition which is also sufficient.

%\todo{We need to discuss the significance of Theorem 3. Why is this important? What does this give us in practice? Can we make an example of a practical use?}

\begin{theorem}\label{thm:lexi2}
For matroid problems,
a distribution $F$ is minmax-Pareto if and only if $F$ is Pareto-efficient and $\rscov{F} \preceq \rscov{D}$ for all Pareto-efficient distributions~$D$.%over $\calS$.
\end{theorem}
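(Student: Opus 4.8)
The plan is to prove this as the decreasing-order dual of Theorem~\ref{thm:lexi}, reusing its argument with every inequality reversed, $\scov{\cdot}$ replaced by $\rscov{\cdot}$, and every competing distribution constrained to be Pareto-efficient. The forward implication needs no essentially new idea; the converse is where the matroid hypothesis does real work, via the fact that for matroid problems Pareto-efficiency coincides with being supported on bases, and that base-supported distributions form a convex set.

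For the forward direction, suppose $F$ is minmax-Pareto. Then $F$ is Pareto-efficient by definition, so it remains to show $\rscov{F} \preceq \rscov{D}$ for every Pareto-efficient $D$. Mirroring the proof of Theorem~\ref{thm:lexi}, I would set $A = \{u \mid \cov{F}{u} \neq \cov{D}{u}\}$ (the claim is trivial when $A = \emptyset$), pick $u = \argmax\{\cov{F}{w} \mid w \in A\}$, and let $B = \{v \mid \cov{F}{v} > \cov{F}{u}\}$. Maximality of $\cov{F}{u}$ over $A$ forces $A \cap B = \emptyset$, so $F$ and $D$ agree on $B$. A short case split on the sign of $\cov{D}{u} - \cov{F}{u}$---invoking the minmax-Pareto property of $F$ against the Pareto-efficient $D$ when $\cov{D}{u} < \cov{F}{u}$, and taking $v = u$ otherwise---produces a $v \notin B$ with $\cov{D}{v} > \cov{F}{u} = \max\{\cov{F}{w} \mid w \notin B\}$. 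Hence $D$ has strictly more coordinates exceeding $\cov{F}{u}$ than $F$ has (namely at least $|B| + 1$ versus exactly $|B|$) while agreeing with $F$ on the top block $B$; comparing the two decreasing-sorted vectors coordinate by coordinate then gives $\rscov{F} \prec \rscov{D}$, as required.

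For the converse I would argue the contrapositive: assuming $F$ is Pareto-efficient but \emph{not} minmax-Pareto, I would exhibit a Pareto-efficient $D'$ with $\rscov{D'} \prec \rscov{F}$. Negating the definition yields a Pareto-efficient $D$ and a user $u$ with $\cov{D}{u} < \cov{F}{u}$ and with $\cov{F}{v} \ge \cov{F}{u} \implies \cov{D}{v} \le \cov{F}{v}$ for all $v$. As in Theorem~\ref{thm:lexi} I would take $X_\epsilon = (1-\epsilon) F + \epsilon D$ and choose $\epsilon \in (0,1)$ small enough that the strict order of every coordinate relative to $\cov{X_\epsilon}{u}$ agrees with its order relative to $\cov{F}{u}$; the inequalities above then give $\cov{X_\epsilon}{u} < \cov{F}{u}$ together with $\cov{X_\epsilon}{v} \le \cov{F}{v}$ for every $v$ with $\cov{F}{v} \ge \cov{F}{u}$, which (dualizing the conclusion of Theorem~\ref{thm:lexi}) make $\rscov{X_\epsilon}$ strictly smaller than $\rscov{F}$ in decreasing-lexicographic order.

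The step I expect to be the main obstacle, and the reason the statement is restricted to matroid problems, is that $X_\epsilon$ must itself be Pareto-efficient for this to contradict lexicographic minimality over Pareto-efficient distributions. This is exactly where the matroid hypothesis enters: every Pareto-efficient distribution is supported on maximal independent sets, i.e.\ bases, and since all bases share the same size $\rho(L)$, a distribution is Pareto-efficient if and only if its expected solution size equals $\rho(L)$ (no distribution can exceed this, and every base-supported one attains it). Both $F$ and $D$ are base-supported, so the mixture $X_\epsilon$ is base-supported and therefore Pareto-efficient. This preservation of Pareto-efficiency under mixing is precisely what makes the Pareto-efficiency requirement---redundant for maxmin-fairness---indispensable here, and it closes the converse.
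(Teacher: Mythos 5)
Your proof is correct, but it takes a genuinely different route from the paper's. The paper proves Theorem~\ref{thm:lexi2} in three lines via matroid duality: since Pareto-efficient distributions are exactly the base-supported ones, the complementation map $D \mapsto D^*$, where $D^*$ is the distribution of $(L \setminus X)_{X \sim D}$, is a bijection onto base-supported distributions over the dual matroid $M^*$, and it satisfies $\cov{D}{u} + \cov{D^*}{u} = 1$ for every $u$; hence $F$ is minmax-Pareto for $M$ iff $F^*$ is maxmin-fair for $M^*$, which by Theorem~\ref{thm:lexi} holds iff $\scov{F^*}$ is lexicographically largest, which in turn holds iff $\rscov{F}$ is lexicographically smallest among Pareto-efficient distributions. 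You instead re-run the entire argument of Theorem~\ref{thm:lexi} with all inequalities reversed, and invoke the matroid hypothesis only at the single point where it is truly needed in that style of argument: the mixture $X_\epsilon = (1-\epsilon)F + \epsilon D$ must remain an admissible competitor, which follows because Pareto-efficiency for matroids is equivalent to having expected solution size $\rho(L)$, a property closed under convex combinations. Both uses of the matroid structure are legitimate, and your sorted-vector comparisons (the multiset-containment argument in the forward direction, and the threshold-counting argument for $\rscov{X_\epsilon} \prec \rscov{F}$ in the converse) go through. What each approach buys: the paper's proof is shorter, reuses Theorem~\ref{thm:lexi} as a black box, and needs no new case analysis, at the cost of introducing the dual matroid and the (implicit) fact that lexicographic maximality over all distributions coincides with maximality over base-supported ones; your proof is self-contained, avoids dual matroids entirely, and isolates the convexity of the Pareto-efficient set as the decisive structural property---which also gives a cleaner explanation of why the equivalence breaks for the non-matroid counterexample in Section~\ref{sec:properties}, where the Pareto-efficient distributions are not closed under mixing. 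Your approach is also stylistically consistent with how the paper itself handles the companion uniqueness result (Lemma~\ref{lem:unique2}), which it proves by exactly this kind of sign-reversed mirroring rather than by duality.
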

\begin{proof}
%The proof of the ``only if'' part is analogous to that of Theorem~\ref{thm:lexi2}.
%For the ``if part'', we need to prove that any lexicographically smallest Pareto-efficient distribution is
First observe that, for matroids, a distribution is Pareto-efficient if and only if it is supported over bases.
For any distribution $D$ of bases over a matroid $M$ with ground set $L$, consider the distribution $D^*$ of
$(L \setminus X \mid X \sim D)$
of bases over the dual matroid $M^*$. Then we have $\cov{D}{u} + \cov{D^*}{u} = 1$ for all $u \in
L$,  so clearly $F$ is minmax-Pareto if and only if $F^*$ is maxmin-fair, which (by
        Theorem~\ref{thm:lexi}) occurs if and only if $F^*$ is lexicographically largest for $M^*$, which in turn
is equivalent to $F$ being  lexicographically smallest among distributions of bases of $M$, as we wished to show.
\end{proof}

Our next result %plays a major role in the development of fair algorithms: it says 
asserts that the only obstruction to achieving high
satisfaction probability for every user is the existence of a set of users with small rank-to-size
ratio.
Finding these obstruction sets will enable us to devise a divide and conquer strategy to obtain fair
distributions. For instance, in Example~\ref{example1} the obstruction set is given by the set of users $\{a_0, a_2, a_3\}$, which force the maximum satisfaction probability to
be no larger than $\frac{2}{3}$.
%\thmlexi*
%\begin{proof}
\begin{theorem}\label{thm:matroids}
Let $M$ be a matroid with ground set $L$ and rank function $\rho \colon 2^L \to \mathbb{N}$.
The minimum satisfaction probability in a minmax-fair distribution over $M$ is
$$ \pi(M) = \min \left\{ \frac{\rho(X)}{|X|} \mid {0 \neq X \subseteq L} \right\}.$$
%$\; \lambda_1 = \min \left\{ \frac{\rho(X)}{|X|} \mid {0 \neq X \subseteq L} \right\}.$
%\begin{center} $\lambda = \min \left\{ \frac{\rho(X)}{|X|} \mid {0 \neq X \subseteq L} \right\}.$ \end{center}
\end{theorem}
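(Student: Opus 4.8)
The plan is to mirror the route taken for bipartite matching in Corollary~\ref{thm:lambda1}, replacing the neighbourhood-size function $|\Gamma(\cdot)|$ by the rank function $\rho$ and Hall's theorem by Edmonds' description of the matroid polytope. The first observation is that, by the lexicographic characterisation of Theorem~\ref{thm:lexi}, a maxmin-fair distribution maximises the smallest satisfaction probability; since by Theorem~\ref{thm:maxminsocial} maxmin-fairness and minmax-Pareto efficiency coincide for matroid problems (so the two notions share the same satisfaction probabilities), it follows that
\[
\pi(M) = \max_{D}\ \min_{v \in L} \cov{D}{v},
\]
where $D$ ranges over distributions over the independent sets of $M$. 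It therefore suffices to compute this max--min value.

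The key ingredient would be the matroid analogue of Theorem~\ref{thm:alphas}: for reals $\{\alpha_v\}_{v \in L} \subseteq [0,1]$, a distribution $D$ over independent sets of $M$ with $\cov{D}{v} \ge \alpha_v$ for every $v$ exists if and only if $\sum_{v \in S}\alpha_v \le \rho(S)$ for all $S \subseteq L$. Necessity is immediate by linearity of expectation: no independent set meets $S$ in more than $\rho(S)$ elements, so $\sum_{v \in S}\alpha_v \le \sum_{v \in S}\cov{D}{v} \le \rho(S)$. For sufficiency I would invoke Edmonds' theorem that the independent-set polytope of $M$ equals $\{x \in \reals^L_{\ge 0} \mid \sum_{v\in S} x_v \le \rho(S)\ \text{for all } S \subseteq L\}$, i.e.\ the convex hull of the indicator vectors $\chi_I$ of independent sets $I$ (see~\cite{lawler_book}). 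The hypothesis states precisely that the vector $(\alpha_v)_{v\in L}$, which is non-negative, lies in this polytope; hence it is a convex combination $\sum_i \lambda_i \chi_{I_i}$, and the distribution assigning probability $\lambda_i$ to $I_i$ satisfies $\cov{D}{v} = \alpha_v$ for all $v$. Each $I_i$ may be extended to a base without lowering any coverage, so the support may be taken to consist of bases.

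Finally I would apply this characterisation with the uniform choice $\alpha_v = \lambda$ for all $v$. A distribution with $\min_v \cov{D}{v} \ge \lambda$ then exists if and only if $\lambda\,|S| \le \rho(S)$ for every nonempty $S \subseteq L$, equivalently $\lambda \le \min\{\rho(X)/|X| \mid \emptyset \neq X \subseteq L\}$; taking the largest such $\lambda$ yields $\pi(M) = \min\{\rho(X)/|X| \mid \emptyset \neq X \subseteq L\}$, as claimed. The $\le$ direction here does not even need Edmonds' theorem: it follows directly from the necessity bound $\sum_{v \in X}\cov{D}{v} \le \rho(X)$, which forces $\min_v \cov{D}{v} \le \rho(X)/|X|$ for every $X$ and every $D$. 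I expect the sufficiency ($\ge$) direction to be the crux, as it rests entirely on the convex-hull description of the matroid polytope; should a self-contained argument be preferred over citing Edmonds, one could instead mimic the replica construction of Theorem~\ref{thm:alphas}, scaling the rational $\alpha_v$ to an integral vector and decomposing it into independent sets via the matroid union theorem.
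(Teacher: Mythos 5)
Your proof is correct, but it takes a genuinely different route from the paper. The paper works on the dual side: it writes the max--min problem as a linear program over distributions of bases (LP~\eqref{lp:fair}), dualizes to LP~\eqref{lp:zv} (minimize, over weight vectors in the simplex, the maximum weight of a base), and then uses Edmonds' \emph{greedy algorithm} to convert this min--max problem into a pure minimization: it partitions the simplex into regions $G(\pi)$ according to the permutation $\pi$ sorting the weights, notes that the maximum-weight base $B(\pi)$ is constant on each region, and evaluates $\min_{z\in G(\pi)}\sum_{v\in B(\pi)}z_v = \min_{i}\rho(\pi([i]))/|\pi([i])|$ by a change of variables. Your argument instead establishes primal feasibility directly, via a matroid analogue of Theorem~\ref{thm:alphas}: the rank constraints $\sum_{v\in S}x_v\le\rho(S)$, $x\ge 0$ describe the convex hull of indicator vectors of independent sets (Edmonds' matroid polytope theorem), so the constant vector $\alpha_v=\lambda$ is realizable as coverage probabilities exactly when $\lambda\,|S|\le\rho(S)$ for all nonempty $S$. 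This mirrors the paper's own treatment of the bipartite case (Theorem~\ref{thm:alphas} together with Corollary~\ref{thm:lambda1}) and is shorter and more transparent; the price is a heavier black box, since the polytope theorem is a deeper fact than greedy optimality --- indeed its standard proof goes through the greedy algorithm and LP duality, which is essentially what the paper carries out explicitly.

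One caveat: your appeal to Theorem~\ref{thm:maxminsocial} to pass between the maxmin and minmax notions is circular within the paper's logical structure, because that theorem is proved (in Appendix~\ref{sec:proofs_properties}) via Lemmas~\ref{lem:matroids_ext} and~\ref{lem:matroids_ext2}, which themselves invoke Theorem~\ref{thm:matroids}. Fortunately the appeal is unnecessary: ``minmax-fair'' in the statement is evidently a slip for ``maxmin-fair'' (the paper's own proof treats the distribution as maxmin-fair, and $\pi$ denotes the minimum satisfaction probability of a maxmin-fair distribution), so Theorem~\ref{thm:lexi} alone gives $\pi(M)=\max_{D}\min_{v\in L}\cov{D}{v}$, and the rest of your argument goes through unchanged.
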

%(Of course, if the matroid has loops the minimum satisfaction probability will be zero.)
\begin{proof}
%\begin{proof}[\textbf{Proof of Theorem~\ref{thm:matroids}}]
%By Pareto-efficiency,
   Any maxmin-fair distribution is supported on the collection $\mathcal{B}$ of
    bases of $M$, since extending an independent set to a base containing it never decreases any
    satisfaction probability.
Optimizing the smallest satisfaction probability  $\lambda_1$ amounts to finding a suitable distribution over
$\mathcal{B}$; let us denote the corresponding probabilities by
$\{p_B\}_{B \in \mathcal{B}}$.
Since the probability of $v \in \calU$ being included is $\sum_{v \in B} p_B$,
maximizing the minimum such probability is modeled by Program~\eqref{lp:fair} below.
It may be written as a linear
program by introducing an additional variable $\lambda$ to be maximized, and introducing the constraints $\sum_{B \ni v} p_B \ge \lambda$. Its dual is equivalent to~\eqref{lp:zv}.

{
%\small
{
\begin{minipage}{.48\linewidth}
    \begin{equation}\label{lp:fair}
    \begin{array}{rrclcl}
    \displaystyle \max & \displaystyle \min_{v\in \calU} \sum_{B \ni v} p_B \\
            \\
    \textrm{s.t.}
    & \displaystyle \sum_{B \in \mathcal{B}} p_B  & = & 1 &  \\
    & p_B    &\ge& 0 \\
    \end{array}
    \end{equation}
\end{minipage}%
\begin{minipage}{.48\linewidth}
    \begin{equation}\label{lp:zv}
    \begin{array}{rrclcl}
    \displaystyle \min & \displaystyle \max_{B\in\mathcal{B}} \displaystyle \sum_{v \in B} z_v \\
            \\
    \textrm{s.t.}
    & \displaystyle \sum_{v \in \calU} z_v  & = & 1 &  \\
    & z_v &\ge& 0. \\
    \end{array}
    \end{equation}
\end{minipage}
}
}

\noindent Observe that $\max_{B\in\mathcal{B}} \sum_{v\in\mathcal{B}} z_v$ is the value of a maximum-weight base of $M$, with weights given by
$\{z_v\}_{v \in \calU}$. Thus LP~\eqref{lp:zv} encodes the task of finding an assignment of weights to
elements of $\calU$
minimizing the maximum weight of a base. We will turn this min-max problem into a pure minimization problem.

\mycomment{
    Edmonds showed that for any fixed assignment of non-negative weights to the elements of $M$, a maximum-weight base
    may be found via the  following greedy algorithm:
    \begin{enumerate}
        \item Sort the weights by decreasing order: $z_{\pi_1} \ge z_{\pi_2} \ge \ldots z_{\pi_n}$, where
              we assume $L = [n]$ and $\pi \in \mathcal{S}(L)$ is a suitable permutation of $L$.
        \item Set $B \gets \emptyset$.
        \item For $i = 1 \ldots n$, add $\pi_i$ to $B$ if $B \cup \{\pi_i\}$ is independent in $M$.
    \end{enumerate}
}

Edmonds~\cite{matroids_greedy} showed that for any fixed assignment of non-negative weights to the elements of~$M$, a maximum-weight base
may be found via the greedy algorithm that examines each element in order of decreasing weight and
adds it to the
current set if its addition does not violate independence.
Let $\Pi$ denote the set of permutations of $\calU = \{1, 2, \ldots, n\}$. Write $$\Delta = \left\{z \in \mathbb{R}^\calU
\big\vert
\sum_{v \in \calU} z_v = 1, \quad z_v \ge 0 \; \forall v \in \calU\right\}$$ for the
probability simplex on $\calU$ and let
$G(\pi) = \{ z \in \Delta \mid z_{\pi(1)} \ge z_{\pi(2)} \ge \ldots \ge z_{\pi(n)} \}$ denote the elements of $\Delta$
which become sorted after applying permutation $\pi \in \Pi$.

Note that if $z, z' \in G(\pi)$, then the two bases obtained via the greedy algorithm with vertex weights $\{z_v\}$ and
$\{z'_v\}$ are the same. For each
$\pi \in \Pi$, let $B(\pi)$
denote the base obtained via the greedy algorithm; Edmond's result may then be written as
$$ \max_{B \in \mathcal{B}} \sum_{v \in B} z_v = \sum_{v \in B(\pi)} z_v \quad \text{ if } z \in G(\pi). $$
%provided that $z \in G(\pi)$.

By LP~\eqref{lp:zv}, the fairness parameter $\lambda_1$ is
\begin{equation}
\label{eq:perm2}
\min_{z \in \Delta}\; \max_{B \in \mathcal{B}} \sum_{v \in B} z_v
    = \min_{\pi \in \Pi}\, \min_{z \in G(\pi)}\, \max_{B \in \mathcal{B}} \sum_{v \in B} z_v  = \min_{\pi \in \Pi}\, \min_{z \in G(\pi)}\, \sum_{v \in B(\pi)} z_v.
\end{equation}
We claim that, for each $\pi \in \Pi$ and each non-empty $X \subseteq \calU$,
\begin{equation}\label{eq:perm}
    \min_{z \in G(\pi)} \sum_{v \in X} z_v = \min_{i \in [n]} \frac{|X \cap \pi([i])|}{i} ,
\end{equation}
where $\pi([i]) = \{ \pi(1), \ldots, \pi(i) \}$.

    This means that,
    if we are given advice on the permutation $\pi$ which sorts an optimal solution $z$ to LP~\eqref{lp:zv}, then we can find another solution $\hat{z}$ with the same value and whose non-zero
    weights are evenly distributed among the top $t$ elements of $z$ in sorted order, for some $t \in [n]$. 
%    Each of the $t$ non-zero values of $\hat{z}_i$ is either $0$ or $1/t$. 
For some optimal $t$, 
each of the $t$ non-zero values of $\hat{z}_i$ is either $0$ or $1/t$. 
To see this assuming~\ref{eq:perm}, notice that we can construct such $\hat{z}$ by setting $\hat{z}_{\pi(i)} = \frac{1}{t}$ for $1 \le i \le t$ and $\hat{z}_j = 0$ for $j \notin \pi([t])$.

To see why~\eqref{eq:perm} holds, define $d_{n} = z_{\pi(n)}$ and $d_i = z_{\pi(i)} - z_{\pi(i+1)}
\ge 0$ for $0 < i < n$.
Then $z_{\pi(i)} = \sum_{j \ge i} d_i,$ hence
$$ \sum_{v \in X} z_v = \sum_{i\in[n]} \left( \mathbbm{1}[{\pi(i) \in X}] \cdot \sum_{j \ge i} d_i
        \right) =  \sum_{j\in[n]} d_j \cdot |X \cap
\pi([j])|.$$
The conditions $\sum_{v \in \calU} z_v = 1$  and $z \in G(\pi)$ then  become $\sum_i i \cdot d_i = 1$
and $d_i \ge 0$.
Therefore
$$   \min_{z \in G(\pi)} \sum_{v \in X} z_v = \min \left\{ \sum_{i\in[n]} d_i \cdot |X \cap \pi([i])|
\;\Big\vert\;
\sum_{i\in[n]} i \cdot d_i = 1, d_i \ge 0 \right\}.
$$
%$   \min_{z \in G(\pi)} \sum_{v \in X} z_v $ is the minimum of $\sum_i d_i \cdot |X \cap \pi([i])|$ subject to $\sum_i i \cdot d_i = 1$, $d_i \ge 0$.
The quantity in the right-hand side equals the smallest ratio (among all $i$) between the coefficient of $d_i$ in
the objective function  ($|X \cap \pi([i])|$) and in the only equality constraint ($i$),
    proving~\eqref{eq:perm}. From~\eqref{eq:perm2} and~\eqref{eq:perm} the theorem follows, because the greedy algorithm satisfies
$|B(\pi) \cap \pi([i])| = \rho(\pi[i])$ for all $i$, so if $\displaystyle S^*= \argmin_{S\subseteq
    \calU}\frac{ \rho(S) }{ |S|}$, then for any $\pi \in \Pi$ we have
$$
     \min_{z \in G(\pi)} \sum_{v \in B(\pi)} z_v =\min_{i \in [n]} \frac{|B(\pi) \cap \pi([i])|}{i}
 = \min_{i \in [n]}
\frac{\rho(\pi([i]))}{|\pi([i])|} \ge \frac{\rho(S^*)}{|S^*|},
$$
and equality holds for any permutation where the elements of $S^*$ precede those of $\calU \setminus
S^*$. %This proves the formula for $\lambda_1$.
\mycomment{
    The expression for $\lambda_{|L|}$ can be obtained by observing that  $1-
    \lambda_{|L|}$ is the minimum satisfaction probability in a maxmin-fair distribution for the
    \emph{dual} matroid of $M$, that is,
     the matroid $M^*$ with ground set $L$ given by
    $M^* = \{ L \setminus S \mid S \in M \}$, whose rank function $\rho^*$ is given by
    $\rho^*(S) = |S| - (\rho(L) - \rho(L\setminus S))$.
}
\end{proof}
\begin{theorem}\label{coro:matroids2}
Let $M$ be a matroid with ground set $L$ and rank function $\rho \colon 2^L \to \mathbb{N}$.
The maximum satisfaction probability in a minmax-Pareto distribution over $M$ is
$$ \Pi(M) = \max \left\{ \frac{\rho(L) - \rho(X)}{|L \setminus X|} \mid {0 \subseteq X \subsetneq L} \right\}.$$
%$\; \lambda_1 = \min \left\{ \frac{\rho(X)}{|X|} \mid {0 \neq X \subseteq L} \right\}.$
%\begin{center} $\lambda = \min \left\{ \frac{\rho(X)}{|X|} \mid {0 \neq X \subseteq L} \right\}.$ \end{center}
\end{theorem}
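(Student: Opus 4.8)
The plan is to obtain this formula by dualizing Theorem~\ref{thm:matroids}, reusing the correspondence between minmax-Pareto and maxmin-fair distributions already established in the proof of Theorem~\ref{thm:lexi2}. Recall that $M^*$ is the dual matroid of $M$, with rank function $\rho^*(S) = |S| - (\rho(L) - \rho(L \setminus S))$, and that any minmax-Pareto $F$ is Pareto-efficient and hence supported on bases of $M$; the associated distribution $F^* = (L \setminus B)_{B \sim F}$ is then a distribution of bases of $M^*$ with $\cov{F}{u} + \cov{F^*}{u} = 1$ for every $u \in \calU$. The crucial fact, proved inside Theorem~\ref{thm:lexi2}, is that $F$ is minmax-Pareto for $M$ if and only if $F^*$ is maxmin-fair for $M^*$. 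Together with Lemma~\ref{lem:unique2} (uniqueness of minmax-Pareto satisfaction probabilities) this makes $\Pi(M)$ well-defined.

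First I would translate $\Pi(M)$ into a quantity for the dual matroid. Using $\cov{F}{u} = 1 - \cov{F^*}{u}$,
$$ \Pi(M) = \max_{u \in \calU} \cov{F}{u} = \max_{u \in \calU}\bigl(1 - \cov{F^*}{u}\bigr) = 1 - \min_{u \in \calU} \cov{F^*}{u} = 1 - \pi(M^*). $$
Next I would apply Theorem~\ref{thm:matroids} to $M^*$, obtaining $\pi(M^*) = \min\{\rho^*(X)/|X| \mid \emptyset \neq X \subseteq L\}$, and substitute the dual rank formula. Since $|X| - \rho^*(X) = \rho(L) - \rho(L \setminus X)$, this gives
$$ \Pi(M) = \max_{\emptyset \neq X \subseteq L} \frac{|X| - \rho^*(X)}{|X|} = \max_{\emptyset \neq X \subseteq L} \frac{\rho(L) - \rho(L \setminus X)}{|X|}. $$
Finally I would reparametrize by the complement $Y = L \setminus X$: as $X$ ranges over the nonempty subsets of $L$, $Y$ ranges over the proper subsets $Y \subsetneq L$ and $|X| = |L \setminus Y|$, transforming the right-hand side into $\max\{(\rho(L) - \rho(Y))/|L \setminus Y| \mid Y \subsetneq L\}$, which is exactly the claimed expression.

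The argument is a straightforward bookkeeping dualization, so I do not anticipate a genuine difficulty; the only point requiring care is the boundary handling in the reparametrization. I would verify that the excluded case $X = \emptyset$ in Theorem~\ref{thm:matroids} corresponds to the excluded $Y = L$ in the statement, while the case $X = L$ (so $Y = \emptyset$) is retained and contributes the term $\rho(L)/|L|$, the mean satisfaction probability. Checking that the strict and non-strict inclusions line up so that no admissible set is dropped or added is the one step where a slip would invalidate the formula.
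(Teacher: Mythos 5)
Your proposal is correct and takes essentially the same approach as the paper, whose entire proof of this theorem is the single line ``Apply Theorem~\ref{thm:matroids} to the dual matroid of $M$.'' You have merely made explicit the bookkeeping the paper leaves to the reader: the correspondence $F \leftrightarrow F^*$ between minmax-Pareto distributions for $M$ and maxmin-fair distributions for $M^*$ (established inside the proof of Theorem~\ref{thm:lexi2}), the identity $\Pi(M) = 1 - \pi(M^*)$, the substitution of the dual rank function $\rho^*(X) = |X| - (\rho(L) - \rho(L\setminus X))$, and the complementation $Y = L\setminus X$, all of which (including the boundary cases $X = L \leftrightarrow Y = \emptyset$ and the exclusion of $X = \emptyset \leftrightarrow Y = L$) check out.
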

\begin{proof}
Apply Theorem~\ref{thm:matroids} to the dual matroid of $M$.
\end{proof}

An extension of Theorem~\ref{thm:matroids} allows us to compute the satisfaction probability of every element of $L$.
\begin{lemma}\label{lem:matroids_ext}
Define a sequence of sets $B_1, B_2, \ldots, B_k$ iteratively by:
\begin{equation}\label{eq:rule}
 B_i \text{ is a maximal set } X \subseteq L\setminus S_{i-1} \text{ minimizing } \frac{ \rho(X \cup S_{i-1}) - \rho(S_{i-1}) }{|X|}
\text{, where $S_i = \bigcup_{j \le i} B_i$.}
\end{equation}
We stop when $S_i = L$ (which will eventually occur as the sequence $\{S_i\}$ is strictly increasing).     
Then for every $i, u \in B_i$, the satisfaction probability of $u$ in a maxmin-fair distribution $F$ is $\lambda_i = \frac{ \rho (B_i) }{ |B_i| }$.
\end{lemma}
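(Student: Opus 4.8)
The plan is to prove the statement by induction on the number of blocks $k$, reducing $M$ to the contraction $M/B_1$ at each step. The two workhorses are Theorem~\ref{thm:matroids}, which tells us the minimum satisfaction probability of a maxmin-fair distribution equals $\min_{\emptyset\neq X\subseteq L}\rho(X)/|X|$, and the lexicographic characterization of maxmin-fairness (Theorem~\ref{thm:lexi}). Throughout I would read $\rho(B_i)$ as the rank of $B_i$ in the contracted matroid $M/S_{i-1}$, i.e.\ $\rho(S_i)-\rho(S_{i-1})$, which is exactly the numerator minimized by rule~\eqref{eq:rule}; this agrees with the stated formula when $B_i$ is skew to $S_{i-1}$ and, in general, matches the bipartite decomposition of Theorem~\ref{thm:decomp}(c).

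First I would settle the base block. By the same submodularity argument used in Lemma~\ref{lem:gamma_union} (now applied to the submodular function $\rho$ instead of $|\Gamma(\cdot)|$), the minimizers of $\rho(X)/|X|$ are closed under union, so the maximal minimizer $B_1$ is well defined and $\lambda_1=\rho(B_1)/|B_1|=\pi(M)$ by Theorem~\ref{thm:matroids}. Fix any maxmin-fair $F$; it is supported on bases, and for every base $X$ the set $X\cap B_1$ is independent, so $|X\cap B_1|\le\rho(B_1)$. Taking expectations gives $\sum_{u\in B_1}\cov{F}{u}=\expect_{X\sim F}[|X\cap B_1|]\le\rho(B_1)=\lambda_1|B_1|$, while maxmin-fairness forces $\cov{F}{u}\ge\pi(M)=\lambda_1$ for every $u$. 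These two bounds can hold simultaneously only with equality, so $\cov{F}{u}=\lambda_1$ for all $u\in B_1$ and, moreover, $|X\cap B_1|=\rho(B_1)$ almost surely: every base in the support of $F$ contains a basis of $M\rst{B_1}$.

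Next I would set up the recursion. Because each base $X$ in the support meets $B_1$ in a basis of $M\rst{B_1}$, the map $X\mapsto X\setminus B_1$ sends it to a basis of $M/B_1$, inducing a distribution $F'$ on $M/B_1$ with $\cov{F'}{u}=\cov{F}{u}$ for all $u\in L\setminus B_1$. The heart of the argument, and the step I expect to be the main obstacle, is showing $F'$ is maxmin-fair for $M/B_1$, which I would do via Theorem~\ref{thm:lexi}. The key preliminary fact is $\pi(M/B_1)\ge\lambda_1$: if some $X\subseteq L\setminus B_1$ had $\rho(X\cup B_1)-\rho(B_1)<\lambda_1|X|$, then $B_1\cup X$ would have ratio below $\lambda_1$, contradicting minimality of $\lambda_1$. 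Hence every satisfaction value on $M/B_1$ is $\ge\lambda_1$, so for any distribution $G$ on $M/B_1$, combining $G$ independently with a uniform-coverage distribution $H$ on $M\rst{B_1}$ (which exists because $M\rst{B_1}$ is itself a single block with $\pi=\lambda_1$, by the base-block argument above) yields a distribution on $M$ with sorted vector $(\lambda_1,\dots,\lambda_1,\scov{G})$; here I use the standard fact that a basis of $M\rst{B_1}$ together with a basis of $M/B_1$ is a basis of $M$. Thus a competitor $G\succ\scov{F'}$ on $M/B_1$ would lift to a distribution beating $F$ lexicographically, contradicting the maxmin-fairness of $F$; so no such $G$ exists and $F'$ is maxmin-fair.

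Finally I would close the induction. Contraction is transitive, $(M/B_1)/(B_2\cup\cdots\cup B_{j})=M/S_{j}$, so rule~\eqref{eq:rule} applied to $M/B_1$ produces exactly the blocks $B_2,\dots,B_k$ with the same marginal ratios. Applying the induction hypothesis to $M/B_1$ gives $\cov{F'}{u}=\lambda_i$ for $u\in B_i$ with $i\ge 2$, and since $\cov{F}{u}=\cov{F'}{u}$ off $B_1$, this establishes the claim for every block. The remaining verifications are the cited matroid identities (the rank function of a contraction, the basis-union fact, and uniqueness up to satisfaction of the uniform-coverage distribution on a single-block matroid, which follows from Lemma~\ref{lem:unique}), none of which should present real difficulty.
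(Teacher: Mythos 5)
Your proposal is correct and takes essentially the same route as the paper's own proof: the identical counting argument (Theorem~\ref{thm:matroids} plus linearity of expectation) pins $F$ to $\lambda_1$ on $B_1$ and forces every support base to meet $B_1$ in a base of $M\rst{B_1}$, the projection onto the contraction $M/B_1$ is shown to be maxmin-fair via the lexicographic characterization (Theorem~\ref{thm:lexi}), and induction over the contracted matroid finishes the job, the only cosmetic difference being bookkeeping --- the paper compares $F$ against the combined distribution $[D_1 \cup D_2]$ and closes with Lemma~\ref{lem:unique}, whereas you refute an arbitrary lexicographic competitor $G$ directly. One small imprecision worth noting: your intermediate claim that the lift of \emph{any} distribution $G$ on $M/B_1$ has sorted vector $(\lambda_1,\dots,\lambda_1,\scov{G})$ fails for general $G$ (whose entries may dip below $\lambda_1$), but it does hold whenever $\scov{G}\succ\scov{F'}$ --- since then all entries of $\scov{G}$ are at least $\min_u \cov{F'}{u} \ge \lambda_1$ --- and that is the only case your contradiction uses, so the argument stands.
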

(Maximality of each $B_i$ is not required for the conclusion to hold, but its inclusion guarantees uniqueness of the sets thus defined, owing to the submodularity of $\rho$.)
\begin{proof}
We reason by induction on the number $k$ of sets.
   First, observe that Theorem~\ref{thm:matroids} implies $F[u] \ge \lambda_1$ for all $u \in L$. As the expected number of satisfied elements within $B_1$, which obviously
           cannot exceed $\rho(B_1) = \lambda_1 |B_1|$,
           is equal to
\begin{equation}\label{eq:exp_covered}
\expect_{A \sim F} [|A \cap B_1|] =  \sum_{u\in B_1} F[u] \ge \lambda_1 |B_1|
\end{equation}
by linearity of expectation, the equality $F[u] = \lambda_1$ must hold for all $u \in B_1$.
If $k = 1$, this shows the result. 

If $k > 1$, let $D_1$ be a maxmin-fair distribution for the restriction $M \rst{B_1}$ of $M$ to $B_1$
and let $D_2$ be a maxmin-fair distribution for the contraction $M/(L\setminus B_1)$ of $M$ to the remaining elements $L \setminus B_1$.
Since restriction does not change the rank
function within $B_1$, $D_1$ satisfies $D_1[u] = F[u] = \lambda_1$ for all $u \in B_1$. The rank function of the contraction $M/(L\setminus B_1)$ is  $\rho_{M/B_1}(X) = \rho(X \cup B_1) -
\rho(B_1)$, so by applying rule~\eqref{eq:rule} iteratively we obtain the same sequence of sets $B_2, \ldots, B_k$ (excluding $B_1$), and by the induction hypothesis
    $D_2$ satisfies $D_2[u] = \lambda_i$ for all $i\ge 2, u \in B_i$. 
    It remains  to  be shown that $D_2[u] = F[u]$ for all $u \notin B_1$.

    Denote by $[D_1 \cup D_2]$
denote 
the distribution of
    $ (A \cup B \mid A \sim D_1, B \sim D_2)$. This is a distribution over independent sets of $M$
    by the following property of matroid contractions (see~\cite{lawler_book}):
%\begin{center} for any base $B$ of $M\rst{B_1}$, a subset $I \subseteq L \setminus B_1$ is independent in $M/B_1$\end{center}
%\begin{equation}\label{eq:contraction}
%\text{if and only if $I \cup B$ is independent in $M$.}
%\end{equation}
\begin{align}\label{eq:contraction}
\text{for any base $B$ of $M\rst{B_1}$, }& \text{a subset $I \subseteq L \setminus B_1$ is independent in $M/B_1$}\nonumber \\
&\text{if and only if $I \cup B_1$ is independent in $M$.}
\end{align} 
On the other hand, for any set  $A$ in the support of a maxmin-fair distribution $F$, the set $A \cap B_1$ must be a base of
    $M\rst{B_1}$ (or else Equation~\eqref{eq:exp_covered} would fail). Let $F_2$ denote the distribution $(A \setminus B_1 \mid A \sim F)$; by~\eqref{eq:contraction}, $F_2$ is
    a distribution over elements of the contraction $M/B_1$.

To complete the proof, observe that
$F\uparrow \succeq [D_1 \cup D_2]\uparrow$ by Theorem~\ref{thm:lexi} because $F$ is maxmin-fair.
As $\cov{F}{u} = \cov{D_1}{u} <
\cov{F}{v}, \cov{D}{v}$ for $u \in B_1, v \in L\setminus B_1$, the fact that $F\uparrow \succeq [D_1
\cup D_2]\uparrow$ implies $F_2\uparrow \succeq D_2\uparrow$, and the maxmin-fairness of $D_2$ allows us to deduce that 
$F_2\uparrow  = D_2\uparrow$. Hence, by
Lemma~\ref{lem:unique}, for all $v \notin B_1$ we have
$\cov{F}{v} = \cov{F_2}{v} =
\cov{D_2}{v}$.

\end{proof}

    Similarly, we have the following for minmax-fairness.  
    \begin{lemma}\label{lem:matroids_ext2}
Define a sequence of sets $B'_1, \ldots, B'_{k'}$ iteratively by:
\begin{equation}\label{eq:rule2}
 B'_i \text{ is a maximal set } X \subseteq L\setminus S_i \text{ maximizing } \frac{ \rho(S'_{i}) - \rho(S'_{i}\setminus X) }{|X|}
\text{, where $S'_i = L\setminus \bigcup_{j < i} B'_i$.}
\end{equation}
We stop when $S'_k = \emptyset$.% (which will eventually occur as the sequence $\{S_i\}$ is strictly decreasing).     
Then for every $i, u \in B'_i$, the satisfaction probability of $u$ in a minmax-Pareto distribution $F$ is $\lambda'_i = \frac{ \rho (B'_i) }{ |B'_i| }$.
\end{lemma}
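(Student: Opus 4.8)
The plan is to derive Lemma~\ref{lem:matroids_ext2} from its maxmin counterpart, Lemma~\ref{lem:matroids_ext}, by passing to the dual matroid $M^*$. Recall from the proof of Theorem~\ref{thm:lexi2} that a distribution $F$ is minmax-Pareto for $M$ if and only if the distribution $F^* = (L\setminus X \mid X \sim F)$ of complements is maxmin-fair for $M^*$, and that $\cov{F}{u} + \cov{F^*}{u} = 1$ for every $u \in L$. Hence it suffices to run the maxmin decomposition of Lemma~\ref{lem:matroids_ext} on $M^*$, read off the satisfaction probabilities of $F^*$, and translate everything back to $F$ through the relation $\cov{F}{u} = 1 - \cov{F^*}{u}$.

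First I would apply Lemma~\ref{lem:matroids_ext} to $M^*$, obtaining a chain of sets $C_1, \ldots, C_{k'}$ with $T_i := \bigcup_{j\le i} C_j$ and $T_0 = \emptyset$, where $C_i$ is the maximal set $X \subseteq L \setminus T_{i-1}$ minimizing $(\rho^*(X\cup T_{i-1}) - \rho^*(T_{i-1}))/|X|$, and where $\cov{F^*}{u}$ equals this minimal ratio for $u \in C_i$. Writing $S'_i := L \setminus T_{i-1}$ (so that the constraint reads $X \subseteq S'_i$ and $L \setminus (X\cup T_{i-1}) = S'_i \setminus X$) and substituting the dual rank formula $\rho^*(S) = |S| - (\rho(L) - \rho(L\setminus S))$, a short calculation — using that $X$ and $T_{i-1}$ are disjoint — yields for every $X \subseteq S'_i$ the key identity
\[
\frac{\rho^*(X\cup T_{i-1}) - \rho^*(T_{i-1})}{|X|} \;=\; 1 - \frac{\rho(S'_i) - \rho(S'_i\setminus X)}{|X|}.
\]

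The decreasing affine map $t \mapsto 1 - t$ converts the problem of \emph{minimizing} the left-hand ratio into that of \emph{maximizing} $(\rho(S'_i) - \rho(S'_i\setminus X))/|X|$, which is exactly the quantity in rule~\eqref{eq:rule2}. Crucially this map is a strictly decreasing bijection on the objective value, so it leaves the \emph{set} of optimizers (and hence its maximal element with respect to inclusion) untouched. Therefore $C_i = B'_i$, and consequently $S'_i = L\setminus\bigcup_{j<i}B'_j$, matching the definitions in the statement. Evaluating the identity at $X = B'_i$ gives $\cov{F^*}{u} = 1 - (\rho(S'_i) - \rho(S'_i\setminus B'_i))/|B'_i|$ for $u \in B'_i$, whence $\cov{F}{u} = 1 - \cov{F^*}{u} = (\rho(S'_i) - \rho(S'_i\setminus B'_i))/|B'_i| = \lambda'_i$, which is precisely $\rho(B'_i)/|B'_i|$ read in the appropriate matroid minor (the contraction $(M\rst{S'_i})/(S'_i\setminus B'_i)$, whose rank of $B'_i$ is $\rho(S'_i)-\rho(S'_i\setminus B'_i)$), exactly as in Lemma~\ref{lem:matroids_ext}.

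The main obstacle I anticipate is bookkeeping rather than conceptual: getting the dual-rank algebra and the index shift $S'_i = L\setminus T_{i-1}$ exactly right, and verifying that ``maximal minimizer for $M^*$'' corresponds to ``maximal maximizer for $M$'' — which rests entirely on the order-reversing bijection $t\mapsto 1-t$ preserving argmin/argmax sets and inclusion order among optimal sets. I would also confirm that the termination conditions agree, i.e. that $T_{k'} = L$ for the dual decomposition is the same event as $S'_{k'+1} = \emptyset$, so that the two processes stop at corresponding steps.
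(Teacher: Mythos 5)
Your proof is correct, but it takes a genuinely different route from the paper's. The paper proves Lemma~\ref{lem:matroids_ext2} directly on the primal side, by induction on $k$: the base case combines Theorem~\ref{coro:matroids2} (which caps every satisfaction probability at $\lambda'_1$) with a Pareto-efficiency argument --- if $\expect_{A\sim F}[|A\cap B'_1|]$ fell below $\lambda'_1|B'_1|$, one would get the contradiction $\rho(L) = \expect_{A \sim F}[|A|] < \lambda'_1|B'_1| + \rho(L\setminus B'_1) = \rho(L)$ --- forcing $\cov{F}{u} = \lambda'_1$ on $B'_1$; the inductive step then repeats the restriction/contraction machinery of Lemma~\ref{lem:matroids_ext}, substituting Theorem~\ref{thm:lexi2} and Lemma~\ref{lem:unique2} for Theorem~\ref{thm:lexi} and Lemma~\ref{lem:unique}. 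You instead reduce the entire lemma to Lemma~\ref{lem:matroids_ext} in one shot via matroid duality, using the correspondence (from the proof of Theorem~\ref{thm:lexi2}) between minmax-Pareto distributions for $M$ and maxmin-fair distributions for $M^*$, together with the rank identity $\rho^*(X\cup T_{i-1}) - \rho^*(T_{i-1}) = |X| - \bigl(\rho(S'_i) - \rho(S'_i\setminus X)\bigr)$, which turns the dual minimization rule into rule~\eqref{eq:rule2} and, being an order-reversing affine change of the objective, preserves the set of optimizers and hence the maximal one. Your algebra checks out (disjointness of $X$ and $T_{i-1}$ is exactly what is needed), the probabilities pull back correctly through $\cov{F}{u} = 1 - \cov{F^*}{u}$, and the termination conditions agree since $S'_{i+1} = L\setminus T_i$. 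As for what each approach buys: the paper's route is self-contained and mirrors the maxmin proof line by line, at the cost of redoing an induction; yours is shorter and makes the dual symmetry explicit, at the cost of leaning on the duality facts ($\rho^*$, base complementation, the lexicographic correspondence), all of which the paper does establish. Notably, the paper itself uses your style of argument elsewhere --- Theorem~\ref{coro:matroids2} is proved by applying Theorem~\ref{thm:matroids} to the dual matroid, and part (b) of Theorem~\ref{thm:decomp} is obtained ``by applying the fair decomposition to the dual of the matching matroid'' --- so your reduction is very much in the paper's spirit, just applied to a lemma the authors chose to prove directly.
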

\begin{proof}
%Completely analogous to the proof of Lemma~\ref{lem:matroids_ext}, except that we use Theorem~\ref{coro:matroids2}, Theorem~\ref{thm:lexi2} and Lemma~\ref{lem:unique2} in place of Theorem~\ref{thm:matroids}, Theorem~\ref{thm:lexi} and Lemma~\ref{lem:unique}.
We argue by induction on $k$.
   First, observe that Theorem~\ref{coro:matroids2} implies $F[u] \le \lambda_1$ for all $u \in L$. 
%            By construction $\rho(L) = \rho(L\setminus B_1) + \lambda_1 |B_1|$, so 
   The expected number of satisfied elements within $B_1$ cannot be below
           $\lambda_1 |B_1|$ for any Pareto-efficient distribution $F$, otherwise 
           we would have the contradiction
$$  \rho(L) = \expect_{A \sim F} |A| =  \expect_{A \sim F} [|A \cap B_1|] + \expect_{A \sim F} [|A \setminus B_1|] 
                                     <  \lambda_1 |B_1| +  \rho(L\setminus B_1) = \rho(L) .
                                     $$
On the other hand,
\begin{equation}\label{eq:exp_covered2}
\expect_{A \sim F} [|A \cap B_1|] =  \sum_{u\in B_1} F[u] \le \lambda_1 |B_1| = \rho(B_1)
\end{equation}
by linearity of expectation, so the equality $F[u] = \lambda_1$ must hold for all $u \in B_1$.
If $k = 1$, this shows the result. The rest of the proof is completely analogous to that of  
Lemma~\ref{lem:matroids_ext}, except that we use Theorem~\ref{thm:lexi2} and Lemma~\ref{lem:unique2} in place of Theorem~\ref{thm:lexi} and Lemma~\ref{lem:unique}.
\end{proof}

%\thmmaxminsocial*
\begin{proof}[\textbf{Proof of Theorem~\ref{thm:maxminsocial}}]
It suffices to prove the equivalence
(1) $\Leftrightarrow$ (2). Inded, if it holds, 
then
    a maxmin-fair distribution simultaneously maximizes the minimum satisfaction probability and
    minimizes the maximum satisfaction probability (among Pareto-efficient distributions), hence it also minimizes the largest difference between two
    satisfaction probabilities. An easy inductive argument (omitted) shows that
the equivalence (1) $\Leftrightarrow$ (3) then follows.

To show that (1) $\Leftrightarrow$ (2), consider the sequence $B_1, \ldots, B_k$ from Lemma~\ref{lem:matroids_ext} and the sequence 
$B'_1, \ldots, B'_{k'}$ from Lemma~\ref{lem:matroids_ext2}. It suffices to show that they are the same sequence in reverse: $k = k'$ and $B_i = B'_{k+1-i}$ for all $i$.
We proceed from  top to bottom, showing
   by induction that for each value of $i$ from 1 to $k$, $B'_i = B_{k+1-i}$.
Consider any $Z \in S_{i+1} = S_i \cup B_i$, which may be split into $Z = X \cup Y$ where
$X \subseteq S_i = \bigcup_{j < i} B_i$ and $Y \in B_i$. 
Then we have
$$ \rho(Z) - \rho(S_i) \ge \rho(S_i \cup Y) - \rho(S_i) \ge \lambda_i |Y|, $$
    where the first inequality is by submodularity of $\rho$, and the second by construction of $B_i$.
    On the other hand,
$$ \rho(S_{i+1}) - \rho(S_i) = \lambda_i |B_i|, $$
hence
%hence for any $Z \in S_{i+1}$ we have
$$\frac{ \rho(S_{i+1}) - \rho(Z) }{|S_{i+1} \setminus Z|} \ge
\frac{\lambda_i |B_i| + \rho(S_i) - (\lambda_i |Y| + \rho(S_i))  }{|B_i \setminus Y|} = \lambda_i.
%\frac{\lambda_i { |B_i| - |Y| } }{|B_i - Y}
$$
Notice that equality holds when $Z = B_i$, so $B_i$ maximizes 
$\frac{ \rho(S_{i+1}) - \rho(Z) }{|S_{i+1} \setminus Z|}$ over all $Z \subseteq S_{i+1}$.
Using the definition of $B'_{i'}$, this means that $B'_{i'} = B'_{i}$, as we wished to show.
%By Theorem~\ref{coro:matroids2}, $\lambda_i$ is the satisfaction probability of a minmax-Pareto distribution for $M/S_{i+1}$.
\end{proof}

\mycomment{
   First observe that whether $D$ is maxmin (resp., minmax) fair for a given problem depends only on the satisfaction
probabilities given by $D$, by Theorem~\ref{thm:lexi} (resp., Theorem~\ref{thm:lexi2}).
Therefore, by Lemmas~\ref{lem:unique} and~\ref{lem:unique2}, showing the existence for every problem of a distribution which is both maxmin-fair and
minmax-Pareto suffices to show
that (1) $\Leftrightarrow$ (2).
Furthermore, if said equivalence holds,
then
    a maxmin-fair distribution simultaneously maximizes the minimum satisfaction probability and
    minimizes the maximum satisfaction probability (among Pareto-efficient distributions), hence it also minimizes the largest difference in
    satisfaction probabilities. An easy inductive argument shows then that
the equivalence (1) $\Leftrightarrow$ (3) follows from the equivalence (1) $\Leftrightarrow$ (2),
    which we prove below.

Let $M$ be a matroid.
We argue by induction on the size of its ground set $L$. 
Let $\lambda = \min \left\{ \frac{\rho(X)}{|X|} \mid 0 \neq X \subseteq L\right\}. $
By Theorem~\ref{thm:matroids}, there is a maxmin-fair distribution over $M$ with minimum satisfaction probability
$\lambda$.
We distinguish two cases:
\begin{enumerate}[(a)]
    \item $\rho(L) = \lambda |L|$.
    Observe that any Pareto-efficient distribution $D$ is supported on the bases of  $M$, so
the expected number of
    satisfied elements under $D$ is
    $\sum_{u \in L} \cov{D}{u} = \rho(L)$.
    In particular, if $D$ is maxmin-fair then $\cov{D}{u} = \lambda$ for every $u \in L$
    because if the strict inequality $\cov{D}{u} > \lambda$ held for some $u \in L$, we would have
$\sum_{u \in L} \cov{D}{u}>
\lambda |L| = \rho(L)$.
    %(because $|X| \le \rho(L)$ for all $X \in M$).
    But then $D$ must also be minmax-Pareto, for otherwise there would be a Pareto-efficient
    distribution $D'$ with $\sum_{u \in L} \cov{D'}{u} < \sum_{u \in L} \cov{D}{u} =\rho(L)$.
%   Therefore~\eqref{eq:fair_dual} holds.% in this case.

    \item $\rho(L) > \lambda |L|$. Then there is a maximal proper subset $S$ of $L$ such that
      $\rho(S) = \lambda |S|$. %; in fact $S$ is the union of all sets $X$ such that $\rho(X) = \lambda |X|$.
      Consider the \emph{restriction} of $M$ to the set $S$, i.e., the matroid
      $M\rst{S}$ with ground set $S$ and independent sets $M\rst{S} = \{ I \in M \mid I \subseteq S\}$.
      By the reasoning of case (a), a maxmin-fair distribution $D_1$ over $M\rst{S}$
      satisfies $\cov{D_1}{u} = \lambda$ for all $u \in S$.
      By the maximality of $S$, for any $\emptyset \neq X \subseteq L \setminus S$ it holds that
      \begin{equation}\label{eq:new_rho}
      \rho(S \cup X) > \lambda |S \cup X|= \rho(S) + \lambda |X|.
      \end{equation}
%      By the submodularity of $\rho$, we have $$ \frac{ \rho(S \cup X) - \rho(S) }{|X|} \ge \frac{\rho(X) - \rho(\emptyset)}{|X|} > \lambda$$ for all $X \in L\setminus S$.
      Now consider the \emph{contraction} of $M$ to the set $L
      \setminus S$, i.e., the matroid $M/S$ with ground set $L \setminus S$ and rank function
      $$\rho_{M/S}(X) = \rho(S \cup  X) - \rho(S).$$
      Since $\rho_{M/S}(X) > \lambda |X|$ by~\eqref{eq:new_rho}, Theorem~\ref{thm:matroids} gives a
      maxmin-fair distribution $D_2$ over $M/S$
      such that $\cov{D_2}{v} > \lambda$ for all $v \in L\setminus S$.

%      Now fix any distribution of pairs $(B_1, B_2)$ with marginals $D_1$ and $D_2$. The %      distribution $D_3$ of the random variable $B_1 \cup B_2$ is supported on $M$ Let $D_3$ be the %      distribution of the random variable $B_1 \cup B_2$ when $B_1$ and $B_2$ are
		The contracted matroid $M/S$ satisfies the following (see~\cite{lawler_book}):
%\begin{center} for any base $B$ of $S$, a subset $I \subseteq L \setminus S$ is independent in $M/S$ if and only if $S \cup B$ is independent in $M$.  \eqnum\label{eq:contraction} \end{center}
%\fullversion{ \pagebreak }
\begin{center} for any base $B$ of $M\rst{S}$, a subset $I \subseteq L \setminus S$ is independent in $M/S$\end{center}
\begin{equation}\label{eq:contraction}
\text{if and only if $I \cup B$ is independent in $M$.}
\end{equation}

Given distributions $X$ over $M\rst{S}$ and $Y$ over $M/S$, denote by $[X \cup Y]$ the distribution of
    $ (A \cup B \mid A \sim X, B \sim Y)$. This is a distribution over independent sets of $M$
    by~\eqref{eq:contraction}.

It follows that if $X$ is a distribution for $M$ such that  $\min_{u \in L} \cov{F}{u} \ge \lambda$,
     then there is a distribution of the form $[X_1 \cup X_2]$ with the same satisfaction
     probabilities as $X$ for all $u \in L$. Indeed, $\min_{u \in L} \cov{F}{u} \ge \lambda$
implies
$\cov{F}{u} = \lambda$ for all $u \in S$ and
$\expect_{A \sim F} [|A \cap S|] = \lambda |S| = \rho(S).$ Therefore for any set  $A$ is in the support of $F$, the set $A \cap S$ must be a base of
$M\rst{S}$ (otherwise  $\expect_{A \sim F} [|A \cap S|]$ would have to be strictly less than
$\rho(S)$).
Then, using~\eqref{eq:contraction}, we conclude that $A \setminus S$ is independent in $M/S$.
Thus we can take $X_1 = D_1$ %to be the distribution of $A \cap S$ when $A$ is drawn from $X$,
     and $X_2$ to
be the distribution of $A \setminus S$ when $A$ is drawn from $X$; then $\cov{[X_1 \cup X_2]}{u} =
\cov{X}{u}$ for all $u \in L$.

In particular, if $F$ is maxmin-fair, then
$F\uparrow \succeq [D_1 \cup D_2]\uparrow$ by Theorem~\ref{thm:lexi}, hence $\min_{u \in \calU}
\cov{F}{u} \ge \lambda$ and we may assume $F$ is of the form $[F_1 \cup F_2]$. Since $\cov{F}{u} = \cov{D_1}{u} <
\cov{F}{v}, \cov{D}{v}$ for $u \in S, v \in L\setminus S$, the fact that $F\uparrow \succeq [D_1
\cup D_2]\uparrow$ and the maxmin-fairness of $D_2$ allow us to deduce that $\cov{F}{v} =
\cov{D_2}{v}$ for all $v \in L \setminus S$.
Hence the distribution $[D_1 \cup D_2]$ has the same satisfaction probabilities as $F$ and, by
Lemma~\ref{lem:unique}, $[D_1 \cup D_2]$ must be
maxmin-fair too.

To conclude the proof it only remains to be shown that
that $[D_1 \cup D_2]$ is also minmax-Pareto.
    Since $|S|, |L \setminus S| < |L|$, the induction hypothesis applies and we conclude that $D_1$
    (resp., $D_2$) is minmax-Pareto for $M\rst{S}$ (resp., $M/S$).
If $F'$ is minmax-Pareto for $M$, then it is easy to see that
$\min_{u \in L} \cov{F}{u} \ge \lambda$.
 Thus we may assume $F$ is of the form $[F_1 \cup F_2]$. Since $\cov{F}{u} = \cov{D_1}{u} <
\cov{F}{v}, \cov{D}{v}$ for $u \in S, v \in L\setminus S$, the fact that $F'\downarrow \preceq [D_1
\cup D_2]\downarrow$ and the minmax-Pareto efficiency of $D_2$ allow us to deduce that $\cov{F}{v} = \cov{D_2}{v}$ for all $v \in L \setminus S$.
Hence the distribution $[D_1 \cup D_2]$ has the same satisfaction probabilities as $F$. By
Lemma~\ref{lem:unique}, it must be
minmax-Pareto too.

    \mycomment{
Note that if $F$ is a maxmin-fair distribution for $M$  then $F\uparrow \succeq [D_1 \cup D_2]\uparrow$ by Theorem~\ref{thm:lexi}, hence $\min_{u \in \calU} \cov{F}{u} \ge \lambda$, which
implies $\cov{F}{u} = \lambda$ for all $u \in S$ and
$\expect_{A \sim F} [|A \cap S|] = \lambda |S| = \rho(S).$ Therefore for any set  $A$ is in the support of $F$, the set $A \cap S$ must be a base of
$M\rst{S}$ (otherwise  $\expect_{A \sim F} [|A \cap S|]$ would have to be strictly less than
$\rho(S)$).
Then, using~\eqref{eq:contraction}, we conclude that $A \setminus S$ is independent in $M/S$.
Since $F\uparrow \succeq [D_1 \cup D_2]\uparrow$ and $\cov{F}{u} = \cov{D_1}{u} <
\cov{F}{v}, \cov{D}{v}$ for $u \in S, v \in L\setminus S$, the maxmin-fairness of $D_2$ allows us to deduce that $\cov{F}{v} =
\cov{D_2}{v}$ for all $v \in L \setminus S$.
Hence the distribution $[D_1 \cup D_2]$ has the same satisfaction probabilities as $F$. By
Lemma~\ref{lem:unique}, $[D_1 \cup D_2]$ is
maxmin-fair as well.
    }

\end{enumerate}
\end{proof}
}

\mycomment{%sigmod
\begin{corollary}\label{col:max_param}
The maximum satisfaction probability in a maxmin-fair distribution over $M$ is
    $$\lambda_{|L|} =\max \left\{ \frac{\rho(L) - \rho(L\setminus X)}{|X|} \bigm\vert {0 \neq X \subseteq
        L} \right\} .$$
    \end{corollary}
    \begin{proof}
By Theorem~\ref{thm:maxminsocial}, it suffices to determine the maximum satisfaction probability in a
    minmax-Pareto distribution. By definition this is the smallest number $\mu$ such that there is a
    distribution over bases in which each element is covered with probability $\le \mu$;
this is equivalent to saying that there is a distribution over bases of the dual matroid in
    which element is covered with probability $\ge 1-\mu$.
    The rank function for the dual matroid is given by
$\rho^*(X) = |X| - (\rho(L) - \rho(L\setminus X))$; plugging in this expression into Theorem~\ref{thm:matroids}
            yields the result.
    \end{proof}
}

\section{Proofs for Section~\ref{sec:problem}: a polynomial-time algorithm for maxmin-fair matching}
\label{sec:proofs_problem}
\mycomment{
    %\thmsimplify* \begin{proof}
            \begin{theorem}[Berge~\cite{berge_matchings}, Edmonds\cite{edmonds_algo}]\label{thm:berge}
            For any matchable set $S\subseteq V$, there is a matchable set $T \supseteq S$ with $|T| = \rho(V)$.
            Moreover, given $S$, a matching covering $T$ may be found in polynomial time.
            \end{theorem}
    \begin{proof}[\textbf{Proof of Theorem~\ref{thm:simplify}}]
    Let $\mathcal A$ be a maxmin-fair algorithm for one-sided bipartite matching problem described.
    Given a graph $G$ and a user set $\calU$, we
    \begin{enumerate}[(1)]
    \item construct $H$ as in Theorem~\ref{thm:transversal};
    \item remove from $L_H$ the elements of $V(G)\setminus \calU$;
    \item find an arbitrary maximum matching $M$ and remove from $R_H$ the elements not covered
    by $M$, using any polynomial-time maximum matching algorithm.
    \item find a fair one-sided bipartite matching by either (1) using $\mathcal A$ on the resulting
    graph if $|M| < |L_H|$ or (2) returning $M$ if $|M| = |L_H|$;
    \item given the solution $S$ found at the previous step, return a matching in $G$ covering the same vertices
    as $S$ using Theorem~\ref{thm:berge}.
    \end{enumerate}
    (Step (5) is technically redundant as we identify solutions with sets of matchable users, but
     is included for clarity.)

    It is plain to see that the resulting distribution is maxmin-fair for the general problem if and only if $\mathcal A$
    is maxmin-fair for the one-sided problem.
    All steps run in polynomial time, possibly excluding the call to $\mathcal A$ itself.
    \end{proof}
}

\mycomment{
 \begin{proof}
     Define the modular function $\alpha\colon L\to \mathbb{R}$ by $\alpha(S) = \sum_{v \in S} \alpha_v$.
Note that, if $\rho(L) - \rho(S) \le \alpha(L \setminus S)$, then
$|\Gamma(S)| \ge \rho(S) \ge \rho(L) - \alpha( L \setminus S)$, so
$|\Gamma(L)| - |\Gamma(S)| \le |\Gamma(L)| - \rho(L) + \alpha(L \setminus S) = \alpha(L \setminus
S).$

%     By Corollary~\ref{max_param}, it suffices to show that $$ \max \frac{|\Gamma(L) - \Gamma(S)|}{|L \setminus S|}  = \max \frac{\rho(L) - \rho(S)}{|L \setminus S|} .$$

% This can be shown by applying We need to show that $$ \max \frac{|\Gamma(L) - \Gamma(S)|}{|L \setminus S|}  = \max \frac{\rho(L) - \rho(S)}{|L \setminus S|} .$$
%Note that, if $\rho(L) - \rho(S) \le \mu |L \setminus S|$, then $|\Gamma(S)| \ge \rho(S) \ge \rho(L) - \mu |L \setminus S|$, so $|\Gamma(L)| - |\Gamma(S)| \le |\Gamma(L)| - \rho(L) + \mu |L \setminus S| = \mu |L \setminus S|.$ This shows $$ \max \frac{|\Gamma(L) - \Gamma(S)|}{|L \setminus S|}  \le \max \frac{\rho(L) - \rho(S)}{|L \setminus S|} .$$

 \end{proof}
}

\begin{proof}[\textbf{Proof of Corollary~\ref{thm:lambda2}}]
 From Theorem~\ref{thm:hall} it follows that, when $\rho(L) = |R|$,  the rank function of a bipartite matching problem is given by
         \begin{equation}\label{eq:hall_rank}
         \rho(S) = \min_{T \subseteq S} |\Gamma(T)| + |S| - |T|.
         \end{equation}
         Define
         \begin{equation*}
      \alpha = \max_{S\subsetneq L} \frac{|\Gamma(L)| - |\Gamma(S)|}{|L \setminus S|}; \qquad 
      \beta  = \max_{S\subsetneq L} \frac{\rho(L) - \rho(S)}{|L \setminus S|} .
         \end{equation*}
     In view of Corollary~\ref{coro:matroids2} and the equivalence between maxmin and minmax fairness for matroid problems (Theorem~\ref{thm:maxminsocial}),
         it suffices to show that $\alpha = \beta$.
     Since $\rho(L) = |\Gamma(L)|$ and $\rho(S) \le |\Gamma(S)|$ for all $S$, inequality $\alpha \le \beta$ is immediate. To show that $\beta \le \alpha$, it suffices
     to prove that
$ \rho(L) - \rho(S) - \alpha |L \setminus S| \le 0$ for
         all $S \subseteq L$. This follows from~\eqref{eq:hall_rank} and the fact that $\alpha\le\beta\le 1$:
                \begin{align*}
\rho(L) - \rho(S) - \alpha|L \setminus S| &=   \max_{T \subseteq S}{\; \Gamma(L) - \Gamma(T) - |S| + |T|- \alpha|L \setminus S|}\\
                             &\le  \max_{T \subseteq S}{\; \alpha|L \setminus T| - |S \setminus T|- \alpha|L \setminus S|}\\
%                             &=  \max_{T \subseteq S}{\; (\alpha|L \setminus S| + \alpha|S \setminus T|) - |S \setminus T|- \alpha|L \setminus S|}\\
                             &\le 0.
%                             &\le  \max_{T \subseteq S}{\; \alpha|L \setminus S| +0 - \alpha(L \setminus S)}\\ &=  0.
             \end{align*}
\end{proof}

\mycomment{
%\lemcombine*
\begin{proof}[\textbf{Proof of Lemma~\ref{lem:combine}}]
The proof is only sketched as it closely follows the argument in case (b) in the proof of
Theorem~\ref{thm:maxminsocial}; in fact the notions of restriction and contraction just defined
coincide with those for matroids.
For any maximum matching $B$ in $G\rst{X}$, a subset $I \subseteq L \setminus X$ is matchable in
$G/S$ if and only if $B \cup I$ is matchable in $G$.
% If $S_1 \sim D_1$ and $S_2 \sim D_2$, then $S_1 \cup S_2$ is a matching in $G$ since $\Gamma(S_1) \cap \Gamma(S_2) = \emptyset$ by definition of $G/X$.
For any distribution $X$ over $L$, there exists a
distribution of the form $[X_1 \cup X_2]$
    with the same satisfaction
    probabilities
    for all $u \in L$.
In particular some maxmin-fair distribution $F$ for $G$ can be written in this form: $F = [F_1 \cup
F_2]$, where $F_1$ is a distribution over $G\rst{X}$ and $F_2$ is a distribution over $G/X$. As in
the proof of Theorem~\ref{thm:maxminsocial}, it follows from the fair isolation of $X$ that $F_1$ and $F_2$ are maxmin-fair,
    implying (b) by the uniqueness of satisfaction probabilities of maxmin-fair distributions; and
    conversely, if $F_1$ and $F_2$ are maxmin fair, then $[F_1 \cup F_2]$ is
    maxmin-fair, proving (a).

\end{proof}
}

\begin{lemma}\label{lem:graphs_ext}
Define a sequence of sets $B_1, B_2, \ldots, B_k$ iteratively by:
\begin{equation}\label{eq:rule}
 B_i \text{ is a maximal set } X \subseteq L\setminus S_{i-1} \text{ minimizing } \frac{ |\Gamma(X \cup S_{i-1})| - |\Gamma(S_{i})| }{|X|}
\text{, where $S_i = \bigcup_{j <i} B_i$.}
\end{equation}
Stop when $S_k = L$.% (which will eventually occur as the sequence $\{S_i\}$ is strictly increasing).     
Then for every $i, u \in B_i$, the satisfaction probability of $u$ in a maxmin-fair distribution $F$ is $\lambda_i = \frac{ \rho (B_i) }{ |B_i| }$, 
    and any $w \in \Gamma(B_i)\setminus \Gamma(S_{i-1})$ is matched to some $u \in B_i$ with probability 1.
\end{lemma}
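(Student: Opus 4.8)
The plan is to mirror the proof of Lemma~\ref{lem:matroids_ext}, but to argue directly with the neighbourhood function $\Gamma$ rather than an abstract rank, since for matchings the relevant quantities are governed by $|\Gamma(\cdot)|$. I would induct on the number $k$ of blocks. Write $\lambda_1=\pi(G)$; by Corollary~\ref{thm:lambda1} we have $\lambda_1=\min\{|\Gamma(X)|/|X|\mid \emptyset\neq X\subseteq L\}$, and by rule~\eqref{eq:rule} the set $B_1$ is the maximal minimiser of this ratio, so $\lambda_1=|\Gamma(B_1)|/|B_1|$. The first step is a counting argument for the bottom block. Let $F$ be any maxmin-fair distribution. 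Every $u\in L$ has $\cov{F}{u}\ge\pi(G)=\lambda_1$, so the expected number of vertices of $B_1$ covered by $F$ is $\sum_{u\in B_1}\cov{F}{u}\ge\lambda_1|B_1|=|\Gamma(B_1)|$. On the other hand, in any single matching the covered vertices of $B_1$ are matched to pairwise distinct vertices of $\Gamma(B_1)$, so at most $|\Gamma(B_1)|$ of them are covered. Hence the expectation equals $|\Gamma(B_1)|$, which forces $\cov{F}{u}=\lambda_1$ for every $u\in B_1$ and, moreover, that every matching in the support of $F$ covers exactly $|\Gamma(B_1)|$ vertices of $B_1$, i.e.\ saturates all of $\Gamma(B_1)$ into $B_1$. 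As $\Gamma(S_0)=\Gamma(\emptyset)=\emptyset$, this is precisely the probability-$1$ claim for $i=1$.

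Next I would establish that $B_1$ is fairly isolated and set up the contraction. Maximality of $B_1$ yields the strict separation $\pi(G/B_1)>\lambda_1$: if some nonempty $X\subseteq L\setminus B_1$ satisfied $(|\Gamma(X\cup B_1)|-|\Gamma(B_1)|)/|X|\le\lambda_1$, then $B_1\cup X$ would also attain the minimal ratio, since $|\Gamma(B_1\cup X)|=|\Gamma(B_1)|+(|\Gamma(X\cup B_1)|-|\Gamma(B_1)|)\le\lambda_1|B_1\cup X|$, contradicting maximality. Because every support matching of $F$ saturates $\Gamma(B_1)$ into $B_1$, no matching edge leaves $\Gamma(B_1)$ towards $L\setminus B_1$, so the restriction $F_2=(A\setminus B_1)_{A\sim F}$ is a well-defined distribution of matchings of $G/B_1$, and $F_1=(A\cap B_1)_{A\sim F}$ is supported on $G\rst{B_1}$. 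Conversely, since $G\rst{B_1}$ and $G/B_1$ are vertex-disjoint, for any maxmin-fair $D_1$ for $G\rst{B_1}$ and $D_2$ for $G/B_1$ the independent combination $[D_1\cup D_2]=(A\cup B)_{A\sim D_1,\,B\sim D_2}$ is a distribution of matchings of $G$; applying the counting argument to the single-block graph $G\rst{B_1}$ gives $\cov{[D_1\cup D_2]}{u}=\cov{D_1}{u}=\lambda_1$ for $u\in B_1$, while $\cov{[D_1\cup D_2]}{v}=\cov{D_2}{v}\ge\pi(G/B_1)>\lambda_1$ for $v\notin B_1$.

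Then I would carry out the inductive handoff using the lexicographic characterisation. Since $F$ is maxmin-fair, Theorem~\ref{thm:lexi} gives $\scov{F}\succeq\scov{[D_1\cup D_2]}$. The vector $\scov{[D_1\cup D_2]}$ has exactly $|B_1|$ entries equal to the global minimum $\lambda_1$, all of them beyond position $|B_1|$ being strictly larger; were $\scov{F}$ to have more than $|B_1|$ entries equal to $\lambda_1$, its $(|B_1|+1)$-st smallest entry would be $\lambda_1$ against a strictly larger value in $\scov{[D_1\cup D_2]}$, giving $\scov{[D_1\cup D_2]}\succ\scov{F}$, a contradiction. Hence $\cov{F}{v}>\lambda_1$ for all $v\notin B_1$, the first $|B_1|$ coordinates of the two sorted vectors agree, and comparing the remaining coordinates yields $\scov{F_2}\succeq\scov{D_2}$. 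As $D_2$ is maxmin-fair (hence lexicographically largest) for $G/B_1$, this forces $\scov{F_2}=\scov{D_2}$, so $F_2$ is maxmin-fair for $G/B_1$, and by Lemma~\ref{lem:unique}, $\cov{F}{v}=\cov{F_2}{v}=\cov{D_2}{v}$ for all $v\notin B_1$. Finally, since $|\Gamma_{G/B_1}(X)|=|\Gamma(X\cup B_1)|-|\Gamma(B_1)|$ for $X\subseteq L\setminus B_1$, applying rule~\eqref{eq:rule} inside $G/B_1$ reproduces exactly the sets $B_2,\dots,B_k$, so the induction hypothesis applied to $G/B_1$ delivers $\cov{F}{u}=\lambda_i$ for $u\in B_i$ and the probability-$1$ statement for each $w\in\Gamma(B_i)\setminus\Gamma(S_{i-1})$ (which coincides with the corresponding statement inside $G/B_1$, as such $w$ lie outside $\Gamma(B_1)$ and hence have all incident edges in $G/B_1$).

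The hard part is the middle step: showing that the restriction $F_2$ of a maxmin-fair distribution is itself maxmin-fair for $G/B_1$. This depends on combining two facts with care — the strict separation $\pi(G/B_1)>\lambda_1$ arising from the \emph{maximality} of $B_1$ in rule~\eqref{eq:rule}, and the lexicographic bookkeeping guaranteeing that $B_1$ occupies precisely the $|B_1|$ smallest coordinates of $\scov{F}$. Without strict separation, vertices outside $B_1$ could share the value $\lambda_1$ and the clean handoff to the contracted problem would break down; this is exactly the place where the reduction to matchings (as opposed to the abstract matroid argument of Lemma~\ref{lem:matroids_ext}) must be verified, namely that saturating $\Gamma(B_1)$ into $B_1$ makes the restriction and the combination operations legitimate on matchings.
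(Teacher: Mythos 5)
Your proof is correct and takes essentially the same approach as the paper's: an induction that peels off the bottom block, pins down its satisfaction probabilities by the counting argument $\sum_{u\in B_1}F[u]\ge \lambda_1|B_1|=|\Gamma(B_1)|$ (forcing saturation of $\Gamma(B_1)$ into $B_1$ with probability~1), and then hands off to the contracted graph $G/B_1$, whose rule-\eqref{eq:rule} blocks are $B_2,\ldots,B_k$. The only difference is one of explicitness: where the paper's proof of this lemma simply asserts that the satisfaction probabilities outside $S_i$ ``must coincide with those of a maxmin-fair distribution'' for the contracted graph, you justify that handoff in full via the strict separation from maximality of $B_1$, the lexicographic characterization (Theorem~\ref{thm:lexi}), and uniqueness (Lemma~\ref{lem:unique}) --- exactly the argument the paper itself uses for the matroid analogue, Lemma~\ref{lem:matroids_ext}.
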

\begin{proof}
Since the sequence $S_0, S_1, \ldots$ is strictly increasing (with respect to inclusion) and $L$ is
finite, there exists some $k$ such that $S_k = L$.

For each $i=0,\ldots, k$, let $H_i$ denote the graph $(G/S_{i-1})\rst{L\setminus S_{i-1}}$, i.e., the result of removing the vertices in $S_{i-1}$ and all their incident edges.
For $i = 1, \ldots, k$, we argue by induction on $i$ that
the coverage probabilities of $F$ outside of $S_{i-1}$ coincide with those of a maxmin-fair distribution for  $H_i$;
and
and moreover the probabilities are as prescribed by the statement of the
lemma.

The case $i = 1$ is trivial, so assume $i > 1$. 
By Corollary~\ref{thm:lambda1},
   there is a distribution of matchings in $H_i$ with minimum
satisfaction probability at least $\lambda_i$; the expected number of covered elements from $B_i$ is
then at least $\lambda_i |B_i| = |\Gamma(B_i) \setminus \Gamma(S_{i-1})| = |\Gamma_{H_i}(B_i)|$. Hence
equality must always hold, and the maxmin-fair distribution $F_i$ for $H_i$ has satisfaction probability
precisely
$\lambda_i$ for all $u \in B_i$. 
By the induction hypothesis, $F[u] = F_i[u] = \lambda_i$ for all $u \in B_i$.
Now observe that the neighbors of $B_i$ that belong to $S_{i-1}$ are already matched with probability 1. There are only 
$|\Gamma_{H_i}(B_i)|$ other neighbors, and
since the expected number of covered neighbours of $B_i$ in $F$ is equal to $|\Gamma_{H_i}(B_i)|$,
    it follows that any $w \in \Gamma_{H_i}(B_i)$ is matched to some $v
\in B_i$ with probability 1 in $F$.
In particular, in $F$ no element of $\Gamma_{H_i}(B_i)$ is matched to any vertex outside
        $B_i$ with non-zero probability, so the satisfaction probabilities of $F$ outside of $S_i$ must coincide with those of of a maxmin-fair distribution for $H_{i+1}$.
\end{proof}

\mycomment{
\begin{lemma}\label{lem:matroids_ext}
Define a sequence of sets $B_1, B_2, \ldots, B_k$ iteratively by:
\begin{equation}\label{eq:rule}
 B_i \text{ is a maximal set } X \subseteq L\setminus S_i \text{ minimizing } \frac{ \rho(X \cup S_i) - \rho(S_{i}) }{|X|}
\text{, where $S_i = \bigcup_{j <i} B_i$.}
\end{equation}
We stop when $S_k = L$ (which will eventually occur as the sequence $\{S_i\}$ is strictly increasing).     
Then for every $i, u \in B_i$, the satisfaction probability of $u$ in a maxmin-fair distribution is
    $\lambda_i = \frac{|\Gamma(B_i) \setminus \Gamma(S_{i-1} |)}{|B_i|}$,
    and any $w \in \Gamma(B_i)\setminus \Gamma(S_{i-1})$ is matched to some $u \in B_i$ with probability 1. %that of any $w \in \Gamma(B_i) \setminus \Gamma(S_{i-1})$ is 1.
\end{lemma}
(Maximality of each $B_i$ is not required for the conclusion to hold, but its inclusion guarantees uniqueness of the sets thus defined, owing to the submodularity of $\rho$.)
\begin{proof}
\end{proof}
}

\begin{lemma}\label{lem:isol_chain}
For any two distinct fairly isolated sets $X$ and $Y$, either $X \subseteq Y$ or $Y \subseteq X$ holds.
\end{lemma}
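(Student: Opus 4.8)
The plan is to argue by contradiction: assuming two distinct fairly isolated sets $X,Y$ that \emph{cross} (i.e.\ $X\setminus Y\neq\emptyset$ and $Y\setminus X\neq\emptyset$), I would derive $\pi(G/X)<\pi(G/Y)$ and, by symmetry, $\pi(G/Y)<\pi(G/X)$, which is absurd. If either set equals $L$ the containment is immediate, so I assume $X,Y\subsetneq L$. Throughout I write $\rho=\rho_G$ for the rank function of the matching matroid of $G$ (Example~\ref{ex:matroids}), which is submodular, and set $\mu_X=\pi(G/X)$ and $\mu_Y=\pi(G/Y)$; fair isolation guarantees $\Pi(G\rst{X})<\mu_X$ and $\Pi(G\rst{Y})<\mu_Y$.

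The engine of the argument is to translate both defining inequalities into statements about $\rho$, so that submodular uncrossing applies cleanly (and, crucially, no saturability hypothesis on $\Gamma(X)$ is needed). For the ``inside'' condition I would apply Corollary~\ref{coro:matroids2} to the restricted matroid $M_G\rst{X}$, whose rank is $\rho$ restricted to subsets of $X$; this yields $\Pi(G\rst{X})=\max_{\emptyset\neq T\subseteq X}\frac{\rho(X)-\rho(X\setminus T)}{|T|}$, so $\Pi(G\rst{X})<\mu_X$ gives $\rho(X)-\rho(X\setminus T)<\mu_X|T|$ for every nonempty $T\subseteq X$. For the ``outside'' condition, Theorem~\ref{thm:matroids} applied to the matching matroid of $G/X$ gives $\mu_X|S|\le\rho_{G/X}(S)$ for every nonempty $S\subseteq L\setminus X$, where $\rho_{G/X}$ is the rank of $G/X$. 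I would then prove the elementary \emph{weak-map} inequality $\rho_{G/X}(S)\le\rho(S\cup X)-\rho(X)$ by noting that a matching of $G/X$ covering a maximum subset of $S$ uses no vertex of $X\cup\Gamma(X)$, hence is vertex-disjoint from a maximum matching of $G\rst{X}$ covering $\rho(X)$ vertices of $X$; their union witnesses $\rho(S\cup X)\ge\rho_{G/X}(S)+\rho(X)$. Combining, I obtain $\rho(S\cup X)-\rho(X)\ge\mu_X|S|$ for all nonempty $S\subseteq L\setminus X$.

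With both conditions in $\rho$-form, the uncrossing is a three-line computation. Taking $S=Y\setminus X$ in the outside inequality for $X$, $T=Y\setminus X$ in the inside inequality for $Y$, and submodularity $\rho(X\cup Y)+\rho(X\cap Y)\le\rho(X)+\rho(Y)$, I would chain
\[
\mu_X\,|Y\setminus X|\;\le\;\rho(X\cup Y)-\rho(X)\;\le\;\rho(Y)-\rho(X\cap Y)\;<\;\mu_Y\,|Y\setminus X|,
\]
and divide by $|Y\setminus X|>0$ to conclude $\mu_X<\mu_Y$. Swapping the roles of $X$ and $Y$ (now with $S=T=X\setminus Y$) gives $\mu_Y<\mu_X$, the desired contradiction.

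I expect the main obstacle to be precisely the reduction to a purely rank-theoretic statement. The neighbourhood characterisations (Corollaries~\ref{thm:lambda1} and~\ref{thm:lambda2}) are tempting, but the formula for $\Pi$ in Corollary~\ref{thm:lambda2} presupposes that $\Gamma(X)$ is fully saturable inside $G\rst{X}$, which need not hold for an arbitrary fairly isolated $X$; routing everything through the matroid rank $\rho$ (unconditionally submodular, and for which the $\pi,\Pi$ formulas hold without any saturation assumption) is what makes the uncrossing valid. The one genuinely new ingredient is the weak-map inequality relating the graph deletion $G/X$ to matroid contraction, together with the vertex-disjointness that underlies it; the remaining edge cases (one of $X,Y$ equal to $L$, or a set difference empty) are disposed of by the opening reduction.
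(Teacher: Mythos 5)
Your proof is correct, and it reaches the lemma by a genuinely different route than the paper, even though both arguments share the same uncrossing-by-submodularity skeleton. The paper stays in graph language: invoking Corollaries~\ref{thm:lambda1} and~\ref{thm:lambda2} it rewrites fair isolation of $X$ as $d(X\mid S)<d(T\mid X)$ for all $S\subsetneq X\subsetneq T$, where $d(A\mid B)=\bigl(|\Gamma(A\cup B)|-|\Gamma(B)|\bigr)/|A\setminus B|$, and then, for crossing $X,Y$, combines (after a w.l.o.g.\ step) the fair isolation of $Y$ with the submodularity of $S\mapsto|\Gamma(S)|$, which gives $d(X\cup Y\mid Y)\le d(X\mid X\cap Y)$ and $d(X\cup Y\mid X)\le d(Y\mid X\cap Y)$, to close the contradiction. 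You run the same uncrossing inside the matroid rank function instead: the ``inside'' bound comes from Corollary~\ref{coro:matroids2} (with Theorem~\ref{thm:maxminsocial}), the ``outside'' bound from Theorem~\ref{thm:matroids}, the two are glued by your weak-map inequality $\rho_{G/X}(S)\le\rho(S\cup X)-\rho(X)$ (proved by vertex-disjointness of a maximum matching of $G\rst{X}$ and a matching of $G/X$), and submodularity of $\rho$ yields $\pi(G/X)<\pi(G/Y)$ and, symmetrically, the reverse. Your route costs one extra (easy) lemma but buys two things. First, rigor: Corollary~\ref{thm:lambda2} is proved only under the saturation hypothesis $\rho(L)=|\Gamma(L)|$, and the paper applies it to $G\rst{X}$ without checking that $\Gamma(X)$ is matchable inside $G\rst{X}$; your rank-based bounds need no such hypothesis. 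One small correction, though: saturation in fact \emph{does} hold for every fairly isolated $X\subsetneq L$ --- if $\rho(X)<|\Gamma(X)|$, a K\H{o}nig vertex-cover argument produces $Z\subsetneq X$ with $\rho(X)-\rho(Z)=|X\setminus Z|$, forcing $\Pi(G\rst{X})=1$, which is incompatible with $\Pi(G\rst{X})<\pi(G/X)\le 1$ --- so ``need not hold'' overstates the danger; still, this step is nowhere proved in the paper, so your caution identifies a real (if repairable) gap that your argument simply never encounters. Second, generality: your proof uses only rank, restriction and contraction, never bipartiteness or neighborhoods, so it establishes the chain property of fairly isolated sets verbatim for arbitrary matroid problems, whereas the paper's proof is tied to the bipartite neighborhood formulas; in exchange, the paper's version is shorter once its corollaries are taken for granted.
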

\begin{proof}
By Corollaries~\ref{thm:lambda1} and~\ref{thm:lambda2}, $X \neq \emptyset $ is fairly isolated if and only if $X = L$ or
    $$\Pi(G\rst{S}) = \max_{S \subsetneq X} \frac{ |\Gamma(X)| - |\Gamma(S)| }{|X\setminus S|} < \pi(G/S) = \min_{T \supsetneq X} \frac{ |\Gamma(T)| - |\Gamma(X)| }{|T\setminus X|} .$$
For any two sets $A, B$ such that $A \subsetneq B$, define
$d(A \mid B) = \frac{ |\Gamma(A \cup B)| - |\Gamma(B)| }{|A \setminus B|}. $
Then we can rewrite the definition of fair isolation (including the case $X = L$) as:% $X$ is fairly isolated iff
    $$ X\text{ is fairly isolated} \qquad \Leftrightarrow \qquad d(X \mid S) < d(T \mid X) \qquad \forall\; S\subsetneq X, T\supsetneq X. $$

Now assume for contradiction $X$ and $Y$ are fairly isolated but $X \setminus Y$ and $Y\setminus X$ are both non-empty. 
Then $d(X \mid X \cap Y)$ and $d(Y\mid X \cap Y)$ are well defined;
assume without loss of generality that $d(X \mid X \cap Y) \le d(Y\mid X \cap Y)$. Then
$$ d(X\mid X \cap Y) \le d(Y\mid X \cap Y) < d(X \cup Y\mid Y),$$
where we used the fair isolation of $Y$. But this contradicts the fair isolation of $X$.
\end{proof}

    \begin{proof}[\textbf{Proof of Theorem~\ref{thm:decomp}}]
\mycomment{
We show that the sets $S_i$ defined by (a) comprise all fairly isolated sets  and have
property (c);
part (b) follows by applying the fair decomposition to the dual of the
matching matroid  and recalling that maxmin-fairness and minmax-Pareto efficiency are equivalent for
matroids. The uniqueness of these sets follows from Lemma~\ref{lem:gamma_union}.

Since the sequence $S_0, S_1, \ldots$ is strictly increasing (with respect to inclusion) and $L$ is
finite, there exists some $k$ such that $S_k = L$. We reason by
by induction on $k$. The argument closely resembles the proof of
Theorem~\ref{thm:maxminsocial}.

Similarly to Lemma~\ref{lem:matroids_ext}, we define
a sequence of sets $B'_1, B'_2, \ldots, B'_k$ iteratively by:
\begin{equation}\label{eq:rule_graphs}
 B'_i \text{ is a maximal set } X \subseteq L\setminus S_i \text{ minimizing } \frac{ \rho(X \cup S'_i) - \rho(S'_{i}) }{|X|}
\text{, where $S'_i = \bigcup_{j <i} B'_i$.}
\end{equation}
We stop when $S'_i = L$ (which will eventually occur as the sequence $\{S_i\}$ is strictly increasing).     Define $\lambda'_i = 
\frac{ \rho(B'_i \cup S'_i) - \rho(S'_{i}) }{|B'_i|} .$

   First, observe that Corollary~\ref{thm:lambda1} implies $F[u] \ge \lambda'_1$ for all $u \in L$. As the expected number of satisfied elements within $B_1$, which obviously
           cannot exceed $|\Gamma(B'_1)| = \lambda'_1 |B'_1|$,
           is equal to
$\sum_{u\in B'_1} F[u] \ge \lambda'_1 |B'_1|$
by linearity of expectation, the equality $F[u] = \lambda'_1$ must hold for all $u \in B'_1$.
%So every $u \in B'_1$ is matched with probability $
}

Let $B'_1, \ldots, B'_k$ be the sequence of sets given by Lemma~\ref{lem:graphs_ext} and define $S'_i = \bigcup_{j \le i} B'_i,$
$\lambda'_i = \frac{\Gamma(S'_i \cup B'_i) - \Gamma(S'_i) }{|B'_i|}$ and $\lambda'_0 = 0$.
We show that $S'_1, \ldots, S'_{k}$ 
   comprise all fairly isolated sets. Assuming this for the moment, notice that by definition these sets form a chain, and the sets $B'_1, \ldots, B'_k$ satisfy property (a) by
   Lemma~\ref{lem:graphs_ext}. Part (b) follows then by applying the fair decomposition to the dual of the
matching matroid, using Corollary~\ref{thm:lambda2}, and recalling that maxmin-fairness and minmax-Pareto efficiency are equivalent for
matroids (Theorem~\ref{thm:maxminsocial}).

To see that $S'_1, \ldots, S'_{k-1}$ are fairly isolated, notice that
$S'_k = L$ indeed is by definition, whereas for $i < k$ we have
$\Pi(G\rst{S'_i}) = \max_{u \in S_i} F[u] = \lambda'_i < \lambda'_{i+1} = \pi(G/{S'_i}).$
This meets the definition of fair separation from Section~\ref{sec:fair_decomp}.
% $$\Pi(G\rst{S}) = \max_{S \subsetneq X} \frac{ |\Gamma(X)| - |\Gamma(S)| }{|X\setminus S|} < \pi(G/S) = \min_{T \supsetneq X} \frac{ |\Gamma(T)| - |\Gamma(X)| }{|T\setminus X|} .$$

The fact that the fairly isolated sets form a chain is a direct consequence of Lemma~\ref{lem:isol_chain}. 
Finally, assume for contradiction that some fairly isolated set $X$ exists
other than $S'_1, \ldots, S'_k$. Then
    $S'_i \subsetneq X \subsetneq S'_{i+1}$
for  some $i, 0 \le i < k$.
Then we have
%    By the fair isolation of $X$, it holds that
\begin{equation}\label{eq:weird}
    \lambda'_{i+1} \le \frac{ |\Gamma(X)| - |\Gamma(S'_i)| }{|X\setminus S'_i|} <  \frac{ |\Gamma(S'_{i+1})| - |\Gamma(X)| }{|S'_{i+1}\setminus X|} ,
\end{equation}    
    where the first inequality is by
construction of $B'_{i+1}$ and $S'_{i+1}$, and the second by the fair isolation of $X$.
But then
    $$ \lambda'_{i+1} {|S'_{i+1}\setminus X|} + |\Gamma(X)| <  |\Gamma(S'_{i+1})| =|\Gamma(S'_{i})| + \lambda'_{i+1} |B'_i|   ,$$
    i.e.,
    $$ |\Gamma(X)| - |\Gamma(S'_{i})|< \lambda'_{i+1} {|X\setminus S'_{i}|},$$
    contradicting~\eqref{eq:weird}.
\end{proof}

\mycomment{
    The following result gives alternative equivalent definitions and will be used in the proof of Theorem~\ref{thm:decomp}:
    \begin{restatable}{lemma}{lemequivfair}
        \label{lem:equiv_fair}%[Fair separation means local neighbor sparsity]
        The following are equivalent for every $X \subseteq L$:
        \begin{enumerate}[(a)]
            \item $X$ is fairly isolated;
            \item ($u, v \in Y$, $F$ is maxmin-fair, and $\cov{F}{v} \le \cov{F}{u}$) $\implies$ $v \in X$;
            \item Either
        $\Gamma(X) = \emptyset$, or $X = L$, or
    $$\min\left(1, \max_{Y \subsetneq X} \frac{ |\Gamma(X)| - |\Gamma(Y)| }{|X\setminus Y|} \right)  < \min\left(1, \min_{Z \supsetneq X} \frac{ |\Gamma(Z)| - |\Gamma(X)| }{|Z\setminus
            X|} \right).$$
    %{\footnotesize \begin{align*} \min\left(1, \max_{Y \subsetneq X} \frac{ |\Gamma(X)| - |\Gamma(Y)| }{|X\setminus Y|} \right)  < \min\left(1, \min_{Z \supsetneq X} \frac{ |\Gamma(Z)| - |\Gamma(X)| }{|Z\setminus X|} \right).  \end{align*} }
        \end{enumerate}
        \end{restatable}
}

\end{document}